\documentclass[11pt]{article}

\usepackage[margin=1in]{geometry}

\usepackage{amsmath}
\usepackage{amssymb}
\usepackage{amsthm}
\usepackage{mathtools}
\usepackage{bm}

\usepackage{booktabs}
\usepackage{graphicx}
\usepackage{subcaption}
\usepackage{algorithm}
\usepackage{algpseudocode}
\usepackage{enumitem}
\usepackage{xcolor}

\usepackage{hyperref}
\usepackage[nameinlink,capitalise]{cleveref}   

\usepackage{authblk}

\usepackage[authoryear,round]{natbib}
\newtheorem{theorem}{Theorem}
\newtheorem{lemma}{Lemma}
\newtheorem{proposition}{Proposition}

\crefname{condition}{Condition}{Conditions}
\Crefname{condition}{Condition}{Conditions}

\crefname{theorem}{Theorem}{Theorems}
\crefname{lemma}{Lemma}{Lemmas}
\crefname{proposition}{Proposition}{Propositions}
\crefname{corollary}{Corollary}{Corollaries}
\crefname{assumption}{Assumption}{Assumptions}

\Crefname{theorem}{Theorem}{Theorems}
\Crefname{lemma}{Lemma}{Lemmas}
\Crefname{proposition}{Proposition}{Propositions}
\Crefname{corollary}{Corollary}{Corollaries}
\Crefname{assumption}{Assumption}{Assumptions}

\crefformat{equation}{#2(#1)#3}
\crefrangeformat{equation}{#3(#1)#4 to #5(#2)#6}
\crefmultiformat{equation}{#2(#1)#3}{ and #2(#1)#3}{, #2(#1)#3}{, and #2(#1)#3}

\newcommand{\R}{\mathbb{R}}
\newcommand{\C}{\mathbb{C}}
\newcommand{\E}{\mathbb{E}}
\newcommand{\Pp}{\mathbb{P}}

\newcommand{\ind}{\bm{1}}

\newcommand{\trans}{\top}

\DeclareMathOperator{\col}{col}
\DeclareMathOperator{\rank}{rank}
\DeclareMathOperator{\diag}{diag}
\DeclareMathOperator{\tr}{tr}
\DeclareMathOperator{\St}{St}
\newcommand{\op}{\mathrm{op}}

\newcommand{\lmin}{\lambda^{+}_{\min}}

\newcommand{\opn}[1]{\left\lVert #1\right\rVert_{\mathrm{op}}}
\newcommand{\frn}[1]{\left\lVert #1\right\rVert_{\mathrm{F}}}
\newcommand{\Lqn}[2]{\left\lVert #1\right\rVert_{L^{#2}}}

\newcommand{\cU}{\mathcal{U}}
\newcommand{\cM}{\mathcal{M}}
\newcommand{\cE}{\mathcal{E}}
\newcommand{\cI}{\mathcal{I}}
\newcommand{\Hc}{\mathcal{H}}

\newcommand{\bze}{\bm{0}}

\newcommand{\gmin}{\gamma_{\min}}
\newcommand{\gmax}{\gamma_{\max}}
\newcommand{\cemb}{c_{\mathrm{emb}}}
\newcommand{\Ogp}{\mathrm{O}}

\definecolor{cm}{rgb}{0,0,0.78}

\definecolor{yy}{rgb}{0.78,0,0.78}

\definecolor{xs}{rgb}{0,0.78,0}

\title{Bias-Corrected Subspace Intersection: Minimax-Optimal Shared Subspace Estimation in Multi-View Data}

\author[1]{Xianwen Song}
\author[2]{Yuepeng Yang}
\author[1]{Cong Ma}

\affil[1]{Department of Statistics, University of Chicago}
\affil[2]{Department of Statistics and Data Science, the Wharton School, University of Pennsylvania}

\date{September 4, 2026}

\allowdisplaybreaks
\begin{document}

\maketitle
\begingroup
\renewcommand{\thefootnote}{}
\footnotetext{The upper- and lower-bound proofs have also been formally verified in Lean 4; see~\url{https://github.com/congma1028/bcsi-lean-verification}.}
\endgroup


\begin{abstract}
Estimating a low-dimensional subspace shared across noisy data matrices is a fundamental problem in multi-view matrix estimation.
We study this problem under the two-view JIVE model, where each data matrix contains shared and view-specific low-rank components.
We demonstrate that standard {plug-in} subspace intersection, including AJIVE, suffers from a second-order bias caused by direction-dependent leakage of the empirical singular vectors.
We propose \emph{bias-corrected subspace intersection} (BCSI), which removes this bias before estimating the shared subspace.
We establish finite-sample risk bounds for BCSI that accommodate unequal view dimensions, signal strengths, and view-specific ranks and require no condition-number assumptions on the signal matrices.
When the shared and view-specific ranks are comparable, these bounds match our minimax lower bounds up to universal constants.
The resulting minimax rate contains a new second-order term, arising from quadratic leakage perturbations relative to the shrinking spectral gap when the view-specific subspaces are nearly aligned. This term is absent from previous JIVE minimax lower bounds. 
Numerical experiments demonstrate the advantage of BCSI over AJIVE when the leakage bias is pronounced. 
Along the way, we establish a nonasymptotic concentration result for the bias-corrected leakage Gram matrix of a rectangular spiked matrix, which may be of independent interest.
\end{abstract}

\section{Introduction}
\label{sec:introduction}

Modern data sets often measure the same collection of samples through
multiple data modalities, including different genomic assays
\citep{lock2013joint}, imaging technologies
\citep{zhao2021multimodal}, and other heterogeneous sources~\citep{Shi2023}. A central goal
of multi-view data integration is to separate variation that is shared across
views from variation that is specific to an individual view
\citep{lock2013joint,Feng2018,gaynanova2019structural,prothero2024data,gui2025indiseek}.

In this paper, we study the statistical problem of estimating the shared subspace in the
two-view setting. For $k=1,2$, we observe
\begin{equation}
\label{eq:JIVE-intro}
  \bm Y_k=\bm X_k+\bm E_k\in\mathbb R^{n\times d_k},
\end{equation}
where $\bm X_k$ is a rank-$p_k$ signal matrix and the entries of $\bm E_k$
are independent $\mathcal N(0,\sigma^2)$ random variables. Let
$\cM_k\coloneqq\operatorname{col}(\bm X_k)$ denote the population signal
space of view $k$. Our goal is to estimate the rank-$r$ shared subspace
\begin{equation}
\label{eq:target-intro}
  \cU
  \coloneqq
  \cM_1\cap\cM_2
  =
  \operatorname{col}(\bm X_1)\cap\operatorname{col}(\bm X_2)
\end{equation}
from the observations~$(\bm Y_1,\bm Y_2)$. This is the two-view shared-subspace estimation problem 
underlying the joint and individual variation explained (JIVE) framework \citep{lock2013joint}.

\subsection{Bias of plug-in subspace intersection}
\label{sec:ajive-bias}

A natural plug-in strategy is to estimate the two signal spaces separately and then look for directions that are approximately shared by the two estimated spaces. Let $\widehat{\bm\Phi}_k\in\mathbb R^{n\times p_k}$ contain the leading $p_k$ left singular vectors of $\bm Y_k$, so that $\operatorname{col}(\widehat{\bm\Phi}_k)$ estimates $\cM_k$. The right singular vectors associated with the smallest singular values of  $[\widehat{\bm\Phi}_1,\widehat{\bm\Phi}_2]$ identify pairs of directions, one from each estimated signal space, that are nearly the same up to sign. These directions therefore approximate the intersection of the two estimated signal spaces. In the two-view setting, this procedure is equivalent to AJIVE
\citep{Feng2018,yang2025estimating,sergazinov2026spectral}; see
\cref{sec:AJIVE} for a proof. 

The problem with this plug-in estimator is that the empirical singular
vectors leak outside their population signal spaces. This leakage creates a
systematic second-order bias and can make the estimator suboptimal for shared
subspace estimation. To see this, for any subspace
$\mathcal S\subseteq\mathbb R^n$, let $\bm P_{\mathcal S}$ denote the
orthogonal projector onto $\mathcal S$, and decompose
\[
  \widehat{\bm\Phi}_k=\widehat{\bm S}_k+\widehat{\bm L}_k,
  \qquad \text{where}\qquad
  \widehat{\bm S}_k\coloneqq\bm P_{\cM_k}\widehat{\bm\Phi}_k,
  \quad
  \widehat{\bm L}_k\coloneqq\bm P_{\cM_k^\perp}\widehat{\bm\Phi}_k.
\] 
The approximate intersection is determined by the bottom eigenspace of the
Gram matrix of the concatenated empirical bases
$[\widehat{\bm\Phi}_1,\widehat{\bm\Phi}_2]$. With
$\widehat{\bm S}\coloneqq[\widehat{\bm S}_1,\widehat{\bm S}_2]$ and
$\widehat{\bm L}\coloneqq[\widehat{\bm L}_1,\widehat{\bm L}_2]$, this Gram
matrix decomposes as
\begin{equation}
\label{eq:intro-ajive-gram}
\begin{aligned}
  [\widehat{\bm\Phi}_1,\widehat{\bm\Phi}_2]^\trans
  [\widehat{\bm\Phi}_1,\widehat{\bm\Phi}_2]
  =
  \widehat{\bm S}^\trans\widehat{\bm S}
  +
  \widehat{\bm S}^\trans\widehat{\bm L}
  +
  \widehat{\bm L}^\trans\widehat{\bm S}
  +
  \widehat{\bm L}^\trans\widehat{\bm L}.
\end{aligned}
\end{equation}
If there were no leakage and
$\operatorname{col}(\widehat{\bm S}_k)=\cM_k$ for $k=1,2$, then
$\widehat{\bm S}^\trans\widehat{\bm S}$ would have an $r$-dimensional
kernel consisting exactly of the linear relations that produce directions
in the shared subspace $\cU$.

In the noisy case, the two cross terms in
\cref{eq:intro-ajive-gram} are linear in the leakage, while
$\widehat{\bm L}^\trans\widehat{\bm L}$ is quadratic. Crucially, this
quadratic term has a direction-dependent bias and can therefore perturb the
nullspace of $\widehat{\bm S}^\trans\widehat{\bm S}$. 
To see why the bias is
direction dependent, consider the diagonal entries of 
$\widehat{\bm L}_k^\trans\widehat{\bm L}_k$. Let
$(\widehat s_{k,i},\widehat{\bm\phi}_{k,i},
\widehat{\bm\psi}_{k,i})$
be the $i$th empirical singular triplet and set
$a_k\coloneqq n-p_k$ and $b_k\coloneqq d_k-p_k$. The singular-vector relation
$\bm Y_k\widehat{\bm\psi}_{k,i}
=
\widehat s_{k,i}\widehat{\bm\phi}_{k,i}$,
together with
$\bm P_{\cM_k^\perp}\bm X_k=\bm 0$,
implies
\begin{equation}
\label{eq:leakage-heuristic}
\begin{aligned}
  (\widehat{\bm L}_k^\trans\widehat{\bm L}_k)_{ii}
  &=
  \bigl\|
    \bm P_{\cM_k^\perp}\widehat{\bm\phi}_{k,i}
  \bigr\|^2
  =
  \frac{
    \bigl\|
      \bm P_{\cM_k^\perp}
      \bm E_k\widehat{\bm\psi}_{k,i}
    \bigr\|^2
  }{
    \widehat s_{k,i}^2
  }
  \approx
  \frac{a_k\sigma^2}{\widehat s_{k,i}^2}.
\end{aligned}
\end{equation}
The last relation is heuristic because
$\widehat{\bm\psi}_{k,i}$ depends on the noise.  It
nevertheless shows that singular
directions associated with smaller empirical singular values have larger
leakage. 
Thus the leakage Gram is generally anisotropic, and this anisotropy biases the estimated intersection.

\subsection{Bias correction and main results}
\label{sec:intro-leakage-correction}

The heuristic calculation in \cref{eq:leakage-heuristic} suggests a natural
bias correction: rescale the empirical singular directions within each view
so that their leakage Gram is approximately isotropic. Specifically, replace
$\widehat{\bm\Phi}_k$ by
$\widehat{\bm\Phi}_k\diag(w_{k,1},\ldots,w_{k,p_k})$. 
At the heuristic level, this would give weights roughly proportional to
$\widehat s_{k,i}$. However, the empirical singular vectors themselves depend on the noise, so
this heuristic does not give the correct weight. Rectangular spiked-matrix
theory gives a more accurate expression for the leakage.

Let $s_{k,i}$ denote the $i$th population singular value in view $k$.  
Classical rectangular spiked-matrix theory
\citep{benaych2012singular,gavish2017optimal} predicts
\begin{equation}
\label{eq:intro-spike-leakage}
  \bigl\|
    \bm P_{\cM_k^\perp}\widehat{\bm\phi}_{k,i}
  \bigr\|^2
  \approx
  \frac{
    a_k\sigma^2(s_{k,i}^2+b_k\sigma^2)
  }{
    s_{k,i}^2(s_{k,i}^2+a_k\sigma^2)
  }.
\end{equation}
If $s_{k,i}$ were known, multiplying $\widehat{\bm\phi}_{k,i}$ by
$\{s_{k,i}^2(s_{k,i}^2+a_k\sigma^2)/
(s_{k,i}^2+b_k\sigma^2)\}^{1/2}$ would make its squared leakage approximately
$a_k\sigma^2$ for every $i$.  

This oracle rescaling is not directly implementable because the population
singular value $s_{k,i}$ is unknown. The same spiked-matrix theory provides the
missing link between the population and empirical singular values:
\begin{equation}
\label{eq:intro-spike-location}
  \widehat s_{k,i}^2
  \approx
  \frac{
    (s_{k,i}^2+a_k\sigma^2)
    (s_{k,i}^2+b_k\sigma^2)
  }{
    s_{k,i}^2
  }.
\end{equation}
We therefore invert \cref{eq:intro-spike-location} to estimate $s_{k,i}^2$
from $\widehat s_{k,i}$ and substitute this estimate into the oracle
rescaling above.

This leads to our proposed estimator, \emph{bias-corrected subspace
intersection} (BCSI).
BCSI modifies plug-in subspace intersection in three
steps. First, it reweights the empirical singular directions within each view
so that their leakage Gram is approximately
$a_k\sigma^2\bm I_{p_k}$. Second, it subtracts this common leakage level from
the corresponding diagonal block of the Gram matrix. Finally, it extracts the near-zero eigenspace of the corrected Gram matrix
and uses it to reconstruct the shared subspace. We denote the corresponding projector by
$\widehat{\bm P}_{\rm BCSI}$ and give the complete construction in
\cref{sec:bcsi-estimator}.

Our main results show that BCSI achieves the minimax rate for shared-subspace
estimation over a broad range of signal strengths and subspace geometries.  For $k=1,2$, let $s_k$ be a known lower bound on the 
smallest nonzero singular value of $\bm X_k$ and define the corresponding effective signal strength by
\begin{equation}
\label{eq:effective-strength}
  \gamma_k
  \coloneqq
  \frac{s_k^2}{\sqrt{s_k^2+d_k\sigma^2}}.
\end{equation}
Also let $\gmax \coloneqq \max \{\gamma_1, \gamma_2\}$, and $\gmin \coloneqq \min \{\gamma_1, \gamma_2\}$.  
The effective strength $\gamma_k$ accounts for the column dimension $d_k$.
For column-space estimation, the raw singular value $s_k$ alone does not
capture the effect of the $d_k$ noisy columns, so signals with the same
$s_k$ can have different estimation accuracy when their aspect ratios differ~\citep{cai2018rate, Cai2021Unbalanced}.
In particular, $\gamma_k$ is of order $s_k$ for a strong signal and
of order $s_k^2/(\sigma\sqrt{d_k})$ for a weaker one.

Suppose that the shared and view-specific subspaces all have rank of order $r$.  Let
$\theta\in(0,1/2]$ measure the separation between two view-specific subspaces,
with smaller $\theta$ corresponding to more closely aligned subspaces. 
Under the signal strength assumption
$\gmin \gtrsim\sigma\sqrt n$, BCSI satisfies
\begin{equation}
\label{eq:intro-upper-rate}
  \mathbb E
  \opn{
    \widehat{\bm P}_{\rm BCSI}-\bm P_{\cU}
  }
  \lesssim
  1\wedge
  \left\{
    \frac{\sigma\sqrt n}{\gmax}
    +
    \frac{\sigma\sqrt r}
         {\gmin \sqrt\theta}
    +
    \frac{\sigma^2\sqrt{nr}}
         {\gmin ^2\theta}
  \right\}.
\end{equation}
A matching minimax lower bound shows that all three terms in
\cref{eq:intro-upper-rate} are unavoidable up to universal constants.

The first term is the usual cost of estimating the shared subspace when the view-specific directions are known. Since both views contain information about the shared directions, their effective strengths combine, yielding $\sqrt{\gamma_1^2+\gamma_2^2}\asymp\gmax$. The remaining two terms reflect the additional difficulty of identifying which directions are shared when the view-specific subspaces are unknown. This identification becomes harder as the two view-specific subspaces approach one another, leading to the factors $\theta^{-1/2}$ and $\theta^{-1}$. Moreover, these terms are governed by $\gmin$: even if the stronger view accurately localizes the relevant signal space, the weaker view is still needed to distinguish the shared directions from the view-specific ones. The final $\theta^{-1}$ term is one of our main contributions. It is second
order in the noise level and shows that nearly aligned view-specific subspaces
create an additional statistical difficulty beyond the usual first-order
subspace estimation error. In the balanced-rank regime, the corresponding
quadratic term in the existing AJIVE bound is of order
$\sigma^2 n/(\gmin^2\theta)$, whereas BCSI improves it to
$\sigma^2\sqrt{nr}/(\gmin^2\theta)$, a factor $\sqrt{n/r}$ smaller.
Our minimax lower bound further shows that the latter $\theta^{-1}$ term is
unavoidable, and hence is intrinsic to shared-subspace estimation rather than
an artifact of the analysis of BCSI. The general results, allowing unequal
shared and view-specific ranks, are stated in
\cref{sec:upper-bound,sec:lower-bound}.

\subsection{Related work}
\paragraph{Joint and individual variation.}

The JIVE framework \citep{lock2013joint, zhou2015group} models multi-view data
through shared and view-specific low-rank components,
and has various variants including AJIVE \citep{Feng2018}, CJIVE
\citep{murden2022interpretive}, and the probabilistic formulation ProJIVE
\citep{murden2026projive}.
Related multiblock decompositions such as SLIDE \citep{gaynanova2019structural}, hierarchical nuclear-norm
penalization \citep{yi2023hierarchical}, and DIVAS
\citep{prothero2024data} also accommodate partially shared structure.
Among these methods, AJIVE is most closely related to the present work: it
first estimates the signal space of each view and then identifies shared
directions through the approximate intersection of the estimated spaces.
Similarly, \citet{sergazinov2026spectral} use the product-of-projections method to combine the estimates across different views.

\paragraph{Theory for JIVE.}
The closest statistical analyses are
\citet{yang2025estimating} and \citet{li2025heterojive}.
The former establishes finite-sample guarantees and minimax lower bounds for
AJIVE in a homogeneous multi-view model, while the latter allows heterogeneous
dimensions and signal strengths and studies strength-aware aggregation of the
estimated view projectors.
Our results allow unequal view dimensions, strengths, and view-specific ranks,
impose no condition-number restriction on the loading matrices, and identify
an additional $\theta^{-1}$ term in the minimax rate arising from the
interaction between estimation error and the geometry of the intersection.

\paragraph{Shared-subspace estimation.}
Related work considers shared-subspace estimation under alternative
multi-matrix models.  \citet{ma2026optimal} analyze
complete and partial sharing when the shared and unshared directions are
themselves singular vectors of the individual signal matrices, while the
partially shared subspace model \citep{yoon2026joint} assumes that the shared
directions are eigenvectors of each population covariance.  MOP-UP
\citep{tang2025modewise} considers a decomposition in which one component
has a column subspace shared across observations and the other has a shared
row subspace.  In the fully shared setting,  \citet{baharav2025stacked} derive strength-dependent aggregation rules, while \citet{agterberg2026statistically} studies estimation and adaptive inference for shared subspaces across symmetric low-rank matrices.  
Unlike these constructions, our model allows the loading matrices to arbitrarily mix the shared and
view-specific coordinates. Consequently, the shared directions need not be singular
vectors of any individual view and are instead identified through the
intersection of the two population signal spaces.

\paragraph{Spiked random matrix theory.}
Our bias correction is motivated by classical results for spiked covariance
and rectangular low-rank models, including the phase transition and
eigenvector behavior studied in
\citet{johnstone2001distribution}, \cite{baik2005phase}, and \cite{paul2007asymptotics},
and the singular-value and singular-vector asymptotics for rectangular
perturbations developed by \citet{benaych2012singular}.
These results underlie later work on data-dependent spectral shrinkage and
optimal singular-value transformations
\citep{nadakuditi2014optshrink,gavish2017optimal}.
BCSI uses the corresponding spike-location and singular-vector formulas to
construct direction-dependent weights that equalize the leading leakage of
the empirical singular vectors.

\paragraph{Finite-sample subspace estimation and spectral debiasing.}
Our analysis is also related to finite-sample singular-subspace perturbation
theory for rectangular matrices
\citep{cai2018rate,Cai2021Unbalanced} and to spectral debiasing methods such as
HeteroPCA \citep{zhang2022heteropca}.
HeteroPCA removes a diagonal bias due to heteroskedastic noise.  The bias considered here arises instead
after estimating the signal space of each view: different empirical singular
directions leak outside their population signal spaces by different amounts.
BCSI reweights these directions to make the leakage approximately isotropic
and then removes the resulting common shift before estimating the
intersection.

\subsection{Notation}
\label{sec:notation}

All matrices and vectors are set in bold. For
$\bm A\in\mathbb R^{n\times d}$, $\|\bm A\|_{\mathrm{op}}$ is the spectral norm,
$\bm A^{\dagger}$ the Moore--Penrose pseudoinverse, $\sigma_j(\bm A)$ the $j$-th
largest singular value, and $\sigma^{+}_{\min}(\bm A)$ the smallest nonzero
singular value. For symmetric $\bm A$ we write $\lambda_{\min}(\bm A)$ and
$\lambda^{+}_{\min}(\bm A)$ for the smallest and the smallest positive eigenvalues, respectively. We set
$\operatorname{St}(n,k)\coloneqq\{\bm W\in\mathbb R^{n\times k}:
\bm W^{\top}\bm W=\bm I_k\}$, write $\operatorname{col}(\bm A)$ for the column
space of $\bm A$, and write $\bm P_{\mathcal V}$ for the orthogonal projector onto a subspace
$\mathcal V$. We write
$a\wedge b\coloneqq\min\{a,b\}$ and
$a\vee b\coloneqq\max\{a,b\}$. For nonnegative quantities $x$ and $y$,
the relation $x\lesssim y$ means that $x\le Cy$ for a universal constant
$C>0$, and $x\gtrsim y$ denotes the reverse inequality. We write
$x\asymp y$ when both $x\lesssim y$ and $x\gtrsim y$ hold. For a matrix
$\bm Z$, set
$\|\bm Z\|_{L^q}\coloneqq(\mathbb E\|\bm Z\|_{\mathrm{op}}^q)^{1/q}$.
For an event $\cE$, let $\ind_{\cE}$ denote its indicator.

\section{Model and bias-corrected subspace intersection}
\label{sec:estimator}

\subsection{Model and problem parameters}
\label{sec:formulation}
We now give a concrete parametrization of the two-view JIVE model introduced in
\cref{sec:introduction}. The parametrization separates the shared and
view-specific components and makes explicit the signal strengths and subspace
separation that govern the difficulty of estimating the shared subspace.

Recall that
$\cM_k=\operatorname{col}(\bm X_k)$ and
$\cU=\cM_1\cap\cM_2$.
For $k=1,2$, we observe
\begin{equation}
\label{eq:model}
  \bm Y_k=\bm X_k+\bm E_k,
  \qquad
  \bm X_k=[\bm U,\bm V_k]\bm B_k^\trans
  \in\mathbb R^{n\times d_k}.
\end{equation}
Here
$\bm U\in\St(n,r)$,
$\bm V_k\in\St(n,r_k)$, and
$[\bm U,\bm V_k]\in\St(n,p_k)$, where
$p_k\coloneqq r+r_k$.
The matrix
$\bm B_k\in\mathbb R^{d_k\times p_k}$
contains the right loadings.
The entries of $\bm E_1$ and $\bm E_2$ are mutually independent
$\mathcal N(0,\sigma^2)$ random variables.

We assume
$\sigma_{p_k}(\bm B_k)\ge s_k>0$.
Since $[\bm U,\bm V_k]$ has orthonormal columns, the nonzero singular
values of $\bm X_k$ and $\bm B_k$ coincide. In particular,
$\bm B_k$ has full column rank and
\[
  \cM_k
  =
  \operatorname{col}(\bm X_k)
  =
  \operatorname{col}(\bm U)
  \oplus
  \operatorname{col}(\bm V_k).
\]
Apart from the lower bound on its smallest singular value, we impose no
restriction on $\bm B_k$. Thus the right loadings may arbitrarily mix the
shared and view-specific coordinates. We use the corresponding effective
strength $\gamma_k$ defined in \cref{eq:effective-strength}.

It remains to ensure that the common block $\bm U$ spans the entire
intersection of the two signal spaces. For a separation parameter
$\theta\in(0,1/2]$, assume
\begin{equation}
\label{eq:misalign}
  \|\bm V_1^\trans\bm V_2\|_{\mathrm{op}}
  \le 1-2\theta.
\end{equation}
The left-hand side is the cosine of the smallest principal angle between
the two view-specific subspaces. Hence any $\theta>0$ implies
$\operatorname{col}(\bm V_1)\cap
\operatorname{col}(\bm V_2)=\{\bm 0\}$.
Consequently,
$\cU=\cM_1\cap\cM_2=\operatorname{col}(\bm U)$.
Smaller $\theta$ allows the two view-specific subspaces to be more closely
aligned.

Our target is the shared projector
$\bm P_{\cU}=\bm U\bm U^\trans$.
For an estimator $\widehat{\bm P}$, we measure error by
$\mathbb{E}[\|\widehat{\bm P}-\bm P_{\cU}\|_{\mathrm{op}}]$.

\subsection{The BCSI estimator}
\label{sec:bcsi-estimator}

We now give the formal construction of the bias-corrected subspace
intersection estimator (BCSI). It is also summarized in Algorithm~\ref{alg:bcsi}.
For each view, recall that
$a_k=n-p_k$ and $b_k=d_k-p_k$.
Let
$\widehat s_{k,1}\ge\cdots\ge\widehat s_{k,p_k}$ be the leading
$p_k$ singular values of $\bm Y_k$, with corresponding left singular vectors
$\widehat{\bm\phi}_{k,1},\ldots,\widehat{\bm\phi}_{k,p_k}$, and write
$\widehat{\bm\Phi}_k
=[\widehat{\bm\phi}_{k,1},\ldots,\widehat{\bm\phi}_{k,p_k}]
\in\mathbb R^{n\times p_k}$.

\paragraph{Leakage bias correction.}
The directionwise correction motivated in
\cref{sec:intro-leakage-correction} depends on the unknown population
singular values. We therefore estimate them from the empirical singular
values using the spike-location relation in
\cref{eq:intro-spike-location}. For an empirical singular value $y$ above
the spectral edge, solving this relation for the squared population
singular value gives the upper root
\begin{equation}
\label{eq:tau}
  \tau_k(y)
  \coloneqq
  \begin{cases}
    \displaystyle
    \frac{
      y^2-(a_k+b_k)\sigma^2
      +
      \sqrt{
        \bigl\{y^2-(a_k+b_k)\sigma^2\bigr\}^2
        -4a_kb_k\sigma^4
      }
    }{2},
    &
    y>
    \sigma\bigl(\sqrt{a_k}+\sqrt{b_k}\bigr),
    \\[12pt]
    \sigma^2\sqrt{a_kb_k},
    &
    \text{otherwise}.
  \end{cases}
\end{equation}
At the spectral edge, the two roots coincide at
$\sigma^2\sqrt{a_kb_k}$, and we keep $\tau_k(y)$ at this boundary value
below the edge.

Substituting $\tau_k(y)$ into the oracle rescaling from
\cref{sec:intro-leakage-correction} gives the weight
\begin{equation}
\label{eq:omega}
  w_k(y)
  \coloneqq
  \left\{
    \frac{
      \tau_k(y)\bigl(\tau_k(y)+a_k\sigma^2\bigr)
    }{
      \tau_k(y)+b_k\sigma^2
    }
  \right\}^{1/2}.
\end{equation}
We then form the bias-corrected empirical factor
\begin{equation}
\label{eq:C}
  \bm C_k
  \coloneqq
  \widehat{\bm\Phi}_k
  \diag\!\left\{
    w_k(\widehat s_{k,1}),
    \ldots,
    w_k(\widehat s_{k,p_k})
  \right\}
  \in\mathbb R^{n\times p_k}.
\end{equation}
The columns of $\bm C_k$ span the same empirical signal space as
$\widehat{\bm\Phi}_k$, but the rescaling is chosen so that the leakage Gram
of $\bm C_k$ is approximately $a_k\sigma^2\bm I_{p_k}$.

We therefore subtract this common leakage contribution from the within-view
Gram blocks and define
\begin{equation}
\label{eq:G}
  \bm G
  \coloneqq
  \begin{pmatrix}
    \bm C_1^\trans\bm C_1-a_1\sigma^2\bm I_{p_1}
    &
    \bm C_1^\trans\bm C_2
    \\
    \bm C_2^\trans\bm C_1
    &
    \bm C_2^\trans\bm C_2-a_2\sigma^2\bm I_{p_2}
  \end{pmatrix}.
\end{equation}
After this correction, $\bm G$ is intended to approximate the Gram matrix
obtained by retaining only the components of $\bm C_1$ and $\bm C_2$
inside their population signal spaces. The latter has an $r$-dimensional
nullspace corresponding to the shared subspace.

Let
$\bm Q=(\bm Q_1^\trans,\bm Q_2^\trans)^\trans
\in\mathbb R^{(p_1+p_2)\times r}$
have orthonormal columns spanning the invariant subspace associated with the
$r$ eigenvalues of $\bm G$ nearest zero, where
$\bm Q_k\in\mathbb R^{p_k\times r}$.

\paragraph{Combining views of unequal strength.}
To account for unequal signal strengths, we follow the weighted aggregation
used in prior work on shared-subspace estimation
\citep{baharav2025stacked,li2025heterojive}.
Specifically, we set
\begin{equation}
\label{eq:M}
  \bm M
  \coloneqq
  \alpha_1\bm C_1\bm Q_1
  -
  \alpha_2\bm C_2\bm Q_2
  \in\mathbb R^{n\times r}, 
  \qquad \text{where }   \alpha_k
  \coloneqq
  \frac{\gamma_k^2}{\gamma_1^2+\gamma_2^2},
  \qquad k=1,2.
\end{equation}
Thus the stronger view contributes more to the reconstruction.
Define $ \bm J \coloneqq \diag(\alpha_1\bm I_{p_1},-\alpha_2\bm I_{p_2}), $ so that the BCSI reconstruction can be written as $\bm M=\bm C\bm J\bm Q$. 
Finally, define
$\widehat{\bm P}_{\rm BCSI}
\coloneqq
\bm P_{\operatorname{col}(\bm M)}$.

\begin{algorithm}[t]
\caption{Bias-corrected subspace intersection (BCSI)}
\label{alg:bcsi}

\textbf{Input:}
Data matrices $\bm Y_1,\bm Y_2$; ranks $r,r_1,r_2$;
noise level $\sigma$; signal strength lower bounds $s_1,s_2$.
\begin{itemize}[leftmargin=1.5em]

\item \textbf{Reweight the empirical signal directions.}
For each $k=1,2$, let $p_k=r+r_k$, $a_k=n-p_k$, and
$b_k=d_k-p_k$. Compute the leading $p_k$ left singular vectors
$\widehat{\bm\Phi}_k$ and singular values
$\widehat s_{k,1},\ldots,\widehat s_{k,p_k}$ of $\bm Y_k$, and set
\[
  \bm C_k
  =
  \widehat{\bm\Phi}_k
  \diag\!\left\{
    w_k(\widehat s_{k,1}),
    \ldots,
    w_k(\widehat s_{k,p_k})
  \right\},
\]
where $w_k(\cdot)$ is defined in \eqref{eq:omega}.

\item \textbf{Form the corrected intersection.}
Construct the centered Gram matrix
\[
\bm G=
\begin{pmatrix}
\bm C_1^\trans\bm C_1-a_1\sigma^2\bm I_{p_1}
&
\bm C_1^\trans\bm C_2
\\
\bm C_2^\trans\bm C_1
&
\bm C_2^\trans\bm C_2-a_2\sigma^2\bm I_{p_2}
\end{pmatrix}.
\]
Let $\bm Q_1$ and $\bm Q_2$ denote the two row blocks of an
orthonormal basis for the invariant subspace associated with the
$r$ eigenvalues of $\bm G$ nearest zero.

\item \textbf{Combine the two views.}
With
$\alpha_k=\gamma_k^2/(\gamma_1^2+\gamma_2^2)$, form
\[
\bm M
=
\alpha_1\bm C_1\bm Q_1
-
\alpha_2\bm C_2\bm Q_2.
\]
\item \textbf{Output.}
Return the projector
$\widehat{\bm P}_{\rm BCSI}
=\bm P_{\operatorname{col}(\bm M)}$.

\end{itemize}
\end{algorithm}

\section{Statistical guarantees}

\subsection{Upper bound}
\label{sec:upper-bound}

We first state the performance guarantee for BCSI.
\begin{theorem}
\label{thm:upper}
For $k=1,2$, suppose that $p_k\le(n\wedge d_k)/3$. If $\gamma_{\min} \ge\nu\sigma\sqrt n$ for some sufficiently large constant $\nu$, 
then the BCSI estimator obeys
\begin{align}
\label{eq:main}
  &\mathbb E
  \opn{
    \widehat{\bm P}_{\rm BCSI}-\bm P_{\cU}
  } \notag \\
  &\qquad 
  \le
  C\min\left\{
    1,\,
    \frac{\sigma\sqrt n}{\gamma_{\max}}
    +
    \frac{\sigma\sqrt{r_1+r_2}}{\gamma_{\min}}
    +
    \frac{
      \sigma\sqrt{r+ (r_1\wedge r_2)}
    }{
      \gamma_{\min}\sqrt\theta
    }
    +
    \frac{
      \sigma^2\sqrt{n\{r+(r_1\vee r_2)\}}
    }{
      \gamma_{\min}^2\theta
    }
  \right\}
\end{align}
for some universal constant $C>0$.
\end{theorem}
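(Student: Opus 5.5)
Since the error is always at most $1$, it suffices to work on a high-probability event $\cE$ and to bound the contribution of $\cE^c$ by $\Pp(\cE^c)$. We would choose $\cE$ so that $\Pp(\cE^c)\le e^{-cn}$, which is dominated by $\sigma\sqrt n/\gmax$ under the signal condition. On $\cE$ the plan is a perturbation argument for the bottom-$r$ eigenspace of $\bm G$, followed by an analysis of the reconstruction $\bm M=\bm C\bm J\bm Q$.

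\emph{Step 1 (population target and gap).} Decompose $\bm C_k=\bm S_k+\bm L_k$ with $\bm S_k=\bm P_{\cM_k}\bm C_k$ and $\bm L_k=\bm P_{\cM_k^\perp}\bm C_k$. Let $\bm S=[\bm S_1,\bm S_2]$ and define $\bm G^\star\coloneqq\bm S^\trans\bm S$. On $\cE$, Wedin's theorem gives $\sigma_{\min}(\bm P_{\cM_k}\widehat{\bm\Phi}_k)\ge 1/2$, and the weights satisfy $w_k(\widehat s_{k,i})\asymp\gamma_{k,i}$. Writing $\bm S_k=[\bm U,\bm V_k]\bm R_k$ with $\bm R_k$ invertible, the matrix $\bm G^\star$ has exactly an $r$-dimensional kernel $\mathcal K$, namely the set of $(\bm x_1,\bm x_2)$ with $\bm S_1\bm x_1+\bm S_2\bm x_2=\bm 0$. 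Its smallest positive eigenvalue satisfies $\lmin(\bm G^\star)\gtrsim\gmin^2\theta$, because \cref{eq:misalign} forces any combination of $\bm V_1$- and $\bm V_2$-coordinates that is nearly zero to have norm at least $\sqrt\theta$ times its coefficients. The point to check is that no condition-number assumption is needed: the column scaling by $w_k$ only helps, since it lower-bounds every column by $\gmin$.

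\emph{Step 2 (perturbation decomposition).} Write
\[
\bm G-\bm G^\star=(\bm S^\trans\bm L+\bm L^\trans\bm S)+\bm\Delta,
\qquad
\bm\Delta\coloneqq\bm L^\trans\bm L-\diag(a_1\sigma^2\bm I_{p_1},a_2\sigma^2\bm I_{p_2}).
\]
The cross terms vanish on the diagonal blocks, since $\bm S_k^\trans\bm L_k=\bm 0$. Only off-diagonal terms such as $\bm S_1^\trans\bm L_2$ remain, and these are first order: of size $\sigma\gmin\sqrt{\cdot}$ after accounting for the dimensions $r+(r_1\wedge r_2)$. The key ingredient is the concentration result for the bias-corrected leakage Gram matrix mentioned in the abstract, which I take as established earlier. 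Within each view it gives $\opn{\bm L_k^\trans\bm L_k-a_k\sigma^2\bm I}\lesssim\sigma^2\sqrt{n p_k}+\sigma^2 n\cdot(\text{lower order})$. The cross-view block $\bm L_1^\trans\bm L_2$ has norm at most $\sigma^2\sqrt{n(r+(r_1\vee r_2))}$ by independence of the two noise matrices. Hence $\opn{\bm\Delta}\lesssim\sigma^2\sqrt{n\{r+(r_1\vee r_2)\}}$.

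\emph{Step 3 (eigenspace perturbation and reconstruction).} Apply Davis--Kahan to $\bm G$ versus $\bm G^\star$ with gap $\gmin^2\theta$. The quadratic term then contributes $\sigma^2\sqrt{n(r+r_1\vee r_2)}/(\gmin^2\theta)$. For the linear term we would use a refined $\sin\Theta$ bound that projects the perturbation onto $\mathcal K\times\mathcal K^\perp$. Because $\bm S\bm Q^\star=\bm 0$, one factor of $\bm S$ is absorbed: the relevant quantity is $\opn{\bm L^\trans\bm S\bm Q^\star}$-type terms divided by $\sqrt{\lmin}$ rather than $\lmin$. This produces $\sigma\sqrt{r+r_1\wedge r_2}/(\gmin\sqrt\theta)$. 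Finally,
\[
\bm M=\bm C\bm J\bm Q=\bm S\bm J\bm Q+\bm L\bm J\bm Q .
\]
For $\bm Q=\bm Q^\star$, the signal part $\bm S\bm J\bm Q^\star$ equals $\bm U$ times an invertible matrix of size $\asymp\alpha_1\gamma_1+\alpha_2\gamma_2$. The leakage part $\bm L\bm J\bm Q^\star$, which is a Gaussian-like $n\times r$ noise matrix weighted by $\alpha_k$ and by $w_k\asymp\gamma_k$, has norm $\lesssim\sigma\sqrt n\,(\alpha_1\gamma_1+\alpha_2\gamma_2)/\gmax$. A Wedin argument then gives the $\sigma\sqrt n/\gmax$ term. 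Errors in $\bm Q$, as well as the view-specific mixing term $\sigma\sqrt{r_1+r_2}/\gmin$ coming from the part of the leakage along $\bm V_k$, add through the triangle inequality.

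\emph{Main obstacle.} The main obstacle is Step 2: showing that the reweighting with the estimated $\tau_k(\widehat s_{k,i})$ isotropizes the leakage Gram up to error of order $\sigma^2\sqrt{np}$ rather than $\sigma^2 n$. This requires nonasymptotic spike-location and eigenvector-leakage expansions that are uniform over $i$, without eigengap assumptions among the $s_{k,i}$. I would attack it with a resolvent representation $\bm L_k=\bm P_{\cM_k^\perp}\bm E_k(\cdot)$, conditioning on the signal-space component and applying Gaussian concentration to $\bm P_{\cM_k^\perp}\bm E_k$. I would also use the fact that the weight function is Lipschitz, so that errors in $\widehat s_{k,i}$ translate into $O(\sigma^2\sqrt{n})$ diagonal errors.
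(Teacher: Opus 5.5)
Your architecture is the paper's: the oracle Gram $\bm G_0=\bm S^\trans\bm S$ with an $r$-dimensional kernel and gap $\gmin^2\theta$, perturbation of the bottom eigenspace, reconstruction, and the one-view leakage isometry as the probabilistic input. However, Steps 2--3 fail as written. You claim the cross terms $\bm S^\trans\bm L+\bm L^\trans\bm S$ have size $\sigma\gmin\sqrt{\cdot}$; this is false. A block such as $\bm S_1^\trans\bm L_2$ scales with $\opn{\bm S_1}\approx\max_i w_1(\widehat s_{1,i})$, which grows with the \emph{largest} singular value of $\bm X_1$. That quantity is unbounded, because the theorem imposes no condition-number assumption.

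So absolute Davis--Kahan with gap $\gmin^2\theta$ cannot even certify that exactly $r$ eigenvalues of $\bm G$ stay near zero: the perturbation acting on $\ker(\bm G_0)^\perp$ can exceed the gap by an arbitrary factor. Even with bounded condition number, this route needs $\sigma\sqrt{p_1+p_2}/(\gmin\theta)\ll1$ in general, which smallness of the rate in \cref{eq:main} does not imply. Your remark about dividing by $\sqrt{\lambda}$ covers only the kernel-to-complement block. Moreover, a $\sin\Theta$ bound in coefficient space does not transfer to the subspace error, because $\bm S\bm J$ multiplies coefficient errors by $\opn{\bm S}$ again.

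The missing ingredients are a perturbation bound \emph{relative} to $\bm G_0$, and a way to read off the subspace error without an upper signal scale. The paper conjugates the full perturbation $\bm\Delta=\bm G-\bm G_0$ by $(\bm G_0^\dagger)^{1/2}$ and requires $\opn{\bm\Delta_{++}}\le1/4$ and $\opn{\bm\Delta_{00}}+2\opn{\bm\Delta_{+0}}^2\lesssim\gmin^2\theta$. These hold outside an event of probability bounded by the rate, because the linear terms factor through $\bm S(\bm G_0^\dagger)^{1/2}$, which has norm at most one. The conclusion is $\opn{\bm S\bm Q}\le2\opn{\bm\Delta_{+0}}$ (\cref{lem:nullspace}), not a $\sin\Theta$ bound. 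It is converted to subspace error through the transfer identity $\bm P_{\cU^\perp}\bm S\bm J\bm z=\bm T(\bm S\bm z)$ with $\opn{\bm T}\le(2\theta)^{-1/2}$ (\cref{lem:transfer}).

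Four further points need repair.

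\emph{Rank dependence.} Since $\opn{\bm L_k}\asymp\sigma\sqrt n$, terms such as $\bm T\bm L_k\bm S_k^\dagger\bm U$ and $\bm L_2^\trans\bm L_1$ are of order $\sqrt n$ under naive bounds, and independence of $\bm E_1,\bm E_2$ alone does not help. You need that, conditionally on $(\bm S_1,\bm S_2,\bm L_{3-k},\bm L_k^\trans\bm L_k)$, the law of $\bm L_k$ is invariant under rotations of $\cM_k^\perp$; this gives the Frobenius-times-operator bounds of \cref{lem:leakage-product}. Getting $\sqrt{r+(r_1\wedge r_2)}/\sqrt\theta$ rather than $\sqrt{r+(r_1\vee r_2)}/\sqrt\theta$ also requires isolating the exact linear term $(\bm L\bm J-\bm T\bm L)\bm Q_\star$ and using the principal-angle bound $\frn{\bm T}\lesssim\sqrt{r_1+r_2}+\sqrt{(r_1\wedge r_2)/\theta}$. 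Routing it through $\opn{\bm\Delta_{+0}}$, whose leading part is of order $\sigma\sqrt{r+(r_1\vee r_2)}$, loses this.

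\emph{The good event.} An event with $\Pp(\cE^c)\le e^{-cn}$ does not work: $e^{-cn}$ is not dominated by $\sigma\sqrt n/\gmax$, since $\gmax$ has no upper bound. The separation conditions above also fail with probability governed by the rate, not by $n$. The paper instead bounds $\Pp(\cE_{\rm good}^c)$ by the rate itself via Markov's inequality on $32$nd moments, keeps the on-event errors random, and integrates them.

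\emph{Signal-space recovery.} Wedin's theorem cannot give $\sigma_{\min}(\bm P_{\cM_k}\widehat{\bm\Phi}_k)\ge1/2$ in the unbalanced regime $\sigma^2\sqrt{nd_k}\lesssim s_k^2\ll\sigma^2 d_k$ permitted by $\gamma_k\ge\nu\sigma\sqrt n$, where $\opn{\bm E_k}\approx\sigma\sqrt{d_k}\gg s_k$. You need a one-sided Cai--Zhang type bound, or the paper's route of combining $\bm C_k^\trans\bm C_k-a_k\sigma^2\bm I_{p_k}\succeq c\gamma_k^2\bm I_{p_k}$ with the leakage isometry.

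\emph{Your stated main obstacle.} Lipschitz control of the weights addresses only diagonal entries. The real difficulty is the off-diagonal structure of $\bm L_k^\trans\bm L_k$ when population singular values cluster. The paper handles this by coupling the left and right leakage Grams and replacing bulk traces with Marchenko--Pastur contour identities.
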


\paragraph{Comparison with AJIVE.}
Consider the homogeneous two-view regime with 
$\gamma_1\asymp\gamma_2\asymp\gamma$ and
$r\asymp r_1\asymp r_2$.
Specializing the recent high-probability bound for equal-weight AJIVE
in \citet{li2025heterojive} to this setting and suppressing logarithmic
factors gives
\[
\begin{aligned}
\text{\rm AJIVE:}\quad&
  \frac{\sigma\sqrt{nr}}{\gamma}
  +
  \frac{\sigma r}{\gamma\sqrt\theta}
  +
  \frac{\sigma^2 n}{\gamma^2\theta},
\qquad \text{and} \qquad 
\text{\rm BCSI:}\quad&
  \frac{\sigma\sqrt n}{\gamma}
  +
  \frac{\sigma\sqrt r}{\gamma\sqrt\theta}
  +
  \frac{\sigma^2\sqrt{nr}}{\gamma^2\theta}.
\end{aligned}
\]
The main difference is the quadratic term: bias correction
improves this term by a factor $\sqrt{n/r}$.
This gain is most consequential when $\theta$ is small, as the quadratic
term can then dominate the estimation error; we illustrate this regime in
the numerical experiments.
The BCSI bound also has sharper rank dependence in the first-order terms.

\paragraph{Signal strength assumption.}
The condition
$\gamma_k\gtrsim\sigma\sqrt n$ ensures that the population signal space $\cM_k$ in
each view can be estimated with nontrivial accuracy. The ratio
$\frac{\sigma \sqrt{n}}{\gamma_k}
\asymp 
\frac{\sigma \sqrt{n}}{s_k}+\frac{\sigma^2 \sqrt{nd_k}}{s_k^2}$
is also the minimax optimal rate for rectangular column-space estimation
\citep{cai2018rate,Cai2021Unbalanced}. This condition is natural for BCSI,
which first estimates each signal space before estimating their intersection.
It should not be interpreted as a necessary condition for estimating the
shared subspace; see \cref{sec:discussion}.

\paragraph{No condition-number dependence.}
An important feature of \cref{thm:upper} is that the bound depends only on the
smallest signal scale through $\gamma_1$ and $\gamma_2$. It imposes no upper
bound on the singular values of $\bm X_1$ or $\bm X_2$, and hence no
condition-number assumption on either signal matrix. This differs from
existing analyses of AJIVE~\citep{yang2025estimating,li2025heterojive} that require the signal matrices to be
well-conditioned.

\subsection{Lower bound and minimax optimality}
\label{sec:lower-bound}

To determine whether the dependence in \cref{eq:main} is unavoidable, we
consider the full class of signals satisfying the same rank, strength, and
separation conditions. For fixed $s_1,s_2>0$, $r,r_1,r_2\ge1$, and
$\theta\in(0,1/2]$, define
\begin{align}
\mathfrak P
\coloneqq
\Bigl\{
  &(\bm U,\bm V_1,\bm V_2,\bm B_1,\bm B_2):
  \bm U\in\mathbb R^{n\times r},
  \quad
  \bm V_k\in\mathbb R^{n\times r_k},
  \notag\\
  &[\bm U,\bm V_k]\in\operatorname{St}(n,p_k),
  \quad
  \bm B_k\in\mathbb R^{d_k\times p_k},
  \quad
  \sigma_{p_k}(\bm B_k)\ge s_k,
  \quad k=1,2,\notag\\
  &\|\bm V_1^\trans\bm V_2\|_{\mathrm{op}}
  \le1-2\theta
\Bigr\}.
\label{eq:parameter-class}
\end{align}

\begin{theorem}
\label{thm:lower}
For $k=1,2$, suppose that $p_k\le(n\wedge d_k)/3$. If $\gamma_{\min} \ge\nu\sigma\sqrt n$ for some sufficiently large constant $\nu$, then  
\begin{align}
\label{eq:minimax-lower}
  &\inf_{\widehat{\bm P}}
  \sup_{\substack{
    (\bm U,\bm V_1,\bm V_2,\bm B_1,\bm B_2)
    \in
    \mathfrak P
  }}
  \mathbb E
  \opn{
    \widehat{\bm P}-\bm U\bm U^\trans
  } \notag\\
  &\qquad\ge
  c\min\left\{
    1,\,
    \frac{\sigma\sqrt n}{\gmax}
    +
    \frac{
      \sigma\sqrt{r+ (r_1\wedge r_2)}
    }{
      \gamma_{\min}\sqrt\theta
    }
    +
    \frac{
      \sigma^2\sqrt{n\{{r+ (r_1\wedge r_2)}\} }
    }{
      \gamma_{\min}^2\theta
    }
  \right\}
\end{align}
for some universal constant $c>0$.
\end{theorem}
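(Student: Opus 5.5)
Since the right-hand side of \cref{eq:minimax-lower} is a minimum of $1$ and a sum of three terms, and a sum of three nonnegative terms is at most three times their maximum, it suffices to prove three separate lower bounds, one for each term (each truncated at a constant). For each, I will use Fano's method (or Le Cam's two-point method for the rank-one parts) over a finite packing of parameters inside $\mathfrak P$. The Kullback--Leibler divergence between two Gaussian models is
\[
\mathrm{KL} = \sum_k \frac{\|\bm X_k-\bm X_k'\|_{\mathrm F}^2}{2\sigma^2}.
\]
For the terms involving $\gamma_k$ rather than $s_k$, I will not use fixed $\bm B_k$. Instead I will follow \citet{cai2018rate,Cai2021Unbalanced}: draw the right loadings at random, with $\bm B_k$ having i.i.d.\ Gaussian rows of variance of order $s_k^2/d_k$, so that $\sigma_{p_k}(\bm B_k)\gtrsim s_k$ with high probability when $p_k\le d_k/3$. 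Then each column of $\bm Y_k$ is i.i.d.\ $\mathcal N(\bm 0,\bm\Sigma_k)$ with spiked covariance $\bm\Sigma_k=\sigma^2\bm I+(s_k^2/d_k)[\bm U,\bm V_k][\bm U,\bm V_k]^\trans$. For two such covariances with bases perturbed by an angle $\epsilon$, the KL divergence is of order
\[
d_k\,\epsilon^2\,\frac{(s_k^2/d_k)^2}{\sigma^2(\sigma^2+s_k^2/d_k)}\asymp\frac{\epsilon^2\gamma_k^2}{\sigma^2}.
\]
This is exactly how the effective strength $\gamma_k$ enters. I will use this randomized-prior reduction, with the event $\sigma_{p_k}(\bm B_k)<s_k$ handled by a standard truncation argument, throughout.

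\textbf{Term 1, $\sigma\sqrt n/\gamma_{\max}$.} Fix $\bm V_1,\bm V_2$ orthogonal to each other and to the candidate $\bm U$'s, which satisfies \cref{eq:misalign} trivially. Take a packing $\{\bm U^{(j)}\}$ of the Grassmannian of $r$-planes inside an $(n-r_1-r_2)$-dimensional complement, with separation $\epsilon$ and cardinality $e^{c(n-p_1-p_2)r}$. Since $p_k\le n/3$, this dimension is $\gtrsim n$. Both views carry $\bm U$, so the KL divergence is $\lesssim r\epsilon^2(\gamma_1^2+\gamma_2^2)/\sigma^2$. Fano then gives $\epsilon\asymp\sigma\sqrt n/\gamma_{\max}$. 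If $\epsilon\ge1$ is forced, this degrades to the constant, consistent with the $\min\{1,\cdot\}$.

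\textbf{Terms 2 and 3, the identifiability terms.} Let $k$ be the weaker view and $m=r+(r_1\wedge r_2)$. I will embed a near-aligned pair of subspaces: $\bm V_1$ and $\bm V_2$ share $m'\asymp m$ directions at cosine $1-2\theta$. In the weaker view I perturb which combination of these directions is ``shared'' versus ``specific''. Concretely, write $\bm u_i=\bm e_i$, and put the view-specific vectors in the two views as $\cos\beta\,\bm e_i+\sin\beta\,\bm f_i$ with $\sin\beta\asymp\sqrt\theta$. Rotating the shared direction by an angle $\delta$ within the plane of $\bm e_i,\bm f_i$, while keeping $\cM_{\text{strong}}$ fixed, moves $\bm U$ by about $\delta/\sqrt\theta$ but moves $\cM_{\text{weak}}$ by only about $\delta$. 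The KL divergence is then $\lesssim m\delta^2\gamma_{\min}^2/\sigma^2$. Fano over $2^{cm}$ sign patterns (Varshamov--Gilbert) gives $\delta\asymp\sigma/\gamma_{\min}$, hence an error of order $\sigma\sqrt m/(\gamma_{\min}\sqrt\theta)$ after normalizing the packing distance in operator norm. The $\sqrt m$ factor should come from using Fano in Frobenius norm and then converting to the operator norm at the scale relevant for each coordinate; I will check this conversion carefully.

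For the second-order term, the perturbation direction is taken in $\cM_{\text{weak}}^\perp$ with $\gtrsim n$ free coordinates. I will use a perturbation that rotates the weak view's signal space by $\eta$ out of its current span toward a generic direction. The first-order effect of such a rotation on the intersection vanishes, but the second-order effect, of size $\eta^2/\theta$, tilts $\bm U$. In the randomized-$\bm B$ model, $\eta$ can be as large as $\sigma\sqrt{n}/\gamma_{\min}$ per direction while still being indistinguishable, because the $n$ directions are unidentifiable individually. This gives $\eta^2/\theta\asymp\sigma^2 n/(\gamma_{\min}^2\theta)$, and combining over $m$ directions with Fano should yield $\sigma^2\sqrt{nm}/(\gamma_{\min}^2\theta)$.

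The main obstacle is this third construction. It requires a mixture (Bayesian, Ingster-type) argument rather than a finite packing: average over a random direction in the $n$-dimensional complement, so that the chi-square divergence between the mixture and the null stays bounded even though individual alternatives are far apart. The quadratic dependence on $\eta$ also has to produce a deterministic shift of $\bm U$ across the mixture. My first attempt would be a two-point test between two mixtures whose $\bm U$'s differ by $\eta^2/\theta$ in a fixed direction, bounding the second moment of the likelihood ratio by Gaussian integrals under $\gamma_{\min}\ge\nu\sigma\sqrt n$.
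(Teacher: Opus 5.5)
Your argument for the $\theta$-independent term is fine and is essentially the paper's: Fano with Gaussian loadings conditioned on $\sigma_{p_k}(\bm B_k)\ge s_k$, compared against unconditioned spiked-covariance proxies with KL $\asymp\delta^2(\gamma_1^2+\gamma_2^2)/\sigma^2$.

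For the $\theta^{-1/2}$ term the heuristic is also right: fix the strong view's signal space, so only the weak view informs, and it moves by $\asymp\sqrt\theta$ per unit rotation of $\bm U$. However, your route to the factor $\sqrt{r+(r_1\wedge r_2)}$ fails.
\begin{itemize}
\item With one rotation per coordinate pair $(\bm e_i,\bm f_i)$ and a Varshamov--Gilbert packing, the target-projector differences are block diagonal. Their operator norm is the largest block, of order $\delta/\sqrt\theta$, and no Frobenius-to-operator conversion recovers $\sqrt m$.
\item $\col(\bm V_1)$ and $\col(\bm V_2)$ have only $r_1\wedge r_2$ principal angles. So $m'\asymp r+(r_1\wedge r_2)$ near-aligned pairs is impossible when $r\gg r_1\wedge r_2$. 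Also, as written, $\cos\beta\,\bm e_i+\sin\beta\,\bm f_i$ is not orthogonal to $\bm u_i=\bm e_i$.
\end{itemize}
The missing idea is a rank-one coupling. Parametrize the shared subspace inside a fixed $(r+q)$-dimensional space, $q=r_1\wedge r_2$, by $\bm T\in\R^{q\times r}$, and take $\bm T_{\bm\omega}=\delta\bm e_1\bm\omega^\trans$ (or $\delta\bm\omega\bm e_1^\trans$). Then $\opn{\bm T_{\bm\omega}-\bm T_{\bm\omega'}}=\frn{\bm T_{\bm\omega}-\bm T_{\bm\omega'}}=2\delta\sqrt{\operatorname{Ham}(\bm\omega,\bm\omega')}$. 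Neighbours cost only $\lesssim\gamma_2^2\theta\delta^2/\sigma^2$ through View~2, and Assouad gives risk $\gtrsim\delta\sqrt{r\vee q}$.

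The $\theta^{-1}$ term is a genuine gap. First, your scale cannot be right. For $r=r_1=r_2=1$, a lower bound of order $\sigma^2 n/(\gamma_{\min}^2\theta)$ contradicts \cref{thm:upper}. Take $\sigma=1$, $d_k=n$, $s_k=n^{2/3}$, $\theta\asymp n^{-1/3}$, as in \cref{sec:numerical-synthetic}: your bound is of constant order, while BCSI has risk $O(n^{-1/6})$. That constant-order quantity is exactly the AJIVE bias the correction removes. Roughly, an $\eta$-perturbation of a signal space with $\gamma\gtrsim\sigma\sqrt n$ shifts the centered leakage Gram by $\eta^2$. That Gram fluctuates only at scale $\sigma^2\sqrt n/\gamma^2$ (\cref{lem:one-view}). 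So even a randomly oriented perturbation is visible once $\eta^2\gtrsim\sigma^2\sqrt n/\gamma^2$, not only at $\eta\asymp\sigma\sqrt n/\gamma$. Your step from $n$ per direction to $\sqrt{nm}$ overall is also inconsistent.

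More fundamentally, you give no family in which a random weak-view perturbation moves $\bm U$ deterministically while keeping $\dim(\cM_1\cap\cM_2)=r$ and \cref{eq:misalign}. Nor do you give a way to bound the mixture divergence.

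The paper uses no second-order geometric shift. It keeps the first-order rotation family of the $\theta^{-1/2}$ bound and replaces the fixed reference direction in $\bm v_2=\rho\bm v_1+\rho_\perp\bm z$ by a Haar-random one in a complement of dimension $M\asymp n$. This leaves the target unchanged. The argument then runs as follows.
\begin{itemize}
\item After reducing to $M+2$ coordinates, Fisher's identity replaces $\bm z$ in the score by its posterior mean $\widehat{\bm z}$.
\item The mutual-information bound $\E\|\widehat{\bm z}\|^2\lesssim I\theta/M$, with $I\asymp\gamma_2^2/\sigma^2$ (\cref{lem:spherical-posterior}), cuts the Fisher information for the rotation from $I\theta$ to $I\theta^2+I^2\theta^2/M$.
\item A Hellinger path integral then yields $\delta\asymp\sigma^2\sqrt n/(\gamma_2^2\theta)$.
\end{itemize}
Your plan to use Gaussian loadings throughout also does not work when $s_2^2\gtrsim d_2\sigma^2$. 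There the $\bm b_t$-part of the score has variance inflated by $1+\beta\rho^2$, with $\beta\asymp s_2^2/(d_2\sigma^2)\gg1$. The paper instead uses deterministic loadings $s_2[\bm W_{2,0}[\bm U^\circ_{\bm T},\bm R^\circ_{\bm T}],\bm W_{2,1}]$. These co-rotate the right coordinates with the left ones, so the $\rho$-independent part of the mean derivative cancels.
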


When $r\asymp r_1\asymp r_2$,\footnote{In fact, the upper and lower bounds match in a broader setting when $r+(r_1\vee r_2)\asymp r+(r_1\wedge r_2)$.} \cref{thm:upper,thm:lower} match up to
numerical constants and yield the three-term rate in
\cref{eq:intro-upper-rate}.  Thus, in the signal-strength regime considered
here, the $\theta$-independent term, the $\theta^{-1/2}$ term, and the
$\theta^{-1}$ second-order term are all intrinsic.  For highly unbalanced
view-specific ranks, a gap remains in the rank dependence: the upper bound
depends on the larger individual rank, whereas the lower bound depends on
the smaller one~$r_1\wedge r_2$.

\paragraph{Comparison with existing lower bounds.}
The closest prior information-theoretic result is the refined minimax lower
bound of \citet[Theorem~5]{yang2025estimating}. Their result considers a
homogeneous $K$-view model with a common column dimension and signal strength and
with equal shared and view-specific ranks. Specializing their bound to $K=2$, and
writing $s$ for the common signal strength and $d$ for the common column dimension, gives, up to numerical constants,
\[
  \frac{\sigma\sqrt n}{\gamma}
  +
  \frac{\sigma\sqrt r}{\gamma\sqrt\theta},
  \qquad \text{where} \qquad
  \gamma
  =
  \frac{s^2}{\sqrt{s^2+d\sigma^2}}.
\]
Thus the prior lower bound captures the ambient and $\theta^{-1/2}$
difficulties, including the rectangular-matrix effect encoded by the effective
strength $\gamma$. Our result recovers these scales while allowing unequal
view dimensions, effective strengths, and view-specific ranks, and additionally
identifies the $\theta^{-1}$ obstruction
$
  \frac{\sigma^2\sqrt{nr}}{\gamma^2\theta}
$
in the balanced-rank regime.

The proof of \cref{thm:lower} is technically involved in general ranks, but the three statistical obstructions are already visible when
$r=r_1=r_2=1$.  We therefore use the rank-one case to explain where the three terms come from.  
The general principle behind the lower bound is to choose the shared direction $\bm u$, the individual directions $\bm v_1,\bm v_2$, and the right loadings so that two or more signal instances have well-separated shared directions but nearly indistinguishable observation laws.

\paragraph{Understanding the $\theta$-independent term.}
For the first term, suppose that the individual directions $\bm v_1$ and
$\bm v_2$ are fixed and are even revealed to the estimator.  We then vary only
the shared direction $\bm u$ inside
$\operatorname{span}\{\bm v_1,\bm v_2\}^{\perp}$, a space whose
dimension is of order $n$.  There is no difficulty here in deciding which
directions are shared and which are individual: the only task is to locate
$\bm u$ from two noisy views.  Since both views change with $\bm u$, their
information adds, giving the combined effective strength
$\sqrt{\gamma_1^2+\gamma_2^2}\asymp\gamma_{\max}$.  Packing the possible
shared directions over an $n$-dimensional space therefore gives the lower-bound
scale $\frac{\sigma\sqrt n}
       {\sqrt{\gamma_1^2+\gamma_2^2}}
  \asymp
  \frac{\sigma\sqrt n}{\gamma_{\max}}.$ 
This term is independent of $\theta$ because the individual directions are
already known.  It represents the unavoidable cost of estimating the
shared direction itself. 
The proof follows a similar strategy in the single-matrix subspace estimation literature~\citep{Cai2021Unbalanced}.

\paragraph{Understanding the $1/\sqrt{\theta}$ term.}
The second construction isolates a different difficulty: even if the signal
space of one view is perfectly known, its decomposition into shared and
individual directions is not. Suppose without loss of generality that
$\gamma_1\ge\gamma_2$, so that View~2 is the weaker view. Fix orthonormal
vectors $\bm e_1,\bm e_2,\bm z_0\in\mathbb R^n$. Take
$\bm e_1$ to be the shared direction and $\bm e_2$ to be the individual
direction in View~1, so that its signal space is the fixed plane
$\operatorname{span}\{\bm e_1,\bm e_2\}$.

To vary this decomposition without changing the first-view signal space,
rotate the shared and individual directions inside the same plane by an angle $t$:
\begin{equation}
\label{eq:lower-rotation-path}
  \bm u_t
  =
  \cos(t)\bm e_1+\sin(t)\bm e_2,
  \qquad
  \bm v_{1,t}
  =
  -\sin(t)\bm e_1+\cos(t)\bm e_2.
\end{equation}
Since $[\bm u_t,\bm v_{1,t}]$ is merely a rotation of
$[\bm e_1,\bm e_2]$, the first-view signal matrix can be chosen to be exactly
the same for every $t$ and may therefore be revealed. For View~2, define
\[
  \bm v_{2,t}
  =
  (1-2\theta)\bm v_{1,t}
  +
  2\sqrt{\theta(1-\theta)}\,\bm z_0.
\]
Then
$\bm v_{1,t}^\trans\bm v_{2,t}=1-2\theta$, so every $t$ satisfies
the prescribed separation condition. Thus all information about the unknown
rotation $t$ must come from View~2.

When $\theta$ is small, $\bm v_{2,t}$ is nearly aligned with
$\bm v_{1,t}$. As $t$ changes, the shared direction $\bm u_t$ and
the individual direction $\bm v_{2,t}$ therefore rotate almost coherently.
The component that distinguishes different values of $t$ lies in the fixed
direction $\bm z_0$ and has size of order $\sqrt{\theta}$. Consequently,
View~2 can distinguish rotations only at the scale
$\frac{\sigma}{\gamma_2\sqrt\theta}
  =
  \frac{\sigma}{\gamma_{\min}\sqrt\theta}$. 
Thus the second term is the cost of identifying which direction inside an
otherwise known signal plane is shared.

\paragraph{Understanding the $1/\theta$ term.}
The third construction starts from exactly the same family and makes only one
change: the fixed direction $\bm z_0$ that helps View~2 resolve the rotation is
now replaced by an unknown direction
$\bm z\in\operatorname{span}\{\bm e_1,\bm e_2\}^{\perp}$:
\[
  \bm v_{2,t,\bm z}
  =
  (1-2\theta)\bm v_{1,t}
  +
  2\sqrt{\theta(1-\theta)}\,\bm z.
\]
For each fixed $t$, we randomize $\bm z$ over an $(n-2)$-dimensional family.
This changes only the individual direction in View~2 and leaves the target
$\bm u_t \bm u_t^\trans$ fixed. The unknown parameter of interest is still the
same scalar rotation $t$, but the reference direction that helped reveal
this rotation in the preceding construction is now itself hidden.

With the fixed $\bm z_0$, View~2 can compare the rotating signal plane against a
known direction outside that plane. After $\bm z_0$ is replaced by the hidden
direction $\bm z$, this reference must itself be learned from the data.
Averaging over the high-dimensional nuisance direction removes the
first-order directional information available in the preceding construction.
View~2 must now simultaneously infer the hidden direction and determine the
rotation $t$, leaving only a second-order amount of information about the
latter.  Hiding $\bm z$ in a space   whose dimension is of order $n$ leads to
indistinguishable rotations of size
$  \frac{\sigma^2\sqrt n}
       {\gamma_2^2\theta}
  =
  \frac{\sigma^2\sqrt n}
       {\gamma_{\min}^2\theta},$ which gives the third lower-bound term.

\section{Numerical experiments}
\label{sec:numerical-experiments}

In this section, we use synthetic and real data to compare BCSI with AJIVE.

\subsection{Synthetic data}
\label{sec:numerical-synthetic}

\paragraph{Data-generating process.}
We consider the model with $d_1=d_2=n$, $\sigma=1$, and
$r=r_1=r_2=1$. Thus each signal matrix has rank two, with one shared
direction and one view-specific direction. Let
$\bm e_1,\bm e_2,\bm e_3$ be the first three standard basis vectors in $\mathbb{R}^n$. 
Set the signal strength $s_n = n^{2/3}$, and the separation parameter $  \theta_n=\frac{n^{-1/3}}{4}$. We take the shared direction to be
$\bm u=\bm e_1$ and define
\[
  \bm v_1
  =
  \sqrt{1-\theta_n}\,\bm e_2+\sqrt{\theta_n}\,\bm e_3,
  \qquad
  \bm v_2
  =
  \sqrt{1-\theta_n}\,\bm e_2-\sqrt{\theta_n}\,\bm e_3.
\]
Then the two signal matrices and observations are generated
according to
\[
  \bm X_k
  =
  [\bm u,\bm v_k]\,
  \frac{1}{\sqrt{2}}
  \begin{pmatrix}
    1 & -1\\
    1 &  1
  \end{pmatrix}
  \begin{pmatrix}
    s_n & 0\\
    0 & 2s_n
  \end{pmatrix}
  [\bm e_1,\bm e_2]^\trans,
  \qquad
  \bm Y_k=\bm X_k+\bm E_k,
  \qquad k=1,2,
\]
where the entries of $\bm E_1$ and $\bm E_2$ are mutually independent
standard Gaussian random variables. The $2\times2$ matrix in this
display rotates $\bm u$ and $\bm v_k$ by $45^\circ$. Consequently, the
left singular vectors of $\bm X_k$ are
$(\bm u+\bm v_k)/\sqrt{2}$ and
$(-\bm u+\bm v_k)/\sqrt{2}$, with singular values $s_n$ and $2s_n$.
Moreover, $\bm v_1^\trans\bm v_2=1-2\theta_n$, so the two view-specific
subspaces become more closely aligned as $n$ increases. The rotation
ensures that the shared and view-specific directions are not themselves
singular vectors, allowing unequal leakage from the two empirical
singular directions to bias the intersection estimated by AJIVE. Under
this scaling, the leading term in the BCSI rate decreases as
$n^{-1/6}$, whereas the quadratic bias scale for AJIVE remains constant.

\paragraph{Estimators being compared.}
We compare four procedures. The first is AJIVE, obtained from the
leading rank-one eigenspace of the sum of the two empirical rank-two
signal-space projectors. The second is Expected AJIVE: at each value of
$n$, we average the final rank-one AJIVE projectors across Monte Carlo
replications and take the leading eigenspace of this average. Expected
AJIVE is included as a diagnostic for systematic bias and is not an
estimator computed from a single dataset. The third procedure is oracle
BCSI, which uses the true noise level and the true effective signal
strength
$\gamma_n=s_n^2/\sqrt{s_n^2+n}$. The fourth is estimated-parameter BCSI,
which estimates the noise level from the pooled rank-two residual energy
and estimates the smallest signal singular value separately in each
view. All data-dependent procedures use marginal rank two and joint
rank one.

\begin{figure}[t]
  \centering
  \includegraphics[width=0.76\textwidth]
    {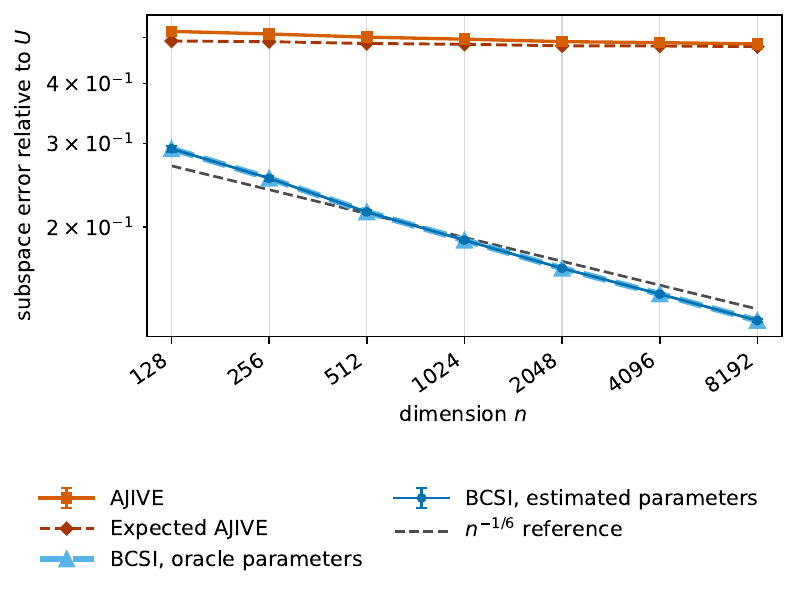}
  \caption{Comparison of AJIVE and BCSI as the dimension increases.
  The points for AJIVE and BCSI are mean projector errors, and the error
  bars are $\pm2$ Monte Carlo standard errors. Expected AJIVE is the
  leading eigenspace of the Monte Carlo mean of the final AJIVE
  projectors. Oracle BCSI uses the true noise level and effective signal
  strength, whereas estimated-parameter BCSI estimates them from the
  data. The dashed line is an $n^{-1/6}$ reference.}
  \label{fig:synthetic-ajive-bcsi}
\end{figure}

\paragraph{Experimental results.}
\Cref{fig:synthetic-ajive-bcsi} shows that the errors of both AJIVE and
Expected AJIVE remain nearly constant as the dimension increases. The
persistence of the Expected-AJIVE error indicates that this behavior is
caused by systematic bias rather than only by random fluctuations. In
contrast, the BCSI error decreases with the dimension and closely
follows the predicted $n^{-1/6}$ behavior. The oracle and
estimated-parameter BCSI curves are nearly indistinguishable, indicating
that estimating the required parameters has little effect in this
experiment. The results therefore support the claim that correcting the
direction-dependent leakage removes the persistent error suffered by
AJIVE in this regime.

\subsection{Real data}
\label{sec:numerical-real-data}

\paragraph{Data and scientific question.}
We apply AJIVE and BCSI to the Nutrimouse data
\citep{martin2007nutrigenomic}. The data contain gene-expression
measurements for 120 genes and concentrations of 21 liver fatty acids
from the same 40 mice. The mice belong to two genotypes and five diet
groups, with four mice in each genotype-by-diet group. Previous analyses
of these data suggest that genotype is represented in the variation
shared by the gene-expression and lipid views, whereas diet is primarily
represented in the lipid-specific variation
\citep{yuan2022exponential,sergazinov2026spectral}. We therefore examine
whether the estimated joint subspace retains variation associated with
genotype while excluding variation associated with diet.

\begin{figure}[t]
  \centering
  \includegraphics[width=\textwidth]
    {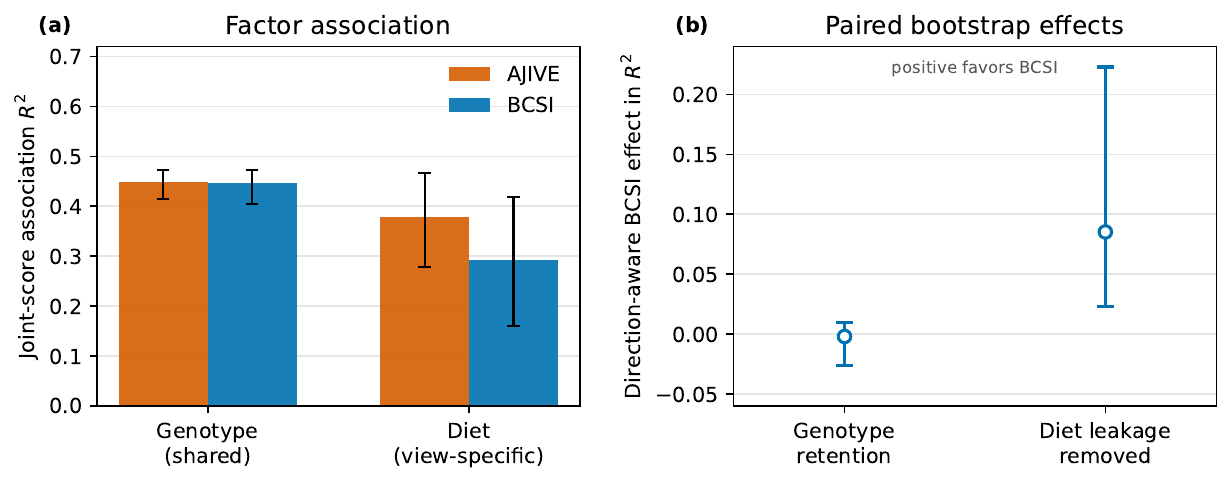}
  \caption{Matched-rank AJIVE and BCSI on the Nutrimouse data. Panel (a)
  reports the association of the estimated joint scores with genotype and
  diet. The error bars are 95\% percentile intervals from 2,000 paired
  bootstrap refits. Panel (b) reports the paired differences, with positive
  values favoring BCSI. For genotype, the difference is BCSI minus AJIVE;
  for diet, it is AJIVE minus BCSI.}
  \label{fig:nutrimouse-bcsi-ajive}
\end{figure}

\paragraph{Estimators and rank selection.}
We center and standardize every feature and then normalize each view by
its estimated residual noise level. We use marginal rank $p_1=3$ for
gene expression and $p_2=4$ for lipids. These ranks were used in the
earlier Nutrimouse analysis of \citet{yuan2022exponential}.
\citet{sergazinov2026spectral} observed that they agree with the elbows
of the two scree plots and give clearer spectral separation than the
larger ranks selected by an automatic singular-value threshold. With
these marginal ranks, AJIVE selects joint rank $r=2$. We supply the same
marginal and joint ranks to AJIVE and BCSI, so the comparison concerns
the estimated joint subspaces rather than differences in rank selection.
AJIVE is computed from the two empirical marginal subspaces. BCSI uses
the same empirical marginal subspaces but applies the proposed leakage
correction, with the noise level and signal strengths estimated from the
data.

\paragraph{Evaluation and uncertainty.}
For each estimated joint subspace, we measure the fraction of its score
variation explained by genotype and by diet. This quantity is one minus
the SWISS criterion of \citet{cabanski2010swiss}. A larger genotype
association indicates that the joint scores retain more of the factor
expected to be shared, whereas a smaller diet association indicates
that they retain less of the factor expected to be view-specific. The
genotype and diet labels are used only for this evaluation and do not
enter either estimator. To assess uncertainty, we use 2,000 paired
bootstrap refits. Within each bootstrap replication, we sample four mice
with replacement from each of the ten genotype-by-diet groups. Each
mouse is resampled together with both of its data views, and AJIVE and
BCSI are fitted to the same resampled dataset. This preserves the paired
multi-view structure and the size of every experimental group. We form
95\% percentile intervals from the bootstrap distributions of the
individual metrics and their paired differences.

\paragraph{Interpretation of the results.}
\Cref{fig:nutrimouse-bcsi-ajive} shows that the joint scores estimated
by AJIVE and BCSI have essentially the same association with genotype,
and the paired bootstrap interval does not indicate a meaningful
difference between the methods. In contrast, the BCSI joint scores have
a weaker association with diet, and the paired bootstrap interval favors
BCSI. This suggests that BCSI removes more variation associated with the
view-specific factor without a corresponding loss of variation associated
with the shared factor.

\section{Analysis of BCSI}
\label{sec:upper-analysis}

We now develop the main ingredients for analyzing BCSI. Throughout this
section, we view the corrected Gram matrix as a perturbation of a signal-only
Gram matrix that encodes the shared subspace.

 For each view $k=1,2$, decompose the bias-corrected empirical factor into its component inside the corresponding signal space and its orthogonal leakage:
\begin{equation}
\label{eq:SL}
  \bm C_k
  =
  \bm S_k+\bm L_k,
  \qquad
  \bm S_k
  \coloneqq
  \bm P_{\cM_k}\bm C_k,
  \qquad
  \bm L_k
  \coloneqq
  \bm P_{\cM_k^\perp}\bm C_k.
\end{equation}
Here $\bm S_k$ is the part of the bias-corrected empirical factor that remains inside the population signal space, whereas $\bm L_k$ records its leakage into the orthogonal complement. 

With $\bm C\coloneqq[\bm C_1,\bm C_2]$, 
$\bm S\coloneqq[\bm S_1,\bm S_2]$, 
  $\bm L\coloneqq[\bm L_1,\bm L_2],$ and 
$\bm D_\sigma
  \coloneqq
  \diag(
    a_1\sigma^2\bm I_{p_1},
    a_2\sigma^2\bm I_{p_2}
  )$, 
the bias-corrected Gram matrix in
\cref{eq:G} admits the exact decomposition
\begin{equation}
\label{eq:Delta}
  \bm G
  =
  \underbrace{\vphantom{\bigl(\bm L^\trans\bm L-\bm D_\sigma\bigr)}\bm S^\trans\bm S}_{\eqqcolon \displaystyle \bm G_0}
  +
  \underbrace{
    \bm S^\trans\bm L
    +
    \bm L^\trans\bm S
    +
    \bigl(\bm L^\trans\bm L-\bm D_\sigma\bigr)
  }_{\eqqcolon \displaystyle \bm\Delta}.
\end{equation}
Thus the natural oracle matrix is
$\bm G_0=\bm S^\trans\bm S$. The first two terms in
$\bm\Delta$ are linear in the leakage, while
$\bm L^\trans\bm L-\bm D_\sigma$ is the centered quadratic leakage.
This decomposition makes precise the role of the bias correction:
the weighting makes $\bm L^\trans\bm L$ close to the common leakage level
$\bm D_\sigma$, which is then subtracted so that $\bm G$ is close to the
signal-only Gram matrix $\bm G_0$.

\paragraph{Proof roadmap.} The proof proceeds in four steps. First, we show that the oracle matrix
$\bm G_0=\bm S^\trans\bm S$ has an $r$-dimensional nullspace that exactly
encodes the shared subspace, and that its positive spectrum is separated from
zero by a gap of order $\gmin^2\theta$. Second, we develop a relative
perturbation bound showing that the $r$ eigenvectors $\bm Q$ of
$\bm G=\bm G_0+\bm\Delta$ nearest zero approximately satisfy the oracle
nullspace relation, i.e., $\opn{\bm S\bm Q}$ is small.
Third, we translate this approximate relation into an error bound for the
reconstructed shared subspace and decompose the error into a leading term
linear in the leakage and a higher-order remainder. Finally, the stochastic
analysis controls these leakage terms, verifies the relative perturbation
condition, and yields the rate in \cref{thm:upper}. The first three steps are
developed in \cref{sec:deterministic-analysis}, and the final step in
\cref{sec:stochastic-analysis}; proofs of the main auxiliary results are
deferred to the appendix.

\subsection{Deterministic analysis}
\label{sec:deterministic-analysis}

Fix a sufficiently small numerical constant $\cemb>0$ and define
\begin{equation}
\label{eq:Eemb}
  \cE_{\rm emb}
  \coloneqq
  \bigcap_{k=1}^2
  \left\{
    \sigma_{\min}(\bm S_k)\ge \cemb\gamma_k
  \right\}.
\end{equation}
On this event, each $\bm S_k$ has full column rank and therefore
$\operatorname{col}(\bm S_k)=\cM_k$. We work under this deterministic
condition throughout the next several arguments; its probability is
controlled later.

\subsubsection{Oracle geometry}

Since BCSI selects the eigenvectors of $\bm G$ nearest zero, we first
understand the nullspace of the oracle matrix $\bm G_0=\bm S^\trans\bm S$. 
This oracle matrix should have an
$r$-dimensional nullspace encoding the coefficient relations that produce
the shared subspace. To exhibit these relations, define
\[
  \bm Q_\star
  \coloneqq
  \begin{pmatrix}
    \bm S_1^\dagger\bm U\\
    -\bm S_2^\dagger\bm U
  \end{pmatrix}
   \in\mathbb R^{(p_1+p_2)\times r}.
\]

\begin{lemma}
\label{lem:oracle-geometry}
On $\cE_{\rm emb}$,
\[
  \ker(\bm G_0)=\ker(\bm S),
  \qquad
  \dim\ker(\bm G_0)=r.
\]
Moreover,
\begin{equation}
\label{eq:Q_star}
  \bm S\bm Q_\star=\bm 0,
  \qquad
  \bm S\bm J\bm Q_\star=\bm U,
\end{equation}
and
\begin{equation}
\label{eq:oracle-curvature}
  \lambda_{\min}^{+}(\bm G_0)
  \ge
  2\cemb^2\gmin^2\theta.
\end{equation}
\end{lemma}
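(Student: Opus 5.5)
The plan is to reduce everything to a factorization of $\bm S$ through orthonormal bases of the population signal spaces. For $k=1,2$ set $\bm W_k\coloneqq[\bm U,\bm V_k]\in\St(n,p_k)$ and $\bm R_k\coloneqq\bm W_k^\trans\bm S_k\in\mathbb R^{p_k\times p_k}$. Since $\operatorname{col}(\bm S_k)\subseteq\cM_k$, we have $\bm S_k=\bm W_k\bm R_k$ and $\sigma_{\min}(\bm R_k)=\sigma_{\min}(\bm S_k)\ge\cemb\gamma_k$ on $\cE_{\rm emb}$. In particular each $\bm R_k$ is invertible and $\operatorname{col}(\bm S_k)=\cM_k$. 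This gives
\[
  \bm S=\bm W\bm R,
  \qquad
  \bm W\coloneqq[\bm U,\bm V_1,\bm U,\bm V_2],
  \qquad
  \bm R\coloneqq\diag(\bm R_1,\bm R_2),
\]
with $\bm R$ invertible and $\sigma_{\min}(\bm R)\ge\cemb\gmin$.

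\emph{Kernel and the relations \eqref{eq:Q_star}.} The identity $\ker(\bm S^\trans\bm S)=\ker(\bm S)$ is standard, since $\bm x^\trans\bm S^\trans\bm S\bm x=\|\bm S\bm x\|^2$. If $\bm S\bm x=\bm 0$, then $\bm S_1\bm x_1=-\bm S_2\bm x_2\in\cM_1\cap\cM_2=\operatorname{col}(\bm U)$. Consider the linear map $\bm x\mapsto\bm S_1\bm x_1$ from $\ker(\bm S)$ to $\operatorname{col}(\bm U)$.
\begin{itemize}
\item It is injective: $\bm S_1$ and $\bm S_2$ have full column rank, so $\bm S_1\bm x_1$ determines $\bm x_1$ and $\bm x_2$.
\item It is surjective: for $\bm u\in\operatorname{col}(\bm U)$, take $\bm x_1=\bm S_1^\dagger\bm u$ and $\bm x_2=-\bm S_2^\dagger\bm u$. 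Then $\bm S_k\bm S_k^\dagger\bm u=\bm P_{\cM_k}\bm u=\bm u$.
\end{itemize}
Hence $\dim\ker(\bm G_0)=r$. The same computation gives $\bm S\bm Q_\star=\bm U-\bm U=\bm 0$. It also gives $\bm S\bm J\bm Q_\star=\alpha_1\bm U+\alpha_2\bm U=\bm U$, because $\alpha_1+\alpha_2=1$.

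\emph{Curvature.} We have $\lambda^+_{\min}(\bm G_0)=\sigma_{m}(\bm S)^2$, where $m\coloneqq\rank(\bm S)=p_1+p_2-r=r+r_1+r_2$. Since $\bm R$ is invertible, $\rank(\bm W)=m$. The singular value inequality $\sigma_i(\bm W\bm R)\ge\sigma_i(\bm W)\sigma_{\min}(\bm R)$ then gives
\[
  \lambda^+_{\min}(\bm G_0)\ge\cemb^2\gmin^2\,\sigma_m(\bm W)^2 .
\]
It remains to show $\sigma_m(\bm W)^2\ge2\theta$, i.e., that the smallest nonzero eigenvalue of
\[
  \bm W\bm W^\trans=2\bm U\bm U^\trans+\bm V_1\bm V_1^\trans+\bm V_2\bm V_2^\trans
\]
is at least $2\theta$. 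Take $\bm y=\bm U\bm a+\bm z$ in its range, with $\bm z\in\operatorname{col}(\bm V_1)+\operatorname{col}(\bm V_2)$. Then $\bm U^\trans\bm V_k=\bm 0$ kills the cross terms, giving
\[
  \bm y^\trans\bm W\bm W^\trans\bm y=2\|\bm a\|^2+\|\bm V_1^\trans\bm z\|^2+\|\bm V_2^\trans\bm z\|^2 .
\]
The nonzero eigenvalues of $\bm V_1\bm V_1^\trans+\bm V_2\bm V_2^\trans$ coincide with those of $\begin{pmatrix}\bm I&\bm V_1^\trans\bm V_2\\ \bm V_2^\trans\bm V_1&\bm I\end{pmatrix}$. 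By \eqref{eq:misalign}, these are all at least $1-\|\bm V_1^\trans\bm V_2\|_{\op}\ge2\theta$. Therefore $\bm y^\trans\bm W\bm W^\trans\bm y\ge\min\{2,2\theta\}\|\bm y\|^2=2\theta\|\bm y\|^2$, which yields \eqref{eq:oracle-curvature}.

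The main obstacle is transferring the lower bound from $\bm W$ to $\bm S=\bm W\bm R$ for the smallest \emph{nonzero} singular value. The issue is that $\bm W$ is rank-deficient, and orthogonality to the kernel is not preserved by $\bm R$. The clean route is the index-wise inequality $\sigma_i(\bm A\bm B)\ge\sigma_i(\bm A)\sigma_{\min}(\bm B)$ for square invertible $\bm B$, which follows from Courant--Fischer, together with the rank identity $\rank(\bm W\bm R)=\rank(\bm W)=m$. Applying this at index $i=m$ avoids any kernel bookkeeping.
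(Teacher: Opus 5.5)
Your proof is correct and follows essentially the same route as the paper. Your factorization $\bm S=\bm W\bm R$ with $\bm W\bm W^\trans=\bm P_{\cM_1}+\bm P_{\cM_2}$ is exactly what underlies the paper's Loewner bound $\bm S\bm S^\trans\succeq(\cemb\gmin)^2(\bm P_{\cM_1}+\bm P_{\cM_2})$, and your $2\theta$ floor and kernel count (the paper uses rank--nullity where you build an explicit isomorphism onto $\operatorname{col}(\bm U)$) match; the only cosmetic difference is that the paper sidesteps your ``main obstacle'' by comparing the two operators on their common support $\cM_1+\cM_2$ in $\mathbb R^n$ rather than through $\sigma_m(\bm W\bm R)\ge\sigma_m(\bm W)\sigma_{\min}(\bm R)$, and both are valid.
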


Thus the nullspace of $\bm G_0$ consists exactly of the coefficient
relations that reconstruct the shared subspace, while
\cref{eq:oracle-curvature} separates this nullspace from the positive
spectrum. These are the two properties needed to analyze the perturbed
near-zero eigenspace of $\bm G$.

\subsubsection{Perturbation of the nullspace}

The standard Davis--Kahan
argument tells us that if the perturbation size $\opn{\bm\Delta}$ is small,
then the near-zero eigenspace of $\bm G=\bm G_0+\bm\Delta$ stays close to
the oracle nullspace $\ker(\bm G_0)$.  However, this does not suffice for
two reasons.  First, by \cref{lem:oracle-geometry}, the oracle nullspace
$\ker(\bm G_0)=\ker(\bm S)$ exactly encodes the coefficient relations that
produce the shared subspace.  Thus what matters for reconstruction is not
the Euclidean distance between the two coefficient spaces, but how well the
empirical eigenspace continues to satisfy the oracle relation, as measured by
$\opn{\bm S\bm Q}$.  Second, a perturbation bound based only on
$\opn{\bm\Delta}$ is too crude.  Indeed, the linear part of $\bm\Delta$
contains $\bm S^\trans\bm L+\bm L^\trans\bm S$, whose operator norm can
scale with the largest singular value of $\bm S$.  Such a bound would
therefore introduce an unnecessary upper signal scale, even though
perturbations along stronger signal directions are accompanied by
correspondingly larger eigenvalues of $\bm G_0$.

To address these two issues, we use the following self-contained deterministic
perturbation result. Let $d\ge 2$, let
$\bm G_0\in\mathbb R^{d\times d}$ be symmetric and positive semidefinite, and
let $\bm\Delta\in\mathbb R^{d\times d}$ be symmetric. Set
$\bm G\coloneqq\bm G_0+\bm\Delta$ and define
\begin{equation}
\label{eq:relative-blocks}
\begin{aligned}
  \bm\Delta_{00}
  &\coloneqq
  \bm P_{\ker(\bm G_0)}\bm\Delta\bm P_{\ker(\bm G_0)},
  \\
  \bm\Delta_{+0}
  &\coloneqq
  (\bm G_0^\dagger)^{1/2}\bm\Delta\bm P_{\ker(\bm G_0)},
  \\
  \bm\Delta_{++}
  &\coloneqq
  (\bm G_0^\dagger)^{1/2}\bm\Delta(\bm G_0^\dagger)^{1/2}.
\end{aligned}
\end{equation}
These three matrices describe the perturbation relative to the
decomposition
$\ker(\bm G_0)\oplus\ker(\bm G_0)^\perp$, after scaling the
positive-eigenvalue directions by $(\bm G_0^\dagger)^{1/2}$.  Specifically,
$\bm\Delta_{00}$ acts within the oracle nullspace,
$\bm\Delta_{+0}$ maps the oracle nullspace into its orthogonal complement,
and $\bm\Delta_{++}$ acts within that orthogonal complement.

Define the perturbation size
\(\varepsilon_\Delta
\coloneqq
\opn{\bm\Delta_{00}}
+2\opn{\bm\Delta_{+0}}^2\).
\begin{lemma}
\label{lem:nullspace}
Assume that $\dim\ker(\bm G_0)=r$ for some $1\le r< d$, and $\bm G = \bm G_0 + \bm \Delta$ with
\[
  \opn{\bm\Delta_{++}}\le\frac14,
  \qquad
  \varepsilon_\Delta
  \le\frac{1}{16}\lambda_{\min}^{+}(\bm G_0).
\]
Then $\bm G$ has exactly $r$ eigenvalues in
$[-\varepsilon_\Delta,\varepsilon_\Delta]$,
while all remaining eigenvalues exceed $4\varepsilon_\Delta$.
If $\bm Q\in\operatorname{St}(d,r)$ spans the invariant subspace
associated with these $r$ eigenvalues, then
\begin{equation}
\label{eq:abstract-relative-nullspace-conclusion}
  \opn{\bm G_0^{1/2}\bm Q}
  \le
  2\opn{\bm\Delta_{+0}}.
\end{equation}
\end{lemma}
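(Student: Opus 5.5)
The plan is to compare $\bm G$ with the oracle matrix $\bm G_0$ through the block structure $\ker(\bm G_0)\oplus\ker(\bm G_0)^\perp$, measured in the weighted coordinates used to define $\bm\Delta_{++}$ and $\bm\Delta_{+0}$. Write $\bm P_0\coloneqq\bm P_{\ker(\bm G_0)}$ and $\bm P_+\coloneqq\bm I-\bm P_0$. For any $\bm x$, decompose $\bm x=\bm x_0+\bm x_+$ with $\bm x_0=\bm P_0\bm x$, and set $\bm y\coloneqq\bm G_0^{1/2}\bm x_+$, so that $\bm x_+=(\bm G_0^\dagger)^{1/2}\bm y$. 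A direct expansion then gives
\[
  \bm x^\trans\bm G\bm x
  =\|\bm y\|^2+\bm y^\trans\bm\Delta_{++}\bm y+2\bm y^\trans\bm\Delta_{+0}\bm x_0+\bm x_0^\trans\bm\Delta_{00}\bm x_0 .
\]
This identity is the basic tool for everything that follows. Throughout we abbreviate $\delta\coloneqq\opn{\bm\Delta_{+0}}$.

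\emph{Step 1 (eigenvalue localization).} We establish three facts about the spectrum of $\bm G$.
\begin{itemize}[leftmargin=1.5em]
\item Upper bound on the bottom $r$ eigenvalues. Restrict to $\bm x\in\ker(\bm G_0)$, where $\bm y=\bm 0$. There $\bm x^\trans\bm G\bm x\le\opn{\bm\Delta_{00}}\|\bm x\|^2$. Courant--Fischer on this $r$-dimensional subspace gives $\lambda_r(\bm G)\le\opn{\bm\Delta_{00}}\le\varepsilon_\Delta$.
\item Global lower bound. Use $\opn{\bm\Delta_{++}}\le1/4$ together with $2|\bm y^\trans\bm\Delta_{+0}\bm x_0|\le\tfrac12\|\bm y\|^2+2\delta^2\|\bm x_0\|^2$. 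This yields
\[
  \bm x^\trans\bm G\bm x\ \ge\ \tfrac14\|\bm y\|^2-\varepsilon_\Delta\|\bm x_0\|^2\ \ge\ -\varepsilon_\Delta\|\bm x\|^2 .
\]
Hence every eigenvalue of $\bm G$ is at least $-\varepsilon_\Delta$.
\item Lower bound on the remaining eigenvalues. Take the $(d-r)$-dimensional subspace $\ker(\bm G_0)^\perp$, on which $\bm x_0=\bm 0$. There $\bm x^\trans\bm G\bm x\ge\tfrac34\|\bm y\|^2=\tfrac34\bm x^\trans\bm G_0\bm x\ge\tfrac34\lambda_{\min}^{+}(\bm G_0)\|\bm x\|^2$. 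By the max--min form of Courant--Fischer, $\lambda_{r+1}(\bm G)\ge\tfrac34\lambda_{\min}^{+}(\bm G_0)\ge12\varepsilon_\Delta>4\varepsilon_\Delta$.
\end{itemize}
Together these show that exactly $r$ eigenvalues lie in $[-\varepsilon_\Delta,\varepsilon_\Delta]$ and all others exceed $4\varepsilon_\Delta$.

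\emph{Step 2 (the relative bound).} This is the main obstacle. A naive Rayleigh-quotient argument on $\operatorname{col}(\bm Q)$ only gives $\|\bm y\|\lesssim\delta+\opn{\bm\Delta_{00}}^{1/2}$. That is too weak, because \cref{eq:abstract-relative-nullspace-conclusion} must not involve $\bm\Delta_{00}$. I would therefore use the eigen-equation instead.
\begin{itemize}[leftmargin=1.5em]
\item Write $\bm G\bm Q=\bm Q\bm\Lambda$ with $\opn{\bm\Lambda}\le\varepsilon_\Delta$. Set $\bm Y\coloneqq\bm G_0^{1/2}\bm Q$ and $\bm Q_0\coloneqq\bm P_0\bm Q$, so that $\bm P_+\bm Q=(\bm G_0^\dagger)^{1/2}\bm Y$.
\item Left-multiply the eigen-equation by $(\bm G_0^\dagger)^{1/2}$. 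Using $(\bm G_0^\dagger)^{1/2}\bm G_0=\bm G_0^{1/2}$ and $(\bm G_0^\dagger)^{1/2}\bm\Delta\bm Q=\bm\Delta_{+0}\bm Q_0+\bm\Delta_{++}\bm Y$, one obtains
\[
  \bm Y+\bm\Delta_{++}\bm Y+\bm\Delta_{+0}\bm Q_0=\bm G_0^\dagger\bm Y\bm\Lambda .
\]
\item Bound the right-hand side by $\opn{\bm G_0^\dagger}\,\varepsilon_\Delta\opn{\bm Y}\le\tfrac1{16}\opn{\bm Y}$. Since $\opn{\bm Q_0}\le1$, the triangle inequality gives $(1-\tfrac14-\tfrac1{16})\opn{\bm Y}\le\delta$.
\item Hence $\opn{\bm Y}\le\tfrac{16}{11}\delta\le2\delta$, which is \cref{eq:abstract-relative-nullspace-conclusion}.
\end{itemize}

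The only care needed is in the identities relating $(\bm G_0^\dagger)^{1/2}$, $\bm G_0^{1/2}$ and $\bm P_+$, all of which follow from the spectral decomposition of $\bm G_0$.
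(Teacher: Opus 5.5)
Your proposal is correct and follows essentially the same route as the paper. Both arguments localize the spectrum through the same weighted quadratic-form identity, using Young's inequality and Courant--Fischer. Both then multiply the eigen-equation $\bm G\bm Q=\bm Q\bm\Lambda$ by $(\bm G_0^\dagger)^{1/2}$ to get a self-bounding inequality for $\opn{\bm G_0^{1/2}\bm Q}$ that does not involve $\bm\Delta_{00}$; the paper just writes this in the coordinates $\bm K^{1/2}\bm Q_+^\trans\bm Q$, and your sharper constant $16/11$ in place of $2$ makes no difference.
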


In our application, $d=p_1+p_2$ and $\bm G_0=\bm S^\trans\bm S$. Hence
$\bm S\bm Q$ and $\bm G_0^{1/2}\bm Q$ have the same Gram matrix, so
\cref{lem:nullspace} gives
\begin{equation}
\label{eq:relative-nullspace-conclusion}
  \opn{\bm S\bm Q}
  =
  \opn{\bm G_0^{1/2}\bm Q}
  \le
  2\opn{\bm\Delta_{+0}}.
\end{equation}
Moreover, $\bm S\bm P_{\ker(\bm G_0)}=\bm 0$. Consequently, the linear
leakage terms in $\bm\Delta$ vanish from $\bm\Delta_{00}$, leaving only the
centered quadratic leakage.

We therefore define the good event
\begin{equation}
\label{eq:Egood}
  \cE_{\rm good}
  \coloneqq
  \cE_{\rm emb}
  \cap
  \left\{
    \opn{\bm\Delta_{++}}\le\frac14,\quad
    \opn{\bm\Delta_{00}}
    +2\opn{\bm\Delta_{+0}}^2
    \le
    \frac18\cemb^2\gmin^2\theta
  \right\}.
\end{equation}
By \cref{lem:oracle-geometry,eq:oracle-curvature}, the conditions of
\cref{lem:nullspace} hold on $\cE_{\rm good}$. Therefore, the invariant subspace of $\bm G$ identified in \cref{lem:nullspace} is
precisely the near-zero eigenspace selected by BCSI and satisfies
\cref{eq:relative-nullspace-conclusion}. Since the oracle relation is
$\bm S\bm Q_\star=\bm 0$, this bound shows that the empirical eigenvectors
approximately satisfy the same nullspace relation.

\subsubsection{From the near-zero eigenspace perturbation to subspace error}

We now translate the bound on
$\|\bm S\bm Q\|_{\mathrm{op}}$ into a bound on the error of the estimated shared subspace.
The orthonormal normalization of $\bm Q$ is natural for the eigenspace
problem, whereas $\bm Q_\star$ is a basis of the oracle nullspace normalized so that
$\bm S\bm J\bm Q_\star=\bm U$.  Even when $\bm Q$ spans the exact oracle
kernel, $\bm S\bm J\bm Q$ need only equal $\bm U$ up to an invertible right
factor.  We remove this arbitrary factor before measuring subspace error.

Recall that $\bm M=\bm C\bm J\bm Q$. On $\cE_{\rm good}$, $\bm U^\trans\bm M$ is invertible (see
\cref{lem:calibration}).  Define
$
  \widetilde{\bm M}
  \coloneqq
  \bm M(\bm U^\trans\bm M)^{-1}.
$
This right transformation does not change the estimated column space, while
$\bm U^\trans\widetilde{\bm M}=\bm I_r$.  Consequently,
\begin{equation}
\label{eq:one-sided-reconstruction}
  \opn{
    \widehat{\bm P}_{\rm BCSI}-\bm P_{\cU}
  }
  \le
  \opn{
    \bm P_{\cU^\perp}\widetilde{\bm M}
  }.
\end{equation}

We next decompose the right-hand side into its leading and higher-order
terms.  To state the decomposition, define the following linear map.
Every $\bm x\in\cM_1+\cM_2$ has a unique representation
$\bm x=\bm U\bm a+\bm V_1\bm b_1+\bm V_2\bm b_2$.  Define
$\bm T:\mathbb R^n\to\mathbb R^n$ by
\begin{equation}
\label{eq:transfer-operator}
  \bm T\bm x
  \coloneqq
  \alpha_1\bm V_1\bm b_1-\alpha_2\bm V_2\bm b_2
  \quad\text{for }\bm x\in\cM_1+\cM_2,
  \qquad
  \bm T\bm x\coloneqq\bm 0
  \quad\text{for }\bm x\in(\cM_1+\cM_2)^\perp.
\end{equation}

\begin{lemma}
\label{lem:error-decomposition}
On $\cE_{\rm good}$, the error has the decomposition
\begin{equation}
\label{eq:error-decomposition}
  \bm P_{\cU^\perp}\widetilde{\bm M}
  =
  (\bm L\bm J-\bm T\bm L)\bm Q_\star
  +
  \bm{\mathsf{Rem}},
\end{equation}
where the higher-order remainder satisfies
\begin{equation}
\label{eq:higher-order-deterministic}
  \opn{\bm{\mathsf{Rem}}}
  \lesssim
  \frac{1}{\gmin^2\theta}
  \left[
    \opn{\bm L^\trans\bm L-\bm D_\sigma}
    +
    \left\{
      \sqrt\theta\,\opn{\bm L}
      +
      \opn{\bm P_{\col(\bm S)}\bm L}
      +
      \opn{\bm\Delta_{+0}}
    \right\}
    \opn{\bm\Delta_{+0}}
  \right].
\end{equation}
\end{lemma}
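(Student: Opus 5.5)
Write $\bm Q=\bm P_0\bm Q+\bm P_+\bm Q$ with $\bm P_0\coloneqq\bm P_{\ker(\bm G_0)}$ and $\bm P_+\coloneqq\bm I-\bm P_0$.
By \cref{lem:oracle-geometry}, $\ker(\bm G_0)=\ker(\bm S)=\col(\bm Q_\star)$, so $\bm P_0\bm Q=\bm Q_\star\bm R$ for some $\bm R\in\mathbb R^{r\times r}$.
The component $\bm P_+\bm Q$ is controlled by \cref{eq:relative-nullspace-conclusion}: since $\bm S$ restricted to $\ker(\bm S)^\perp$ has smallest singular value at least $\sqrt{\lambda_{\min}^+(\bm G_0)}\gtrsim\gmin\sqrt\theta$, we get $\opn{\bm P_+\bm Q}\lesssim\opn{\bm\Delta_{+0}}/(\gmin\sqrt\theta)$.
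The plan is to expand $\bm M=\bm C\bm J\bm Q=(\bm S+\bm L)\bm J(\bm Q_\star\bm R+\bm P_+\bm Q)$, normalize by $\bm U^\trans\bm M$, and project onto $\cU^\perp$.

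The key algebraic identity is $\bm S\bm J\bm Q_\star=\bm U$ from \cref{eq:Q_star}. Hence
\[
\bm M=\bigl(\bm U+\bm L\bm J\bm Q_\star\bigr)\bm R+\bm S\bm J\bm P_+\bm Q+\bm L\bm J\bm P_+\bm Q .
\]
The term $\bm S\bm J\bm P_+\bm Q$ is first order, so it must be rewritten exactly rather than bounded crudely. Decompose $\bm S_k=\bm U\bm A_k+\bm V_k\bm B_k'$ in the basis of $\cM_k$. The operator $\bm T$ in \cref{eq:transfer-operator} is designed so that
\[
\bm P_{\cU^\perp}\bm S\bm J\bm x=\bm P_{\cU^\perp}\bm T\bm S\bm x
\]
for every coefficient vector $\bm x$. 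Indeed, the $\bm V_k$-coordinates of $\alpha_1\bm S_1\bm x_1-\alpha_2\bm S_2\bm x_2$ coincide with those obtained by applying $\bm T$ to $\bm S_1\bm x_1+\bm S_2\bm x_2$, using uniqueness of the representation in $\cM_1+\cM_2$, which holds because $\theta>0$. Therefore $\bm P_{\cU^\perp}\bm S\bm J\bm P_+\bm Q=\bm P_{\cU^\perp}\bm T\bm S\bm Q$.

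Next, use the eigen-equation to express $\bm S\bm Q$ through the leakage. From $\bm G\bm Q=\bm Q\bm\Lambda$ with $\opn{\bm\Lambda}\le\varepsilon_\Delta$, apply $(\bm S^\trans)^\dagger$ to obtain
\[
\bm S\bm Q=-\bm P_{\col(\bm S)}\bm L\bm Q-(\bm S^\trans)^\dagger\bigl[\bm L^\trans\bm C\bm Q-\bm D_\sigma\bm Q-\bm Q\bm\Lambda\bigr].
\]
The leading piece is $-\bm P_{\col(\bm S)}\bm L\bm Q_\star\bm R$. Combined with $\bm T$, and noting that $\bm T$ annihilates $(\cM_1+\cM_2)^\perp$ while $\bm P_{\col(\bm S)}$ is the projector onto $\cM_1+\cM_2$, this yields $-\bm T\bm L\bm Q_\star\bm R$. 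The leading term in $\bm P_{\cU^\perp}\bm M$ is therefore $(\bm L\bm J-\bm T\bm L)\bm Q_\star\bm R$.

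Normalizing by $\bm U^\trans\bm M=\bm R+(\text{small})$, which is invertible by \cref{lem:calibration}, converts $\bm R$ into the identity up to higher-order corrections. The remaining pieces form $\bm{\mathsf{Rem}}$:
\begin{itemize}
\item $\bm L\bm J\bm P_+\bm Q$, bounded by $\opn{\bm L}\opn{\bm\Delta_{+0}}/(\gmin\sqrt\theta)$;
\item terms involving $\bm\Lambda$ and $\bm L^\trans\bm L-\bm D_\sigma$, carrying a factor $(\bm S^\trans)^\dagger$ of size $(\gmin\sqrt\theta)^{-1}$;
\item cross terms $\bm P_{\col(\bm S)}\bm L\,\bm P_+\bm Q$;
\item the correction $\bm R^{-1}$ versus $(\bm U^\trans\bm M)^{-1}$.
\end{itemize}
Each should carry the prefactor $(\gmin^2\theta)^{-1}$ after one uses $\opn{\bm T}\lesssim\theta^{-1/2}$ and $\opn{\bm Q_\star}\lesssim 1/(\cemb\gmin)$.

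The main obstacle is making the bookkeeping of these scales match \cref{eq:higher-order-deterministic} exactly. In particular, $\bm T$ can have norm of order $\theta^{-1/2}$ when $\bm V_1$ and $\bm V_2$ are nearly aligned, so every term hit by $\bm T$ must be paired with a $\sqrt\theta$ gain. I would first verify, using the relative scaling of $\bm\Delta_{+0}$, that $\bm T\bm S(\bm S^\trans\bm S)^{\dagger}\bm S^\trans$ is controlled. I expect this via $\bm T\bm P_{\col(\bm S)}$ having norm $\lesssim\theta^{-1/2}$ and $\opn{(\bm S^\dagger)}\lesssim(\gmin\sqrt\theta)^{-1}$, giving the $(\gmin^2\theta)^{-1}$ prefactor. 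The $\sqrt\theta\opn{\bm L}$ term in the remainder arises precisely from $\bm L\bm J\bm P_+\bm Q$ after multiplying through by $\gmin^2\theta$.
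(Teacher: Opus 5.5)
Your skeleton matches the paper's. Applying $(\bm S^\trans)^\dagger=\bm S\bm G_0^\dagger$ to the eigen-equation, together with the transfer identity and $\bm T\bm P_{\col(\bm S)}=\bm T$, correctly isolates $-\bm T\bm L\bm Q_\star$. Most of your remainder pieces also fit \cref{eq:higher-order-deterministic}. This includes your crude bound on the $\bm\Lambda$-term, since $\opn{\bm\Lambda}\le\varepsilon_\Delta\le\opn{\bm L^\trans\bm L-\bm D_\sigma}+2\opn{\bm\Delta_{+0}}^2$. The $\bm L^\trans\bm S\bm Q$ piece needs $\bm L^\trans\bm S=(\bm P_{\col(\bm S)}\bm L)^\trans\bm S$.

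\textbf{The gap is the normalization step.} With your orthogonal split, $\bm\Xi=\bm U^\trans\bm M=\bm R+\bm U^\trans\bm S\bm J\bm P_+\bm Q$, so the remainder must absorb
\[
  (\bm L\bm J-\bm T\bm L)\bm Q_\star(\bm R\bm\Xi^{-1}-\bm I)
  =
  -(\bm L\bm J-\bm T\bm L)\bm Q_\star\,
  \bm U^\trans\bm S\bm J\bm P_+\bm Q\,\bm\Xi^{-1}.
\]
The middle factor is not controlled by $\opn{\bm S\bm Q}$. Write $\bm S_k\bm x_k=\bm U\bm a_k+\bm V_k\bm b_k$. Then $\bm S\bm x$ only sees $\bm a_1+\bm a_2$, while $\bm U^\trans\bm S\bm J\bm x=\alpha_1\bm a_1-\alpha_2\bm a_2$. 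The transfer identity handles the $\cU^\perp$-part of $\bm S\bm J\bm x$ but not this $\cU$-part, for which the only general bound is $\opn{\bm S}\opn{\bm P_+\bm Q}$.

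This is not slack in the estimate. Take $r=r_1=r_2=1$, and let each $\bm S_k$ have left singular vectors $\cos\phi\,\bm u+\sin\phi\,\bm v_k$ and $-\sin\phi\,\bm u+\cos\phi\,\bm v_k$, with singular values $\Lambda\gg\lambda$ and $\tan\phi=\lambda/\Lambda$. Then $\bm x=\bm S^\trans(\bm v_1-\bm v_2)\perp\ker(\bm S)$ satisfies $\|\bm U^\trans\bm S\bm J\bm x\|\approx\frac{\Lambda}{4\lambda\sqrt\theta}\|\bm S\bm x\|$. So bounding this term by $\opn{\bm Q_\star}\opn{\bm R\bm\Xi^{-1}-\bm I}$ with $\opn{\bm Q_\star}\lesssim1/\gmin$, as you propose, picks up a condition-number factor. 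That is precisely what the lemma, and the whole upper-bound analysis, must avoid.

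\textbf{The missing idea is to keep $\bm Q_\star$ attached to the mismatch.} Since $\bm Q_\star\bm R=\bm P_0\bm Q$, one has $\bm Q_\star\bm R\bm\Xi^{-1}=\bm P_0\widetilde{\bm Q}$. Hence $\bm Q_\star(\bm R\bm\Xi^{-1}-\bm I)=\bm P_0\bm\Delta_{\bm Q}$ with $\bm\Delta_{\bm Q}=\bm Q\bm\Xi^{-1}-\bm Q_\star$.

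Now apply coefficient stability (\cref{lem:coefficient-stability}) to $\bm Q-\bm Q_\star\bm\Xi$. This matrix satisfies $\bm U^\trans\bm S\bm J(\bm Q-\bm Q_\star\bm\Xi)=\bm 0$ and $\bm S(\bm Q-\bm Q_\star\bm\Xi)=\bm S\bm Q$. Via \cref{lem:calibration} this gives $\opn{\bm\Delta_{\bm Q}}\lesssim\opn{\bm\Delta_{+0}}/(\gmin^2\sqrt\theta)$, with no upper signal scale.

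The correction is then at most $\opn{\bm L}\opn{\bm\Delta_{\bm Q}}+\opn{\bm T}\opn{\bm P_{\col(\bm S)}\bm L}\opn{\bm\Delta_{\bm Q}}$, which fits \cref{eq:higher-order-deterministic}. Here you must use $\bm T\bm L=\bm T\bm P_{\col(\bm S)}\bm L$, or you lose the $\sqrt\theta$ in front of $\opn{\bm L}$.

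Equivalently, calibrate first and expand $\bm C\bm J(\bm Q_\star+\bm\Delta_{\bm Q})$, as the paper does. Your remaining computations then reproduce its four terms $\bm A_1,\ldots,\bm A_4$.
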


The two leading terms in \cref{eq:error-decomposition} are linear in the
leakage $\bm L$, while $\bm{\mathsf{Rem}}$ contains only second- and
higher-order terms.  The stochastic analysis controls these terms
separately.

\subsection{Stochastic analysis}
\label{sec:stochastic-analysis}

The deterministic analysis reduces the proof to controlling the first-order
term in \cref{eq:error-decomposition}, the remainder in
\cref{eq:higher-order-deterministic}, and the probability of
$\cE_{\rm good}$. 

The main probabilistic input is the following one-view
result, which quantifies both the size and the centered quadratic fluctuation
of the weighted leakage.\footnote{The results hold for general moments. We use $q = 32$ throughout for concreteness and to simplify notation. }

\begin{lemma}
\label{lem:one-view}
Fix $k\in\{1,2\}$. 
Suppose that $p_k \leq (n \wedge d_k) / 3$ and that $\gamma_k \geq \nu \sigma \sqrt{n}$ for a sufficiently large constant $\nu$. Then there exists an event $\cE_k$ such that
$\sigma_{\min}(\bm S_k)\ge \cemb\gamma_k$ on $\cE_k$ and
$\mathbb P(\cE_k^c)\lesssim
\sigma^2\sqrt{np_k}/\gamma_k^2$.
Moreover,
\begin{equation}
\label{eq:one-view-bounds}
  \Lqn{
    \bm L_k^\trans\bm L_k-a_k\sigma^2\bm I_{p_k}
  }{32}
  \lesssim
  \sigma^2\sqrt{a_kp_k},
  \qquad
  \Lqn{\bm L_k}{64}
  \lesssim
  \sigma\sqrt{a_k}.
\end{equation}
\end{lemma}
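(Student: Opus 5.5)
We work in the rotated frame where $\cM_k$ is spanned by the first $p_k$ coordinates. By rotation invariance of $\bm E_k$, we may take $\bm Y_k=\begin{pmatrix}\bm B_k^\trans+\bm E_{k,1}\\ \bm E_{k,2}\end{pmatrix}$, where $\bm E_{k,1}\in\R^{p_k\times d_k}$ and $\bm E_{k,2}\in\R^{a_k\times d_k}$ are independent Gaussian blocks. The empirical singular triplets then satisfy the two-block eigen-equations. The leakage of the $i$th left singular vector is $\widehat{\bm\phi}^{(2)}_{k,i}=\bm E_{k,2}\widehat{\bm\psi}_{k,i}/\widehat s_{k,i}$. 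Further decompose $\widehat{\bm\psi}_{k,i}$ along $\col(\bm B_k)$ and its complement, and eliminate the complement component via the resolvent of $\bm E_{k,2}$ restricted to $\col(\bm B_k)^\perp$. This gives the master identity
\[
\bm L_k=\bm E_{k,2}\,\bm P_{\bm B_k^\perp}\bigl(\widehat s^2\bm I-\bm P_{\bm B_k^\perp}\bm E_{k,2}^\trans\bm E_{k,2}\bm P_{\bm B_k^\perp}\bigr)^{-1}(\cdots)\,\widehat{\bm\Phi}_k\,\diag(w_k(\widehat s_{k,i})\widehat s_{k,i}^{-1}),
\]
up to terms involving $\bm E_{k,1}$. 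We will handle this through a standard ``isotropic local law'' approach. Conditional on the low-dimensional quantities (a $p_k\times p_k$ system), each column of $\bm L_k$ is a quadratic form in $\bm E_{k,2}$, with a resolvent evaluated at a spectral parameter outside the bulk. This holds because $\gamma_k\ge\nu\sigma\sqrt n$ places every $\widehat s_{k,i}$ a constant factor above the edge $\sigma(\sqrt{a_k}+\sqrt{b_k})$ with high probability.

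The plan has three steps. \emph{First}, a high-probability event on which $\opn{\bm E_k}\le C\sigma(\sqrt n+\sqrt{d_k})$, all $\widehat s_{k,i}$ lie above the edge with margin, and the isotropic resolvent bounds hold. We then define $\cE_k$ as this event intersected with $\sigma_{\min}(\bm S_k)\ge\cemb\gamma_k$. The embedding bound follows from a Wedin-type bound: $\opn{\bm P_{\cM_k^\perp}\widehat{\bm\Phi}_k}\lesssim\sigma\sqrt n/\gamma_k$ (Cai--Zhang) on the unweighted factor. The weights satisfy $w_k(\widehat s_{k,i})\gtrsim\gamma_k$, since $\tau_k(\widehat s_{k,i})\ge s_{k,i}^2/2$ and $\tau(\tau+a\sigma^2)/(\tau+b\sigma^2)\gtrsim s^4/(s^2+d\sigma^2)$ by monotonicity in $\tau$. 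The failure probability $\sigma^2\sqrt{np_k}/\gamma_k^2$ will come from a Markov bound applied to the centered quadratic estimate of the third step, so the event and the moment bounds are proved jointly. \emph{Second}, the bound $\Lqn{\bm L_k}{64}\lesssim\sigma\sqrt{a_k}$. This is essentially deterministic given $\opn{\bm E_{k,2}}$ and the weights: $\|\bm L_k\|\le\opn{\bm E_k}\max_i w_k(\widehat s_{k,i})/\widehat s_{k,i}$. Since $w_k(y)\le y$ up to constants for $y$ above the edge, this is at most $C\sigma(\sqrt n+\sqrt{d_k})$. That falls short of the target because of the $\sqrt{d_k}$ factor, so we use the resolvent form instead. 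The row-side noise enters only through $\bm E_{k,2}\bm P_{\bm B_k^\perp}(\cdot)^{-1}$, whose norm scales as $\sigma\sqrt{a_k}\cdot\widehat s/(\widehat s^2-\sigma^2b_k)$. Combined with $w_k$, this gives exactly $\sigma\sqrt{a_k}$. Moments follow from Gaussian tails of $\opn{\bm E_{k,2}}$ together with a crude deterministic bound off the event.

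\emph{Third}, and hardest: $\Lqn{\bm L_k^\trans\bm L_k-a_k\sigma^2\bm I}{32}\lesssim\sigma^2\sqrt{a_kp_k}$. We will write $\bm L_k^\trans\bm L_k$ as $\bm W\widehat{\bm\Phi}_k^\trans(\cdot)\bm R(z)^\trans\bm E_{k,2}^\trans\bm E_{k,2}\bm R(z)(\cdot)\widehat{\bm\Phi}_k\bm W$. We replace $\bm E_{k,2}^\trans\bm E_{k,2}$-resolvent quadratic forms by their deterministic equivalents, given by the Marchenko--Pastur Stieltjes transform $m(z)$ and its derivative, with isotropic errors of order $\sqrt{p_k/a_k}$ relative. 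This shows that the diagonal entries equal $a_k\sigma^2\,\bigl(1+O(\sqrt{p_k/a_k})\bigr)$ once the weight is evaluated at the \emph{population} $s_{k,i}$. Two further error sources must then be absorbed: the replacement of $s_{k,i}^2$ by $\tau_k(\widehat s_{k,i})$, and off-diagonal entries. For the first, we need $|\widehat s_{k,i}^2-\text{spike location}|\lesssim\sigma^2\sqrt{a_kp_k}\cdot(\text{scale})$, which is a singular-value rigidity bound. For the off-diagonals, the obstacle is that spikes may be nearly degenerate with no condition-number or gap assumption. The saving idea is that the leakage kernel is a smooth matrix function of $\widehat{\bm\Sigma}$ applied within the signal block. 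We therefore plan to treat the whole $p_k$-block at once via a matrix-valued equation: with $\bm\Omega=\diag(\widehat s_{k,i})$, the $p_k\times p_k$ master equation $\det(\bm\Omega^2-\ldots)$ is replaced by the identity $\bm L_k^\trans\bm L_k=f(\bm\Omega)+O(\sigma^2\sqrt{a_kp_k})$, with $f$ applied entrywise after conjugation. The weights are chosen exactly so that $f(\bm\Omega)$ cancels to $a_k\sigma^2\bm I$, and Lipschitz continuity of the resulting function means no gap is needed. We expect this gap-free matrix-function step, with sharp $\sqrt{a_kp_k}$ rather than $\sqrt{a_k d_k}$ fluctuations, to be the main obstacle. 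We will first try a Gaussian concentration (Lipschitz in $\bm E_{k,2}$ conditional on $\bm E_{k,1}$) argument for the $32$nd moment.
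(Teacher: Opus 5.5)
\textbf{Main gap.} Your proposal leaves the central estimate $\Lqn{\bm L_k^\trans\bm L_k-a_k\sigma^2\bm I_{p_k}}{32}\lesssim\sigma^2\sqrt{a_kp_k}$ unproved, and the mechanism you sketch for it would not close. Once the singular-vector equations are expanded, whichever side you resolve on, the leakage Gram is not even approximately a function of the empirical singular values alone. It is a linear image of unknown random overlap matrices. In the paper's normalized block coordinates (noise variance $1/b_k$),
$\bm U_\perp^\trans\bm U_\perp=\mathcal A_U(\bm V_\parallel^\trans\bm V_\parallel)+\mathcal B(\bm U_\parallel^\trans\bm U_\parallel)+\bm\varepsilon_U$,
where $\bm U_\parallel,\bm V_\parallel$ are the in-signal components of the empirical left and right singular vectors. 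Deterministic equivalents for resolvent quadratic forms produce $\mathcal A_U$ and $\mathcal B$, but they say nothing about these overlaps. So neither the diagonal nor the off-diagonal entries are determined, and your claim that ``$f(\bm\Omega)$ cancels to $a_k\sigma^2\bm I$'' is asserted rather than derived.

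The missing idea is to pair this with the analogous right-leakage identity
$\bm V_\perp^\trans\bm V_\perp=\mathcal B(\bm V_\parallel^\trans\bm V_\parallel)+\mathcal A_V(\bm U_\parallel^\trans\bm U_\parallel)+\bm\varepsilon_V$.
Substitute $\bm U_\parallel^\trans\bm U_\parallel=\bm I-\bm U_\perp^\trans\bm U_\perp$ and $\bm V_\parallel^\trans\bm V_\parallel=\bm I-\bm V_\perp^\trans\bm V_\perp$, and solve the resulting $2\times2$ system entrywise. With the BCSI weights this gives the exact identity
$\bm W_L\bm U_\perp^\trans\bm U_\perp\bm W_L-\beta\bm I=\bm W_L\bm\varepsilon_U\bm W_L-\beta\bm\varepsilon_V$.

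Equally important, the paper never compares with population singular values. All operators are written through $\tau_i=\tau_\beta(\widehat s_i)$. This works because the Marchenko--Pastur transform at the empirical pole $\zeta_i=(\widehat s_i^2-1)^{-1}$ satisfies $m_\beta(\zeta_i)=(\widehat s_i^2-1)/(\tau_i+\beta)$ exactly, since $\tau$ inverts the spike-location map. Each $(i,j)$ entry involves only $1/(\tau_i\tau_j-\beta)$, so degenerate or clustered spikes cost nothing. The only approximations are conditional centering of Gaussian coefficient matrices given $\bm Z_{22}$, and concentration of the scalar trace $\widehat m_{\rm bulk}(z)$ around $m_\beta(z)$ at rate $a^{-1/2}$. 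Your route instead needs three things you do not supply: outlier rigidity around population spike locations, control of eigenvector mixing within clusters, and moment-level isotropic laws for ranks up to $(n\wedge d_k)/3$.

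\textbf{Two further steps fail as written.}

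First, intersecting $\cE_k$ with $\{\opn{\bm E_k}\le C\sigma(\sqrt n+\sqrt{d_k})\}$, or with local-law events, gives a failure probability that does not vanish as $\gamma_k\to\infty$. So $\mathbb P(\cE_k^c)\lesssim\sigma^2\sqrt{np_k}/\gamma_k^2$ cannot hold without an upper bound on the signal. The paper uses signal-adaptive thresholds, $\opn{\bm Z_{11}}\le s_\star/4$, $\opn{\bm Z_{12}\bm Z_{12}^\trans-\bm I_p}\le s_\star^2/8$ and $\opn{\bm Z_{22}\bm Z_{22}^\trans-\bm I_a}\le s_\star^2/128$, with failure probability $\exp(-c\gamma_k^2/\sigma^2)$. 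It then obtains the embedding from $\bm S_k^\trans\bm S_k=(\bm C_k^\trans\bm C_k-a_k\sigma^2\bm I)-(\bm L_k^\trans\bm L_k-a_k\sigma^2\bm I)$, the weight floor, and Markov's inequality applied to the centered bound.

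Second, the $\sqrt{a_k}$ scaling in your Step~2 does not follow from operator-norm bounds on the resolvent; in the wide case $\opn{\bm E_{k,2}}\asymp\sigma\sqrt{b_k}$. It requires centering the bulk as $\bm Z_{22}\bm Z_{22}^\trans-\bm I_a$ and exploiting independence, e.g.\ $\opn{\bm Z_{22}\bm Z_{12}^\trans}\lesssim\sqrt\beta$. Your off-event argument also fails. The localization event can fail with probability of order $e^{-c\gamma_k^2/\sigma^2}$, which under the standing assumption may be as large as $e^{-c\nu^2 n}$. A crude off-event bound of order $\sigma\sqrt{d_k}$ therefore does not give $\sigma\sqrt{a_k}$ when $d_k$ is exponentially larger than $n$. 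This is why the paper proves a global radial bound, splitting coordinates into frozen, high and low sets, and treats the tall case by transposition.
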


The lemma can be viewed as a nonasymptotic counterpart of singular-vector
overlap results from spiked random matrix theory. Classical rectangular spiked-matrix theory characterizes outlier singular
values and singular-vector overlaps in the fixed-rank asymptotic regime
\citep{benaych2012singular,gavish2017optimal,bao2021singular}. More recent
work permits the signal rank to grow with the matrix dimensions, including
sublinear- and proportional-rank regimes
\citep{liu2025asymptotic,landau2023singular,shmalo2026mesoscopic}. These
results are mainly asymptotic.  
The purpose of \cref{lem:one-view} is different. It gives finite-sample moment
control of the full, data-dependently weighted leakage Gram that enters the
analysis of BCSI. This conclusion is obtained under the comparatively strong
signal-strength condition $\gamma_{k}\ge\nu\sigma\sqrt n$. Thus the lemma
is complementary to the existing asymptotic theory. To our knowledge, this particular finite-sample estimate
has not appeared previously.

For BCSI, this result provides the finite-sample justification for subtracting
$a_k\sigma^2\bm I_{p_k}$ from the within-view Gram blocks. Together with the
rotational invariance of Gaussian noise, it also supplies the probabilistic
input needed to control the linear and higher-order leakage terms.  We defer these probabilistic calculations to the appendix and
record their consequences below.

\begin{lemma}
\label{lem:stochastic-master}
Under the assumptions of \cref{thm:upper}, suppose in addition that the sum
of the four rate terms on the right-hand side of \cref{eq:main}, before
truncation at $1$, is at most a sufficiently small numerical constant.
Then
\begin{subequations}
\begin{align}
  \mathbb E\left[
    \ind_{\cE_{\rm good}}
    \opn{(\bm L\bm J-\bm T\bm L)\bm Q_\star}
  \right]
  &\lesssim
  \frac{\sigma\sqrt n}{\gamma_{\max}}
  +
  \frac{\sigma\sqrt{r_1+r_2}}{\gamma_{\min}}
  +
  \frac{
    \sigma\sqrt{r+ (r_1\wedge r_2)}
  }{
    \gamma_{\min}\sqrt\theta
  },
  \label{eq:first-order-expectation}\\
  \mathbb E\left[
    \ind_{\cE_{\rm good}}
    \opn{\bm{\mathsf{Rem}}}
  \right]
  &\lesssim
  \frac{
    \sigma^2\sqrt{n\{r+(r_1\vee r_2)\}}
  }{
    \gamma_{\min}^2\theta
  },
  \label{eq:higher-order-expectation}\\
  \mathbb P(\cE_{\rm good}^c)
  &\lesssim
  \frac{\sigma\sqrt{r_1+r_2}}{\gamma_{\min}}
  +
  \frac{
    \sigma\sqrt{r+ (r_1\wedge r_2)}
  }{
    \gamma_{\min}\sqrt\theta
  }
  +
  \frac{
    \sigma^2\sqrt{n\{r+(r_1\vee r_2)\}}
  }{
    \gamma_{\min}^2\theta
  }. \label{eq:G-complement}
\end{align}
\end{subequations}

\end{lemma}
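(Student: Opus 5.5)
Write $\rho$ for the sum of the four rate terms in \cref{eq:main} before truncation; by hypothesis $\rho$ is at most a small numerical constant. The plan is to reduce everything to the one-view bounds of \cref{lem:one-view}, the structure of $\bm Q_\star$, and the rotational invariance of the Gaussian noise. The first step is to control $\bm Q_\star$ deterministically on $\cE_{\rm emb}$. We have $\opn{\bm S_k^\dagger\bm U}\le 1/(\cemb\gamma_k)$. More precisely, $\bm S_k^\dagger\bm U$ depends on the noise only through $\bm S_k$, which lives in $\cM_k$.

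Next comes the key decoupling. Conditional on the singular values and the in-signal components, the leakage $\bm L_k=\bm P_{\cM_k^\perp}\bm C_k$ is, by rotational invariance of $\bm P_{\cM_k^\perp}\bm E_k$, a Haar-rotated object in $\cM_k^\perp$. I would formalize this by writing $\bm P_{\cM_k^\perp}\bm E_k=\bm O_k\bm E_k'$ with $\bm O_k$ Haar on $\mathrm{O}(\cM_k^\perp)$ independent of everything else in view $k$.

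For \cref{eq:first-order-expectation}, I would split $(\bm L\bm J-\bm T\bm L)\bm Q_\star$ into two parts.

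\textbf{The $(\cM_1+\cM_2)^\perp$ part.} Here $\bm T$ vanishes, so this part equals $\bm P_{(\cM_1+\cM_2)^\perp}(\alpha_1\bm L_1\bm S_1^\dagger\bm U+\alpha_2\bm L_2\bm S_2^\dagger\bm U)$. The two summands are independent and conditionally isotropic in an $(n-O(p))$-dimensional space. Its norm is then $\lesssim\sigma\sqrt n\,(\alpha_1/\gamma_1+\alpha_2/\gamma_2)+\sigma\sqrt r/\gamma_{\min}$ using $\Lqn{\bm L_k}{64}\lesssim\sigma\sqrt{a_k}$. With $\alpha_k=\gamma_k^2/(\gamma_1^2+\gamma_2^2)$ the first piece becomes $\sigma\sqrt n\,(\gamma_1+\gamma_2)/(\gamma_1^2+\gamma_2^2)\asymp\sigma\sqrt n/\gmax$.

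\textbf{The $\cM_1+\cM_2$ part.} Here the leakage of view $1$ falls into $\col(\bm V_2)$ (up to its $\bm U,\bm V_1$ parts) and $\bm T$ reweights it, for instance through $\bm V_2$-coefficients. Inverting the non-orthogonal decomposition $\bm U\bm a+\bm V_1\bm b_1+\bm V_2\bm b_2$ costs a factor $\theta^{-1/2}$, since the smallest singular value of $[\bm V_1,\bm V_2]$ is $\gtrsim\sqrt\theta$. Projecting a conditionally isotropic $\sigma\sqrt{a_k}$-scale matrix onto an $r_{3-k}$-dimensional fixed subspace yields $\sigma\sqrt{r_{3-k}+r}$ rather than $\sigma\sqrt n$. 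Using the smaller rank on the relevant side should give the $\sqrt{r+(r_1\wedge r_2)}/(\gmin\sqrt\theta)$ and $\sqrt{r_1+r_2}/\gmin$ terms. I expect the $\wedge$ to come from choosing which view's leakage is paired with the other view's individual subspace. I would verify this carefully, and failing that settle for $\vee$ there too.

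For \cref{eq:higher-order-expectation} I would plug moment bounds into \cref{eq:higher-order-deterministic}, using Hölder with the $L^{32}/L^{64}$ bounds. The term $\opn{\bm L^\trans\bm L-\bm D_\sigma}$ consists of diagonal blocks, bounded by \cref{lem:one-view} as $\sigma^2\sqrt{np_k}$, and off-diagonal blocks $\bm L_1^\trans\bm L_2$. The latter are products of independent, conditionally rotated factors, so they have size $\sigma^2\sqrt{n p}$ as well. This yields exactly $\sigma^2\sqrt{n(r+r_1\vee r_2)}/(\gmin^2\theta)$.

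For the remaining terms, $\opn{\bm\Delta_{+0}}$ only involves $(\bm G_0^\dagger)^{1/2}\bm S^\trans\bm L\bm P_{\ker}$ and the quadratic leakage restricted to the kernel. Using $\opn{(\bm G_0^\dagger)^{1/2}\bm S^\trans}\le1$ and the kernel basis $\bm Q_\star$ normalized by $(\bm Q_\star^\trans\bm Q_\star)^{-1/2}$, I expect $\opn{\bm\Delta_{+0}}\lesssim\sigma\sqrt p$ in $L^q$. Similarly $\opn{\bm P_{\col(\bm S)}\bm L}\lesssim\sigma\sqrt p$, since it is a projection of rotated leakage onto an $O(p)$-dimensional space. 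The products are therefore $\lesssim(\sqrt\theta\sigma\sqrt n+\sigma\sqrt p)\sigma\sqrt p/(\gmin^2\theta)$, which is dominated by the stated terms up to the squares of first-order terms. Those squares are at most $\rho$ times the first-order terms and so are absorbed.

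Finally, for \cref{eq:G-complement} I would use Markov's inequality. The probability $\Pp(\cE_{\rm emb}^c)$ comes from \cref{lem:one-view}. For the other two conditions, $\opn{\bm\Delta_{++}}\le\frac14$ holds because its entries are at most $\opn{\bm L}/(\cemb\gamma_{\min})\cdot O(1)$ plus quadratic terms over $\gmin^2$, both small by $\gmin\ge\nu\sigma\sqrt n$. The bound on $\varepsilon_\Delta$ follows by Markov's inequality applied to $\E[\varepsilon_\Delta]/(\gmin^2\theta)$, which is at most the right side of \cref{eq:G-complement} by the estimates above.

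The main obstacle is the decoupling step, namely justifying the conditional isotropy of $\bm L_k$ jointly with the data-dependent weights and with $\bm Q_\star$. It is needed to get the projection-dimension-sensitive rates ($\sqrt r$ instead of $\sqrt n$).
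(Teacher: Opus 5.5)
Your architecture matches the paper's: Haar symmetrization of $\bm L_k$ inside $\cM_k^\perp$, the one-view moment bounds, H\"older for the remainder, and Markov for $\cE_{\rm good}^c$. Your treatment of the $(\cM_1+\cM_2)^\perp$ part is also fine. However, two steps fail as written.

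\emph{The $\wedge$ in the $\theta^{-1/2}$ term.} You bound the $\cM_1+\cM_2$ part by $\opn{\bm T}\le(2\theta)^{-1/2}$ times the leakage projected onto $(\cM_1+\cM_2)\cap\cM_k^\perp$. For view $k$ this gives $\sigma\sqrt{r+r_{3-k}}/(\gamma_k\sqrt\theta)$.

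There is no choice of pairing to exploit. View~1's leakage always meets the $r_2$-dimensional space $\bm P_{\cM_1^\perp}\col(\bm V_2)$, and view~2's meets an $r_1$-dimensional one. Both terms appear in the sum, so you get $\sqrt{r+(r_1\vee r_2)}$. Your fallback therefore proves a strictly weaker statement than \cref{eq:first-order-expectation}.

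The missing idea is that $\bm T$ has at most $r_1\wedge r_2$ singular values exceeding $1$, each at most $(2\theta)^{-1/2}$. These come only from the paired principal directions of $\col(\bm V_1)$ and $\col(\bm V_2)$; the $|r_1-r_2|$ unpaired directions have gain at most $1$. Equivalently, $\frn{\bm T}^2\lesssim r_1+r_2+(r_1\wedge r_2)/\theta$.

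The paper feeds this into the two-sided Haar estimate $\Lqn{\ind_{\cE}\bm\Psi^\trans\bm L_k\bm\Theta}{32}\lesssim\sigma\Lqn{\ind_{\cE}(\frn{\bm\Psi}\opn{\bm\Theta}+\opn{\bm\Psi}\frn{\bm\Theta})}{64}$. It takes $\bm\Psi=\bm T^\trans$ and $\bm\Theta=\bm S_k^\dagger\bm U$, which yields $\sigma\sqrt{r_1+r_2}/\gamma_k+\sigma(\sqrt{r_1\wedge r_2}+\sqrt r)/(\gamma_k\sqrt\theta)$. This is also where the $\theta$-free term $\sigma\sqrt{r_1+r_2}/\gmin$ comes from, which your sketch lists but never derives.

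\emph{Control of $\bm\Delta_{++}$.} Your bound ``$\opn{\bm L}/(\cemb\gmin)$'' drops a factor $\theta^{-1/2}$. In fact $\opn{(\bm G_0^\dagger)^{1/2}}$ is only $\lesssim1/(\gmin\sqrt\theta)$, and this is attained when $\bm V_1$ and $\bm V_2$ are nearly aligned. So the crude bound on the linear part is $\sigma\sqrt n/(\gmin\sqrt\theta)$, and the quadratic part is over $\gmin^2\theta$ rather than $\gmin^2$.

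The linear bound need not be small. Take $r=r_1=r_2=1$, $\gmin=\nu\sigma\sqrt n$ and $\theta=n^{-1/2}$. Then all four rates are $O(1/\nu)$, yet $\sigma\sqrt n/(\gmin\sqrt\theta)=n^{1/4}/\nu$.

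You need rotational invariance here too. The paper applies the Haar estimate with $\bm\Psi=\bm S(\bm G_0^\dagger)^{1/2}$ and $\bm\Theta$ the $k$th block row of $(\bm G_0^\dagger)^{1/2}$. This leaves a term $\sigma\sqrt{r+(r_1\vee r_2)}/(\gmin\sqrt\theta)$.

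To keep only $r+(r_1\wedge r_2)$ in \cref{eq:G-complement}, Markov must be applied with a high moment; the paper uses the $32$nd. Then that term enters only through its square, which is at most $\sigma^2\sqrt{n\{r+(r_1\vee r_2)\}}/(\gmin^2\theta)$. A first-moment Markov bound would bring the $\vee$ back into the $\theta^{-1/2}$ term.

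Separately, your claim $\opn{\bm\Delta_{+0}}\lesssim\sigma\sqrt p$ is false as stated. It omits $(\bm G_0^\dagger)^{1/2}(\bm L^\trans\bm L-\bm D_\sigma)\bm P_{\ker(\bm G_0)}$, which is of order $\sigma^2\sqrt{n\{r+(r_1\vee r_2)\}}/(\gmin\sqrt\theta)$. By the same example, this can be much larger than $\sigma\sqrt p$. This slip is harmless for \cref{eq:higher-order-expectation}, since the extra term is absorbed using $\sigma\sqrt n/\gmin\lesssim1$ and the small-rate hypothesis.
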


\paragraph{Completing the proof of~\cref{thm:upper}.}
Combining
\cref{eq:one-sided-reconstruction,eq:error-decomposition}
with \cref{lem:stochastic-master}, and using that the distance between two
orthogonal projectors is at most one on $\cE_{\rm good}^c$, proves
\cref{eq:main} when the untruncated rate is sufficiently small.  The
remaining case follows from the same trivial bound after enlarging the
universal constant.

\section{Discussion}
\label{sec:discussion}
We propose a bias-corrected subspace intersection estimator (BCSI) for the shared
subspace in the two-view JIVE model with Gaussian noise and establish its
statistical theory. In the regime $\gmin\gtrsim\sigma\sqrt n$ and with
comparable shared and view-specific ranks, BCSI attains the minimax rate and
identifies the distinct first- and second-order difficulties in estimating the
shared subspace. The theory also points to several unresolved questions, which
we discuss below.

\paragraph{What happens when the ranks are unbalanced?}
The upper and lower bounds match when
$r+(r_1\vee r_2)\asymp r+(r_1\wedge r_2)$,
but a gap remains when the two individual ranks are substantially different.
The upper bound contains terms that depend on the larger individual rank,
whereas the corresponding terms in the lower bound depend only on the smaller
individual rank. The principal-angle decomposition suggests that the rank
dependence in the upper bound may not be sharp: the blocks for the
$r_1\wedge r_2$ paired directions can have smallest eigenvalues of order
$\theta$, while each of the remaining $|r_1-r_2|$ unpaired directions has
eigenvalue $1$. The dependence on the larger rank may come from controlling the
leakage globally rather than separately over these two types of directions. An
upper-bound proof that controls the two types of principal-angle blocks
separately, or a lower-bound construction that captures the second-order
contribution of the unpaired directions, may sharpen the rank dependence.

\paragraph{How strong must the weaker view be?}
The condition $\gamma_{\min}\gtrsim\sigma\sqrt n$ is natural for BCSI because
its construction requires the empirical factor in each view to retain all
population signal directions. Thus the signal strength must be large enough
for both empirical factors to be reliable. The rate itself suggests that the
two views play different roles: the stronger view estimates the shared subspace
in $\mathbb R^n$, as reflected by the term involving $\gamma_{\max}$, whereas
the weaker view determines which directions within the stronger signal space
are shared, as reflected by the terms involving $\gamma_{\min}$. The weaker
view might identify the shared subspace through part of its signal even when it
cannot recover its entire signal space. The present parameter class does not
cover this possibility, but the rate still distinguishes between estimating the
shared subspace in $\mathbb R^n$ and separating the shared directions from the
individual directions. A full characterization below the current signal
threshold would be an interesting next step.

\paragraph{Beyond Gaussian noise.}
The present finite-sample analysis uses the Gaussian model in two essential
ways. First, the one-view leakage isometry relies on the precise behavior of
rectangular Gaussian spiked matrices. Second, rotational invariance allows the
leakage directions to be symmetrized by an independent Haar rotation, which
converts operator-norm bounds that depend on $n$ into bounds that depend on the
ranks for the linear leakage terms. The basic bias-correction principle may
extend beyond Gaussian noise, but the appropriate correction need not remain a
scalar multiple of the identity. Under heteroskedastic, correlated, or
anisotropic noise, different directions can have different population leakage
even after singular-value rescaling. In such settings, one may need to estimate
and subtract a non-scalar leakage operator.

\paragraph{More than two views.}
With multiple views, the oracle coefficient relations form a higher-dimensional
nullspace of a block Gram matrix, and the optimal reconstruction must account
for both the signal strengths of the views and the geometry of their individual
subspaces. Additional views may improve estimation of the shared subspace in
$\mathbb R^n$, but it is less clear how they change the error from separating
shared and individual directions or the second-order term caused by the
individual subspaces. Determining the minimax rate as a function of the number
of views, their strengths, and the alignment of their individual subspaces
would provide a more complete theory for heterogeneous multi-view integration.

\section*{Use of AI}
During the preparation of this manuscript, the authors used GPT~5.6~Sol to assist with paper editing, formatting, code writing, and proof writing. All its suggestions were reviewed and revised by the authors, who take full responsibility for the manuscript.
The authors also used GPT~5.6~Sol~to formalize the theoretical results in this paper using Lean 4.

\section*{Acknowledgement}
C.M.~was partially supported by the National Science Foundation via the CAREER
Award DMS-2443867.

\bibliographystyle{plainnat}
\bibliography{All-of-Bibs}

\newpage
\appendix

\section{Equivalence of plug-in subspace intersection with AJIVE}
\label{sec:AJIVE}

\begin{lemma}
\label{lem:ajive-equivalence}
Let
$\widehat{\bm\Phi}_k\in\operatorname{St}(n,p_k)$ for $k=1,2$, and set
$q\coloneqq p_1\wedge p_2$.  Write
\[
  \widehat{\bm\Phi}_1^\trans\widehat{\bm\Phi}_2
  =
  \bm O_1
  \diag(\zeta_1,\ldots,\zeta_q)
  \bm O_2^\trans,
  \qquad
  \zeta_1\ge\cdots\ge\zeta_q\ge0,
\]
where
$\bm O_1\in\operatorname{St}(p_1,q)$ and
$\bm O_2\in\operatorname{St}(p_2,q)$.
Suppose $\zeta_r>\zeta_{r+1}$, with $\zeta_{q+1}\coloneqq0$.

Let
$\widehat{\bm A}\coloneqq
[\widehat{\bm\Phi}_1,\widehat{\bm\Phi}_2]
\in\mathbb R^{n\times(p_1+p_2)}$,
and let
$\bm Q=(\bm Q_1^\trans,\bm Q_2^\trans)^\trans$
span its $r$ smallest right singular directions.  Then
\[
  \operatorname{col}
  \bigl(
    \widehat{\bm\Phi}_1\bm Q_1
    -
    \widehat{\bm\Phi}_2\bm Q_2
  \bigr)
\]
is exactly the leading $r$-dimensional eigenspace of 
$  \widehat{\bm\Phi}_1\widehat{\bm\Phi}_1^\trans
  +
  \widehat{\bm\Phi}_2\widehat{\bm\Phi}_2^\trans
$.
Hence the empirical subspace-intersection procedure used in the main text
returns the same subspace as two-view AJIVE.
\end{lemma}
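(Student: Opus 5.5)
We argue through the canonical (principal-angle) decomposition of the pair $\widehat{\bm\Phi}_1,\widehat{\bm\Phi}_2$. Set $\bm W_k\coloneqq\widehat{\bm\Phi}_k\bm O_k\in\operatorname{St}(n,q)$, so that $\bm W_1^\trans\bm W_2=\diag(\zeta_1,\ldots,\zeta_q)\eqqcolon\bm Z$. Let $\bm O_k^\perp$ complete $\bm O_k$ to an orthogonal matrix. Since $\widehat{\bm\Phi}_1^\trans\widehat{\bm\Phi}_2\bm O_2^\perp=\bm 0$ and $(\bm O_1^\perp)^\trans\widehat{\bm\Phi}_1^\trans\widehat{\bm\Phi}_2=\bm 0$, the remaining columns $\widehat{\bm\Phi}_k\bm O_k^\perp$ are orthogonal to the other view's entire span.

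\emph{Step 1 (Gram matrix of $\widehat{\bm A}$).} In the rotated coordinates $\diag([\bm O_1,\bm O_1^\perp],[\bm O_2,\bm O_2^\perp])$, the Gram matrix $\widehat{\bm A}^\trans\widehat{\bm A}$ splits into two pieces. The first is a direct sum of $q$ blocks $\begin{pmatrix}1&\zeta_j\\ \zeta_j&1\end{pmatrix}$, with eigenvalues $1\pm\zeta_j$ and eigenvectors $(1,\pm1)/\sqrt2$. The second is an identity on the $|p_1-p_2|$ unpaired coordinates, with eigenvalue $1$.

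The $r$ smallest eigenvalues are therefore $1-\zeta_1\le\cdots\le1-\zeta_r$. They are separated from the rest because $\zeta_r>\zeta_{r+1}\ge0$ forces $1-\zeta_r<1-\zeta_{r+1}\le1\le1+\zeta_j$. Here we use that unpaired directions have eigenvalue $1$ and $\zeta_r>0$, and we treat $\zeta_{q+1}=0$ when $r=q$. Hence the bottom-$r$ invariant subspace is uniquely determined. Up to a right orthogonal factor, it is spanned by
\[
\bm Q_1=\bm O_{1,[r]}/\sqrt2,\qquad \bm Q_2=-\bm O_{2,[r]}/\sqrt2,
\]
where $\bm O_{k,[r]}$ denotes the first $r$ columns of $\bm O_k$.

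\emph{Step 2 (the reconstructed space).} With this choice, $\widehat{\bm\Phi}_1\bm Q_1-\widehat{\bm\Phi}_2\bm Q_2=(\bm W_{1,[r]}+\bm W_{2,[r]})/\sqrt2$. Any other orthonormal basis of the same invariant subspace differs by an invertible right factor, so the column space is unchanged. It is therefore $\operatorname{span}\{\bm w_{1,j}+\bm w_{2,j}:j\le r\}$.

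\emph{Step 3 (eigenvectors of the projector sum).} Let $\bm P=\widehat{\bm\Phi}_1\widehat{\bm\Phi}_1^\trans+\widehat{\bm\Phi}_2\widehat{\bm\Phi}_2^\trans$. For each paired index $j$, we have $\bm P\bm w_{1,j}=\bm w_{1,j}+\zeta_j\bm w_{2,j}$ and $\bm P\bm w_{2,j}=\zeta_j\bm w_{1,j}+\bm w_{2,j}$. It follows that $\bm w_{1,j}+\bm w_{2,j}$ is an eigenvector of $\bm P$ with eigenvalue $1+\zeta_j$. The other eigenvalues are as follows:
\begin{itemize}
\item $1-\zeta_j$ on $\bm w_{1,j}-\bm w_{2,j}$ (when nonzero);
\item $1$ on the unpaired columns $\widehat{\bm\Phi}_k\bm O_k^\perp$;
\item $0$ on $(\operatorname{col}\widehat{\bm\Phi}_1+\operatorname{col}\widehat{\bm\Phi}_2)^\perp$.
\end{itemize}
These vectors are mutually orthogonal: for $j\ne l$ the pairs are orthogonal by the bi-orthogonality $\bm W_1^\trans\bm W_2=\bm Z$, and the unpaired columns are orthogonal to the pairs as noted above. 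Together they span $\mathbb R^n$, so this is the full spectrum.

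The top $r$ eigenvalues are then $1+\zeta_1,\ldots,1+\zeta_r$. They are strictly larger than every remaining eigenvalue, since $1+\zeta_{r+1}<1+\zeta_r$ and all the others are at most $1$. Hence the leading $r$-dimensional eigenspace is well defined and equals the space found in Step 2, which proves the claim.

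\emph{Main obstacle.} The only delicate point is the bookkeeping of ties and degeneracies: the unpaired directions, the case $\zeta_j=1$ (where $\bm w_{1,j}=\bm w_{2,j}$ and $1-\zeta_j=0$), and the case $r=q$. The gap assumption $\zeta_r>\zeta_{r+1}$ (with $\zeta_{q+1}=0$) is exactly what makes both the bottom-$r$ singular subspace of $\widehat{\bm A}$ and the top-$r$ eigenspace of $\bm P$ unique. When $\zeta_j=1$, the vector $\bm w_{1,j}+\bm w_{2,j}=2\bm w_{1,j}$ is still nonzero, so Step 2 loses no rank.
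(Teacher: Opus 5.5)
Your proof is correct and follows the paper's argument. Like the paper, you identify $(\bm o_{1,j},-\bm o_{2,j})/\sqrt2$ as the bottom-$r$ eigenvectors of $\widehat{\bm A}^\trans\widehat{\bm A}$ with eigenvalues $1-\zeta_j$, reduce the reconstruction to $\operatorname{span}\{\widehat{\bm\Phi}_1\bm o_{1,j}+\widehat{\bm\Phi}_2\bm o_{2,j}: j\le r\}$, and show these are the top eigenvectors of the projector sum with eigenvalues $1+\zeta_j$. Your explicit block diagonalization and full-spectrum accounting (unpaired directions, the null space, and the case $\zeta_j=1$) make rigorous the separation claims the paper states only briefly.
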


\begin{proof}
We first identify the reconstruction from the $r$ smallest right singular
directions of $\widehat{\bm A}$, and then show that the resulting directions
are exactly the leading AJIVE directions.

Let $\bm o_{1,i}$ and $\bm o_{2,i}$ denote the $i$th columns of
$\bm O_1$ and $\bm O_2$, and set
$\widehat{\bm a}_i\coloneqq\widehat{\bm\Phi}_1 o_{1,i}$ and
$\widehat{\bm b}_i\coloneqq\widehat{\bm\Phi}_2\bm o_{2,i}$.
Then both vectors have unit norm and
$\widehat{\bm a}_i^\trans\widehat{\bm b}_i=\zeta_i$.

The Gram matrix of $\widehat{\bm A}$ is
\[
  \widehat{\bm A}^\trans\widehat{\bm A}
  =
  \begin{pmatrix}
    \bm I_{p_1}
    &
    \widehat{\bm\Phi}_1^\trans\widehat{\bm\Phi}_2
    \\
    \widehat{\bm\Phi}_2^\trans\widehat{\bm\Phi}_1
    &
    \bm I_{p_2}
  \end{pmatrix},
\]
and, for every $i\le q$,
\[
  \widehat{\bm A}^\trans\widehat{\bm A}
  \frac{1}{\sqrt2}
  \begin{pmatrix}
    \bm o_{1,i}\\
    -\bm o_{2,i}
  \end{pmatrix}
  =
  (1-\zeta_i)
  \frac{1}{\sqrt2}
  \begin{pmatrix}
    \bm o_{1,i}\\
    -\bm o_{2,i}
  \end{pmatrix}.
\]
Since $\zeta_1\ge\cdots\ge\zeta_q$ and
$\zeta_r>\zeta_{r+1}$, these vectors span the $r$ smallest right singular
directions of $\widehat{\bm A}$.  Since only the reconstructed column space
matters, we may therefore take
$\bm Q_1=2^{-1/2}[\bm o_{1,1},\ldots,\bm o_{1,r}]$ and
$\bm Q_2=-2^{-1/2}[\bm o_{2,1},\ldots,\bm o_{2,r}]$, which gives
\[
  \operatorname{col}
  \bigl(
    \widehat{\bm\Phi}_1\bm Q_1
    -
    \widehat{\bm\Phi}_2\bm Q_2
  \bigr)
  =
  \operatorname{span}
  \left\{
    \widehat{\bm a}_i+\widehat{\bm b}_i:
    i=1,\ldots,r
  \right\}.
\]

It remains to identify this span with the leading AJIVE eigenspace.  Since
$\widehat{\bm\Phi}_1^\trans\widehat{\bm b}_i
=\zeta_i\bm o_{1,i}$ and
$\widehat{\bm\Phi}_2^\trans\widehat{\bm a}_i
=\zeta_i\bm o_{2,i}$,
\[
  \left(
    \widehat{\bm\Phi}_1\widehat{\bm\Phi}_1^\trans
    +
    \widehat{\bm\Phi}_2\widehat{\bm\Phi}_2^\trans
  \right)
  (\widehat{\bm a}_i\pm\widehat{\bm b}_i)
  =
  (1\pm\zeta_i)
  (\widehat{\bm a}_i\pm\widehat{\bm b}_i).
\]
Any unpaired directions have eigenvalue $1$.  Hence
$\zeta_1\ge\cdots\ge\zeta_q$ and
$\zeta_r>\zeta_{r+1}$ imply that the leading $r$-dimensional eigenspace is
precisely
$\operatorname{span}\{
\widehat{\bm a}_i+\widehat{\bm b}_i:i=1,\ldots,r\}$.
Thus the two procedures return the same $r$-dimensional subspace.
\end{proof}

\section{Deterministic ingredients for the upper bound}
The deterministic argument proceeds in four stages. We first use the
population geometry to establish curvature of the oracle Gram matrix and to
construct the transfer operator that describes how individual signal
directions enter the reconstructed shared subspace. We then apply a relative
perturbation argument to isolate the $r$-dimensional near-zero eigenspace
without imposing an upper bound on the signal strengths. Since an orthonormal
basis of this eigenspace is determined only up to a right orthogonal factor,
the calibration step aligns it with the oracle normalization. Finally, the
calibrated eigenvalue equation separates the subspace error into its linear
leakage contribution and four higher-order remainder terms.
\subsection{Population and oracle geometry}

\subsubsection{Geometry of the joint signal spaces}
\label{app:joint-subspace-geometry}

The oracle bounds below require a lower bound on the nonzero spectrum of
$\bm P_{\cM_1}+\bm P_{\cM_2}$ and an upper bound on the trace of its
pseudoinverse. Both follow from the principal-angle decomposition of the two
individual signal spaces. Let
$\chi_j=\sigma_j(\bm V_1^\trans\bm V_2)$ for
$j=1,\ldots,r_1\wedge r_2$. These singular values satisfy
$1\ge\chi_j\ge0$, and \cref{eq:misalign} gives
$\chi_1\le1-2\theta$.

\begin{lemma}
\label{lem:joint-subspace-geometry}
The operator
$\bm P_{\cM_1}+\bm P_{\cM_2}$ has support
$\cM_1+\cM_2$ and rank $r + r_1 + r_2$.
Its nonzero eigenvalues are
\begin{equation}
\label{eq:projector-sum-spectrum}
  \underbrace{2,\ldots,2}_{r\text{ times}},
  \qquad
  \underbrace{1+\chi_j,\ 1-\chi_j}_{j =1, \ldots, r_1 \wedge r_2},
  \qquad
  \underbrace{1,\ldots,1}_{|r_1-r_2|\text{ times}}.
\end{equation}
Consequently,
\begin{equation}
\label{eq:projector-sum-floor}
  \lmin\bigl(
    \bm P_{\cM_1}+\bm P_{\cM_2}
  \bigr)
  \ge 2\theta,
\end{equation}
and
\begin{equation}
\label{eq:projector-sum-pseudoinverse-trace}
\begin{aligned}
  \tr\left\{
    \bigl(
      \bm P_{\cM_1}+\bm P_{\cM_2}
    \bigr)^\dagger
  \right\}
  \lesssim
  r+r_1+r_2+\frac{r_1 \wedge r_2}{\theta}.
\end{aligned}
\end{equation}
\end{lemma}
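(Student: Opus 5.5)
The plan is to decompose $\R^n$ into mutually orthogonal subspaces, each invariant under both projectors, and compute the spectrum of $\bm P_{\cM_1}+\bm P_{\cM_2}$ on each piece. Since $[\bm U,\bm V_k]$ has orthonormal columns, $\bm P_{\cM_k}=\bm U\bm U^\trans+\bm V_k\bm V_k^\trans$, and $\bm V_k\perp\col(\bm U)$. Hence
\[
\bm P_{\cM_1}+\bm P_{\cM_2}=2\bm U\bm U^\trans+\bigl(\bm V_1\bm V_1^\trans+\bm V_2\bm V_2^\trans\bigr),
\]
with the second summand supported on $\col(\bm U)^\perp$. The first summand contributes the eigenvalue $2$ with multiplicity $r$. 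It remains to diagonalize $\bm P_1+\bm P_2$, where $\bm P_k\coloneqq\bm V_k\bm V_k^\trans$.

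For this, take the SVD $\bm V_1^\trans\bm V_2=\bm O_1\diag(\chi_j)\bm O_2^\trans$, exactly as in the proof of \cref{lem:ajive-equivalence}. Set $\bm a_j=\bm V_1\bm o_{1,j}$ and $\bm b_j=\bm V_2\bm o_{2,j}$ for $j\le r_1\wedge r_2$. Complete $\bm O_1$ (or $\bm O_2$, whichever view has the larger rank) to an orthonormal basis of $\R^{r_1}$ (or $\R^{r_2}$); these extra vectors give unit vectors in $\col(\bm V_1)$ (or $\col(\bm V_2)$) orthogonal to the other view's subspace.

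The key identities are
\[
\bm P_1\bm b_j=\chi_j\bm a_j,\qquad \bm P_2\bm a_j=\chi_j\bm b_j,\qquad \bm a_i^\trans\bm b_j=\chi_j\delta_{ij}.
\]
From these, $(\bm P_1+\bm P_2)(\bm a_j\pm\bm b_j)=(1\pm\chi_j)(\bm a_j\pm\bm b_j)$. The vectors $\bm a_j\pm\bm b_j$ are nonzero, since $\|\bm a_j\pm\bm b_j\|^2=2(1\pm\chi_j)$ and $\chi_j\le1-2\theta<1$, and they are pairwise orthogonal. Each unpaired extra direction $\bm c$ satisfies $(\bm P_1+\bm P_2)\bm c=\bm c$, because it is orthogonal to the other view's subspace. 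Counting gives $r+2(r_1\wedge r_2)+|r_1-r_2|=r+r_1+r_2$ eigenvectors, all lying in $\cM_1+\cM_2$. Since $\theta>0$ makes the sum $\cM_1+\cM_2$ direct, its dimension is also $r+r_1+r_2$, so this is the full nonzero spectrum. The sum annihilates $(\cM_1+\cM_2)^\perp$, which identifies the support and the rank.

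Both consequences then follow from this spectrum. For \cref{eq:projector-sum-floor}, the smallest eigenvalue listed is $\min\{1,\,1-\chi_1\}\ge 2\theta$, using $\chi_1\le1-2\theta$ from \cref{eq:misalign}. For \cref{eq:projector-sum-pseudoinverse-trace}, the trace of the pseudoinverse is
\[
\frac r2+|r_1-r_2|+\sum_j\Bigl(\frac1{1+\chi_j}+\frac1{1-\chi_j}\Bigr)\le \frac r2+|r_1-r_2|+(r_1\wedge r_2)\Bigl(1+\frac1{2\theta}\Bigr),
\]
which gives the stated bound.

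The only delicate step is confirming that the constructed eigenvectors are orthogonal and span all of $\cM_1+\cM_2$, including the degenerate cases $\chi_j=0$ and repeated $\chi_j$. I would settle this by dimension counting rather than checking orthogonality case by case, since distinct eigenvalues plus the explicit inner products already give linear independence.
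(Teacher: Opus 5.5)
Your proposal is correct and follows the paper's proof essentially line for line: split off $2\bm U\bm U^\trans$, build principal-vector pairs from the SVD of $\bm V_1^\trans\bm V_2$ to get eigenvalues $1\pm\chi_j$ on each principal plane and $1$ on the unpaired directions, then read off the floor and the pseudoinverse-trace bound from that spectrum. One small wording fix: $\cM_1+\cM_2$ is not a direct sum, since $\cM_1\cap\cM_2=\cU$; what $\theta>0$ makes direct is $\col(\bm U)\oplus\col(\bm V_1)\oplus\col(\bm V_2)$. You do not actually need that fact: $\cM_1+\cM_2$ is spanned by the $r+r_1+r_2$ columns of $[\bm U,\bm V_1,\bm V_2]$, and your $r+r_1+r_2$ pairwise orthogonal nonzero eigenvectors already lie in it, so they span it.
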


\begin{proof}
Define $q \coloneqq r_1 \wedge r_2$.  Choose
orthonormal principal-vector pairs
$\widetilde{\bm v}_{1,1},\ldots,\widetilde{\bm v}_{1,q}
\in\col(\bm V_1)$ and
$\widetilde{\bm v}_{2,1},\ldots,\widetilde{\bm v}_{2,q}
\in\col(\bm V_2)$ such that
\[
  \widetilde{\bm v}_{1,i}^\trans
  \widetilde{\bm v}_{2,j}
  =
  \chi_j\ind\{i=j\}.
\]
These vectors are obtained from the left and right singular vectors of
$\bm V_1^\trans\bm V_2$ and may be completed to orthonormal bases of the two
individual subspaces.

Since
$\cM_k=\cU\oplus\col(\bm V_k)$,
\[
  \bm P_{\cM_1}+\bm P_{\cM_2}
  =
  2\bm P_{\cU}
  +
  \bm V_1\bm V_1^\trans
  +
  \bm V_2\bm V_2^\trans.
\]
The operator therefore has eigenvalue $2$ on $\cU$.  For each $j\le q$, the
principal plane
$\operatorname{span}\{
\widetilde{\bm v}_{1,j},
\widetilde{\bm v}_{2,j}
\}$
is invariant, and
\[
  \bigl(
    \bm P_{\cM_1}+\bm P_{\cM_2}
  \bigr)
  \bigl(
    \widetilde{\bm v}_{1,j}
    \pm
    \widetilde{\bm v}_{2,j}
  \bigr)
  =
  (1\pm\chi_j)
  \bigl(
    \widetilde{\bm v}_{1,j}
    \pm
    \widetilde{\bm v}_{2,j}
  \bigr).
\]
If $r_1\ne r_2$, the remaining $|r_1-r_2|$ basis vectors in the larger
individual subspace are orthogonal to the other individual subspace, so each
has eigenvalue $1$.  All the displayed eigenvalues are positive because
$\chi_j\le1-2\theta<1$ for every $j$, while the operator vanishes on
$(\cM_1+\cM_2)^\perp$.  This proves \cref{eq:projector-sum-spectrum} and shows
that the support is $\cM_1+\cM_2$.  Its rank is therefore
$r+r_1+r_2=p_1+p_2-r$.

By \cref{eq:misalign},
$\chi_j\le1-2\theta$ for every $j$.  Hence
$1-\chi_j\ge2\theta$, while $1\ge2\theta$ and $2\ge2\theta$ because
$\theta\le1/2$.  This proves \cref{eq:projector-sum-floor}.

Finally, summing the reciprocals of the eigenvalues in
\cref{eq:projector-sum-spectrum} gives
\[
\begin{aligned}
  \tr\left\{
    \bigl(\bm P_{\cM_1}+\bm P_{\cM_2}\bigr)^\dagger
  \right\}
  &=
  \frac r2+|r_1-r_2|
  +\sum_{j=1}^{q}
  \left(
    \frac{1}{1+\chi_j}
    +
    \frac{1}{1-\chi_j}
  \right) \\
  &\le
  \frac r2+|r_1-r_2|+q+\frac{q}{2\theta}
  \lesssim
  r+r_1+r_2+\frac{r_1\wedge r_2}{\theta}.
\end{aligned}
\]
Here we used $1+\chi_j\ge1$, $1-\chi_j\ge2\theta$, and
$q=r_1\wedge r_2$. This proves
\cref{eq:projector-sum-pseudoinverse-trace}.
\end{proof}

\subsubsection{Geometry of the oracle Gram matrix}
\label{app:oracle-geometry}

\begin{proof}[Proof of \cref{lem:oracle-geometry}]

The proof has three parts. We first identify the nullspace of the oracle Gram
matrix, then verify the normalization identities for $\bm Q_\star$, and
finally establish the curvature bound on the orthogonal complement of the
nullspace.

\paragraph{Nullspace of the oracle Gram.}
Since $\bm G_0=\bm S^\trans\bm S$, we have
$\bm z^\trans\bm G_0\bm z=\|\bm S\bm z\|_2^2$, and hence
$\ker(\bm G_0)=\ker(\bm S)$.
On $\cE_{\rm emb}$, $\col(\bm S)=\cM_1+\cM_2$. Since
$\cM_1\cap\cM_2=\cU$, we have
$\rank(\bm S)=\dim(\cM_1+\cM_2)=p_1+p_2-r$.
As $\bm S$ has $p_1+p_2$ columns, rank--nullity gives
$\dim\ker(\bm G_0)=\dim\ker(\bm S)=r$.

\paragraph{Proof of~\cref{eq:Q_star}.}
Next consider $\bm Q_\star$.  Since
$\cU\subseteq\col(\bm S_k)$, we have
$\bm S_k\bm S_k^\dagger\bm U=\bm U$ for $k=1,2$.  Therefore
$\bm S\bm Q_\star=\bm 0$ and
$\bm S\bm J\bm Q_\star=(\alpha_1+\alpha_2)\bm U=\bm U$, proving
\cref{eq:Q_star}.

\paragraph{Proof of~\cref{eq:oracle-curvature}.}
It remains to lower-bound the positive spectrum of $\bm G_0$.
On $\cE_{\rm emb}$,
$\col(\bm S_k)=\cM_k$ and
$\sigma_{\min}(\bm S_k)\ge\cemb\gamma_k\ge\cemb\gmin$.
Thus, for every $\bm x\in\R^n$,
\[
  \bm x^\trans\bm S_k\bm S_k^\trans\bm x
  =
  \|\bm S_k^\trans\bm P_{\cM_k}\bm x\|_2^2
  \ge
  (\cemb\gmin)^2
  \|\bm P_{\cM_k}\bm x\|_2^2,
\]
and hence
\begin{equation}
\label{eq:SS-projector-lower}
  \bm S\bm S^\trans
  \succeq
  (\cemb\gmin)^2
  \bigl(
    \bm P_{\cM_1}+\bm P_{\cM_2}
  \bigr).
\end{equation}
Since $\bm G_0=\bm S^\trans\bm S$ and
$\bm S\bm S^\trans$ have the same nonzero eigenvalues,
\cref{eq:SS-projector-lower,eq:projector-sum-floor} yield
\[
  \lmin(\bm G_0)
  =
  \lmin(\bm S\bm S^\trans)
  \ge
  2\cemb^2\gmin^2\theta,
\]
which proves \cref{eq:oracle-curvature}.

The nullspace calculation, \cref{eq:Q_star}, and the curvature bound therefore
establish all three claims in \cref{lem:oracle-geometry}. This completes the
proof.
\end{proof}

\subsubsection{Population norm bounds}
\label{app:population-norm-bounds}

We collect several consequences of the oracle geometry that will be used
repeatedly in the probabilistic analysis.  Write
\[
  p_{\min}\coloneqq p_1\wedge p_2
  =r+(r_1\wedge r_2),
  \qquad
  p_{\max}\coloneqq p_1\vee p_2
  =r+(r_1\vee r_2).
\]
\begin{lemma}
\label{lem:population-norm-bounds}
For $k=1,2$, define $\bm R_k\coloneqq\bm S_k^\dagger\bm U\in\mathbb R^{p_k\times r}$. On $\cE_{\rm emb}$, the following bounds hold.
\begin{subequations}
\label{eq:population-geometry-bounds}
\begin{align}
\label{eq:population-Rk-bounds}
  \opn{\bm R_k}
  &\lesssim
  \frac{1}{\gamma_k},
  &
  \frn{\bm R_k}
  &\lesssim
  \frac{\sqrt r}{\gamma_k},
  \qquad k=1,2,
\\
\label{eq:population-SGamma-bounds}
  \opn{\bm S(\bm G_0^\dagger)^{1/2}}
  &\le 1,
  &
  \frn{\bm S(\bm G_0^\dagger)^{1/2}}
  &=
  \sqrt{p_1+p_2-r}
  \lesssim
  \sqrt{p_{\max}},
\\
\label{eq:population-Gamma-bounds}
  \opn{(\bm G_0^\dagger)^{1/2}}
  &\lesssim
  \frac{1}{\gmin\sqrt\theta},
  &
  \frn{(\bm G_0^\dagger)^{1/2}}
  &\lesssim
  \frac{1}{\gmin}
  \left(
    \sqrt{r_1+r_2}
    +
    \frac{\sqrt{p_{\min}}}{\sqrt\theta}
  \right).
\end{align}
\end{subequations}
\end{lemma}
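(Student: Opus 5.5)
Each of the three pairs of bounds will follow from the facts already established on $\cE_{\rm emb}$: $\col(\bm S_k)=\cM_k$, $\sigma_{\min}(\bm S_k)\ge\cemb\gamma_k$, the comparison \cref{eq:SS-projector-lower}, and the spectral description of $\bm P_{\cM_1}+\bm P_{\cM_2}$ in \cref{lem:joint-subspace-geometry}.

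\textbf{Bounds on $\bm R_k$.} Since $\bm S_k$ has full column rank, $\opn{\bm S_k^\dagger}=1/\sigma_{\min}(\bm S_k)\le 1/(\cemb\gamma_k)$. Because $\bm U$ has orthonormal columns, $\opn{\bm R_k}\le\opn{\bm S_k^\dagger}\opn{\bm U}\lesssim 1/\gamma_k$. For the Frobenius norm we use $\frn{\bm S_k^\dagger\bm U}\le\opn{\bm S_k^\dagger}\frn{\bm U}=\sqrt r\,\opn{\bm S_k^\dagger}$.

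\textbf{Bounds on $\bm S(\bm G_0^\dagger)^{1/2}$.} Take the SVD $\bm S=\bm A\bm\Sigma\bm B^\trans$ restricted to the nonzero singular values, so that $\bm G_0^\dagger=\bm B\bm\Sigma^{-2}\bm B^\trans$. Then $\bm S(\bm G_0^\dagger)^{1/2}=\bm A\bm B^\trans$, a partial isometry of rank $\rank(\bm S)$. It therefore has operator norm $1$ and squared Frobenius norm $\rank(\bm S)=p_1+p_2-r\le 2p_{\max}$, where the rank was computed in the proof of \cref{lem:oracle-geometry}.

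\textbf{Bounds on $(\bm G_0^\dagger)^{1/2}$.} The operator-norm bound is immediate: $\opn{(\bm G_0^\dagger)^{1/2}}=\lmin(\bm G_0)^{-1/2}$, and \cref{eq:oracle-curvature} supplies the lower bound on $\lmin(\bm G_0)$. The Frobenius bound is the only real step. Here $\frn{(\bm G_0^\dagger)^{1/2}}^2=\tr(\bm G_0^\dagger)=\tr\{(\bm S\bm S^\trans)^\dagger\}$, because $\bm G_0$ and $\bm S\bm S^\trans$ share their nonzero eigenvalues.
\begin{itemize}
\item By \cref{eq:SS-projector-lower}, $\bm S\bm S^\trans\succeq(\cemb\gmin)^2(\bm P_{\cM_1}+\bm P_{\cM_2})$.
\item Both sides have the same support $\cM_1+\cM_2$, by \cref{lem:joint-subspace-geometry} and $\col(\bm S)=\cM_1+\cM_2$.
\item For positive semidefinite matrices with equal range, $\bm X\succeq\bm Y$ implies $\bm X^\dagger\preceq\bm Y^\dagger$, since inversion is order-reversing on the common range.
\end{itemize}
Hence
\[
\tr(\bm G_0^\dagger)\le(\cemb\gmin)^{-2}\tr\{(\bm P_{\cM_1}+\bm P_{\cM_2})^\dagger\}\lesssim\gmin^{-2}\Bigl(r+r_1+r_2+\frac{r_1\wedge r_2}{\theta}\Bigr),
\]
using \cref{eq:projector-sum-pseudoinverse-trace}. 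Since $\theta\le1/2$, we have $r\le 2r/\theta$, so $r+(r_1\wedge r_2)/\theta\lesssim p_{\min}/\theta$. Taking square roots and applying $\sqrt{x+y}\le\sqrt x+\sqrt y$ gives the stated form $\gmin^{-1}\bigl(\sqrt{r_1+r_2}+\sqrt{p_{\min}/\theta}\bigr)$.

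The only point requiring care is the equal-support requirement in the pseudoinverse monotonicity step; the rest is direct bookkeeping.
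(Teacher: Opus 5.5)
Your proof is correct and follows the paper's argument essentially step for step. You use the same pseudoinverse bound for $\bm R_k$ (with $\frn{\bm U}=\sqrt r$ in place of the paper's $\rank(\bm R_k)\le r$), and you identify $\bm S(\bm G_0^\dagger)^{1/2}$ as a partial isometry via the SVD where the paper computes its Gram matrix $\bm I-\bm P_{\ker(\bm G_0)}$. For the Frobenius bound you use the same support-restricted inversion of \cref{eq:SS-projector-lower} combined with \cref{eq:projector-sum-pseudoinverse-trace}.
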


\begin{proof}
For \cref{eq:population-Rk-bounds}, on $\cE_{\rm emb}$ we have
$\col(\bm S_k)=\cM_k$, $\cU\subseteq\cM_k$, and
$\sigma_{\min}(\bm S_k)\ge\cemb\gamma_k$.  Hence
$\|\bm R_k\|_{\mathrm{op}}
\le\|\bm S_k^\dagger\|_{\mathrm{op}}
\lesssim\gamma_k^{-1}$, and the Frobenius bound follows from
$\rank(\bm R_k)\le r$.

For~\cref{eq:population-SGamma-bounds}, observe that
\[
  \bigl(\bm S(\bm G_0^\dagger)^{1/2}\bigr)^\trans\bm S(\bm G_0^\dagger)^{1/2}
  =
  (\bm G_0^\dagger)^{1/2}\bm G_0(\bm G_0^\dagger)^{1/2}
  =
  \bm I-\bm P_{\ker(\bm G_0)}.
\]
By \cref{lem:oracle-geometry}, $\rank(\bm P_{\ker(\bm G_0)})=r$, and therefore
$\rank(\bm I-\bm P_{\ker(\bm G_0)})=p_1+p_2-r$.  This proves
\cref{eq:population-SGamma-bounds}.

For $(\bm G_0^\dagger)^{1/2}$ in~\cref{eq:population-Gamma-bounds}, we have $\opn{(\bm G_0^\dagger)^{1/2}} = 1/ \sqrt{\lambda_{\min}^+(\bm G_0)}$. 
This together with \cref{lem:oracle-geometry} yields the operator norm bound. 
To control the Frobenius norm, observe that, on $\cE_{\rm emb}$, both
$\bm S\bm S^\trans$ and
$\bm P_{\cM_1}+\bm P_{\cM_2}$ have support
$\cM_1+\cM_2$. We may therefore restrict
\cref{eq:SS-projector-lower} to this common support, where both operators are
positive definite, and invert the inequality. Extending the inverses by zero
on $(\cM_1+\cM_2)^\perp$ gives
\[
  (\bm S\bm S^\trans)^\dagger
  \preceq
  \frac{1}{\cemb^2\gmin^2}
  \bigl(\bm P_{\cM_1}+\bm P_{\cM_2}\bigr)^\dagger.
\]
Since $\bm G_0=\bm S^\trans\bm S$ and
$\bm S\bm S^\trans$ have the same nonzero eigenvalues,
\cref{eq:projector-sum-pseudoinverse-trace} now yields
\[
\begin{aligned}
  \frn{(\bm G_0^\dagger)^{1/2}}^2
  &=
  \tr(\bm G_0^\dagger)
  =
  \tr\{(\bm S\bm S^\trans)^\dagger\} \\
  &\lesssim
  \frac{1}{\gmin^2}
  \left(
    r+r_1+r_2+\frac{r_1\wedge r_2}{\theta}
  \right)
  \lesssim
  \frac{1}{\gmin^2}
  \left(
    r_1+r_2+\frac{p_{\min}}{\theta}
  \right).
\end{aligned}
\]
The last inequality uses
$p_{\min}=r+(r_1\wedge r_2)$ and $\theta\le1/2$. Taking square roots and using
$\sqrt{x+y}\le\sqrt x+\sqrt y$ proves the Frobenius bound in
\cref{eq:population-Gamma-bounds}.
\end{proof}

\subsection{Relative perturbation of the nullspace}
\label{app:nullspace-perturbation}

\begin{proof}[Proof of \cref{lem:nullspace}]
The proof has three parts.  We first obtain a lower bound for the entire
spectrum of $\bm G$.  We then use the min--max principle to separate the
$r$ eigenvalues near zero from the remaining spectrum.  Finally, we project
the invariant-subspace equation onto $\ker(\bm G_0)^\perp$ to control the
relative residual.

Choose orthonormal bases
$\bm Q_0\in\St(d,r)$ and
$\bm Q_+\in\St(d,d-r)$
for $\ker(\bm G_0)$ and $\ker(\bm G_0)^\perp$, respectively, and define the
positive spectral block
$\bm K\coloneqq\bm Q_+^\trans\bm G_0\bm Q_+$.
Then $\bm K\succ\bze$ and
$\lambda_{\min}(\bm K)=\lambda_{\min}^{+}(\bm G_0)$.
We identify the relative perturbation operators with their nonzero coordinate
blocks,
\[
  \bm\Delta_{00}
  =
  \bm Q_0^\trans\bm\Delta\bm Q_0,
  \qquad
  \bm\Delta_{+0}
  =
  \bm K^{-1/2}\bm Q_+^\trans\bm\Delta\bm Q_0,
  \qquad
  \bm\Delta_{++}
  =
  \bm K^{-1/2}
  \bm Q_+^\trans\bm\Delta\bm Q_+
  \bm K^{-1/2}.
\]
This identification preserves their operator norms.  The assumptions of
\cref{lem:nullspace} imply
\begin{equation}
\label{eq:null-smallness}
  \opn{\bm\Delta_{++}}\le\frac14,
  \qquad
  \varepsilon_\Delta
  \le\frac{1}{16}\lambda_{\min}^{+}(\bm G_0).
\end{equation}

\paragraph{Lower bound for the spectrum.}
Write any $\bm w\in\R^d$ uniquely as
$\bm w=\bm Q_0\bm u+\bm Q_+\bm v$, and set
$\bm t\coloneqq\bm K^{1/2}\bm v$.  In these coordinates,
\begin{equation}
\label{eq:null-quadratic-form}
  \bm w^\trans\bm G\bm w
  =
  \|\bm t\|_2^2
  +
  \bm t^\trans\bm\Delta_{++}\bm t
  +
  2\bm t^\trans\bm\Delta_{+0}\bm u
  +
  \bm u^\trans\bm\Delta_{00}\bm u.
\end{equation}
By \cref{eq:null-smallness}, the first two terms are at least
$\frac12\|\bm t\|_2^2$.  Young's inequality gives
$2|\bm t^\trans\bm\Delta_{+0}\bm u|
\le\frac12\|\bm t\|_2^2+
2\|\bm\Delta_{+0}\|_{\mathrm{op}}^2\|\bm u\|_2^2$.
Consequently,
\[
  \bm w^\trans\bm G\bm w
  \ge
  -\left(
    \opn{\bm\Delta_{00}}
    +
    2\opn{\bm\Delta_{+0}}^2
  \right)
  \|\bm u\|_2^2
  \ge
  -\varepsilon_\Delta\|\bm w\|_2^2.
\]
It follows that $\lambda_{\min}(\bm G)\ge-\varepsilon_\Delta$.

\paragraph{Separation of the near-zero spectrum.}
Order the eigenvalues as
$\lambda_1(\bm G)\ge\cdots\ge\lambda_d(\bm G)$.
On $\ker(\bm G_0)=\col(\bm Q_0)$, the quadratic form of $\bm G$ is
$\bm u^\trans\bm\Delta_{00}\bm u$.  Courant--Fischer therefore gives
\[
  \lambda_{d-r+1}(\bm G)
  \le
  \opn{\bm\Delta_{00}}
  \le
  \varepsilon_\Delta.
\]
Thus at least $r$ eigenvalues are at most $\varepsilon_\Delta$.

To control the remaining spectrum, apply Courant--Fischer on
$\ker(\bm G_0)^\perp=\col(\bm Q_+)$.  In this case $\bm u=\bze$, and
\cref{eq:null-quadratic-form} gives
\[
  \bm w^\trans\bm G\bm w
  \ge
  \lambda_{\min}^{+}(\bm G_0)
  \bigl(
    1-\opn{\bm\Delta_{++}}
  \bigr)
  \|\bm w\|_2^2.
\]
By \cref{eq:null-smallness}, the coefficient on the right is at least
$12\varepsilon_\Delta$, which is strictly larger than
$4\varepsilon_\Delta$.  Hence
$\lambda_{d-r}(\bm G)>4\varepsilon_\Delta$.
Combining this separation with the preceding lower bound shows that exactly
$r$ eigenvalues lie in $[-\varepsilon_\Delta,\varepsilon_\Delta]$, while all remaining eigenvalues
exceed $4\varepsilon_\Delta$.

\paragraph{Bound for $\bm G_0^{1/2}\bm Q$.}
Let $\bm Q\in\St(d,r)$ span the invariant subspace associated with
the $r$ eigenvalues in $[-\varepsilon_\Delta,\varepsilon_\Delta]$, and set
$\bm\Lambda\coloneqq\bm Q^\trans\bm G\bm Q$.
Then $\bm G\bm Q=\bm Q\bm\Lambda$ and
$\|\bm\Lambda\|_{\mathrm{op}}\le \varepsilon_\Delta$.

To express the desired residual in the positive spectral coordinates, define
$\bm H\coloneqq\bm K^{1/2}\bm Q_+^\trans\bm Q$.
Projecting the invariant-subspace equation onto $\col(\bm Q_+)$ and
multiplying by $\bm K^{-1/2}$ gives
\begin{equation}
\label{eq:nullspace-H-equation}
  \bm H
  =
  -\bm\Delta_{+0}\bm Q_0^\trans\bm Q
  -\bm\Delta_{++}\bm H
  +\bm K^{-1}\bm H\bm\Lambda.
\end{equation}
Taking operator norms, using
$\|\bm Q_0^\trans\bm Q\|_{\mathrm{op}}\le1$, and recalling that
$\|\bm K^{-1}\|_{\mathrm{op}}
=1/\lambda_{\min}^{+}(\bm G_0)$, we obtain
\[
  \opn{\bm H}
  \le
  \opn{\bm\Delta_{+0}}
  +
  \left(
    \opn{\bm\Delta_{++}}
    +
    \frac{\opn{\bm\Lambda}}
         {\lambda_{\min}^{+}(\bm G_0)}
  \right)
  \opn{\bm H}.
\]
Since $\|\bm\Lambda\|_{\mathrm{op}}\le \varepsilon_\Delta$, the factor multiplying
$\|\bm H\|_{\mathrm{op}}$ is at most $1/2$ by
\cref{eq:null-smallness}.  Therefore
$\|\bm H\|_{\mathrm{op}}
\le2\|\bm\Delta_{+0}\|_{\mathrm{op}}$.
Finally,
\[
  \opn{\bm G_0^{1/2}\bm Q}
  =
  \opn{\bm K^{1/2}\bm Q_+^\trans\bm Q}
  =
  \opn{\bm H}
  \le
  2\opn{\bm\Delta_{+0}}.
\]
This is exactly the bound in
\cref{eq:abstract-relative-nullspace-conclusion}, so the result follows.
\end{proof}

\subsection{Calibration and deterministic error decomposition}

\subsubsection{Calibration}

The relative perturbation argument controls the residual
$\bm S\bm Q$. To calibrate $\bm Q$, we first show that this residual
controls the coefficient error once the shared component has been fixed.

\begin{lemma}
\label{lem:coefficient-stability}
On $\cE_{\rm emb}$, let $\bm Z$ be any matrix satisfying
$\bm U^\trans\bm S\bm J\bm Z=\bze$. Then
\begin{equation}
\label{eq:coefficient-stability}
  \opn{\bm Z}
  \le
  \frac{\opn{\bm S\bm Z}}
       {\cemb\gmin\sqrt{2\theta}}.
\end{equation}
\end{lemma}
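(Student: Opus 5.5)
The plan is to reduce the matrix statement to a vector inequality and then prove that inequality with the orthogonal decomposition $\cM_k=\cU\oplus\col(\bm V_k)$. The constraint $\bm U^\trans\bm S\bm J\bm Z=\bze$ is imposed column by column. So for any unit vector $\bm x$, the vector $\bm z\coloneqq\bm Z\bm x$ satisfies $\bm U^\trans\bm S\bm J\bm z=\bze$. It therefore suffices to show that every $\bm z=(\bm z_1^\trans,\bm z_2^\trans)^\trans$ with $\bm U^\trans\bm S\bm J\bm z=\bze$ obeys
\[
  \|\bm S\bm z\|_2\ge \cemb\gmin\sqrt{2\theta}\,\|\bm z\|_2 .
\]
Indeed, this gives $\|\bm Z\bm x\|_2\le\|\bm S\bm Z\bm x\|_2/(\cemb\gmin\sqrt{2\theta})\le \opn{\bm S\bm Z}/(\cemb\gmin\sqrt{2\theta})$, and taking the supremum over unit $\bm x$ yields \cref{eq:coefficient-stability}.

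For the vector inequality, set $\bm x_k\coloneqq\bm S_k\bm z_k\in\cM_k$, so that $\bm S\bm z=\bm x_1+\bm x_2$. On $\cE_{\rm emb}$ we have $\|\bm x_k\|_2\ge\cemb\gamma_k\|\bm z_k\|_2\ge\cemb\gmin\|\bm z_k\|_2$. Hence the claim follows once we prove the geometric inequality
\[
  \|\bm x_1+\bm x_2\|_2^2\ge 2\theta\bigl(\|\bm x_1\|_2^2+\|\bm x_2\|_2^2\bigr).
\]
To prove it, write $\bm x_k=\bm U\bm a_k+\bm V_k\bm b_k$. Because $[\bm U,\bm V_k]$ has orthonormal columns, $\|\bm x_k\|_2^2=\|\bm a_k\|_2^2+\|\bm b_k\|_2^2$. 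The constraint reads $\bm U^\trans(\alpha_1\bm x_1-\alpha_2\bm x_2)=\alpha_1\bm a_1-\alpha_2\bm a_2=\bze$. Since $\alpha_1,\alpha_2>0$, the vectors $\bm a_1$ and $\bm a_2$ are nonnegative multiples of a common vector, so $\bm a_1^\trans\bm a_2\ge0$.

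Because $\bm U$ is orthogonal to both $\bm V_1$ and $\bm V_2$,
\[
  \|\bm x_1+\bm x_2\|_2^2=\|\bm a_1+\bm a_2\|_2^2+\|\bm V_1\bm b_1+\bm V_2\bm b_2\|_2^2 .
\]
The first term is at least $\|\bm a_1\|_2^2+\|\bm a_2\|_2^2$, since $\bm a_1^\trans\bm a_2\ge0$. For the second term, \cref{eq:misalign} and $2\|\bm b_1\|_2\|\bm b_2\|_2\le\|\bm b_1\|_2^2+\|\bm b_2\|_2^2$ give
\[
  \|\bm V_1\bm b_1+\bm V_2\bm b_2\|_2^2\ge \|\bm b_1\|_2^2+\|\bm b_2\|_2^2-2(1-2\theta)\|\bm b_1\|_2\|\bm b_2\|_2\ge 2\theta\bigl(\|\bm b_1\|_2^2+\|\bm b_2\|_2^2\bigr).
\]
Since $\theta\le1/2$ implies $1\ge2\theta$, adding the two bounds gives the geometric inequality. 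Combining it with the lower bounds on $\|\bm x_k\|_2$ gives $\|\bm S\bm z\|_2^2\ge2\theta\cemb^2\gmin^2\|\bm z\|_2^2$.

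The main obstacle is the shared-direction component. Without the constraint, $\bm a_1=-\bm a_2$ would be allowed; then $\bm S\bm z$ could vanish, since $\bm z$ could lie in $\ker(\bm S)$. The constraint with positive weights $\alpha_k$ is exactly what rules out this cancellation: it forces the cross term $\bm a_1^\trans\bm a_2$ to be nonnegative. The remaining work is the routine principal-angle bound on the view-specific parts.
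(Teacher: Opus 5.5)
Your proof is correct and follows the paper's argument: the same reduction to vectors, the same decomposition $\bm S_k\bm z_k=\bm U\bm a_k+\bm V_k\bm b_k$, the constraint handling the shared coordinates, $2\theta$-coercivity handling the individual coordinates, and $\sigma_{\min}(\bm S_k)\ge\cemb\gamma_k$ to finish. The differences are only cosmetic. For the shared part you use $\bm a_1^\trans\bm a_2\ge0$ where the paper parametrizes $\bm a_1=\alpha_2\bm a$, $\bm a_2=\alpha_1\bm a$. For the individual part you derive the bound directly from $\opn{\bm V_1^\trans\bm V_2}\le1-2\theta$ by Cauchy--Schwarz, whereas the paper restricts the spectrum of $\bm P_{\cM_1}+\bm P_{\cM_2}$ to $\cU^\perp$.
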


\begin{proof}
It suffices to prove the corresponding vector inequality. Let
$\bm z=(\bm z_1^\trans,\bm z_2^\trans)^\trans$ satisfy
$\bm U^\trans\bm S\bm J\bm z=\bze$. Since
$\col(\bm S_k)=\cM_k$ on $\cE_{\rm emb}$, write
$\bm S_k\bm z_k=\bm U\bm a_k+\bm V_k\bm b_k$. The constraint implies
$\alpha_1\bm a_1=\alpha_2\bm a_2$, so there is a vector $\bm a$ such that
$\bm a_1=\alpha_2\bm a$ and $\bm a_2=\alpha_1\bm a$. Consequently,
$\|\bm a_1+\bm a_2\|_2^2=\|\bm a\|_2^2$ and, since
$\alpha_1^2+\alpha_2^2\le 1$ and $2\theta\le 1$,
\[
  \|\bm a_1+\bm a_2\|_2^2
  \ge
  2\theta
  \bigl(
    \|\bm a_1\|_2^2+\|\bm a_2\|_2^2
  \bigr).
\]
Restricting \cref{eq:projector-sum-floor} to $\cU^\perp$ gives the same
coercive principal-angle bound on the two individual signal spaces. Applied
to the coordinate pair $(\bm b_1,\bm b_2)$, it yields
$\|\bm V_1\bm b_1+\bm V_2\bm b_2\|_2^2
\ge2\theta(\|\bm b_1\|_2^2+\|\bm b_2\|_2^2)$.

Because $\bm U$ is orthogonal to $\bm V_1$ and $\bm V_2$, it follows that
\begin{align*}
  \|\bm S\bm z\|_2^2
  &=
  \|\bm a_1+\bm a_2\|_2^2
  +
  \|\bm V_1\bm b_1+\bm V_2\bm b_2\|_2^2 \\
  &\ge
  2\theta
  \sum_{k=1}^2
  \bigl(
    \|\bm a_k\|_2^2+\|\bm b_k\|_2^2
  \bigr) \\
  &=
  2\theta
  \sum_{k=1}^2
  \|\bm S_k\bm z_k\|_2^2
  \ge
  2\theta(\cemb\gmin)^2\|\bm z\|_2^2.
\end{align*}
Thus
$\|\bm z\|_2
\le
\|\bm S\bm z\|_2/(\cemb\gmin\sqrt{2\theta})$.
Applying this inequality to $\bm z=\bm Z\bm x$ and taking the supremum
over $\|\bm x\|_2=1$ proves \cref{eq:coefficient-stability}.
\end{proof}

We now use this stability estimate to align the empirical near-zero
eigenspace with the oracle normalization. Define
\[
  \bm\Xi
  \coloneqq \bm U^\trans \bm M = 
  \bm U^\trans\bm S\bm J\bm Q.
\]
This matrix is the change of basis required to impose the normalization
$\bm S\bm J\bm Q_\star=\bm U$. Whenever $\bm\Xi$ is invertible, define
\[
  \widetilde{\bm Q}
  \coloneqq
  \bm Q\bm\Xi^{-1},
  \qquad
  \bm\Delta_{\bm Q}
  \coloneqq
  \widetilde{\bm Q}-\bm Q_\star.
\]

\begin{lemma}
\label{lem:calibration}
On $\cE_{\rm good}$, the matrix $\bm\Xi$ is invertible, and
\[
  \opn{\bm\Xi^{-1}}
  \lesssim
  \frac{1}{\gmin},
  \qquad
  \opn{\bm S\bm\Delta_{\bm Q}}
  \lesssim
  \frac{\opn{\bm\Delta_{+0}}}{\gmin},
  \qquad
  \opn{\bm\Delta_{\bm Q}}
  \lesssim
  \frac{\opn{\bm\Delta_{+0}}}
       {\gmin^2\sqrt\theta}.
\]
\end{lemma}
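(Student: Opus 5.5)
The plan is to decompose $\bm Q$ along $\ker(\bm G_0)\oplus\ker(\bm G_0)^\perp$ and compare it with the oracle basis $\bm Q_\star$. Since $\ker(\bm G_0)=\ker(\bm S)$ has dimension $r$ (\cref{lem:oracle-geometry}), and $\bm Q_\star$ lies in this kernel with $\bm S\bm J\bm Q_\star=\bm U$, the columns of $\bm Q_\star$ must be linearly independent. Indeed, if $\bm Q_\star\bm x=\bm 0$, then $\bm U\bm x=\bm S\bm J\bm Q_\star\bm x=\bm 0$, so $\bm x=\bm 0$. Hence $\bm Q_\star$ is a basis of $\ker(\bm G_0)$.

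Write $\bm Q=\bm Q_\star\bm A+\bm W$ with $\bm W\perp\ker(\bm G_0)$. Then $\bm S\bm Q=\bm S\bm W$, and \cref{eq:relative-nullspace-conclusion} gives $\opn{\bm S\bm W}\le2\opn{\bm\Delta_{+0}}$. Since $\bm W\in\ker(\bm G_0)^\perp$, \cref{eq:oracle-curvature} yields
\[
\opn{\bm W}\lesssim\frac{\opn{\bm\Delta_{+0}}}{\gmin\sqrt\theta}.
\]
On $\cE_{\rm good}$ we have $\opn{\bm\Delta_{+0}}^2\le\cemb^2\gmin^2\theta/16$, so $\opn{\bm W}\le 1/2$ after adjusting constants. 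As $\bm Q$ is orthonormal, it follows that $\sigma_{\min}(\bm Q_\star\bm A)\ge 1/2$.

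Next we compute
\[
\bm\Xi=\bm U^\trans\bm S\bm J\bm Q=\bm A+\bm U^\trans\bm S\bm J\bm W.
\]
We need $\opn{\bm Q_\star}\lesssim1/\gmin$, which follows from \cref{lem:population-norm-bounds}, so $\sigma_{\min}(\bm A)\gtrsim\gmin$. For the perturbation, $\opn{\bm U^\trans\bm S\bm J\bm W}$ is awkward because $\opn{\bm S}$ is unbounded; this is the main obstacle. To handle it, write $\bm U^\trans\bm S\bm J\bm W=\alpha_1\bm U^\trans\bm S_1\bm W_1-\alpha_2\bm U^\trans\bm S_2\bm W_2$. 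Using $\bm S\bm W=\bm S_1\bm W_1+\bm S_2\bm W_2$, we get
\[
\bm U^\trans\bm S\bm J\bm W=\alpha_1\bm U^\trans\bm S\bm W-\bm U^\trans\bm S_2\bm W_2,
\]
and symmetrically with the roles of the two views exchanged. So it remains to bound one $\bm U^\trans\bm S_k\bm W_k$ by $\opn{\bm S\bm W}/\sqrt\theta$.

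Decompose $\bm S_k\bm W_k=\bm U\bm a_k+\bm V_k\bm b_k$, as in the proof of \cref{lem:coefficient-stability}. We have $\bm S\bm W=\bm U(\bm a_1+\bm a_2)+\bm V_1\bm b_1+\bm V_2\bm b_2$. The principal-angle floor gives $\|\bm b_k\|\lesssim\opn{\bm S\bm W}/\sqrt\theta$, but $\bm a_k$ alone is not controlled by $\bm S\bm W$: only $\bm a_1+\bm a_2$ is. The fix is to use orthogonality of $\bm W$ to the kernel. The kernel directions are exactly those with $\bm b=0$ and $\bm a_1=-\bm a_2$, so $\bm a_1-\bm a_2$ relates to the $\bm Q_\star$-component. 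The cleaner route avoids this altogether by applying \cref{lem:coefficient-stability} directly to a calibrated quantity.

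Concretely, whenever $\bm\Xi$ is invertible, $\bm Z\coloneqq\bm\Delta_{\bm Q}=\bm Q\bm\Xi^{-1}-\bm Q_\star$ satisfies $\bm U^\trans\bm S\bm J\bm Z=\bm I-\bm I=\bm 0$. It also satisfies $\bm S\bm Z=\bm S\bm Q\bm\Xi^{-1}$. Therefore
\[
\opn{\bm S\bm\Delta_{\bm Q}}\le2\opn{\bm\Delta_{+0}}\opn{\bm\Xi^{-1}},\qquad
\opn{\bm\Delta_{\bm Q}}\lesssim\frac{\opn{\bm S\bm\Delta_{\bm Q}}}{\gmin\sqrt\theta}.
\]
Both claimed bounds thus reduce to proving invertibility and $\opn{\bm\Xi^{-1}}\lesssim1/\gmin$.

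For that, I would use a continuity/fixed-point argument. Take a unit vector $\bm x$ and set $\bm y=\bm Q\bm x$, so $\|\bm y\|=1$. Suppose $\|\bm\Xi\bm x\|=\eta$ is small. Put $\bm z=\bm y-\bm Q_\star\bm\Xi\bm x$; then $\bm U^\trans\bm S\bm J\bm z=\bm 0$. \Cref{lem:coefficient-stability} then gives
\[
\|\bm z\|\le\frac{\|\bm S\bm y\|}{\cemb\gmin\sqrt{2\theta}}\le\frac{2\opn{\bm\Delta_{+0}}}{\cemb\gmin\sqrt{2\theta}}\le\frac12
\]
on $\cE_{\rm good}$. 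Hence $1=\|\bm y\|\le\frac12+\opn{\bm Q_\star}\eta$, and so $\eta\ge1/(2\opn{\bm Q_\star})\gtrsim\gmin$. This yields invertibility and $\opn{\bm\Xi^{-1}}\lesssim1/\gmin$. The only delicate points are checking that the constant in the $\cE_{\rm good}$ condition makes $\|\bm z\|\le1/2$, and confirming $\opn{\bm Q_\star}\le\opn{\bm R_1}+\opn{\bm R_2}\lesssim1/\gmin$ from \cref{lem:population-norm-bounds}.
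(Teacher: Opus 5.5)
Your final argument is correct and is essentially the paper's proof. The paper sets $\bm Z\coloneqq\bm Q-\bm Q_\star\bm\Xi$ (your $\bm z$ is $\bm Z\bm x$), uses \cref{lem:coefficient-stability} with $\bm S\bm Z=\bm S\bm Q$ and the $\cE_{\rm good}$ constant to get $\opn{\bm Z}\le 1/2$, and deduces invertibility and $\opn{\bm\Xi^{-1}}\le 2\opn{\bm Q_\star}$ exactly as you do. The only differences are cosmetic: the paper gets the last bound from $\bm\Delta_{\bm Q}=\bm Z\bm\Xi^{-1}$ rather than by reapplying coefficient stability to $\bm\Delta_{\bm Q}$, and your initial $\bm Q_\star\bm A+\bm W$ decomposition is an unnecessary detour.
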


\begin{proof}
Set
$\bm Z\coloneqq\bm Q-\bm Q_\star\bm\Xi$. The identities
$\bm S\bm Q_\star=\bze$ and
$\bm S\bm J\bm Q_\star=\bm U$ imply
$\bm U^\trans\bm S\bm J\bm Z=\bze$ and
$\bm S\bm Z=\bm S\bm Q$. Hence
\cref{eq:coefficient-stability,eq:relative-nullspace-conclusion} and the
definition of $\cE_{\rm good}$ give
\[
  \opn{\bm Z}
  \le
  \frac{2\opn{\bm\Delta_{+0}}}
       {\cemb\gmin\sqrt{2\theta}}
  \le
  \frac12.
\]

This bound implies that $\bm\Xi$ is invertible. Indeed, if
$\bm\Xi\bm x=\bze$, then $\bm Q\bm x=\bm Z\bm x$. Since $\bm Q$ has
orthonormal columns, and $\opn{\bm Z} \leq \frac12$, we have
$\|\bm x\|_2\le\frac12\|\bm x\|_2$, and therefore $\bm x=\bze$.

The identity
$\bm Q=\bm Q_\star\bm\Xi+\bm Z$ now gives
$\bm Q\bm\Xi^{-1}
=\bm Q_\star+\bm Z\bm\Xi^{-1}$. Since $\bm Q$ is an isometry,
\[
  \opn{\bm\Xi^{-1}}
  =
  \opn{\bm Q\bm\Xi^{-1}}
  \le
  \opn{\bm Q_\star}
  +
  \frac12\opn{\bm\Xi^{-1}}.
\]
By the block definition of $\bm Q_\star$ and
\cref{eq:population-Rk-bounds},
\[
  \opn{\bm Q_\star}^2
  =
  \opn{\bm R_1^\trans\bm R_1+\bm R_2^\trans\bm R_2}
  \le
  \opn{\bm R_1}^2+\opn{\bm R_2}^2
  \lesssim
  \frac{1}{\gmin^2}.
\]
Thus
$\opn{\bm\Xi^{-1}}\le2\opn{\bm Q_\star}\lesssim\gmin^{-1}$.

Finally, $\bm\Delta_{\bm Q}=\bm Z\bm\Xi^{-1}$. Since
$\bm S\bm Z=\bm S\bm Q$, the relative nullspace bound yields
\[
  \opn{\bm S\bm\Delta_{\bm Q}}
  \le
  \opn{\bm S\bm Q}\opn{\bm\Xi^{-1}}
  \lesssim
  \frac{\opn{\bm\Delta_{+0}}}{\gmin}.
\]
The first display in this proof already gives
$\opn{\bm Z}\lesssim
\opn{\bm\Delta_{+0}}/(\gmin\sqrt\theta)$.
Combining this estimate with
$\bm\Delta_{\bm Q}=\bm Z\bm\Xi^{-1}$ and
$\opn{\bm\Xi^{-1}}\lesssim\gmin^{-1}$ proves the final bound in
\cref{lem:calibration}. This completes the proof.
\end{proof}

\subsubsection{The transfer operator}
\label{subsec:transfer}

Recall the transfer operator defined in \cref{eq:transfer-operator}.  The next
lemma records the identity that connects this population operator to the
coefficient reconstruction, together with the two norm bounds needed in the
probabilistic analysis.

\begin{lemma}
\label{lem:transfer}
For every coefficient vector $\bm z$,
\begin{equation}
\label{eq:transfer-identity}
  \bm P_{\cU^\perp}\bm S\bm J\bm z
  =
  \bm T(\bm S\bm z).
\end{equation}
Moreover,
\begin{equation}
\label{eq:transfer-op}
  \opn{\bm T}
  \le
  \frac{1}{\sqrt{2\theta}},
  \qquad
  \frn{\bm T}
  \lesssim
  \sqrt{r_1+r_2}
  +
  \frac{\sqrt{r_1\wedge r_2}}{\sqrt\theta}.
\end{equation}
\end{lemma}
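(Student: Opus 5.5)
The identity is a direct computation, and the norm bounds come from the coordinate representation of $\bm T$ on $\cM_1+\cM_2$ combined with \cref{lem:joint-subspace-geometry}.

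\textbf{The identity.} For a coefficient vector $\bm z=(\bm z_1^\trans,\bm z_2^\trans)^\trans$, we have $\bm S_k\bm z_k\in\cM_k$ because $\bm S_k=\bm P_{\cM_k}\bm C_k$. We can therefore write $\bm S_k\bm z_k=\bm U\bm a_k+\bm V_k\bm b_k$, which gives
\[
\bm S\bm z=\bm U(\bm a_1+\bm a_2)+\bm V_1\bm b_1+\bm V_2\bm b_2.
\]
This representation is unique, since $\cU\perp\col(\bm V_k)$ and $\col(\bm V_1)\cap\col(\bm V_2)=\{\bm 0\}$ under \cref{eq:misalign}. By definition \cref{eq:transfer-operator}, $\bm T(\bm S\bm z)=\alpha_1\bm V_1\bm b_1-\alpha_2\bm V_2\bm b_2$. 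On the other hand,
\[
\bm S\bm J\bm z=\alpha_1\bm S_1\bm z_1-\alpha_2\bm S_2\bm z_2,
\]
and projecting onto $\cU^\perp$ removes the $\bm U$ components and leaves exactly $\alpha_1\bm V_1\bm b_1-\alpha_2\bm V_2\bm b_2$. This proves \cref{eq:transfer-identity}. No event is needed, because $\col(\bm S_k)\subseteq\cM_k$ holds always.

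\textbf{Operator norm.} Since $\bm T$ vanishes on $\cU$ and on $(\cM_1+\cM_2)^\perp$, it suffices to take $\bm x=\bm V_1\bm b_1+\bm V_2\bm b_2$. The argument in \cref{lem:coefficient-stability}, namely \cref{eq:projector-sum-floor} restricted to $\cU^\perp$, gives $\|\bm x\|_2^2\ge 2\theta(\|\bm b_1\|^2+\|\bm b_2\|^2)$. Using $\alpha_1^2+\alpha_2^2\le1$,
\[
\|\bm T\bm x\|_2\le\alpha_1\|\bm b_1\|+\alpha_2\|\bm b_2\|\le(\|\bm b_1\|^2+\|\bm b_2\|^2)^{1/2}\le\|\bm x\|_2/\sqrt{2\theta}.
\]

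\textbf{Frobenius norm.} This is the main step, since the crude bound $\opn{\bm T}^2\rank(\bm T)$ would lose $\theta$ on all $r_1+r_2$ directions. Instead I will use the principal-vector bases from the proof of \cref{lem:joint-subspace-geometry}. $\bm T$ preserves the decomposition of $\cU^\perp\cap(\cM_1+\cM_2)$ into the invariant principal planes $\operatorname{span}\{\widetilde{\bm v}_{1,j},\widetilde{\bm v}_{2,j}\}$ together with the $|r_1-r_2|$ unpaired directions. It maps each plane into itself and sends each unpaired vector $\widetilde{\bm v}$ to $\pm\alpha_k\widetilde{\bm v}$. The unpaired part therefore contributes at most $|r_1-r_2|$ to $\frn{\bm T}^2$. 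On the $j$-th plane, $\bm T$ has rank at most $2$ and operator norm at most $1/\sqrt{1-\chi_j}\le1/\sqrt{2\theta}$; this follows from the operator-norm argument above applied to the single plane, using eigenvalue $1-\chi_j$. Hence each plane contributes at most $2/(2\theta)$. A slightly sharper estimate, $\lesssim1+1/(1-\chi_j)$, is not needed.

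Summing over the $r_1\wedge r_2$ planes and the unpaired directions gives
\[
\frn{\bm T}^2\lesssim r_1+r_2+\frac{r_1\wedge r_2}{\theta}.
\]
Taking square roots and using $\sqrt{x+y}\le\sqrt x+\sqrt y$ yields \cref{eq:transfer-op}. The only care required is to verify that the plane decomposition is orthogonal and invariant under $\bm T$. This holds because $\widetilde{\bm v}_{1,i}\perp\widetilde{\bm v}_{2,j}$ for $i\ne j$, and because $\bm T$ acts coordinatewise in the $(\bm b_1,\bm b_2)$ coefficients expressed in these bases.
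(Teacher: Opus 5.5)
Your proposal is correct. The identity is proved exactly as in the paper. Your operator-norm estimate is also the paper's argument written out: the paper factors $\bm T=\bm V_\alpha\bm V^\dagger$ with $\bm V=[\bm V_1,\bm V_2]$ and $\bm V_\alpha=[\alpha_1\bm V_1,-\alpha_2\bm V_2]$, and your two inequalities are precisely $\opn{\bm V_\alpha}\le1$ and $\opn{\bm V^\dagger}\le(2\theta)^{-1/2}$.

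The Frobenius bound is where the routes differ. The paper does not split $\bm T$ into principal planes. It uses $\frn{\bm T}\le\opn{\bm V_\alpha}\frn{\bm V^\dagger}$ and $\frn{\bm V^\dagger}^2=\tr\{(\bm V^\trans\bm V)^{-1}\}$, which is read off from the eigenvalues $1\pm\chi_j$ and $1$ of $\bm V^\trans\bm V$ in \cref{lem:joint-subspace-geometry}. That single factorization gives both norm bounds at once, needs no bookkeeping of a decomposition of $\mathbb R^n$, and produces the per-plane term $2/(1-\chi_j^2)$.

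Your block argument reaches the same conclusion. Its per-plane term $2/(1-\chi_j)$ is slightly cruder but equivalent up to constants. In exchange, it shows explicitly that the $\theta^{-1}$ loss is confined to the $r_1\wedge r_2$ paired planes, while each unpaired direction costs only $\alpha_k^2\le1$. The invariance you take care to verify is not actually needed: additivity of $\frn{\bm T}^2$ over the pieces follows from their orthogonality alone, since $\frn{\bm T}^2=\sum_i\|\bm T\bm e_i\|_2^2$ in any orthonormal basis. Likewise, your per-plane operator bound uses only $\|\bm x\|_2^2\ge(1-\chi_j)(c_1^2+c_2^2)$.
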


\begin{proof}
We first verify the identity from the coordinate definition of $\bm T$.  We
then factor $\bm T$ through the two individual signal spaces and use their
principal-angle spectrum to obtain the norm bounds.

Write $\bm z=(\bm z_1^\trans,\bm z_2^\trans)^\trans$ and decompose
$\bm S_k\bm z_k=\bm U\bm a_k+\bm V_k\bm b_k$, where
$\bm a_k\in\R^r$ and $\bm b_k\in\R^{r_k}$.  Then
$\bm S\bm z=\bm U(\bm a_1+\bm a_2)+\bm V_1\bm b_1+\bm V_2\bm b_2$, whereas
\[
  \bm P_{\cU^\perp}\bm S\bm J\bm z
  =
  \alpha_1\bm V_1\bm b_1
  -
  \alpha_2\bm V_2\bm b_2
  =
  \bm T(\bm S\bm z).
\]
The last equality is precisely the definition of $\bm T$, and proves
\cref{eq:transfer-identity}.

For the norm bounds, set
$\bm V\coloneqq[\bm V_1,\bm V_2]$ and
$\bm V_\alpha\coloneqq[\alpha_1\bm V_1,-\alpha_2\bm V_2]$.
The spectrum in \cref{lem:joint-subspace-geometry} shows that $\bm V$ has
full column rank. Thus $\bm V^\dagger$ recovers the unique individual-space
coordinates of every vector in
$\col(\bm V_1)+\col(\bm V_2)$ and vanishes on its orthogonal complement,
which contains both $\cU$ and $(\cM_1+\cM_2)^\perp$. Comparing this action
with \cref{eq:transfer-operator} shows that
$\bm T=\bm V_\alpha\bm V^\dagger$ on all of $\mathbb R^n$.

For $\bm b=(\bm b_1^\trans,\bm b_2^\trans)^\trans$, the triangle and
Cauchy--Schwarz inequalities give
$\|\bm V_\alpha\bm b\|_2
\le\alpha_1\|\bm b_1\|_2+\alpha_2\|\bm b_2\|_2
\le\|\bm b\|_2$.
Hence $\opn{\bm V_\alpha}\le1$.

The same spectrum, together with
$\chi_j\le1-2\theta$, shows that
$\lambda_{\min}(\bm V^\trans\bm V)\ge2\theta$.  Therefore
$\|\bm V^\dagger\|_{\mathrm{op}}\le(2\theta)^{-1/2}$, and hence
$\|\bm T\|_{\mathrm{op}}\le(2\theta)^{-1/2}$.

For the Frobenius norm, the complete spectrum of
$\bm V^\trans\bm V$ gives
\[
  \frn{\bm T}^2
  \le
  \frn{\bm V^\dagger}^2
  =
  |r_1-r_2|
  +
  \sum_{j=1}^{r_1 \wedge r_2}
  \left(
    \frac{1}{1+\chi_j}
    +
    \frac{1}{1-\chi_j}
  \right)
  \lesssim
  r_1+r_2+\frac{r_1 \wedge r_2}{\theta}.
\]
Indeed, each summand is
$2/(1-\chi_j^2)\le1/\theta$.  Taking square roots and using
$\sqrt{x+y}\le\sqrt x+\sqrt y$ proves \cref{eq:transfer-op}.
\end{proof}

\subsubsection{Deterministic error decomposition}
\label{app:error-decomposition}

The decomposition isolates the part of the subspace error that is linear in
the leakage $\bm L$. The direct contribution is already visible in the
reconstruction formula. The remaining linear contribution comes from the
displacement of the near-zero eigenspace and is contained in
$\bm T\bm S\bm\Delta_{\bm Q}$. We extract this term using the eigenvalue
equation for $\bm Q$ and then bound the resulting higher-order remainder.

\begin{proof}[Proof of \cref{lem:error-decomposition}]
Write
$\bm\Delta_{\rm quad}\coloneqq\bm L^\trans\bm L-\bm D_\sigma$ and
$\bm\Lambda\coloneqq\bm Q^\trans\bm G\bm Q$. Since $\bm Q$ has
orthonormal columns spanning an invariant subspace of $\bm G$, we have
$\bm G\bm Q=\bm Q\bm\Lambda$. On $\cE_{\rm good}$,
\cref{lem:calibration} gives
$\widetilde{\bm Q}
=\bm Q\bm\Xi^{-1}
=\bm Q_\star+\bm\Delta_{\bm Q}$.

Using the reconstruction identity preceding
\cref{lem:error-decomposition} and
$\bm S\bm Q_\star=\bze$, we obtain
\begin{equation}
\label{eq:error-proof-start}
  \bm P_{\cU^\perp}\widetilde{\bm M}
  =
  \bm T\bm S\bm\Delta_{\bm Q}
  +
  \bm L\bm J\bm Q_\star
  +
  \bm L\bm J\bm\Delta_{\bm Q}.
\end{equation}
The second term is linear in $\bm L$, whereas the third is higher order.
It remains to identify the linear part of
$\bm T\bm S\bm\Delta_{\bm Q}$.

The relations
$\bm G\bm Q=\bm Q\bm\Lambda$ and
$\widetilde{\bm Q}=\bm Q\bm\Xi^{-1}$ imply
$\bm G\widetilde{\bm Q}
=\bm Q\bm\Lambda\bm\Xi^{-1}$.
Substituting
$\bm G=\bm G_0+\bm\Delta$ and
$\widetilde{\bm Q}=\bm Q_\star+\bm\Delta_{\bm Q}$, and using
$\bm G_0\bm Q_\star=\bze$, gives
\begin{equation}
\label{eq:aligned-invariance}
  \bm G_0\bm\Delta_{\bm Q}
  +
  \bm\Delta\bm Q_\star
  +
  \bm\Delta\bm\Delta_{\bm Q}
  =
  \bm Q\bm\Lambda\bm\Xi^{-1}.
\end{equation}

Since $\bm G_0=\bm S^\trans\bm S$, we have
$\bm S\bm G_0^\dagger\bm G_0=\bm S$ and
$\bm S\bm G_0^\dagger\bm S^\trans
=\bm P_{\col(\bm S)}$. Moreover,
$\bm T\bm P_{\col(\bm S)}=\bm T$ on $\cE_{\rm emb}$. Applying
$\bm T\bm S\bm G_0^\dagger$ to
\cref{eq:aligned-invariance}, expanding
$\bm\Delta
=\bm S^\trans\bm L+\bm L^\trans\bm S+\bm\Delta_{\rm quad}$, and
simplifying yields
\begin{align}
  \bm T\bm S\bm\Delta_{\bm Q}
  ={}&
  -\bm T\bm S\bm G_0^\dagger
    \bm\Delta\bm Q_\star
  -\bm T\bm S\bm G_0^\dagger
    \bm\Delta\bm\Delta_{\bm Q}
  +
  \bm T\bm S\bm G_0^\dagger
    \bm Q\bm\Lambda\bm\Xi^{-1}
  \notag\\
  ={}&
  -\bm T\bm L\bm Q_\star
  -
  \bm T\bm S\bm G_0^\dagger
    \bm\Delta_{\rm quad}\bm Q_\star
  -
  \bm T\bm S\bm G_0^\dagger
    \bm\Delta\bm\Delta_{\bm Q}
  +
  \bm T\bm S\bm G_0^\dagger
    \bm Q\bm\Lambda\bm\Xi^{-1}.
\label{eq:TSDeltaQ-expansion}
\end{align}
For the second equality,
$\bm L^\trans\bm S\bm Q_\star=\bze$, while
\[
  \bm T\bm S\bm G_0^\dagger
  \bm S^\trans\bm L\bm Q_\star
  =
  \bm T\bm P_{\col(\bm S)}\bm L\bm Q_\star
  =
  \bm T\bm L\bm Q_\star.
\]
Thus $-\bm T\bm L\bm Q_\star$ is precisely the linear contribution caused
by the displacement of the near-zero eigenspace.

Substituting \cref{eq:TSDeltaQ-expansion} into
\cref{eq:error-proof-start} proves \cref{eq:error-decomposition}. The
higher-order remainder in that decomposition is the sum of the following four
terms:
\begin{equation}
\label{eq:error-higher-order-appendix}
\begin{aligned}
  \bm{\mathsf{Rem}}
  ={}&
  \underbrace{
    -\bm T\bm S\bm G_0^\dagger
    \bm\Delta_{\rm quad}\bm Q_\star
  }_{\displaystyle \eqqcolon\bm A_1}
  +
  \underbrace{
    \bm L\bm J\bm\Delta_{\bm Q}
  }_{\displaystyle \eqqcolon\bm A_2}
  +
  \underbrace{\left(
    -\bm T\bm S\bm G_0^\dagger
    \bm\Delta\bm\Delta_{\bm Q}
  \right)}_{\displaystyle \eqqcolon\bm A_3}
  +
  \underbrace{
    \bm T\bm S\bm G_0^\dagger
    \bm Q\bm\Lambda\bm\Xi^{-1}
  }_{\displaystyle \eqqcolon\bm A_4}.
\end{aligned}
\end{equation}
It remains to establish the bound in
\cref{eq:higher-order-deterministic}. The four estimates repeatedly use
\[
  \opn{\bm T\bm S\bm G_0^\dagger}
  \le
  \opn{\bm T}
  \opn{\bm S(\bm G_0^\dagger)^{1/2}}
  \opn{(\bm G_0^\dagger)^{1/2}}
  \lesssim
  \frac{1}{\gmin\theta},
\]
where the last inequality follows from
\cref{eq:transfer-op,eq:population-SGamma-bounds,eq:population-Gamma-bounds}.

\paragraph{Bound $\opn{\bm A_1}$.}
By submultiplicativity, the preceding estimate, and the block definition of
$\bm Q_\star$ together with \cref{eq:population-Rk-bounds}, we obtain
$\opn{\bm A_1}
\lesssim\opn{\bm\Delta_{\rm quad}}/(\gmin^2\theta)$.

\paragraph{Bound $\opn{\bm A_2}$.}
Using $\opn{\bm J}\le1$ and the coefficient bound in
\cref{lem:calibration}, submultiplicativity gives
$\opn{\bm A_2}
\lesssim
\opn{\bm L}\opn{\bm\Delta_{+0}}/
(\gmin^2\sqrt\theta)$.

\paragraph{Bound $\opn{\bm A_3}$.} Expanding
$\bm\Delta
=\bm S^\trans\bm L+\bm L^\trans\bm S+\bm\Delta_{\rm quad}$ gives
\begin{align*}
  \opn{\bm A_3}
  \le{}&
  \opn{
    \bm T\bm S\bm G_0^\dagger
    \bm S^\trans\bm L\bm\Delta_{\bm Q}
  }
  +
  \opn{
    \bm T\bm S\bm G_0^\dagger
    \bm L^\trans\bm S\bm\Delta_{\bm Q}
  } +
  \opn{
    \bm T\bm S\bm G_0^\dagger
    \bm\Delta_{\rm quad}\bm\Delta_{\bm Q}
  }.
\end{align*}

For the first term, use
$\bm S\bm G_0^\dagger\bm S^\trans
=\bm P_{\col(\bm S)}$ and
$\bm T=\bm T\bm P_{\col(\bm S)}$ to obtain
\[
  \opn{
    \bm T\bm S\bm G_0^\dagger
    \bm S^\trans\bm L\bm\Delta_{\bm Q}
  }
  \le
  \opn{\bm T}
  \opn{\bm P_{\col(\bm S)}\bm L}
  \opn{\bm\Delta_{\bm Q}}
  \lesssim
  \frac{
    \opn{\bm P_{\col(\bm S)}\bm L}
    \opn{\bm\Delta_{+0}}
  }{
    \gmin^2\theta
  }.
\]
Here the last inequality follows from
$\|\bm T\|_{\mathrm{op}}\lesssim\theta^{-1/2}$ and
$\|\bm\Delta_{\bm Q}\|_{\mathrm{op}}
\lesssim
\|\bm\Delta_{+0}\|_{\mathrm{op}}/
(\gmin^2\sqrt\theta)$.

For the second term,
$\bm L^\trans\bm S
=(\bm P_{\col(\bm S)}\bm L)^\trans\bm S$, and therefore
\[
  \opn{
    \bm T\bm S\bm G_0^\dagger
    \bm L^\trans\bm S\bm\Delta_{\bm Q}
  }
  \le
  \opn{\bm T\bm S\bm G_0^\dagger}
  \opn{\bm P_{\col(\bm S)}\bm L}
  \opn{\bm S\bm\Delta_{\bm Q}}
  \lesssim
  \frac{
    \opn{\bm P_{\col(\bm S)}\bm L}
    \opn{\bm\Delta_{+0}}
  }{
    \gmin^2\theta
  }.
\]
The last inequality uses
$\|\bm T\bm S\bm G_0^\dagger\|_{\mathrm{op}}
\lesssim(\gmin\theta)^{-1}$ and
$\|\bm S\bm\Delta_{\bm Q}\|_{\mathrm{op}}
\lesssim\|\bm\Delta_{+0}\|_{\mathrm{op}}/\gmin$.

For the quadratic term,
\[
  \opn{
    \bm T\bm S\bm G_0^\dagger
    \bm\Delta_{\rm quad}\bm\Delta_{\bm Q}
  }
  \lesssim
  \frac{
    \opn{\bm\Delta_{\rm quad}}
    \opn{\bm\Delta_{+0}}
  }{
    \gmin^3\theta^{3/2}
  }
  \lesssim
  \frac{\opn{\bm\Delta_{\rm quad}}}
       {\gmin^2\theta}.
\]
The final inequality follows from
$\|\bm\Delta_{+0}\|_{\mathrm{op}}
\lesssim\gmin\sqrt\theta$ on $\cE_{\rm good}$.

Combining the three bounds gives
\[
  \opn{\bm A_3}
  \lesssim
  \frac{
    \opn{\bm P_{\col(\bm S)}\bm L}
    \opn{\bm\Delta_{+0}}
  }{
    \gmin^2\theta
  }
  +
  \frac{\opn{\bm\Delta_{\rm quad}}}
       {\gmin^2\theta}.
\]

\paragraph{Bound $\opn{\bm A_4}$.}
We first control $\bm S\bm G_0^\dagger\bm Q$. Write the polar decomposition
as $\bm S=\bm U_S\bm G_0^{1/2}$, where $\bm U_S$ is a partial isometry from
$\ker(\bm G_0)^\perp$ onto $\col(\bm S)$. Since the range of
$(\bm G_0^\dagger)^{1/2}\bm Q$ lies in $\ker(\bm G_0)^\perp$, the map
$\bm U_S$ preserves its norm. Therefore
\[
\begin{aligned}
  \opn{\bm S\bm G_0^\dagger\bm Q}
  &=
  \opn{\bm U_S(\bm G_0^\dagger)^{1/2}\bm Q}
  =
  \opn{(\bm G_0^\dagger)^{1/2}\bm Q} \\
  &\le
  \opn{\bm G_0^\dagger}
  \opn{\bm G_0^{1/2}\bm Q}
  \lesssim
  \frac{\opn{\bm\Delta_{+0}}}{\gmin^2\theta}.
\end{aligned}
\]
The inequality uses
$(\bm G_0^\dagger)^{1/2}
=\bm G_0^\dagger\bm G_0^{1/2}$.
The final bound follows from \cref{eq:oracle-curvature} and
\cref{eq:relative-nullspace-conclusion}.

Next, $\bm\Lambda=\bm Q^\trans\bm G\bm Q$ represents the restriction of
$\bm G$ to the selected invariant subspace. The eigenvalue localization in
\cref{lem:nullspace} therefore gives
\[
  \opn{\bm\Lambda}
  \le
  \opn{\bm\Delta_{00}}
  +
  2\opn{\bm\Delta_{+0}}^2.
\]
Combining these estimates with
$\|\bm T\|_{\mathrm{op}}\lesssim\theta^{-1/2}$ and
$\|\bm\Xi^{-1}\|_{\mathrm{op}}\lesssim\gmin^{-1}$ yields
\begin{align*}
  \opn{\bm A_4}
  &\le
  \opn{\bm T}
  \opn{\bm S\bm G_0^\dagger\bm Q}
  \opn{\bm\Lambda}
  \opn{\bm\Xi^{-1}} \\
  &\lesssim
  \frac{
    \bigl(
      \opn{\bm\Delta_{00}}
      +
      \opn{\bm\Delta_{+0}}^2
    \bigr)
    \opn{\bm\Delta_{+0}}
  }{
    \gmin^3\theta^{3/2}
  } \\
  &\lesssim
  \frac{\opn{\bm\Delta_{\rm quad}}}
       {\gmin^2\theta}
  +
  \frac{\opn{\bm\Delta_{+0}}^2}
       {\gmin^2\theta}.
\end{align*}
For the last inequality, observe that $\bm\Delta_{00}
  =
  \bm P_{\ker(\bm G_0)}
  \bm\Delta_{\rm quad}
  \bm P_{\ker(\bm G_0)},$
so
$\|\bm\Delta_{00}\|_{\mathrm{op}}
\le\|\bm\Delta_{\rm quad}\|_{\mathrm{op}}$. Moreover,
$\|\bm\Delta_{+0}\|_{\mathrm{op}}
\lesssim\gmin\sqrt\theta$ on $\cE_{\rm good}$. Consequently,
\[
  \frac{
    \opn{\bm\Delta_{00}}
    \opn{\bm\Delta_{+0}}
  }{
    \gmin^3\theta^{3/2}
  }
  \lesssim
  \frac{\opn{\bm\Delta_{\rm quad}}}
       {\gmin^2\theta},
  \qquad
  \frac{\opn{\bm\Delta_{+0}}^3}
       {\gmin^3\theta^{3/2}}
  \lesssim
  \frac{\opn{\bm\Delta_{+0}}^2}
       {\gmin^2\theta}.
\]

Combining the bounds for $\bm A_1,\ldots,\bm A_4$ in
\cref{eq:error-higher-order-appendix} proves
\cref{eq:higher-order-deterministic} and completes the proof.
\end{proof}

\section{Probabilistic tools for the upper bound}
\label{sec:Gaussian}

\subsection{Gaussian concentration and matrix bounds}

We begin with the standard Gaussian concentration inequality. It will be
applied below to a Lipschitz function of the bulk noise matrix.

\begin{lemma}[Gaussian Lipschitz concentration]
\label{lem:gaussian-concentration}
Let $\bm g\sim\mathcal N(\bm 0,\bm I_N)$, and let
$f:\mathbb R^N\to\mathbb C$ be $L$-Lipschitz with respect to the Euclidean
norm. Then, for every $q\ge2$,
\begin{equation}
\label{eq:gaussian-lipschitz-moment}
  \|f(\bm g)-\mathbb Ef(\bm g)\|_{L^q}
  \le C\sqrt q\,L.
\end{equation}
\end{lemma}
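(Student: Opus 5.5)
The plan is to reduce to the standard real-valued Gaussian concentration inequality and then convert a sub-Gaussian tail into a moment bound. We may also reduce the Lipschitz condition to a gradient bound. First, write $f=f_1+\mathrm{i}f_2$ with $f_1=\operatorname{Re}f$ and $f_2=\operatorname{Im}f$. Each $f_j$ is real-valued and $L$-Lipschitz, since $|f_j(\bm x)-f_j(\bm y)|\le|f(\bm x)-f(\bm y)|$. Minkowski's inequality then gives
\[
  \|f(\bm g)-\E f(\bm g)\|_{L^q}\le\sum_{j=1}^2\|f_j(\bm g)-\E f_j(\bm g)\|_{L^q}.
\]
It therefore suffices to treat real-valued $f$, at the cost of a factor $2$ in the constant. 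Integrability is automatic: $|f(\bm g)|\le|f(\bm 0)|+L\|\bm g\|_2$ has all moments.

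Second, for real-valued $L$-Lipschitz $f$, I would invoke the Tsirelson--Ibragimov--Sudakov (Borell) inequality
\[
  \Pp\bigl(|f(\bm g)-\E f(\bm g)|\ge t\bigr)\le 2e^{-t^2/(2L^2)}.
\]
For a self-contained route, I would prove it by the Maurey--Pisier interpolation argument. Take an independent copy $\bm g'$ and set $\bm g_\vartheta=\bm g\sin\vartheta+\bm g'\cos\vartheta$. This gives
\[
  \E e^{\lambda(f(\bm g)-f(\bm g'))}\le\E e^{\pi^2\lambda^2L^2/8},
\]
where we use that $\frac{d}{d\vartheta}\bm g_\vartheta$ is standard Gaussian and independent of $\bm g_\vartheta$. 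Jensen's inequality removes the independent copy, and a Chernoff bound yields the tail with the constant $2/\pi^2$ in place of $1/2$. Constants are immaterial here. The smooth case suffices, because $f$ can be mollified by convolution with a small Gaussian, which preserves the Lipschitz constant; one then passes to the limit by dominated convergence.

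Third, integrate the tail:
\[
  \E|X|^q=\int_0^\infty qt^{q-1}\Pp(|X|\ge t)\,dt\le 2q\int_0^\infty t^{q-1}e^{-ct^2/L^2}\,dt=q\,\Gamma(q/2)\,(L^2/c)^{q/2}.
\]
Since $\Gamma(q/2)^{1/q}\lesssim\sqrt q$ by Stirling's formula and $q^{1/q}\le e^{1/e}$, taking $q$th roots gives $\|X\|_{L^q}\le C\sqrt q\,L$. The main obstacle is only the concentration inequality itself. In the paper this is standard and may simply be cited, for example from Boucheron--Lugosi--Massart or Vershynin. The remaining steps are routine bookkeeping.
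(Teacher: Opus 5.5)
Your argument is correct and is the standard one. The paper gives no proof and states this lemma as the classical Gaussian concentration inequality. Your reduction of a complex-valued $f$ to its real and imaginary parts is also exactly how the paper itself applies the lemma, namely to the resolvent trace $F$ in the proof of \cref{lem:one-view-resolvent}. The Maurey--Pisier interpolation, the mollification step, and the tail-to-moment integration yielding $q\,\Gamma(q/2)(L^2/c)^{q/2}$ are all sound.
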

The following bounds are consequences of the standard extreme-singular-value
estimates for Gaussian matrices; see
\cite[Theorem~5.32 and Corollary~5.35]{vershynin2010introduction}.
The moment bounds follow by integrating the corresponding tail inequalities.

\begin{lemma}[Gaussian matrix deviations and moments]
\label{lem:gaussian-matrix-deviations}
Let $\bm G\in\mathbb R^{u\times v}$ have independent
$\mathcal N(0,b^{-1})$ entries. Then, for every $x\ge0$,
\begin{equation}
\label{eq:gaussian-matrix-op-tail}
  \mathbb P\left\{
    \|\bm G\|_{\mathrm{op}}
    >
    \frac{\sqrt u+\sqrt v}{\sqrt b}+x
  \right\}
  \le 2e^{-cbx^2}.
\end{equation}
If $v=b$, then, for every $s\ge0$,
\begin{equation}
\label{eq:gaussian-wishart-tail}
  \mathbb P\left\{
    \|\bm G\bm G^\trans-\bm I_u\|_{\mathrm{op}}
    >
    C\left(
      \sqrt{\frac{u+s}{b}}+\frac{u+s}{b}
    \right)
  \right\}
  \le 2e^{-cs}.
\end{equation}
Moreover, for every $q\ge2$,
\begin{equation}
\label{eq:gaussian-matrix-op-moment}
  \|\bm G\|_{L^q}
  \le
  C\left(
    \sqrt{\frac ub}
    +
    \sqrt{\frac vb}
    +
    \sqrt{\frac qb}
  \right),
\end{equation}
and, when $v=b$,
\begin{equation}
\label{eq:gaussian-wishart-moment}
  \|\bm G\bm G^\trans-\bm I_u\|_{L^q}
  \le
  C\left(
    \sqrt{\frac{u+q}{b}}+\frac{u+q}{b}
  \right).
\end{equation}
\end{lemma}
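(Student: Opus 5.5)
We would obtain all four statements from the standard nonasymptotic theory for Gaussian matrices with i.i.d.\ entries, after rescaling. Write $\bm G=b^{-1/2}\bm W$, where $\bm W\in\mathbb R^{u\times v}$ has independent $\mathcal N(0,1)$ entries. Every claim then becomes a claim about $\bm W$, with the factor $b^{-1/2}$ (or $b^{-1}$ for Gram matrices) tracked explicitly.

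\textbf{Operator-norm tail \cref{eq:gaussian-matrix-op-tail}.} By Gordon's inequality (as stated in \cite[Theorem~5.32]{vershynin2010introduction}), $\mathbb E\|\bm W\|_{\mathrm{op}}\le\sqrt u+\sqrt v$. The map $\bm W\mapsto\|\bm W\|_{\mathrm{op}}$ is $1$-Lipschitz in the Euclidean (Frobenius) norm of the entries. Gaussian concentration therefore gives
\[
\mathbb P\{\|\bm W\|_{\mathrm{op}}>\sqrt u+\sqrt v+t\}\le e^{-t^2/2}.
\]
Substituting $t=\sqrt b\,x$ and dividing by $\sqrt b$ gives \cref{eq:gaussian-matrix-op-tail} with $c=1/2$.

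\textbf{Wishart tail \cref{eq:gaussian-wishart-tail}.} Here $v=b$, so $\bm G\bm G^\trans-\bm I_u=b^{-1}\bm W\bm W^\trans-\bm I_u$. We apply \cite[Corollary~5.35]{vershynin2010introduction} to the $b\times u$ matrix $\bm W^\trans$. With probability at least $1-2e^{-t^2/2}$, all singular values of $b^{-1/2}\bm W^\trans$ lie in $[1-\delta,1+\delta]$, where $\delta\coloneqq(\sqrt u+t)/\sqrt b$. This is the only place a case split is needed:
\begin{itemize}
\item if $\delta\le1$, the eigenvalues of $b^{-1}\bm W\bm W^\trans$ lie in $[(1-\delta)^2,(1+\delta)^2]$;
\item if $\delta>1$ (possible when $u>b$, and then some eigenvalues are $0$), the eigenvalues lie in $[0,(1+\delta)^2]$.
\end{itemize}
In both cases $\|b^{-1}\bm W\bm W^\trans-\bm I_u\|_{\mathrm{op}}\le3\max(\delta,\delta^2)\le3(\delta+\delta^2)$. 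Setting $t=\sqrt{s}$ and using $\sqrt u+\sqrt s\le\sqrt{2(u+s)}$ gives \cref{eq:gaussian-wishart-tail}, after adjusting $C$ and $c$.

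\textbf{Moment bounds.} Both follow by integrating the tails. For \cref{eq:gaussian-matrix-op-moment}, write $\|\bm G\|_{\mathrm{op}}\le(\sqrt u+\sqrt v)/\sqrt b+Y$, where $Y\coloneqq(\|\bm G\|_{\mathrm{op}}-(\sqrt u+\sqrt v)/\sqrt b)_+$ satisfies $\mathbb P(Y>x)\le2e^{-cbx^2}$. The layer-cake formula $\mathbb EY^q=\int_0^\infty qx^{q-1}\mathbb P(Y>x)\,dx$ gives $\|Y\|_{L^q}\lesssim\sqrt{q/b}$, and Minkowski's inequality concludes.

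For \cref{eq:gaussian-wishart-moment}, set $h(s)\coloneqq C(\sqrt{(u+s)/b}+(u+s)/b)$, which is increasing in $s$, and let $Z\coloneqq\|\bm G\bm G^\trans-\bm I_u\|_{\mathrm{op}}$.
\begin{enumerate}
\item The tail bound says $\mathbb P\{Z>h(s)\}\le2e^{-cs}$. Hence $Z$ is stochastically dominated by $h(S)$, where $S$ is a nonnegative variable with $\mathbb P(S>s)\le\min(1,2e^{-cs})$.
\item Such an $S$ satisfies $\|S\|_{L^q}\lesssim q$ and $\|\sqrt S\|_{L^q}\lesssim\sqrt q$.
\item Using $\sqrt{u+S}\le\sqrt u+\sqrt S$ and Minkowski's inequality,
\[
\|Z\|_{L^q}\lesssim\sqrt{u/b}+u/b+\sqrt{q/b}+q/b\lesssim\sqrt{(u+q)/b}+(u+q)/b.
\]
\end{enumerate}

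\textbf{Expected difficulty.} None of these steps is deep. The only care needed is in the Wishart bound when $u>b$, where $\delta$ may exceed one. The case split above handles this, and it is why the quadratic term $(u+s)/b$ must appear in \cref{eq:gaussian-wishart-tail}. The rest is constant bookkeeping.
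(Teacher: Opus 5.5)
Your proposal is correct and takes the same route the paper indicates: Gordon's bound with Gaussian concentration (Vershynin's Theorem~5.32 and Corollary~5.35) gives the two tail bounds, and integrating those tails gives the moment bounds. Your details fill in what the paper treats as routine, namely the stochastic-domination argument for the Wishart moment and the case split on $\delta>1$. That case can also occur for large $s$ when $u\le b$, not only when $u>b$, but your split covers it regardless.
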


We next control the Gaussian coefficient matrices that arise in the one-view
expansion. The result covers both independent bilinear forms and centered
same-sample quadratic forms.

\begin{lemma}[Gaussian coefficient moments]
\label{lem:gaussian-coefficient-moments}
Let $q\ge2$, and let
$\bm G_1\in\mathbb R^{r_1\times p}$ and
$\bm G_2\in\mathbb R^{r_2\times p}$ be independent matrices with independent
$\mathcal N(0,b^{-1})$ entries. For every deterministic
$\bm A\in\mathbb R^{r_1\times r_2}$,
\begin{equation}
\label{eq:gaussian-coefficient-moments}
  \|\bm G_1^\trans\bm A\bm G_2\|_{L^q}
  \le
  \frac{C}{b}
  \left\{
    (\sqrt p+\sqrt q)\|\bm A\|_{\mathrm F}
    +(p+q)\|\bm A\|_{\mathrm{op}}
  \right\}.
\end{equation}
If $r_1=r_2$ and $\bm A\in\mathbb R^{r_1\times r_1}$, then
\begin{equation}
\label{eq:gaussian-quadratic-moment}
  \left\|
    \bm G_1^\trans\bm A\bm G_1
    -b^{-1}\operatorname{tr}(\bm A)\bm I_p
  \right\|_{L^q}
  \le
  \frac{C}{b}
  \left\{
    (\sqrt p+\sqrt q)\|\bm A\|_{\mathrm F}
    +(p+q)\|\bm A\|_{\mathrm{op}}
  \right\}.
\end{equation}
Both conclusions remain valid conditionally when $\bm A$ is measurable with
respect to a sigma-field independent of the displayed Gaussian matrices.
\end{lemma}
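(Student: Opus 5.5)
We prove both bounds by conditioning and vectorization, reducing them to scalar Gaussian chaos estimates combined with a net or trace-moment argument.

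\emph{Bilinear bound \cref{eq:gaussian-coefficient-moments}.} Condition on $\bm G_2$. Then $\bm G_1^\trans\bm A\bm G_2$ is a Gaussian matrix in $\bm G_1$. For unit vectors $\bm x,\bm y\in\mathbb R^p$, the scalar $\bm x^\trans\bm G_1^\trans\bm A\bm G_2\bm y$ is centered Gaussian with variance $b^{-1}\|\bm A\bm G_2\bm y\|_2^2$. We then control $\bm W\coloneqq\bm G_1^\trans\bm A\bm G_2$ by a Gaussian operator-norm bound with a non-identity covariance. Write $\bm A\bm G_2=\bm B$, where $\bm B$ is fixed given $\bm G_2$. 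Since $\bm G_1^\trans\bm B$ has independent rows $\bm g_i^\trans\bm B$ with $\bm g_i\sim\mathcal N(0,b^{-1}\bm I_{r_1})$, the Gaussian comparison (Chevet/Gordon) inequality gives
\[
  \mathbb E\|\bm G_1^\trans\bm B\|_{\mathrm{op}}\lesssim b^{-1/2}\bigl(\|\bm B\|_{\mathrm F}+\sqrt p\,\|\bm B\|_{\mathrm{op}}\bigr).
\]
Gaussian concentration (\cref{lem:gaussian-concentration}) applies because the map is $\|\bm B\|_{\mathrm{op}}$-Lipschitz, and it adds $\sqrt q\,\|\bm B\|_{\mathrm{op}}b^{-1/2}$. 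Next we bound the $L^q$ norms of $\|\bm A\bm G_2\|_{\mathrm F}$ and $\|\bm A\bm G_2\|_{\mathrm{op}}$. For the first, $\|\bm A\bm G_2\|_{\mathrm F}$ is $\|\bm A\|_{\mathrm{op}}b^{-1/2}$-Lipschitz with mean at most $\sqrt{p/b}\,\|\bm A\|_{\mathrm F}$, so its $L^q$ norm is $\lesssim b^{-1/2}(\sqrt p\|\bm A\|_{\mathrm F}+\sqrt q\|\bm A\|_{\mathrm{op}})$. For the second, the same Chevet argument gives $\|\bm A\bm G_2\|_{L^q}\lesssim b^{-1/2}(\|\bm A\|_{\mathrm F}+(\sqrt p+\sqrt q)\|\bm A\|_{\mathrm{op}})$. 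Multiplying these by the outer factors and combining, with cross terms absorbed via $\sqrt p\sqrt q\le p+q$, gives exactly
\[
  b^{-1}\bigl\{(\sqrt p+\sqrt q)\|\bm A\|_{\mathrm F}+(p+q)\|\bm A\|_{\mathrm{op}}\bigr\}.
\]
Minkowski's inequality, together with the tower property for conditional $L^q$ norms, handles the combination. The conditional version for a random $\bm A$ independent of the Gaussians follows by applying the deterministic bound conditionally.

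\emph{Quadratic bound \cref{eq:gaussian-quadratic-moment}.} First symmetrize: $\bm G_1^\trans\bm A\bm G_1=\bm G_1^\trans\bar{\bm A}\bm G_1$ with $\bar{\bm A}=(\bm A+\bm A^\trans)/2$, whose norms are no larger, and the trace is unchanged. Then apply decoupling. By de la Peña–Montgomery-Smith decoupling for Gaussian chaos (valid for matrix-valued forms in any Banach norm, here $\|\cdot\|_{\mathrm{op}}$ on $L^q$), write $\bar{\bm A}=\bm A_{\rm diag}+\bm A_{\rm off}$. The off-diagonal chaos is comparable to $\bm G_1^\trans\bm A_{\rm off}\bm G_1'$ with $\bm G_1'$ an independent copy, which is covered by part one. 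Note that $\|\bm A_{\rm off}\|_{\mathrm F}\le\|\bm A\|_{\mathrm F}$ and $\|\bm A_{\rm off}\|_{\mathrm{op}}\le2\|\bm A\|_{\mathrm{op}}$. The diagonal part equals $\sum_i a_{ii}(\bm g_i\bm g_i^\trans-b^{-1}\bm I_p)$, where the $\bm g_i$ are the rows of $\bm G_1$. This is a weighted centered Wishart sum. An alternative that avoids the diagonal split is to diagonalize $\bar{\bm A}=\bm O\diag(\lambda)\bm O^\trans$ and use rotation invariance, which reduces the whole form to $\sum_i\lambda_i(\bm h_i\bm h_i^\trans-b^{-1}\bm I_p)$ with independent $\bm h_i$. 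Either way we need
\[
  \Bigl\|\sum_i\lambda_i(\bm h_i\bm h_i^\trans-b^{-1}\bm I_p)\Bigr\|_{L^q}\lesssim b^{-1}\bigl\{(\sqrt p+\sqrt q)\|\lambda\|_2+(p+q)\|\lambda\|_\infty\bigr\}.
\]
We prove this by a $1/4$-net on the sphere $S^{p-1}$. For a fixed unit vector $\bm x$, the scalar $\sum_i\lambda_i\{(\bm h_i^\trans\bm x)^2-b^{-1}\}$ has a Hanson–Wright/Bernstein tail
\[
  \exp\Bigl\{-c\min\Bigl(\frac{t^2b^2}{\|\lambda\|_2^2},\frac{tb}{\|\lambda\|_\infty}\Bigr)\Bigr\}.
\]
A union bound over $9^p$ points and integration of the tail then give the display, and hence the claim.

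The main obstacle is the diagonal (weighted Wishart) term: a crude bound loses the correct split between $\|\bm A\|_{\mathrm F}$ and $\|\bm A\|_{\mathrm{op}}$. The rotation-plus-net argument is what we rely on, since the Bernstein two-regime tail is exactly what produces the $(\sqrt p+\sqrt q)\|\bm A\|_{\mathrm F}+(p+q)\|\bm A\|_{\mathrm{op}}$ structure. In fact, applying this rotation-plus-net argument directly to the quadratic form, without decoupling, proves \cref{eq:gaussian-quadratic-moment} on its own, with the decoupling route kept only as a cross-check.
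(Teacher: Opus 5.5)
Your bilinear argument is the paper's, and that part is fine. The quadratic part has a genuine gap when $\bm A$ is not symmetric.

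The identity $\bm G_1^\trans\bm A\bm G_1=\bm G_1^\trans\bar{\bm A}\bm G_1$ is false as a matrix identity once $p\ge2$; only the scalar forms $\bm x^\trans(\cdot)\bm x$ agree. The discarded piece is $\bm G_1^\trans\bm A_{\rm skew}\bm G_1$ with $\bm A_{\rm skew}\coloneqq(\bm A-\bm A^\trans)/2$. It is skew-symmetric and not small. Take $r_1=2$ and $\bm A=\bm e_1\bm e_2^\trans-\bm e_2\bm e_1^\trans$, and let $\bm g_1,\bm g_2$ be the rows of $\bm G_1$. Then $\bar{\bm A}=\bm 0$, yet $\bm G_1^\trans\bm A\bm G_1=\bm g_1\bm g_2^\trans-\bm g_2\bm g_1^\trans$ has operator norm of order $p/b$, the same order as the right-hand side of \cref{eq:gaussian-quadratic-moment}. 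Both of your routes start from $\bar{\bm A}$. Moreover, a net over $\bm x^\trans(\cdot)\bm x$ cannot see this term at all, because a skew-symmetric matrix has an identically vanishing quadratic form. So, as written, you prove \cref{eq:gaussian-quadratic-moment} only for symmetric $\bm A$. That is the only case the paper later uses, since every middle matrix there is symmetric, but the lemma is stated for arbitrary square $\bm A$.

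The repair is easy. Do not symmetrize; instead split $\bm A=\bar{\bm A}+\bm A_{\rm skew}$.
\begin{itemize}
\item Handle $\bar{\bm A}$ by your rotation-plus-net argument, which is correct for symmetric matrices. Rotation invariance gives $\bm O^\trans\bm G_1\overset{\mathrm d}{=}\bm G_1$. A $1/4$-net gives $\opn{\bm M}\le2\max_{\bm x}|\bm x^\trans\bm M\bm x|$ for symmetric $\bm M$. Bernstein's two-regime tail with a $9^p$ union bound then yields exactly the $(\sqrt p+\sqrt q)\frn{\bm A}+(p+q)\opn{\bm A}$ structure.
\item The skew part has zero diagonal and zero trace, so it is a pure off-diagonal chaos. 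Decoupling reduces it to $\bm G_1^\trans\bm A_{\rm skew}\bm G_1'$. This is covered by your bilinear bound, since $\frn{\bm A_{\rm skew}}\le\frn{\bm A}$ and $\opn{\bm A_{\rm skew}}\le\opn{\bm A}$. A net over pairs $(\bm x,\bm y)$ with Hanson--Wright for $\bm x^\trans(\cdot)\bm y$ also works.
\end{itemize}

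The paper avoids all of this with a two-line trick. Put $\bm Z_1=\sqrt b\,\bm G_1$ and take an independent copy $\bm Z'$. Jensen gives $\|\bm Z_1^\trans\bm A\bm Z_1-\tr(\bm A)\bm I_p\|_{L^q}\le\|\bm Z_1^\trans\bm A\bm Z_1-\bm Z'^\trans\bm A\bm Z'\|_{L^q}$. With the independent standard Gaussian matrices $\bm U=(\bm Z_1+\bm Z')/\sqrt2$ and $\bm V=(\bm Z_1-\bm Z')/\sqrt2$, the difference equals $\bm U^\trans\bm A\bm V+\bm V^\trans\bm A\bm U$, so the bilinear bound finishes the proof. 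This route needs no decoupling theorem, no net and no diagonal/off-diagonal split, and it holds for every square $\bm A$. Your net argument gives a self-contained tail inequality in the symmetric case, at the cost of more machinery.
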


\begin{proof}
We first record the Gaussian matrix estimate used for both conclusions.
Let $\bm H\in\mathbb R^{k\times d}$ have independent standard Gaussian
entries, and let $\bm B\in\mathbb R^{d\times m}$ be deterministic. Apply
Chevet's inequality \cite[Section~8.7]{vershynin2018high} with
$T=\bm B\mathbb S^{m-1}$ and $S=\mathbb S^{k-1}$. Since
\[
  \operatorname{rad}(T)=\|\bm B\|_{\mathrm{op}},
  \qquad
  w(T)
  =
  \mathbb E\|\bm B^\trans\bm g_d\|_2
  \le
  \|\bm B\|_{\mathrm F},
\]
whereas
$\operatorname{rad}(S)=1$ and
$w(S)=\mathbb E\|\bm g_k\|_2\le\sqrt k$, we obtain
\[
  \mathbb E\|\bm H\bm B\|_{\mathrm{op}}
  \le
  \|\bm B\|_{\mathrm F}
  +
  \sqrt k\,\|\bm B\|_{\mathrm{op}}.
\]

It remains to control the fluctuation around this mean. The map
$\bm H\mapsto\|\bm H\bm B\|_{\mathrm{op}}$ is
$\|\bm B\|_{\mathrm{op}}$-Lipschitz with respect to the Frobenius norm,
because
\[
  \left|
    \|\bm H_1\bm B\|_{\mathrm{op}}
    -
    \|\bm H_0\bm B\|_{\mathrm{op}}
  \right|
  \le
  \|(\bm H_1-\bm H_0)\bm B\|_{\mathrm{op}}
  \le
  \|\bm H_1-\bm H_0\|_{\mathrm F}
  \|\bm B\|_{\mathrm{op}}.
\]
Therefore, \cref{eq:gaussian-lipschitz-moment} and the preceding expectation
bound give
\begin{equation}
\label{eq:gaussian-fixed-coefficient-moment}
  \|\bm H\bm B\|_{L^q}
  \le
  C\left\{
    \|\bm B\|_{\mathrm F}
    +
    (\sqrt k+\sqrt q)\|\bm B\|_{\mathrm{op}}
  \right\}.
\end{equation}

We now prove the bilinear estimate. Set
$\bm Z_i=\sqrt b\,\bm G_i$, so that
$\bm Z_1$ and $\bm Z_2$ have independent standard Gaussian entries.
Conditional on $\bm Z_2$, apply
\cref{eq:gaussian-fixed-coefficient-moment} with
$\bm H=\bm Z_1^\trans$ and $\bm B=\bm A\bm Z_2$. Taking the \(L^q\) norm
over $\bm Z_2$ then gives
\begin{equation}
\label{eq:gaussian-bilinear-conditional}
  \|\bm Z_1^\trans\bm A\bm Z_2\|_{L^q}
  \le
  C\left\{
    \bigl\|\|\bm A\bm Z_2\|_{\mathrm F}\bigr\|_{L^q}
    +
    (\sqrt p+\sqrt q)
    \bigl\|\|\bm A\bm Z_2\|_{\mathrm{op}}\bigr\|_{L^q}
  \right\}.
\end{equation}

The two random norms on the right are controlled by the same two ingredients.
First, the map
$\bm Z_2\mapsto\|\bm A\bm Z_2\|_{\mathrm F}$ is
$\|\bm A\|_{\mathrm{op}}$-Lipschitz, and
\[
  \mathbb E\|\bm A\bm Z_2\|_{\mathrm F}
  \le
  \left(
    \mathbb E\|\bm A\bm Z_2\|_{\mathrm F}^2
  \right)^{1/2}
  =
  \sqrt p\,\|\bm A\|_{\mathrm F}.
\]
Gaussian concentration therefore gives
\[
  \bigl\|\|\bm A\bm Z_2\|_{\mathrm F}\bigr\|_{L^q}
  \le
  \sqrt p\,\|\bm A\|_{\mathrm F}
  +
  C\sqrt q\,\|\bm A\|_{\mathrm{op}}.
\]
Second, applying \cref{eq:gaussian-fixed-coefficient-moment} to
$\bm Z_2^\trans\bm A^\trans$ yields
\[
  \bigl\|\|\bm A\bm Z_2\|_{\mathrm{op}}\bigr\|_{L^q}
  \le
  C\left\{
    \|\bm A\|_{\mathrm F}
    +
    (\sqrt p+\sqrt q)\|\bm A\|_{\mathrm{op}}
  \right\}.
\]
Substituting these two bounds into
\cref{eq:gaussian-bilinear-conditional} and using
$(\sqrt p+\sqrt q)^2\le2(p+q)$ gives
\[
  \|\bm Z_1^\trans\bm A\bm Z_2\|_{L^q}
  \le
  C\left\{
    (\sqrt p+\sqrt q)\|\bm A\|_{\mathrm F}
    +(p+q)\|\bm A\|_{\mathrm{op}}
  \right\}.
\]
Since
$\bm G_1^\trans\bm A\bm G_2
=b^{-1}\bm Z_1^\trans\bm A\bm Z_2$, this proves
\cref{eq:gaussian-coefficient-moments}.

For the same-sample estimate, let $\bm Z'$ be an independent copy of
$\bm Z_1$ and define
\[
  \bm U\coloneqq\frac{\bm Z_1+\bm Z'}{\sqrt2},
  \qquad
  \bm V\coloneqq\frac{\bm Z_1-\bm Z'}{\sqrt2}.
\]
The matrices $\bm U$ and $\bm V$ are independent standard Gaussian matrices.
By symmetrization and the identity
\[
  \bm Z_1^\trans\bm A\bm Z_1
  -
  \bm Z'^\trans\bm A\bm Z'
  =
  \bm U^\trans\bm A\bm V
  +
  \bm V^\trans\bm A\bm U,
\]
the bilinear estimate gives
\[
\begin{aligned}
  \left\|
    \bm Z_1^\trans\bm A\bm Z_1
    -
    \operatorname{tr}(\bm A)\bm I_p
  \right\|_{L^q}
  &\le
  \left\|
    \bm U^\trans\bm A\bm V
    +
    \bm V^\trans\bm A\bm U
  \right\|_{L^q} \\
  &\le
  C\left\{
    (\sqrt p+\sqrt q)\|\bm A\|_{\mathrm F}
    +(p+q)\|\bm A\|_{\mathrm{op}}
  \right\}.
\end{aligned}
\]
Rescaling by $b^{-1}$ proves
\cref{eq:gaussian-quadratic-moment}. The conditional versions follow by
applying the deterministic bounds conditionally on the sigma-field with
respect to which $\bm A$ is measurable and then taking the outer
expectation.
\end{proof}

\section{One-view weighted leakage isometry}
\label{app:one-view-isometry}

This section proves the one-view result used in
\cref{lem:one-view}.  Throughout the section, $C,c>0$ denote numerical
constants whose values may change from line to line.


\subsection{Normalized model and theorem}
\label{app:one-view-statement}

The leakage of an empirical singular vector depends on the corresponding
population singular value.  The purpose of the weighting is to remove this
dependence: after weighting, the leakage Gram matrix should be close to the
same scalar matrix for every signal direction.  We formulate this statement
first in normalized noise units.

\paragraph{Normalized observation.}

Let $\bm X\in\R^{n\times d}$ have rank $p$ and compact SVD
\[
  \bm X
  =
  \bm U_\star\bm\Sigma_\star\bm V_\star^\trans,
  \qquad
  \bm\Sigma_\star
  =
  \diag(s_1^\star,\ldots,s_p^\star),
\]
where $\bm U_\star\in\St(n,p)$, $\bm V_\star\in\St(d,p)$, and
$s_1^\star\ge\cdots\ge s_p^\star\ge s_\star>0$.  Set
$a\coloneqq n-p$, $b\coloneqq d-p$, and $\beta\coloneqq a/b$.  We observe
\begin{equation}
\label{eq:one-view-model}
  \bm Y=\bm X+\bm Z,
  \qquad
  Z_{ij}\stackrel{\mathrm{iid}}{\sim}N(0,b^{-1}).
\end{equation}
Let $\bm U\bm\Sigma\bm V^\trans$ be a leading rank-$p$ SVD of $\bm Y$,
where
\[
  \bm\Sigma
  =
  \diag(\widehat s_1,\ldots,\widehat s_p),
  \qquad
  \widehat s_1\ge\cdots\ge\widehat s_p.
\]
Choose an orthonormal complement $\bm U_\star^\perp$ of $\bm U_\star$.
Then $(\bm U_\star^\perp)^\trans\bm U$ contains the components of the
empirical left singular vectors outside the population signal space.

\paragraph{Bias-correcting weights.}

As discussed in \cref{sec:ajive-bias}, the Gram matrix of these components
has a diagonal bias that varies with the signal singular values.  To correct
it, define
\begin{equation}
\label{eq:one-view-t-coordinate}
  \tau_\beta(s)
  \coloneqq
  \begin{cases}
  \displaystyle
  \frac{
    s^2-(1+\beta)
    +
    \sqrt{\{s^2-(1+\beta)\}^2-4\beta}
  }{2},
  &
  s>1+\sqrt\beta,
  \\[8pt]
  \sqrt\beta,
  &
  s\le1+\sqrt\beta.
  \end{cases}
\end{equation}
For each empirical singular value $\widehat s_i$, set
\[
  \tau_i\coloneqq\tau_\beta(\widehat s_i),
  \qquad
  \ell_i
  \coloneqq
  \sqrt{\frac{\tau_i(\tau_i+\beta)}{\tau_i+1}},
  \qquad
  \bm W_L\coloneqq\diag(\ell_1,\ldots,\ell_p).
\]
When $\tau_i>\sqrt\beta$, the first branch of
\cref{eq:one-view-t-coordinate} is equivalent to
\begin{equation}
\label{eq:one-view-upper-branch-relation}
  \widehat s_i^2
  =
  1+\beta+\tau_i+\frac{\beta}{\tau_i}.
\end{equation}
Below the threshold $1+\sqrt\beta$, the definition fixes
$\tau_i=\sqrt\beta$ and hence $\ell_i^2=\beta$.

The weighted leakage matrix is
\begin{equation}
\label{eq:one-view-NL}
  \bm N_L
  \coloneqq
  (\bm U_\star^\perp)^\trans\bm U\bm W_L.
\end{equation}
The choice of $\bm W_L$ is designed so that
$\bm N_L^\trans\bm N_L$ is centered at $\beta\bm I_p$.  The theorem below
makes this statement quantitative.

\paragraph{One-view theorem.}

The relevant signal scale is
$s_\star^2\wedge s_\star^4$.  For $s_\star\le1$, this equals
$s_\star^4$, while for $s_\star\ge1$, it equals $s_\star^2$.  Thus the
assumption below covers the two signal regimes without imposing an upper
bound on $s_1^\star$.

\begin{theorem}[One-view weighted leakage isometry]
\label{thm:one-view-isometry}
Suppose that
\[
  p\le\frac13\min\{n,d\},
  \qquad
  s_\star^2\wedge s_\star^4\ge\nu^2\beta,
\]
where $\nu>0$ is a sufficiently large numerical constant.  Then
\begin{subequations}
\label{eq:one-view-theorem-bounds}
\begin{align}
  \Lqn{
    \bm N_L^\trans\bm N_L-\beta\bm I_p
  }{32}
  &\le
  C\beta\sqrt{\frac pa},
  \label{eq:one-view-theorem-centered}
  \\
  \Lqn{\bm N_L}{64}
  &\le
  C\sqrt\beta.
  \label{eq:one-view-theorem-size}
\end{align}
\end{subequations}
Moreover, there is an event $\cE_{\rm 1v}$ such that
\begin{equation}
\label{eq:one-view-theorem-good-event-prob}
  \Pp(\cE_{\rm 1v}^c)
  \le
  C\exp\left\{
    -\frac{b(s_\star^2\wedge s_\star^4)}{C}
  \right\},
\end{equation}
and, on $\cE_{\rm 1v}$,
\begin{equation}
\label{eq:one-view-theorem-weight-floor}
  \bm W_L^2-\beta\bm I_p
  \succeq
  c(s_\star^2\wedge s_\star^4)\bm I_p.
\end{equation}
Here $C,c>0$ are numerical constants.
\end{theorem}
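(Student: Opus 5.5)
We would first use orthogonal invariance of the Gaussian noise to put the model in canonical form. Multiplying $\bm Y$ on the left by $[\bm U_\star,\bm U_\star^\perp]^\trans$ and on the right by an orthogonal completion of $\bm V_\star$ does not change the law of $\bm Z$. It also leaves $\bm N_L^\trans\bm N_L$ and the weights $\ell_i$ unchanged. We may therefore take $\bm X=\begin{pmatrix}\bm\Sigma_\star&\bm 0\\ \bm 0&\bm 0\end{pmatrix}$ and split $\bm Z$ into blocks $\bm Z_{11}\in\R^{p\times p}$, $\bm Z_{12}\in\R^{p\times b}$, $\bm Z_{21}\in\R^{a\times p}$, and $\bm Z_{22}\in\R^{a\times b}$. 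These blocks are independent. The bulk block $\bm Z_{22}$ has the Marchenko--Pastur spectrum with edge $1+\sqrt\beta$, and by \cref{lem:gaussian-matrix-deviations} its norm is at most $1+\sqrt\beta+o(1)$ with overwhelming probability.

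We would define $\cE_{\rm 1v}$ as the event on which the following hold: $\opn{\bm Z_{22}}\le 1+\sqrt\beta+\delta$; the blocks $\bm Z_{11},\bm Z_{12},\bm Z_{21}$ have norms of order $\sqrt{p/b}+\delta$; and every $\widehat s_i^2$ lies above the edge by a margin of order $s_\star^2\wedge s_\star^4$. The margin follows from Weyl-type interlacing together with the assumption $s_\star^2\wedge s_\star^4\ge\nu^2\beta$. The tail bound \cref{eq:one-view-theorem-good-event-prob} then follows from \cref{eq:gaussian-matrix-op-tail}. On this event every $\tau_i$ lies in the upper branch. Relation \cref{eq:one-view-upper-branch-relation} then gives $\ell_i^2-\beta=\tau_i(1-\beta)/(\tau_i+1)+\ldots$; more precisely, $\ell_i^2-\beta\gtrsim \tau_i\wedge\tau_i^2$. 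Using $\tau_i\asymp \widehat s_i^2\wedge(\ldots)$, this yields the weight floor \cref{eq:one-view-theorem-weight-floor}.

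For the main bound we would use the resolvent (Schur complement) representation of the leakage. Write each left singular vector as $\bm u_i=(\bm u_{i1},\bm u_{i2})$. Eliminating the bulk coordinates, with $\bm R(z)\coloneqq(z\bm I_a-\bm Z_{22}\bm Z_{22}^\trans)^{-1}$, we get
\[
  \bm u_{i2}=\bm R(\widehat s_i^2)\bigl\{\bm Z_{21}(\bm\Sigma_\star+\bm Z_{11})^\trans\cdots+\bm Z_{22}\bm Z_{12}^\trans\bigr\}\bm u_{i1}\cdot(\text{explicit factor}).
\]
As a result, $\bm N_L^\trans\bm N_L$ becomes $\bm W_L\bm U_1^\trans\bm F\bm U_1\bm W_L$, where $\bm U_1=\bm U_\star^\trans\bm U$ and $\bm F$ is a quadratic form in the independent Gaussian blocks $\bm Z_{21},\bm Z_{12}$. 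Its coefficient matrices are built from $\bm R(\widehat s_i^2)$ and $\bm Z_{22}$. We would condition on $\bm Z_{22}$ and treat the dependence of $\widehat s_i$ on the noise by a net/Lipschitz argument in the spectral parameter $z$. Since the resolvent is analytic away from the edge, this costs only constants. Applying \cref{lem:gaussian-coefficient-moments}, the quadratic forms concentrate around their traces. The traces are $\frac1b\tr\bm R(z)$ and related normalized traces, which by Gaussian concentration (\cref{lem:gaussian-concentration}) are within $O(a^{-1/2})$ of their Marchenko--Pastur deterministic equivalents. The fluctuation is of order $\sqrt{p/a}$ times the relevant scale.

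Evaluating the deterministic equivalents at $z=\widehat s_i^2$ through relation \cref{eq:one-view-upper-branch-relation} should give the diagonal leakage $\beta(\tau_i+1)/\{\tau_i(\tau_i+\beta)\}=\beta/\ell_i^2$. The weights then cancel it exactly to $\beta$. The remaining work would be to check that $\bm U_1$ is an approximate isometry, so that $\bm U_1^\trans\bm U_1=\bm I-\bm U_2^\trans\bm U_2$ with $\bm U_2=(\bm U_\star^\perp)^\trans\bm U$ small. The off-diagonal terms $i\ne j$ should reduce to the same deterministic limit through the resolvent identity in the two spectral parameters. Moments off $\cE_{\rm 1v}$ are handled by the trivial bound $\opn{\bm N_L}\le\max_i\ell_i\lesssim \widehat s_1$, combined with Cauchy--Schwarz and the exponential tail. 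The size bound \cref{eq:one-view-theorem-size} follows from the centered bound, since $\opn{\bm N_L}^2\le\beta+\opn{\bm N_L^\trans\bm N_L-\beta\bm I_p}$ and $p\le a$.

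We expect the main obstacle to be uniformity with no condition-number assumption. The strong directions have $\ell_i^2\asymp\widehat s_i^2$ large while their raw leakage is of order $\beta/\widehat s_i^2$. We therefore need \emph{relative} error bounds for each entry: the error in the $(i,j)$ entry of the raw leakage Gram must be $\lesssim \beta\sqrt{p/a}/(\ell_i\ell_j)$, not an absolute error. We would try to obtain this by factoring $\bm\Sigma_\star$ out of the Schur-complement expression, so that every Gaussian quadratic form has coefficients of size $O(1)$ after multiplication by $\bm W_L$. We would also bound $|\tau_i-\tau_\beta^{\rm det}(s_i^\star)|$ relative to $\tau_i$ using the contraction property of \cref{eq:one-view-upper-branch-relation} above the edge.
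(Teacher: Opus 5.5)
Your plan has a genuine gap at its central step. You assert that evaluating deterministic equivalents at $z=\widehat s_i^2$ gives leakage $\beta/\ell_i^2$. You then say what remains is to check that $\bm U_1$ is an approximate isometry with $\bm U_2$ small (your $\bm U_1,\bm U_2$ are the paper's $\bm U_\parallel,\bm U_\perp$). This does not close.

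With your resolvent $\bm R(z)=(z\bm I_a-\bm Z_{22}\bm Z_{22}^\trans)^{-1}$, the exact lower-block identity is $\bm u_{i2}=\bm R(\widehat s_i^2)\{\widehat s_i\bm Z_{21}\bm v_{i1}+\bm Z_{22}\bm Z_{12}^\trans\bm u_{i1}\}$, where $\bm v_{i1}$ is the signal block of the $i$th right singular vector. If you keep $(\bm\Sigma_\star+\bm Z_{11})^\trans\bm u_{i1}$ instead, the unbounded population matrix sits between $\bm u_{i1}$ and $\bm u_{j1}$. It can be converted to empirical quantities only through the upper-block equations, which bring $\bm v_{i1}$ back. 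After centering, the left leakage Gram is therefore approximately $\mathcal A_U(\bm V_\parallel^\trans\bm V_\parallel)+\mathcal B(\bm U_\parallel^\trans\bm U_\parallel)$, with the operators in \cref{eq:one-view-population-operators}.

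Neither in-signal Gram is close enough to $\bm I_p$. In particular, $\bm V_\parallel^\trans\bm V_\parallel=\bm I_p-\bm V_\perp^\trans\bm V_\perp$ with $\bm V_\perp^\trans\bm V_\perp\approx\bm W_R^{-2}$, whose diagonal is of order $1/(1+\tau_i)$. Replacing both in-signal Grams by $\bm I_p$ leaves the weighted diagonal bias $\ell_i^2\beta(r_i^2+1)/(\tau_i^2-\beta)-\beta\asymp\beta/(1+\tau_i)$, where $r_i^2=\tau_i(\tau_i+1)/(\tau_i+\beta)$. This is not $O(\beta\sqrt{p/a})$ when $\tau_i$ is of constant order and $p/a$ is small. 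Matching $\tau_i$ with $s_i^\star$ index by index, as you suggest, would also presuppose that $\bm u_{i1}$ is close to a single population direction. That need not hold without gaps between the $s_i^\star$.

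The missing idea is the coupling of left and right leakage. The paper also derives $\bm V_\perp^\trans\bm V_\perp\approx\mathcal B(\bm V_\parallel^\trans\bm V_\parallel)+\mathcal A_V(\bm U_\parallel^\trans\bm U_\parallel)$. It evaluates all three entrywise operators by residues of the Marchenko--Pastur transform, as explicit functions of the empirical $\tau_i,\tau_j$ and never of $s_i^\star$. It then substitutes $\bm U_\parallel^\trans\bm U_\parallel=\bm I_p-\bm U_\perp^\trans\bm U_\perp$ and its right analogue, and inverts the resulting $2\times2$ system entrywise. Because $\ell_ir_i=\tau_i$, the deterministic part cancels exactly, giving \cref{eq:one-view-cancellation}. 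Only the centering and bulk-replacement residuals remain.

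The off-event step also fails, precisely because there is no condition-number assumption. The bound $\opn{\bm N_L}\le\max_i\ell_i\lesssim\widehat s_1$ has moments of order $s_1^\star$, which is unbounded. Meanwhile $\Pp(\cE_{\rm 1v}^c)$ is controlled only through $s_\star$. So H\"older's inequality gives a bound of order $(s_1^\star)^2\,\Pp(\cE_{\rm 1v}^c)^{1/64}$ for the centered quantity, not $\beta\sqrt{p/a}$.

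You need a radial estimate $\Lqn{\bm U_\perp\bm W_L}{128}\lesssim\sqrt\beta$ on the whole probability space, as in \cref{lem:one-view-radial}. There the indices are split into frozen ($\tau_i=\sqrt\beta$), high and low coordinates. The lower singular-vector equations are then used so that each $\ell_i$ is paired with a compensating factor: $\ell_i\widehat s_i/(\widehat s_i^2-1)\le1$ on high coordinates and $\ell_i/\widehat s_i=\tau_i/(\tau_i+1)$ on low ones. The size bound must therefore be proved first and fed into the off-event part of the centered bound, not deduced from it.

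Two smaller points. First, the coefficient matrices $\bm R(\widehat s_i^2)$ depend on $\bm Z_{12},\bm Z_{21}$ through $\widehat s_i$. So \cref{lem:gaussian-coefficient-moments} does not apply conditionally on $\bm Z_{22}$ as stated, and a net in $z$ would have to cover an unbounded range. The paper avoids this by Neumann-expanding in powers of $\bm Z_{22}\bm Z_{22}^\trans-\bm I_a$, with $(\bm\Sigma^2-\bm I_p)^{-m-1}$, $\bm\Sigma=\diag(\widehat s_1,\ldots,\widehat s_p)$, as bounded diagonal exterior factors. Each centered coefficient then depends on the noise alone, and the bulk traces are written as contour integrals over the fixed circle $|z|=32/s_\star^2$. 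Second, the weight floor needs $\widehat s_p^2-1\gtrsim s_\star^2$, which gives $\tau_i\ge s_\star^2/16$. A margin of only order $s_\star^4$ above the edge when $s_\star<1$ yields a weaker floor.
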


\subsection{Proof of \cref{lem:one-view}}

Fix $k\in\{1,2\}$.  The normalized theorem supplies both the moment bounds
for $\bm L_k$ and a lower bound for the weighted empirical Gram matrix.  We
first translate its conclusions back to the original noise units and then combine
the latter lower bound with the centered leakage estimate to control
$\bm S_k^\trans\bm S_k$.

Let
$\bm X_k=\bm U_{\star,k}\bm\Sigma_{\star,k}\bm V_{\star,k}^\trans$
be a compact SVD, and choose an orthonormal complement
$\bm U_{\star,k}^\perp$ of $\bm U_{\star,k}$.  Apply
\cref{thm:one-view-isometry} to
$\widetilde{\bm Y}_k\coloneqq
\bm Y_k/(\sigma\sqrt{b_k})$, with
$a=a_k$, $b=b_k$, $\beta=a_k/b_k$, and
$s_\star=s_k/(\sigma\sqrt{b_k})$.  The noise entries of
$\widetilde{\bm Y}_k$ are independent $N(0,b_k^{-1})$, and its empirical
left singular vectors are the columns of $\widehat{\bm\Phi}_k$.

The signal assumption follows directly from the definition of $\gamma_k$.
Indeed, $b_k=d_k-p_k$ and $p_k\le d_k/3$ imply
$b_k\le d_k\le3b_k/2$, so
$s_\star^2\wedge s_\star^4
\asymp\gamma_k^2/(b_k\sigma^2)$.
The standing condition $\gamma_k\ge\nu\sigma\sqrt n$ therefore gives
$s_\star^2\wedge s_\star^4
\gtrsim\nu^2a_k/b_k$ because $a_k\le n$.
Thus the signal condition in \cref{thm:one-view-isometry} holds after
increasing the numerical constant $\nu$ if necessary.

Let $\bm W_L$ and $\bm N_L$ be the matrices in the normalized theorem.
Substitution into the definitions of $\tau_k$ and $w_k$ gives
$w_k(\widehat s_{k,i})=\sigma\sqrt{b_k}\,\ell_i$.  Hence
$\bm C_k=\sigma\sqrt{b_k}\,\widehat{\bm\Phi}_k\bm W_L$ and
$\bm L_k=\sigma\sqrt{b_k}\,\bm U_{\star,k}^\perp\bm N_L$.
Rescaling the two conclusions of \cref{thm:one-view-isometry} gives
\begin{equation}
\label{eq:one-view-bounds-proof}
  \Lqn{
    \bm L_k^\trans\bm L_k-a_k\sigma^2\bm I_{p_k}
  }{32}
  \lesssim
  \sigma^2\sqrt{a_kp_k},
  \qquad
  \Lqn{\bm L_k}{64}
  \lesssim
  \sigma\sqrt{a_k}.
\end{equation}
This proves \cref{eq:one-view-bounds}.

Let $\widetilde{\cE}_k$ denote the event supplied by
\cref{thm:one-view-isometry} for the normalized $k$th view.  Since
$\widehat{\bm\Phi}_k$ has orthonormal columns, the lower bound for
$\bm W_L^2-\frac{a_k}{b_k}\bm I_{p_k}$ and the preceding comparison of
signal scales imply, on $\widetilde{\cE}_k$,
\[
  \bm C_k^\trans\bm C_k-a_k\sigma^2\bm I_{p_k}
  \succeq
  c\gamma_k^2\bm I_{p_k}.
\]
Define
\begin{equation}
\label{eq:one-view-view-event}
  \cE_k
  \coloneqq
  \widetilde{\cE}_k
  \cap
  \left\{
    \opn{
      \bm L_k^\trans\bm L_k-a_k\sigma^2\bm I_{p_k}
    }
    \le
    \frac c2\gamma_k^2
  \right\}.
\end{equation}
Because $\bm C_k=\bm S_k+\bm L_k$ and
$\bm S_k^\trans\bm L_k=\bm 0$, on $\cE_k$ we have
\begin{equation}
\label{eq:one-view-signal-Gram-lower}
  \bm S_k^\trans\bm S_k
  =
  \bigl(
    \bm C_k^\trans\bm C_k-a_k\sigma^2\bm I_{p_k}
  \bigr)
  -
  \bigl(
    \bm L_k^\trans\bm L_k-a_k\sigma^2\bm I_{p_k}
  \bigr)
  \succeq
  \frac c2\gamma_k^2\bm I_{p_k}.
\end{equation}
Choosing $\cemb\le\sqrt{c/2}$ proves
$\sigma_{\min}(\bm S_k)\ge\cemb\gamma_k$ on $\cE_k$.

Finally, the probability bound in \cref{thm:one-view-isometry} and Markov's
inequality applied to \cref{eq:one-view-bounds-proof} give
\[
  \mathbb P(\cE_k^c)
  \lesssim
  \exp\left(-\frac{\gamma_k^2}{C\sigma^2}\right)
  +
  \left(
    \frac{\sigma^2\sqrt{a_kp_k}}{\gamma_k^2}
  \right)^{32}
  \lesssim
  \frac{\sigma^2\sqrt{np_k}}{\gamma_k^2}.
\]
The last inequality uses $a_k\le n$ and
$\gamma_k^2\gtrsim n\sigma^2$.  This proves the asserted probability bound
and completes the proof.

\subsection{Proof of \cref{thm:one-view-isometry}}
\label{app:proof-one-view-isometry}

We first prove the result in the wide case $a\le b$, so that $\beta\le1$.
Afterwards, we derive a corresponding right-leakage bound under the stronger
signal condition $s_\star^2\ge\nu^2$ and use it to handle the tall case by
transposition.

By rotational invariance of the Gaussian noise, we may work in the
population singular-vector coordinates, in which
\begin{equation}
\label{eq:one-view-block-model}
  \bm Y
  =
  \begin{pmatrix}
    \bm\Sigma_\star+\bm Z_{11} & \bm Z_{12}\\
    \bm Z_{21} & \bm Z_{22}
  \end{pmatrix},
\end{equation}
where the four noise blocks have independent $N(0,b^{-1})$ entries and
$\bm Z_{22}\in\R^{a\times b}$. Partition the empirical singular vectors
conformably as
\begin{equation}
\label{eq:one-view-singular-blocks}
  \bm U
  =
  \begin{pmatrix}
    \bm U_\parallel\\
    \bm U_\perp
  \end{pmatrix},
  \qquad
  \bm V
  =
  \begin{pmatrix}
    \bm V_\parallel\\
    \bm V_\perp
  \end{pmatrix}.
\end{equation}
In these coordinates, $\bm N_L=\bm U_\perp\bm W_L$. Thus
\cref{eq:one-view-theorem-centered} amounts to controlling
\begin{equation}
\label{eq:one-view-XL}
  \bm X_L
  \coloneqq
  \bm W_L\bm U_\perp^\trans\bm U_\perp\bm W_L
  -\beta\bm I_p.
\end{equation}

We first work on the localization event
\begin{equation}
\label{eq:one-view-localization-event}
  \cE_{\rm loc}
  \coloneqq
  \left\{
    \min_i\tau_i\ge\frac{s_\star^2}{16},
    \quad
    \opn{\bm Z_{22}\bm Z_{22}^\trans-\bm I_a}
    \le\frac{s_\star^2}{128}
  \right\}.
\end{equation}
The signal condition implies $s_\star^2/16>\sqrt\beta$ for sufficiently
large $\nu$.  Thus the first condition in $\cE_{\rm loc}$ places every
selected singular value on the upper branch, where
\cref{eq:one-view-upper-branch-relation} applies.  The second condition will
make the singular-vector expansion contractive.

\begin{lemma}
\label{lem:one-view-localization}
Under $s_\star^2\wedge s_\star^4\ge\nu^2\beta$,
\begin{equation}
\label{eq:one-view-tail-under-left}
  \Pp(\cE_{\rm loc}^c)
  \lesssim
  \exp\left\{
    -c b(s_\star^2\wedge s_\star^4)
  \right\}.
\end{equation}
Under the stronger condition $s_\star^2\ge\nu^2$,
$\Pp(\cE_{\rm loc}^c)\lesssim\exp(-cbs_\star^2)$.
\end{lemma}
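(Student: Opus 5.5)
We bound the two failure modes in $\cE_{\rm loc}^c$ separately and take a union bound:
\[
  \Pp(\cE_{\rm loc}^c)
  \le
  \Pp\Bigl\{\opn{\bm Z_{22}\bm Z_{22}^\trans-\bm I_a}>\tfrac{s_\star^2}{128}\Bigr\}
  +
  \Pp\Bigl\{\min_i\tau_i<\tfrac{s_\star^2}{16}\Bigr\}.
\]

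\textbf{The bulk Wishart term.} Since $\bm Z_{22}\in\R^{a\times b}$ has i.i.d.\ $N(0,b^{-1})$ entries, \cref{eq:gaussian-wishart-tail} with $u=a$, $v=b$ bounds the deviation of $\bm Z_{22}\bm Z_{22}^\trans-\bm I_a$ by $C(\sqrt{(a+s)/b}+(a+s)/b)$ outside probability $2e^{-cs}$. We choose $s\asymp b(s_\star^2\wedge s_\star^4)$ and split into two cases.
\begin{itemize}
  \item If $s_\star\le1$, then $s_\star^2/128\gtrsim\sqrt{(a+s)/b}$ requires $s_\star^4\gtrsim\beta+s/b$. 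This holds for $s=cbs_\star^4$ because $s_\star^4\ge\nu^2\beta$.
  \item If $s_\star\ge1$, we need $s_\star^2\gtrsim\beta+s/b$. Here $\beta\le s_\star^2/\nu^2$, and we take $s=cbs_\star^2$; the linear term $(a+s)/b$ is then also dominated.
\end{itemize}
Under the stronger assumption $s_\star^2\ge\nu^2$, the same choice $s=cbs_\star^2$ works, and $\beta$ may exceed $1$ in this regime.

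\textbf{Localization of the empirical singular values.} Since $\tau_\beta$ is increasing on the upper branch, the relation \cref{eq:one-view-upper-branch-relation} shows that $\tau_p\ge s_\star^2/16$ is equivalent to
\[
  \widehat s_p^2\ge h\bigl(s_\star^2/16\bigr),
  \qquad
  h(t)\coloneqq1+\beta+t+\beta/t,
\]
once $s_\star^2/16>\sqrt\beta$, which holds because $s_\star^4\ge\nu^2\beta$. The plan is to lower bound $\widehat s_p$ through Weyl/Courant--Fischer applied to the $p$-dimensional test space spanned by the population right singular directions. Concretely,
\[
  \widehat s_p^2
  \ge
  \sigma_{\min}^2\!\left(
  \begin{pmatrix}\bm\Sigma_\star+\bm Z_{11}\\ \bm Z_{21}\end{pmatrix}
  \right)
  =
  \lambda_{\min}\bigl\{(\bm\Sigma_\star+\bm Z_{11})^\trans(\bm\Sigma_\star+\bm Z_{11})+\bm Z_{21}^\trans\bm Z_{21}\bigr\}.
\]
By \cref{eq:gaussian-wishart-tail}, $\bm Z_{21}^\trans\bm Z_{21}\succeq(\beta-\epsilon)\bm I_p$ with high probability. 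By \cref{eq:gaussian-matrix-op-tail}, $\opn{\bm Z_{11}}\le C\sqrt{p/b}+x$.

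This crude bound alone yields about $s_\star^2+\beta$. We need $h(s_\star^2/16)\approx1+\beta+s_\star^2/16+16\beta/s_\star^2$, so a gain of order $1$ is missing when $s_\star\lesssim1$. To recover it, we use the better test vectors $\bm y=(\bm\Sigma_\star\bm x,\ \bm Z_{12}^\trans\bm x)$ in the variational characterization. This captures the extra factor $\approx(s^2+1)$ contributed by the noise columns $\bm Z_{12}$, giving
\[
  \widehat s_p^2
  \gtrsim
  \frac{(s_\star^2+\beta)(s_\star^2+1)}{s_\star^2}
  -
  O(\text{fluctuation}),
\]
which matches \cref{eq:intro-spike-location}. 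The generous constant $1/16$ leaves ample slack, since
\[
  \frac{(s_\star^2+\beta)(s_\star^2+1)}{s_\star^2}
  =
  h(s_\star^2)
  \ge
  h\bigl(s_\star^2/16\bigr)+\tfrac{15}{16}s_\star^2-15\beta/s_\star^2 .
\]
The deviations of the Gaussian blocks ($\bm Z_{11}$, $\bm Z_{12}\bm Z_{12}^\trans-\bm I$, and the cross terms) are controlled by \cref{lem:gaussian-matrix-deviations,lem:gaussian-coefficient-moments}. Each must be at most a small multiple of $s_\star^2\wedge s_\star^4$ (relative to the appropriate scale), and by the same Gaussian tails this fails only with probability $e^{-cb(s_\star^2\wedge s_\star^4)}$.

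\textbf{Expected obstacle.} The delicate step is the weak-signal regime $s_\star\le1$. There the margin $h(s_\star^2)-h(s_\star^2/16)$ is of order $s_\star^2$, while the fluctuations of $\bm Z_{12}\bm Z_{12}^\trans$ and of the cross term $\bm\Sigma_\star\bm Z_{12}\bm Z_{12}^\trans\bm\Sigma_\star$ appear in ratios divided by $s_\star^2$. We therefore need these errors to be at most $c\,s_\star^4$ in normalized form, and this is exactly where the $s_\star^4$ rate in the exponent arises.

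I would first try the explicit Rayleigh-quotient construction above. If the cross terms prove unwieldy, the fallback is the Schur-complement (secular) equation for outlier singular values: show that $\det(\cdot)$ has no root below $h(s_\star^2/16)$ on the event that the resolvent of $\bm Z_{22}$ is near its deterministic limit. That event is implied by the Wishart control already established.
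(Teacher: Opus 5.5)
Your reduction is sound. The union bound, the Wishart bound for the bulk event with tail parameter $t\asymp b(s_\star^2\wedge s_\star^4)$, and the equivalence $\min_i\tau_i\ge s_\star^2/16\iff\widehat s_p^2\ge h(s_\star^2/16)$ are what is needed. The gap is in the lower bound on $\widehat s_p$, which is the actual content of the lemma.

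You correctly note that the column test space gives only about $s_\star^2+\beta$. But you then assert, without proof, that the vectors $\bm y=(\bm\Sigma_\star\bm x,\bm Z_{12}^\trans\bm x)$ give $h(s_\star^2)$ minus fluctuations, and this is not what they give.
\begin{itemize}
\item For $p=1$ the Rayleigh quotient is about $\{(1+s_\star^2)^2+\beta(1+s_\star^2)\}/(1+s_\star^2)=1+s_\star^2+\beta$. The $\beta/s_\star^2$ part of the spike location does not appear. This deficit is harmless, since $\beta/s_\star^2\le s_\star^2/\nu^2$, but the claimed bound is wrong as stated.
\item A bound stated only up to constants (your ``$\gtrsim$'') is useless, because the margin above $1$ is only of order $s_\star^2$.
\item The cross terms such as $\bm Z_{11}\bm\Sigma_\star\bm x$ are never controlled, and $s_1^\star$ is unbounded.
\end{itemize}

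More seriously, your error budget is wrong. You say the deviations must be at most $c\,s_\star^4$ and that this is the source of $s_\star^4$ in the exponent. But $\opn{\bm Z_{12}\bm Z_{12}^\trans-\bm I_p}$ is typically of order $\sqrt{p/b}$. Under the signal condition this can be as large as order $s_\star^2/\nu$, so for small $s_\star$ an event with threshold $c\,s_\star^4$ fails with probability close to one. Even where it is attainable, its Gaussian tail is $e^{-cbs_\star^8}$, not the claimed rate. The deviations enter additively against a margin of order $s_\star^2$. Thresholds of order $s_\star^2$, and of order $s_\star$ for $\opn{\bm Z_{11}}$, therefore suffice. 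The $s_\star^4$ in the exponent is just the Wishart tail at deviation level $s_\star^2$.

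The missing idea is to use the \emph{row} test space, that is, the span of the first $p$ coordinates of $\mathbb R^n$. This gives $\widehat s_p^2\ge\lambda_{\min}(\bm M\bm M^\trans)$ with $\bm M=(\bm\Sigma_\star+\bm Z_{11},\bm Z_{12})$. Here $\bm M\bm M^\trans=(\bm\Sigma_\star+\bm Z_{11})(\bm\Sigma_\star+\bm Z_{11})^\trans+\bm Z_{12}\bm Z_{12}^\trans$ is a sum of two positive semidefinite terms. The missing ``$+1$'' comes from $\bm Z_{12}\bm Z_{12}^\trans\approx\bm I_p$, and there are no cross terms.

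On the event $\{\opn{\bm Z_{11}}\le s_\star/4,\ \opn{\bm Z_{12}\bm Z_{12}^\trans-\bm I_p}\le s_\star^2/8\}$, Weyl gives $\widehat s_p^2\ge 1+7s_\star^2/16$. The signal condition gives $\beta\le s_\star^2/\nu^2$ and $16\beta/s_\star^2\le 16s_\star^2/\nu^2$, hence $h(s_\star^2/16)\le 1+7s_\star^2/16$. No spike-location accuracy, secular equation, or delicate weak-signal analysis is needed. This is the paper's argument.

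Your refined vector is $\bm Y^\trans(\bm x,\bm 0)$ with the $\bm Z_{11}^\trans\bm x$ part dropped. If you keep that part, Cauchy--Schwarz shows it is never worse than the row test vector itself.

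Finally, the second claim needs care. The bulk event requires $\beta\lesssim s_\star^2$, and $s_\star^2\ge\nu^2$ alone does not supply this. The paper uses this claim only in the wide case $\beta\le 1$. There $s_\star^2\ge\nu^2$ implies the first condition and $s_\star^2\wedge s_\star^4=s_\star^2$, so the second claim reduces to the first.
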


\paragraph{Step 1: control $\bm X_L$ on $\cE_{\rm loc}$.}
The main difficulty is that the left and right leakage Grams
$\bm U_\perp^\trans\bm U_\perp$ and
$\bm V_\perp^\trans\bm V_\perp$ are coupled. The two leakage Grams satisfy a pair of approximate linear relations. To state them, define the following three entrywise operators 
\begin{subequations}
\label{eq:one-view-population-operators}
\begin{align}
  [\mathcal A_U(\bm M)]_{ij}
  &\coloneqq
  \frac{\beta r_ir_j}{\tau_i\tau_j-\beta}M_{ij},
  \label{eq:one-view-population-operator-U}\\
  [\mathcal B(\bm M)]_{ij}
  &\coloneqq
  \frac{\beta}{\tau_i\tau_j-\beta}M_{ij},
  \label{eq:one-view-population-operator-B}\\
  [\mathcal A_V(\bm M)]_{ij}
  &\coloneqq
  \frac{\ell_i\ell_j}{\tau_i\tau_j-\beta}M_{ij},
  \label{eq:one-view-population-operator-V}
\end{align}
\end{subequations}
where $r_i\coloneqq\sqrt{\frac{\tau_i(\tau_i+1)}{\tau_i+\beta}}$ and we define $\bm W_R\coloneqq\diag(r_1,\ldots,r_p)$. 
These operators are well defined on $\cE_{\rm loc}$ because
$\tau_i>\sqrt\beta$ for every $i$. Define the residuals
\begin{subequations}
\label{eq:one-view-eps}
\begin{align}
  \bm\varepsilon_U
  &\coloneqq
  \bm U_\perp^\trans\bm U_\perp
  -
  \mathcal A_U(\bm V_\parallel^\trans\bm V_\parallel)
  -
  \mathcal B(\bm U_\parallel^\trans\bm U_\parallel),
  \label{eq:one-view-eps-U}\\
  \bm\varepsilon_V
  &\coloneqq
  \bm V_\perp^\trans\bm V_\perp
  -
  \mathcal B(\bm V_\parallel^\trans\bm V_\parallel)
  -
  \mathcal A_V(\bm U_\parallel^\trans\bm U_\parallel).
  \label{eq:one-view-eps-V}
\end{align}
\end{subequations}

Since the empirical singular-vector matrices have orthonormal columns,
$\bm U_\parallel^\trans\bm U_\parallel+
\bm U_\perp^\trans\bm U_\perp=\bm I_p$ and similarly for $\bm V$.
Thus \cref{eq:one-view-eps} forms a linear system for the two leakage
Grams. To solve it, define
$\bm X_R\coloneqq
\bm W_R\bm V_\perp^\trans\bm V_\perp\bm W_R-\bm I_p$.
Fix $i,j$ and abbreviate
$u=\tau_i\tau_j$, $\ell=\ell_i\ell_j$, $r=r_ir_j$,
$x=(\bm X_L)_{ij}$, and $y=(\bm X_R)_{ij}$.
Since $\ell_ir_i=\tau_i$, we have $\ell r=u$, and the two equations reduce to
\begin{equation}
\label{eq:one-view-entrywise-system}
  \begin{pmatrix}
    \dfrac{u}{u-\beta}
    &
    \dfrac{\beta\ell}{u-\beta}
    \\[5pt]
    \dfrac{r}{u-\beta}
    &
    \dfrac{u}{u-\beta}
  \end{pmatrix}
  \begin{pmatrix}
    x\\y
  \end{pmatrix}
  =
  \begin{pmatrix}
    \ell(\bm\varepsilon_U)_{ij}\\
    r(\bm\varepsilon_V)_{ij}
  \end{pmatrix}.
\end{equation}
Direct inversion gives the cancellation identities
\begin{equation}
\label{eq:one-view-cancellation}
  {
  \bm X_L
  =
  \bm W_L\bm\varepsilon_U\bm W_L-\beta\bm\varepsilon_V,
  \qquad
  \bm X_R
  =
  \bm W_R\bm\varepsilon_V\bm W_R-\bm\varepsilon_U.}
\end{equation}

The following lemma controls the two residuals in precisely the combinations
appearing here.

\begin{lemma}
\label{lem:one-view-gram-approximation}
Under $s_\star^2\wedge s_\star^4\ge\nu^2\beta$,
\begin{equation}
\label{eq:one-view-left-error-bound}
  \Lqn{
    \ind_{\cE_{\rm loc}}
    \bm W_L\bm\varepsilon_U\bm W_L
  }{32}
  +
  \beta
  \Lqn{
    \ind_{\cE_{\rm loc}}\bm\varepsilon_V
  }{32}
  \lesssim
  \frac{\sqrt{ap}}b.
\end{equation}
Under the stronger condition $s_\star^2\ge\nu^2$,
\begin{equation}
\label{eq:one-view-right-error-bound}
  \Lqn{
    \ind_{\cE_{\rm loc}}
    \bm W_R\bm\varepsilon_V\bm W_R
  }{32}
  +
  \Lqn{
    \ind_{\cE_{\rm loc}}\bm\varepsilon_U
  }{32}
  \lesssim
  \sqrt{\frac pb}.
\end{equation}
\end{lemma}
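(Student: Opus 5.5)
The plan is to derive both residual bounds from an explicit resolvent-type expansion of the empirical singular vectors in the block model \cref{eq:one-view-block-model}, on $\cE_{\rm loc}$. Write the singular-vector equations $\bm Y\bm V=\bm U\bm\Sigma$ and $\bm Y^\trans\bm U=\bm V\bm\Sigma$ blockwise. Their bottom blocks read
\[
\bm U_\perp\bm\Sigma=\bm Z_{21}\bm V_\parallel+\bm Z_{22}\bm V_\perp,
\qquad
\bm V_\perp\bm\Sigma=\bm Z_{12}^\trans\bm U_\parallel+\bm Z_{22}^\trans\bm U_\perp .
\]
Solving this coupled system column by column gives $\bm U_\perp$ and $\bm V_\perp$ as linear functions of $(\bm Z_{21}\bm V_\parallel,\bm Z_{12}^\trans\bm U_\parallel)$, through the resolvents $(\widehat s_i^2\bm I_a-\bm Z_{22}\bm Z_{22}^\trans)^{-1}$ and their right analogues. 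The second condition in $\cE_{\rm loc}$, $\opn{\bm Z_{22}\bm Z_{22}^\trans-\bm I_a}\le s_\star^2/128$, lets me expand these resolvents in a Neumann series around $(\widehat s_i^2-1)^{-1}$ and $(\widehat s_i^2-\beta)^{-1}$. The first condition, $\tau_i\ge s_\star^2/16$, ensures that $\widehat s_i^2$ is far above the edge, so the expansion is contractive.

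Next I form the Gram matrices $\bm U_\perp^\trans\bm U_\perp$ and $\bm V_\perp^\trans\bm V_\perp$. Their leading parts are quadratic forms such as $\bm V_\parallel^\trans\bm Z_{21}^\trans R_i R_j\bm Z_{21}\bm V_\parallel$. Replacing $\bm Z_{21}^\trans A\bm Z_{21}$ by $b^{-1}\tr(A)$, and using \cref{eq:one-view-upper-branch-relation} to evaluate the traces, produces exactly the entrywise multipliers $\beta r_ir_j/(\tau_i\tau_j-\beta)$ and $\beta/(\tau_i\tau_j-\beta)$ in $\mathcal A_U$, $\mathcal B$ and $\mathcal A_V$. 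The summed geometric series in $\beta/(\tau_i\tau_j)$ is what generates the denominator $\tau_i\tau_j-\beta$. So $\bm\varepsilon_U$ and $\bm\varepsilon_V$ collect three kinds of terms:
\begin{itemize}
\item centered quadratic-form fluctuations;
\item cross terms in $\bm Z_{21}$ and $\bm Z_{12}$;
\item the error from replacing the resolvents by their deterministic scalar approximations.
\end{itemize}

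The main obstacle is that $\bm V_\parallel$, $\bm U_\parallel$ and $\widehat s_i$ depend on $\bm Z_{21}$ and $\bm Z_{12}$. This blocks direct use of \cref{lem:gaussian-coefficient-moments}. I would first try decoupling: condition on $\bm Z_{22}$, and write the quadratic forms as $\bm G^\trans A\bm G$ with $\bm G=\bm Z_{21}$ multiplied on the right by a $p\times p$ matrix of norm at most one. The $p$-column Gaussian matrix $\bm Z_{21}$ is independent of $\bm Z_{22}$, so \cref{eq:gaussian-quadratic-moment} applies with $\frn{A}\lesssim\sqrt a\,\opn{A}$.

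The dependence of the $\tau_i$ on the noise needs a separate device, since the coefficient matrices $A$ involve $\widehat s_i$. I would take a supremum over a deterministic net of candidate singular values in the localization window. Alternatively, I would use the Lipschitz dependence of the resolvent on $\widehat s_i^2$, since the window has width $O(s_\star^2)$ and the resolvent derivative gains a factor $1/s_\star^2$. This would bound the fluctuation by $b^{-1}(\sqrt{ap}+p)\opn{A}$.

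For the weighted combination $\bm W_L\bm\varepsilon_U\bm W_L$, the factor $\ell_i^2\asymp\beta(1+\tau_i)/\tau_i$ is compensated by $\opn{A}\lesssim 1/\tau_i$. This gives $\sqrt{ap}/b$ uniformly, with no dependence on $s_1^\star$, because every term scales with its own $\tau_i$. The same holds for $\beta\bm\varepsilon_V$, which gives \cref{eq:one-view-left-error-bound}.

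Finally, the Neumann remainders carry an extra factor $\opn{\bm Z_{22}\bm Z_{22}^\trans-\bm I_a}/s_\star^2$. By \cref{eq:gaussian-wishart-moment} its moments are $\lesssim\sqrt{a/b}/s_\star^2$, so these remainders are of lower order under $s_\star^2\wedge s_\star^4\ge\nu^2\beta$.

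The right-hand bound \cref{eq:one-view-right-error-bound} should follow from the same expansion with the weights $\bm W_R$ in place of $\bm W_L$. There the unweighted $\bm\varepsilon_U$ is bounded by $\sqrt{p/b}$. This requires $s_\star^2\ge\nu^2$ so that $1/\tau_i\lesssim 1$ absorbs the missing factor $\beta$.

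All moment bounds are taken with the indicator $\ind_{\cE_{\rm loc}}$ and at moment order $32$. Products of factors are handled by Hölder's inequality, using the $64$th moment bounds from \cref{lem:gaussian-matrix-deviations}.
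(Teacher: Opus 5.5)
Your outline (lower singular-vector equations, a Neumann expansion in the bulk, and centering Gaussian quadratic forms conditionally on $\bm Z_{22}$) matches the paper's skeleton. However, the step that actually produces $\mathcal A_U$, $\mathcal B$, $\mathcal A_V$ is missing, and your last paragraph disposes of it incorrectly. Write $\bm H_{\rm bulk}=\bm Z_{22}\bm Z_{22}^\trans-\bm I_a$. After centering, the Gram expansions still contain the random bulk coefficients $b^{-1}\tr(\bm H_{\rm bulk}^m)$ and $b^{-1}\tr(\bm Z_{22}^\trans\bm H_{\rm bulk}^m\bm Z_{22})$ for \emph{every} Neumann order $m$. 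Only the centered fluctuations of the higher orders are small; their means are not. For instance, $b^{-1}\tr(\bm H_{\rm bulk}^2)\approx\beta^2$. This shifts the leakage Gram at relative order $\beta/\tau_i^2$, which can be of order $\nu^{-2}$ in the admissible range and does not decay with $a$, whereas the lemma requires relative accuracy $\sqrt{p/a}$. Your extra factor $\opn{\bm H_{\rm bulk}}/s_\star^2\lesssim\sqrt\beta/s_\star^2\le\nu^{-1}$ only makes such terms smaller by a constant. In fact these terms are what build the denominators $\tau_i\tau_j-\beta$, so truncating them gives the wrong multipliers. Three things are needed: sum all orders, replace the random traces by Marchenko--Pastur values with error $O(a^{-1/2})$, and prove that the resulting deterministic series equal the population operators. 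Your phrase ``using the upper-branch relation to evaluate the traces'' skips the last two. The paper packages all orders into $\widehat m_{\rm bulk}(z)=a^{-1}\tr(\bm I_a-z\bm H_{\rm bulk})^{-1}$ on $|z|=32/s_\star^2$. In \cref{lem:one-view-resolvent} it proves $\sup_z\|\ind_{\cE_{\rm loc}}(\widehat m_{\rm bulk}(z)-m_\beta(z))\|_{L^{32}}\lesssim a^{-1/2}$, using a Lipschitz cutoff with Gaussian concentration for the fluctuation, Gaussian integration by parts for an $O(1/a)$-approximate Marchenko--Pastur equation, and stability of that equation. It then identifies the limiting contour integrals with $\mathcal A_U,\mathcal B,\mathcal A_V$ by residues, using $m_\beta\bigl((\widehat s_i^2-1)^{-1}\bigr)=(\widehat s_i^2-1)/(\tau_i+\beta)$ (\cref{prop:population-operator}). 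That is the source of the $\sqrt a/b$ bulk error.

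Your device for handling the dependence of the quadratic-form matrices on $\widehat s_i$ is unnecessary and, as stated, does not close. There is no upper bound on $\widehat s_i$ (no condition-number assumption), so $\cE_{\rm loc}$ only gives $\tau_i\ge s_\star^2/16$ and the window is unbounded. A derivative of size $s_\star^{-4}$ over a width $s_\star^2$ gives a variation $s_\star^{-2}$, the size of the resolvent itself. And since entry $(i,j)$ involves both $\widehat s_i$ and $\widehat s_j$, a joint net over $p$ values costs a logarithm. The fix is already in your Neumann series, once it is arranged as in the paper: $\bm U_\perp=\sum_{j\ge0}\bm H_{\rm bulk}^j(\bm Z_{21}\bm V_\parallel\bm\Sigma+\bm Z_{22}\bm Z_{12}^\trans\bm U_\parallel)\bm D_\Sigma^{-j-1}$, with bulk powers on the left and powers of the diagonal $\bm D_\Sigma^{-1}=(\bm\Sigma^2-\bm I_p)^{-1}$ on the right. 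Then every random coefficient ($\bm Z_{21}^\trans\bm H_{\rm bulk}^m\bm Z_{21}$, $\bm Z_{21}^\trans\bm H_{\rm bulk}^m\bm Z_{22}\bm Z_{12}^\trans$, and so on) depends on the noise alone. The quantities $\widehat s_i$, $\bm U_\parallel$, $\bm V_\parallel$ appear only in exterior factors of bounded norm (e.g.\ $\widehat s_i\ell_i/(\widehat s_i^2-1)\le1$), so no decoupling is needed. Conditionally on $\bm Z_{22}$, \cref{lem:gaussian-coefficient-moments} bounds the centered $m$th coefficient by $C\beta\sqrt{p/a}\,\opn{\bm H_{\rm bulk}}^m$, times powers of $1+\opn{\bm Z_{22}}$ for the terms involving $\bm Z_{12}$. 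The contraction $\opn{\bm H_{\rm bulk}}\opn{\bm D_\Sigma^{-1}}\le1/8$ then makes the double series geometric.

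Two smaller slips. First, your right-resolvent expansion around $(\widehat s_i^2-\beta)^{-1}$ is not controlled by $\cE_{\rm loc}$, since $\opn{\bm Z_{22}^\trans\bm Z_{22}-\beta\bm I_b}\approx1+2\sqrt\beta$. It is therefore not uniformly contractive when $s_\star<1$; substitute the $\bm U_\perp$ series into the second lower equation instead. Second, $\ell_i^2=\tau_i(\tau_i+\beta)/(\tau_i+1)$, not $\beta(1+\tau_i)/\tau_i$.
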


By \cref{eq:one-view-cancellation,eq:one-view-left-error-bound},
$\|\ind_{\cE_{\rm loc}}\bm X_L\|_{L^{32}}
\lesssim\sqrt{ap}/b$.

\paragraph{Step 2: control $\bm X_L$ on $\cE_{\rm loc}^c$.}
From the definition of $\bm X_L$,
$\|\bm X_L\|_{\mathrm{op}}
\le\|\bm U_\perp\bm W_L\|_{\mathrm{op}}^2+\beta$.
We use the following radial estimate.

\begin{lemma}
\label{lem:one-view-radial}
In the wide case $a\le b$,
\begin{equation}
\label{eq:one-view-radial-bound}
  \Lqn{\bm U_\perp\bm W_L}{128}
  \lesssim
  \sqrt\beta,
  \qquad
  \Lqn{\bm V_\perp\bm W_R}{128}
  \lesssim
  1.
\end{equation}
\end{lemma}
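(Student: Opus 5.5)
We prove both bounds directly from the singular-vector equations in the block coordinates \cref{eq:one-view-block-model}. The second bound, $\Lqn{\bm V_\perp\bm W_R}{128}\lesssim1$, is elementary. The first, $\Lqn{\bm U_\perp\bm W_L}{128}\lesssim\sqrt\beta$, needs one resolvent expansion on a good event and a separate argument off it. That separate argument is where I expect the difficulty.

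\emph{Basic identities.} Reading off the lower block rows of $\bm Y\bm V=\bm U\bm\Sigma$ and $\bm Y^\trans\bm U=\bm V\bm\Sigma$ gives
\[
  \bm U_\perp\bm\Sigma=\bm Z_{21}\bm V_\parallel+\bm Z_{22}\bm V_\perp,
  \qquad
  \bm V_\perp\bm\Sigma=\bm Z_{12}^\trans\bm U_\parallel+\bm Z_{22}^\trans\bm U_\perp .
\]
The weights compare well with $\widehat s_i$. On the upper branch, \cref{eq:one-view-upper-branch-relation} gives $\widehat s_i^2=(\tau_i+1)(\tau_i+\beta)/\tau_i$. Hence
\[
  \frac{\ell_i}{\widehat s_i}=\frac{\tau_i}{\tau_i+1}\le1,
  \qquad
  \frac{r_i}{\widehat s_i}=\frac{\tau_i}{\tau_i+\beta}\le1 .
\]
On the lower branch $\ell_i^2=\beta$, and since $\beta\le1$ we have $\sqrt\beta\le\sqrt\beta+1\le\widehat s_i$ there; the same holds for $r_i$. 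So $\opn{\bm\Sigma^{-1}\bm W_L}\le1$ and $\opn{\bm\Sigma^{-1}\bm W_R}\le1$ always.

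\emph{The bound for $\bm V_\perp\bm W_R$.} From the second identity,
\[
  \opn{\bm V_\perp\bm W_R}\le\bigl(\opn{\bm Z_{12}}+\opn{\bm Z_{22}}\bigr)\opn{\bm\Sigma^{-1}\bm W_R}.
\]
By \cref{eq:gaussian-matrix-op-moment}, with $p,a\le b$, both operator norms have $L^{128}$ norm $\lesssim1$. This proves the second half of \cref{eq:one-view-radial-bound}.

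\emph{The bound for $\bm U_\perp\bm W_L$ on a good event.} The same crude argument gives only $\opn{\bm Z_{21}}+\opn{\bm Z_{22}}\lesssim\sqrt\beta+1$. The constant comes from $\bm Z_{22}\bm V_\perp$, so we must exploit that $\bm V_\perp$ is itself small. Substituting the second identity into the first gives
\[
  \bm U_\perp\bm\Sigma^2-\bm Z_{22}\bm Z_{22}^\trans\bm U_\perp=\bm R,
  \qquad
  \bm R\coloneqq\bm Z_{21}\bm V_\parallel\bm\Sigma+\bm Z_{22}\bm Z_{12}^\trans\bm U_\parallel .
\]
Consider the event on which $\opn{\bm Z_{22}\bm Z_{22}^\trans}\le\widehat s_p^2/2$; this contains $\cE_{\rm loc}$ for large $\nu$, since there $\widehat s_p^2\gtrsim\tau_p\gtrsim s_\star^2$. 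On this event the Neumann series converges:
\[
  \bm U_\perp\bm W_L=\sum_{m\ge0}(\bm Z_{22}\bm Z_{22}^\trans)^m\bm R\,\bm\Sigma^{-2(m+1)}\bm W_L .
\]
The $m$-th term has norm at most $2^{-m}\bigl(\opn{\bm Z_{21}}+\opn{\bm Z_{22}\bm Z_{12}^\trans}\bigr)$, using $\opn{\bm\Sigma^{-1}\bm W_L}\le1$ and $\widehat s_p\gtrsim1$. Now $\Lqn{\bm Z_{21}}{128}\lesssim\sqrt{a/b}+\sqrt{p/b}\lesssim\sqrt\beta$, because $p\le a$. Conditioning on $\bm Z_{22}$ and applying \cref{eq:gaussian-fixed-coefficient-moment} to the independent block $\bm Z_{12}$ gives
\[
  \Lqn{\bm Z_{22}\bm Z_{12}^\trans}{128}\lesssim b^{-1/2}\bigl\|\,\frn{\bm Z_{22}}/\sqrt b+\sqrt p\,\opn{\bm Z_{22}}\bigr\|_{L^{128}}\lesssim\sqrt{a/b}.
\]
So the good-event contribution is $\lesssim\sqrt\beta$.

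\emph{Off the good event (main obstacle).} Off the event we still have the deterministic bound $\opn{\bm U_\perp\bm W_L}\le\opn{\bm Z_{21}}+\opn{\bm Z_{22}}$, whose moments are $O(1)$. By Hölder, its contribution is $O(1)\,\Pp(\cE^c)^{1/256}$, with $\Pp(\cE^c)\lesssim e^{-cb(s_\star^2\wedge s_\star^4)}\le e^{-c\nu^2a}$ by \cref{lem:one-view-localization}. This is $\lesssim\sqrt\beta$ when $a$ is not tiny relative to $\log(b/a)$, but it fails for very small $\beta$. In that regime I would first sharpen the crude bound rather than the probability. Write $\bm Z_{22}\bm V_\perp=\bm Z_{22}\bm V_\perp\bm W_R\bm W_R^{-1}$ and use $r_i\ge\sqrt{\tau_i}\gtrsim1$ together with the already-proved $\Lqn{\bm V_\perp\bm W_R}{128}\lesssim1$. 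This converts the bad-event term into $\Lqn{\bm Z_{22}}{256}\lesssim1+\sqrt\beta$ times a factor $\opn{\bm W_R^{-1}\bm\Sigma^{-1}\bm W_L}$. Since
\[
  \frac{\ell_i}{r_i\widehat s_i}=\frac{\tau_i}{(\tau_i+1)\,r_i\,\tau_i/(\tau_i+\beta)}\cdot\frac{1}{1}\quad\text{behaves like}\quad\frac{\ell_i}{r_i}\cdot\frac{1}{\widehat s_i},
\]
this factor should be $\lesssim\sqrt\beta/\widehat s_i$ on the upper branch, using $\ell_i/r_i=(\tau_i+\beta)/(\tau_i+1)$. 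The lower branch is handled by $\ell_i^2=\beta$ directly. If that fails, the fallback is to carry the $\sqrt\beta$ through the Neumann argument for every configuration, replacing the event by a deterministic spectral-gap case split on $\widehat s_p$. This is the step I expect to require the most care.
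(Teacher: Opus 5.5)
There is a genuine gap, and it is not confined to the off-event step you flagged. Your good event does not contain $\cE_{\rm loc}$. Large $\nu$ only gives $s_\star^2\ge\nu\sqrt\beta$, not $s_\star\gtrsim1$. For instance, take all population singular values equal to $s_\star$ with $\nu\sqrt\beta\le s_\star^2\le1/2$. Then $\widehat s_p^2\approx1+s_\star^2+\beta+\beta/s_\star^2<2$, whereas on $\cE_{\rm loc}$ every eigenvalue of $\bm Z_{22}\bm Z_{22}^\trans$ is within $s_\star^2/128$ of $1$. So typically $\opn{\bm Z_{22}\bm Z_{22}^\trans}>\widehat s_p^2/2$, and the uncentered Neumann series has no useful decay. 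In this weak-signal regime the bulk spectrum and the outliers both sit near $1$, so you must center. Rewrite your equation as $\bm U_\perp(\bm\Sigma^2-\bm I_p)=\bm H_{\rm bulk}\bm U_\perp+\bm R$ with $\bm H_{\rm bulk}=\bm Z_{22}\bm Z_{22}^\trans-\bm I_a$, whose norm is of order $\sqrt\beta$, and use $\widehat s_i\ell_i/(\widehat s_i^2-1)=(\tau_i+\beta)/(\tau_i+\beta+\beta/\tau_i)\le1$.

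Separately, your claim that $\opn{\bm\Sigma^{-1}\bm W_L}\le1$ and $\opn{\bm\Sigma^{-1}\bm W_R}\le1$ always is false. The lower branch is by definition $\widehat s_i\le1+\sqrt\beta$, so your inequality is reversed there, and $\widehat s_i$ can be arbitrarily small. On lower-branch columns use instead $\ell_i=\sqrt\beta$, $r_i=1$, and $\opn{\bm U_\perp},\opn{\bm V_\perp}\le1$. With this fix, your argument for $\bm V_\perp\bm W_R$ on the upper-branch columns is exactly the paper's.

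Your off-event repair does not work either. One has $\ell_i/(r_i\widehat s_i)=\sqrt{\tau_i(\tau_i+\beta)}/(\tau_i+1)^{3/2}$, which is of order one when $\tau_i\asymp1$. Hence $\opn{\bm Z_{22}}\opn{\bm V_\perp\bm W_R}\opn{\bm W_R^{-1}\bm\Sigma^{-1}\bm W_L}$ is $O(1)$, not $O(\sqrt\beta)$. And, as you note, H\"older with $\Pp(\cE^c)\lesssim e^{-c\nu^2a}$ cannot give $\sqrt\beta$ when $a$ is small compared with $\log(1/\beta)$.

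The missing idea is a pathwise, index-by-index split with no bad event.
\begin{itemize}
\item For upper-branch indices with $\widehat s_i^2-1\ge2\opn{\bm H_{\rm bulk}}$, the centered equation restricted to those columns is a $1/2$-contraction. It gives $\lesssim\opn{\bm Z_{21}}+\opn{\bm Z_{22}\bm Z_{12}^\trans}$.
\item For upper-branch indices with $\widehat s_i^2-1<2\opn{\bm H_{\rm bulk}}$, use the first lower equation together with $\ell_i/\widehat s_i=\tau_i/(\tau_i+1)\le\tau_i\le\widehat s_i^2-1<2\opn{\bm H_{\rm bulk}}$. This gives $\le2\opn{\bm H_{\rm bulk}}\opn{[\bm Z_{21},\bm Z_{22}]}$.
\end{itemize}
A small spectral gap thus forces a small weight ratio. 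The factor $\sqrt\beta$ then comes from $\Lqn{\bm H_{\rm bulk}}{256}\lesssim\sqrt\beta$ rather than from a tail probability. The split must be per index, not a global split on $\widehat s_p$ as in your fallback, because $\max_i\ell_i/\widehat s_i$ can be of order one. The resulting envelope holds on every sample path, so no signal condition is needed.
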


H\"older's inequality,
\cref{eq:one-view-tail-under-left,eq:one-view-radial-bound}, and
$b(s_\star^2\wedge s_\star^4)\ge\nu^2a$ give
$\|\ind_{\cE_{\rm loc}^c}\bm X_L\|_{L^{32}}
\lesssim
\beta e^{-ca}\lesssim\sqrt{ap}/b$.
Together with the bound on $\cE_{\rm loc}$, this proves
\cref{eq:one-view-theorem-centered} in the wide case.

\paragraph{Completing the wide-case proof.}
The radial bound \cref{eq:one-view-theorem-size} and the localization
probability \cref{eq:one-view-theorem-good-event-prob} follow directly
from \cref{lem:one-view-radial,lem:one-view-localization}.
Finally, on $\cE_{\rm loc}$,
$\tau_i\ge s_\star^2/16$ and
\[
  \ell_i^2-\beta
  =
  \frac{\tau_i^2-\beta}{\tau_i+1}
  \ge
  \frac{(s_\star^2/16)^2-\beta}{s_\star^2/16+1}
  \gtrsim
  s_\star^2\wedge s_\star^4,
\]
where we used the monotonicity of
$t\mapsto(t^2-\beta)/(t+1)$ on $[\sqrt\beta,\infty)$.
This proves \cref{eq:one-view-theorem-weight-floor} and completes the
wide-case proof.

\paragraph{A right-leakage bound for wide matrices.}
To pass to the tall case, we need the corresponding control of the
\emph{right} leakage for a wide matrix. Choose an orthonormal complement
$\bm V_\star^\perp$ of $\bm V_\star$ and define
$\bm N_R\coloneqq(\bm V_\star^\perp)^\trans\bm V\bm W_R$.
In the population coordinates,
$\bm N_R^\trans\bm N_R-\bm I_p=\bm X_R$.

Suppose now that $s_\star^2\ge\nu^2$. On $\cE_{\rm loc}$, the second
identity in \cref{eq:one-view-cancellation} and
\cref{eq:one-view-right-error-bound} give
$\|\ind_{\cE_{\rm loc}}\bm X_R\|_{L^{32}}
\lesssim\sqrt{p/b}$.
On the complement,
$\|\bm X_R\|_{\mathrm{op}}
\le\|\bm V_\perp\bm W_R\|_{\mathrm{op}}^2+1$, so
\cref{lem:one-view-localization,lem:one-view-radial} and H\"older's
inequality give the same bound. Hence
\begin{align}
  \Lqn{\bm N_R^\trans\bm N_R-\bm I_p}{32}
  &\lesssim
  \sqrt{\frac pb},
  &
  \Lqn{\bm N_R}{64}
  &\lesssim
  1,
  \notag\\
  \Pp(\cE_{\rm loc}^c)
  &\lesssim
  e^{-cbs_\star^2},
  &
  \min_i(r_i^2-1)
  &\gtrsim
  s_\star^2
  \quad\text{on }\cE_{\rm loc}.
\label{eq:one-view-wide-right-bounds}
\end{align}
For the last bound, use
$r_i^2-1=(\tau_i^2-\beta)/(\tau_i+\beta)$,
$\tau_i\ge s_\star^2/16$, and $s_\star^2\ge\nu^2$.

\paragraph{The tall case.}
Suppose now that $a>b$, so $\beta>1$, and consider the rescaled transpose
\[
  \widetilde{\bm Y}
  \coloneqq
  \beta^{-1/2}\bm Y^\trans.
\]
This is a wide matrix with
$\widetilde a=b$, $\widetilde b=a$,
$\widetilde\beta=\beta^{-1}$, and
$\widetilde s_\star=\beta^{-1/2}s_\star$.
The original signal condition implies
$s_\star^2\ge\nu^2\beta$, and hence
$\widetilde s_\star^2=s_\star^2/\beta\ge\nu^2$.
Thus \cref{eq:one-view-wide-right-bounds} applies to
$\widetilde{\bm Y}$.

It remains only to identify its right-leakage quantities with the original
left-leakage quantities. The empirical singular values satisfy
$\widehat{\widetilde s}_i=\widehat s_i/\sqrt\beta$, and the definition of the bias-correction
map gives
$\widetilde\tau_i=\tau_i/\beta$. Consequently,
\begin{equation}
\label{eq:one-view-transpose-dictionary}
  \widetilde{\bm W}_R
  =
  \beta^{-1/2}\bm W_L,
  \qquad
  \widetilde{\bm N}_R
  =
  \beta^{-1/2}\bm N_L,
  \qquad
  \widetilde{\bm N}_R^\trans\widetilde{\bm N}_R-\bm I_p
  =
  \beta^{-1}
  \bigl(
    \bm N_L^\trans\bm N_L-\beta\bm I_p
  \bigr).
\end{equation}
Indeed, the right singular vectors of $\widetilde{\bm Y}$ are the left
singular vectors of $\bm Y$, and
$\widetilde r_i^2=\ell_i^2/\beta$.

The centered right-leakage bound for $\widetilde{\bm Y}$ therefore gives
\[
  \Lqn{
    \bm N_L^\trans\bm N_L-\beta\bm I_p
  }{32}
  \lesssim
  \beta\sqrt{\frac pa}
  =
  \frac{\sqrt{ap}}b,
\]
while the radial bound gives
$\|\bm N_L\|_{L^{64}}\lesssim\sqrt\beta$.
Moreover,
$\widetilde b\,\widetilde s_\star^2
=a s_\star^2/\beta=bs_\star^2$.
Since the tall-case signal condition forces $s_\star^2>1$, we have
$s_\star^2\wedge s_\star^4=s_\star^2$, so the right-side localization
bound gives \cref{eq:one-view-theorem-good-event-prob}.
Finally,
$\widetilde r_i^2-1=(\ell_i^2-\beta)/\beta$, and the right-side weight
floor gives
$\ell_i^2-\beta\gtrsim s_\star^2
=s_\star^2\wedge s_\star^4$.
This proves all the claims in the tall case and completes the proof.

\subsection{Proof of \cref{lem:one-view-localization}}
\label{app:proof-one-view-localization}

We first prove the claim under the left signal condition
$s_\star^2\wedge s_\star^4\ge\nu^2\beta$.
Introduce the auxiliary event
\begin{equation}
\label{eq:one-view-localization-aux}
  \cE_0
  \coloneqq
  \left\{
    \opn{\bm Z_{11}}\le\frac{s_\star}4,
    \quad
    \opn{
      \bm Z_{12}\bm Z_{12}^\trans-\bm I_p
    }
    \le\frac{s_\star^2}{8},
    \quad
    \opn{
      \bm Z_{22}\bm Z_{22}^\trans-\bm I_a
    }
    \le\frac{s_\star^2}{128}
  \right\}.
\end{equation}

\paragraph{Step 1: show that $\cE_0\subseteq\cE_{\rm loc}$.}
Consider the first $p$ rows of $\bm Y$,
$\bm M\coloneqq(\bm\Sigma_\star+\bm Z_{11},\bm Z_{12})$.
By the definition of singular values,
$\widehat s_p^2\ge\lambda_{\min}(\bm M\bm M^\trans)$.
On $\cE_0$, Weyl's inequality gives
$\sigma_{\min}(\bm\Sigma_\star+\bm Z_{11})\ge3s_\star/4$ and
$\bm Z_{12}\bm Z_{12}^\trans
\succeq(1-s_\star^2/8)\bm I_p$.
Consequently,
$\widehat s_p^2\ge1+7s_\star^2/16$.

We next translate this bound to the $\tau$-coordinate. The signal condition
implies $s_\star^2\ge\nu\sqrt\beta$, so for sufficiently large $\nu$,
$\widehat s_i\ge \widehat s_p>1+\sqrt\beta$ for every $i$. Hence every $\widehat s_i$ lies on the
upper branch of \cref{eq:one-view-t-coordinate}, so
$\tau_i>\sqrt\beta$ and \cref{eq:one-view-upper-branch-relation} applies.
Moreover, the signal condition gives
$1+\beta+s_\star^2/16+16\beta/s_\star^2
\le1+7s_\star^2/16$.
Since $t\mapsto1+\beta+t+\beta/t$ is increasing on
$(\sqrt\beta,\infty)$, we obtain
$\tau_i\ge s_\star^2/16$ for every $i$.

The second condition defining $\cE_{\rm loc}$ is already part of $\cE_0$,
so $\cE_0\subseteq\cE_{\rm loc}$.

\paragraph{Step 2: bound $\Pp(\cE_0^c)$.}
Since $p/b\le\beta/2$, $a/b=\beta$, and
$s_\star^2\wedge s_\star^4\ge\nu^2\beta$, we have
$\sqrt{p/b}\vee\sqrt{a/b}\le\sqrt\beta
\le(s_\star^2\wedge s_\star)/\nu$.

Apply \cref{eq:gaussian-matrix-op-tail} to $\bm Z_{11}$ with
$u=v=p$ and $x=s_\star/8$. For sufficiently large $\nu$,
$2\sqrt{p/b}\le s_\star/8$, and hence
$\Pp\{\|\bm Z_{11}\|_{\mathrm{op}}>s_\star/4\}
\lesssim
\exp\{-cb(s_\star^2\wedge s_\star^4)\}$.

The block dimensions in \cref{eq:one-view-block-model} are
$\bm Z_{12}\in\mathbb R^{p\times b}$ and
$\bm Z_{22}\in\mathbb R^{a\times b}$, and both matrices have independent
$N(0,b^{-1})$ entries. Thus \cref{eq:gaussian-wishart-tail} applies with
$u=p$ and $u=a$, respectively, while $v=b$ in both cases. Choose the tail
parameter in that inequality to be
$c_0b(s_\star^2\wedge s_\star^4)$, where $c_0>0$ is a sufficiently small
numerical constant. For either $u\in\{p,a\}$, the resulting deviation
threshold satisfies
\[
\begin{aligned}
  C\left[
    \sqrt{\frac{u}{b}+c_0(s_\star^2\wedge s_\star^4)}
    +\frac{u}{b}
    +c_0(s_\star^2\wedge s_\star^4)
  \right]
  &\le
  C\left[
    \sqrt{\frac{u}{b}}
    +\sqrt{c_0}(s_\star^2\wedge s_\star)
    +\frac{u}{b}
    +c_0(s_\star^2\wedge s_\star^4)
  \right] \le
  \frac{s_\star^2}{128}.
\end{aligned}
\]
The last inequality follows from the preceding bound
$\sqrt{u/b}\le(s_\star^2\wedge s_\star)/\nu$, first by choosing $c_0$
sufficiently small and then by choosing $\nu$ sufficiently large. Therefore
the failure probabilities for the second and third conditions in
\cref{eq:one-view-localization-aux} are each bounded by
$2\exp\{-cb(s_\star^2\wedge s_\star^4)\}$.

Therefore, a union bound and $\cE_0\subseteq\cE_{\rm loc}$ give
\[
  \Pp(\cE_{\rm loc}^c)
  \lesssim
  \exp\{-cb(s_\star^2\wedge s_\star^4)\},
\]
which proves the first claim.

\paragraph{The stronger right signal condition.}
Suppose now that $s_\star^2\ge\nu^2$.
Since $\nu>1$ and $\beta\le1$, the left signal condition also holds, so
the inclusion $\cE_0\subseteq\cE_{\rm loc}$ proved above remains valid.
In the preceding Gaussian estimates we may now use
$s_\star^2\wedge s_\star=s_\star$, which gives
$\Pp(\cE_{\rm loc}^c)\lesssim e^{-cbs_\star^2}$.
This proves the second claim.
 
\subsection{Proof of \cref{lem:one-view-gram-approximation}}
\label{app:proof-one-view-gram-approximation}
On $\cE_{\rm loc}$, the proof proceeds in four stages. We first solve the
lower singular-vector equations by an absolutely convergent series and use
that series to derive exact expansions for the left and right leakage Gram
matrices. We then condition on $\bm Z_{22}$, replace the Gaussian coefficient
matrices involving $\bm Z_{12}$ and $\bm Z_{21}$ by their conditional means,
and collect the centered fluctuations. The remaining random coefficients are
normalized traces of functions of $\bm Z_{22}$; we replace them by the
deterministic Marchenko--Pastur transform. Finally, we control the centering
and bulk-replacement errors in a common balanced norm and convert that bound
into the left and right conclusions of
\cref{lem:one-view-gram-approximation}.

The lower block rows of the singular-vector equations are
\begin{equation}
\label{eq:one-view-lower-equations}
  \bm Z_{21}\bm V_\parallel+\bm Z_{22}\bm V_\perp
  =
  \bm U_\perp\bm\Sigma,
  \qquad
  \bm Z_{12}^\trans\bm U_\parallel+\bm Z_{22}^\trans\bm U_\perp
  =
  \bm V_\perp\bm\Sigma.
\end{equation}
Eliminating $\bm V_\perp$ gives
\begin{equation}
\label{eq:one-view-eliminated}
  \bm U_\perp(\bm\Sigma^2-\bm I_p)
  =
  (\bm Z_{22}\bm Z_{22}^\trans-\bm I_a)\bm U_\perp
  +
  \bm Z_{21}\bm V_\parallel\bm\Sigma
  +
  \bm Z_{22}\bm Z_{12}^\trans\bm U_\parallel.
\end{equation}
To simplify notation in this section, we write
\[
\bm H_{\rm bulk} \coloneqq \bm Z_{22}\bm Z_{22}^\trans-\bm I_a, \qquad \bm D_\Sigma \coloneqq \bm\Sigma^2-\bm I_p.
\]
On $\cE_{\rm loc}$, we have 
\begin{equation}\label{eq:one-view-contraction}
   \opn{\bm H_{\rm bulk}}\opn{\bm D_\Sigma^{-1}}
  \le 
  \frac18.   
\end{equation}

Indeed, \cref{eq:one-view-upper-branch-relation} and the definition of
$\cE_{\rm loc}$ give
$\widehat s_i^2-1\ge\tau_i\ge s_\star^2/16$ for every $i$.  Consequently,
$\bm\Sigma^2-\bm I_p$ is positive definite and $  \opn{\bm D_\Sigma^{-1}}
  =
  \frac{1}{\min_i(\widehat s_i^2-1)}
  \le
  \frac{16}{s_\star^2}.$

Thus, on $\cE_{\rm loc}$, \cref{eq:one-view-eliminated} can be solved by a
convergent Neumann series: 
\begin{equation}
\label{eq:one-view-U-finite-expansion}
  \bm U_\perp
  =
  \sum_{j \geq 0}
    \bm H_{\rm bulk}^j (\bm Z_{21}\bm V_\parallel\bm\Sigma
  +
  \bm Z_{22}\bm Z_{12}^\trans\bm U_\parallel) \bm D_\Sigma^{-j-1}.
\end{equation}
Taking the Gram of this expansion gives the four-term expansion
\begin{align}
\label{eq:one-view-U-Gram-double-expansion}
  \bm U_\perp^\trans\bm U_\perp
  =
  \sum_{r,s\geq 0}
  \bm D_\Sigma^{-r-1}
  \Bigl(&
    \bm\Sigma\bm V_\parallel^\trans
    \bm Z_{21}^\trans\bm H_{\rm bulk}^{r+s}\bm Z_{21}
    \bm V_\parallel\bm\Sigma
    \notag\\
  &+
    \bm\Sigma\bm V_\parallel^\trans
    \bm Z_{21}^\trans\bm H_{\rm bulk}^{r+s}\bm Z_{22}
    \bm Z_{12}^\trans\bm U_\parallel
    \notag\\
  &+
    \bm U_\parallel^\trans\bm Z_{12}\bm Z_{22}^\trans
    \bm H_{\rm bulk}^{r+s}\bm Z_{21}
    \bm V_\parallel\bm\Sigma
    \notag\\
  &+
    \bm U_\parallel^\trans\bm Z_{12}\bm Z_{22}^\trans
    \bm H_{\rm bulk}^{r+s}\bm Z_{22}\bm Z_{12}^\trans
    \bm U_\parallel
  \Bigr)
  \bm D_\Sigma^{-s-1}.
\end{align}
Note that~\cref{eq:one-view-U-Gram-double-expansion} can be viewed as an equation linking the left leakage Gram matrix $\bm U_\perp^\trans\bm U_\perp$ with the two in-signal components $\bm U_\parallel$ and $\bm V_\parallel$. 
While $\bm D_\Sigma$ and $\bm \Sigma$ are simple diagonal matrices, it is the terms like  $\bm Z_{21}^\trans\bm H_{\rm bulk}^{r+s}\bm Z_{21}$ that is problematic as they depend on the noise matrices in a complicated way. 

In what follows, we replace the terms like $\bm Z_{21}^\trans\bm H_{\rm bulk}^{r+s}\bm Z_{21}$ by simpler deterministic matrices in two steps: 
in the first step, we remove the non-bulk noises $\bm Z_{12}, \bm Z_{21}$ by centering them, while in the second step, we remove the bulk noise $\bm Z_{22}$.

\subsubsection{Centering the non-bulk noise}
The Gram expansion above contains four Gaussian coefficient matrices.
The goal of this subsection is to split each coefficient into its
conditional mean given $\bm Z_{22}$ and a centered fluctuation.  The conditional
means will form the empirical Gram operators, while all centered
fluctuations will be collected into a remainder.

For $m\ge0$, define the centered Gaussian coefficient matrices
\begin{subequations}
\label{eq:one-view-U-centered-coefficients}
\begin{align}
  \bm C_{U,11}^{(m)}
  &\coloneqq
  \bm Z_{21}^\trans\bm H_{\rm bulk}^m\bm Z_{21}
  -
  \frac{\tr(\bm H_{\rm bulk}^m)}{b}\bm I_p,
  \label{eq:one-view-U-core-11}\\
  \bm C_{U,22}^{(m)}
  &\coloneqq
  \bm Z_{12}\bm Z_{22}^\trans\bm H_{\rm bulk}^m\bm Z_{22}\bm Z_{12}^\trans
  -
  \frac{\tr(\bm Z_{22}^\trans\bm H_{\rm bulk}^m\bm Z_{22})}{b}\bm I_p,
  \label{eq:one-view-U-core-22}\\
  \bm C_{U,12}^{(m)}
  &\coloneqq
  \bm Z_{21}^\trans\bm H_{\rm bulk}^m\bm Z_{22}\bm Z_{12}^\trans,
  \label{eq:one-view-U-core-12}\\
  \bm C_{U,21}^{(m)}
  &\coloneqq
  \bm Z_{12}\bm Z_{22}^\trans\bm H_{\rm bulk}^m\bm Z_{21}.
  \label{eq:one-view-U-core-21}
\end{align}
\end{subequations}
Conditional on $\bm Z_{22}$, all four matrices are centered.  Collecting the
corresponding conditional means in
\cref{eq:one-view-U-Gram-double-expansion} motivates the empirical operators
\begin{subequations}
\begin{align}
  \widehat{\mathcal A}_U(\bm M)
  &\coloneqq
  \sum_{r,s\ge0}
  \frac{\tr(\bm H_{\rm bulk}^{r+s})}{b}\,
  \bm D_\Sigma^{-r-1}
  \bm\Sigma\bm M\bm\Sigma
  \bm D_\Sigma^{-s-1},
  \label{eq:one-view-empirical-AU}\\
  \widehat{\mathcal B}(\bm M)
  &\coloneqq
  \sum_{r,s\ge0}
  \frac{\tr(\bm Z_{22}^\trans\bm H_{\rm bulk}^{r+s}\bm Z_{22})}{b}\,
  \bm D_\Sigma^{-r-1}\bm M\bm D_\Sigma^{-s-1}.
  \label{eq:one-view-empirical-B}
\end{align}
\end{subequations}
With these definitions, the left leakage Gram admits the exact decomposition
\begin{equation}
\label{eq:one-view-empirical-U-Gram}
  {\bm U_\perp^\trans\bm U_\perp
  =
  \widehat{\mathcal A}_U
  \bigl(
    \bm V_\parallel^\trans\bm V_\parallel
  \bigr)
  +
  \widehat{\mathcal B}
  \bigl(
    \bm U_\parallel^\trans\bm U_\parallel
  \bigr)
  +
  \bm R_{U,\mathrm{coef}},}
\end{equation}
where
\begin{equation}
\label{eq:one-view-RU-coef}
\begin{aligned}
  \bm R_{U,\mathrm{coef}}
  \coloneqq{}&
  \sum_{r,s\ge0}
  \bm D_\Sigma^{-r-1}
  \Bigl\{
    \bm\Sigma\bm V_\parallel^\trans
    \bm C_{U,11}^{(r+s)}
    \bm V_\parallel\bm\Sigma
    +
    \bm U_\parallel^\trans
    \bm C_{U,22}^{(r+s)}
    \bm U_\parallel
  \\
  &\qquad\qquad
    +
    \bm\Sigma\bm V_\parallel^\trans
    \bm C_{U,12}^{(r+s)}
    \bm U_\parallel
    +
    \bm U_\parallel^\trans
    \bm C_{U,21}^{(r+s)}
    \bm V_\parallel\bm\Sigma
  \Bigr\}
  \bm D_\Sigma^{-s-1}.
\end{aligned}
\end{equation}

\paragraph{The right subspace leakage.}
We next carry out the same decomposition for
$\bm V_\perp^\trans\bm V_\perp$.  The only additional bookkeeping comes from
the base term
$\bm Z_{12}^\trans\bm U_\parallel\bm\Sigma^{-1}$ in the expansion of
$\bm V_\perp$.  We keep this term separate and, for $q\ge0$, set
\(\bm G_q \coloneqq \bm Z_{22}^\trans\bm H_{\rm bulk}^q\bm Z_{22}$ and $\bm E_q \coloneqq \bm D_\Sigma^{-q-1}\bm\Sigma^{-1}.\)
Substituting the convergent expansion of $\bm U_\perp$ into the second
equation in \cref{eq:one-view-lower-equations} gives
\begin{equation}
\label{eq:one-view-V-series}
  \bm V_\perp
  =
  \sum_{r\ge0}
    \bm Z_{22}^\trans\bm H_{\rm bulk}^r
    \bm Z_{21}\bm V_\parallel
    \bm D_\Sigma^{-r-1}
  +
  \bm Z_{12}^\trans\bm U_\parallel\bm\Sigma^{-1}
  +
  \sum_{q \geq 0}
    \bm G_q\bm Z_{12}^\trans
    \bm U_\parallel\bm E_q.
\end{equation}
We now center the four families of quadratic coefficients arising when the Gram of
\cref{eq:one-view-V-series} is expanded.

For $r,s,q,u\ge0$, define four centered terms as 
\begin{subequations}
\label{eq:one-view-V-centered-coefficients}
\begin{align}
  \bm C_{V,11}^{(r,s)}
  &\coloneqq
  \bm Z_{21}^\trans
  \bm H_{\rm bulk}^r(\bm I_a+\bm H_{\rm bulk})\bm H_{\rm bulk}^s
  \bm Z_{21}
  -
  \frac{
    \tr\{\bm H_{\rm bulk}^r(\bm I_a+\bm H_{\rm bulk})\bm H_{\rm bulk}^s\}
  }{b}\bm I_p,
  \label{eq:one-view-V-core-11}\\
  \bm C_{V,22}^{(q,u)}
  &\coloneqq
  \bm Z_{12}\bm G_q\bm G_u\bm Z_{12}^\trans
  -
  \frac{\tr(\bm G_q\bm G_u)}{b}\bm I_p,
  \label{eq:one-view-V-core-22}\\
  \bm C_{V,12}^{(r,u)}
  &\coloneqq
  \bm Z_{21}^\trans
  \bm H_{\rm bulk}^r\bm Z_{22}\bm G_u\bm Z_{12}^\trans,
  \label{eq:one-view-V-core-12}\\
  \bm C_{V,21}^{(q,s)}
  &\coloneqq
  \bm Z_{12}\bm G_q\bm Z_{22}^\trans
  \bm H_{\rm bulk}^s\bm Z_{21}.
  \label{eq:one-view-V-core-21}
\end{align}
\end{subequations}
Again, these matrices are centered conditional on $\bm Z_{22}$.  Collecting their
conditional means leads to the remaining empirical operator
\begin{equation}
\label{eq:one-view-empirical-AV}
\begin{aligned}
  \widehat{\mathcal A}_V(\bm M)
  \coloneqq{}&
  \bm\Sigma^{-1}\bm M\bm\Sigma^{-1}
  +
  \sum_{q\ge0}
  \frac{\tr(\bm G_q)}{b}
  \left(
    \bm E_q\bm M\bm\Sigma^{-1}
    +
    \bm\Sigma^{-1}\bm M\bm E_q
  \right)
  \\
  &+
  \sum_{q,u\ge0}
  \frac{\tr(\bm G_q\bm G_u)}{b}\,
  \bm E_q\bm M\bm E_u.
\end{aligned}
\end{equation}
Expanding the Gram in \cref{eq:one-view-V-series} and collecting the centered
terms therefore gives
\begin{equation}
\label{eq:one-view-empirical-V-Gram}
  {\bm V_\perp^\trans\bm V_\perp
  =
  \widehat{\mathcal B}
  \bigl(
    \bm V_\parallel^\trans\bm V_\parallel
  \bigr)
  +
  \widehat{\mathcal A}_V
  \bigl(
    \bm U_\parallel^\trans\bm U_\parallel
  \bigr)
  +
  \bm R_{V,\mathrm{coef}},}
\end{equation}
where
\begin{align}
\label{eq:one-view-RV-coef}
  \bm R_{V,\mathrm{coef}}
  \coloneqq{}&
  \sum_{r,s\ge0}
    \bm D_\Sigma^{-r-1}
    \bm V_\parallel^\trans
    \bm C_{V,11}^{(r,s)}
    \bm V_\parallel
    \bm D_\Sigma^{-s-1}
  +
  \sum_{q,u\ge0}
    \bm E_q
    \bm U_\parallel^\trans
    \bm C_{V,22}^{(q,u)}
    \bm U_\parallel
    \bm E_u
  \notag\\
  &+
  \sum_{q\ge0}
  \bm E_q\bm U_\parallel^\trans
  \left(
    \bm Z_{12}\bm G_q\bm Z_{12}^\trans
    -
    \frac{\tr(\bm G_q)}{b}\bm I_p
  \right)
  \bm U_\parallel\bm\Sigma^{-1}
  \notag\\
  &+
  \sum_{u\ge0}
  \bm\Sigma^{-1}\bm U_\parallel^\trans
  \left(
    \bm Z_{12}\bm G_u\bm Z_{12}^\trans
    -
    \frac{\tr(\bm G_u)}{b}\bm I_p
  \right)
  \bm U_\parallel\bm E_u
  \notag\\
  &+
  \bm\Sigma^{-1}\bm U_\parallel^\trans
  \left(
    \bm Z_{12}\bm Z_{12}^\trans-\bm I_p
  \right)
  \bm U_\parallel\bm\Sigma^{-1}
  \notag\\
  &+
  \sum_{r,u\ge0}
    \bm D_\Sigma^{-r-1}
    \bm V_\parallel^\trans
    \bm C_{V,12}^{(r,u)}
    \bm U_\parallel
    \bm E_u
  +
  \sum_{r\ge0}
    \bm D_\Sigma^{-r-1}
    \bm V_\parallel^\trans
    \bm Z_{21}^\trans\bm H_{\rm bulk}^r
    \bm Z_{22}\bm Z_{12}^\trans
    \bm U_\parallel\bm\Sigma^{-1}
  \notag\\
  &+
  \sum_{q,s\ge0}
    \bm E_q
    \bm U_\parallel^\trans
    \bm C_{V,21}^{(q,s)}
    \bm V_\parallel
    \bm D_\Sigma^{-s-1}
  \notag\\
  &+
  \sum_{s\ge0}
    \bm\Sigma^{-1}\bm U_\parallel^\trans
    \bm Z_{12}\bm Z_{22}^\trans
    \bm H_{\rm bulk}^s\bm Z_{21}
    \bm V_\parallel\bm D_\Sigma^{-s-1}.
\end{align}

\subsubsection{Replacing the bulk traces}

The preceding exact Gram equations have isolated the centered Gaussian
coefficient fluctuations in
$\bm R_{U,\mathrm{coef}}$ and $\bm R_{V,\mathrm{coef}}$.
The remaining randomness in the empirical operators enters only through
normalized traces of functions of the bulk matrix $\bm Z_{22}$.  The goal of this
subsection is to replace these random traces by their deterministic high-dimensional limits and to isolate the resulting replacement error.

Instead of working with the trace of bulk noise such as $\tr(\bm H_{\rm bulk}^{m})$ in~\cref{eq:one-view-empirical-AU} separately for each order $m$, we rely on the following resolvent 
\[
\widehat m_{\rm bulk}(z) \coloneqq \frac1a \tr(\bm I_a-z\bm H_{\rm bulk})^{-1},
\]
which is defined on the contour
\[
\mathcal C_\star
\coloneqq
\{z\in\C:|z|=32/s_\star^2\}.
\]
Also set
$\bm K_0(z) \coloneqq \bm D_\Sigma^{-1} \left( \bm I_p-z^{-1}\bm D_\Sigma^{-1} \right)^{-1},$ and $ \bm K_1(z) \coloneqq (1+z^{-1})\bm K_0(z).$
On $\cE_{\rm loc}$,
$\|z^{-1}\bm D_\Sigma^{-1}\|_{\mathrm{op}}\le1/2$, so these matrices admit
convergent geometric expansions.
These matrices package the trace coefficients appearing in the empirical
operators.  Cauchy's coefficient formula yields the following contour
representations:
\begin{subequations}
\label{eq:one-view-empirical-contours}
\begin{align}
  \widehat{\mathcal A}_U(\bm M)
  &=
  \frac{\beta}{2\pi\mathrm i}
  \oint_{\mathcal C_\star}
  \frac{\widehat m_{\rm bulk}(z)}{z}\,
  \bm\Sigma\bm K_0(z)
  \bm M
  \bm K_0(z)\bm\Sigma\,dz,
  \label{eq:one-view-empirical-AU-contour}\\
  \widehat{\mathcal B}(\bm M)
  &=
  \frac{\beta}{2\pi\mathrm i}
  \oint_{\mathcal C_\star}
  \frac{\widehat m_{\rm bulk}(z)}{z}\,
  \bm K_1(z)\bm M\bm K_0(z)\,dz,
  \label{eq:one-view-empirical-B-contour}\\
  \widehat{\mathcal A}_V(\bm M)
  &=
  \bm\Sigma^{-1}\bm M\bm\Sigma^{-1}
  +
  \frac{\beta}{2\pi\mathrm i}
  \oint_{\mathcal C_\star}
  \frac{\widehat m_{\rm bulk}(z)}{z}\,
  \bm\Sigma^{-1}
  \bigl(
    \bm K_1(z)\bm M
    +
    \bm M\bm K_1(z)
    +
    \bm K_1(z)\bm M\bm K_1(z)
  \bigr)
  \bm\Sigma^{-1}\,dz.
  \label{eq:one-view-empirical-AV-contour}
\end{align}
\end{subequations}

We next define the deterministic object that replaces these traces.
Let $m_\beta$ be the unique solution analytic at zero, with
$m_\beta(0)=1$, of
\begin{equation}
\label{eq:one-view-MP-equation}
  m_\beta(z)
  =
  1
  +
  \beta z\,m_\beta(z)
  \bigl\{
    (1+z)m_\beta(z)-1
  \bigr\}.
\end{equation}
Equivalently, $m_\beta$ is the moment transform of the centered
Marchenko--Pastur law with aspect ratio~$\beta$.  

Replacing $\widehat m_{\rm bulk}$ by $m_\beta$ in the contour formulas above produces
deterministic candidate operators.  The next proposition identifies these
operators with the population operators
$\mathcal A_U$, $\mathcal B$, and $\mathcal A_V$ defined earlier in
\cref{eq:one-view-population-operators}.

\begin{proposition}\label{prop:population-operator}
    Recall the population operators defined in \cref{eq:one-view-population-operators}. On the event $\cE_{\rm loc}$, we have 
    \begin{subequations}
\label{eq:one-view-population-contours}
\begin{align}
  \mathcal A_U(\bm M)
  &=
  \frac{\beta}{2\pi\mathrm i}
  \oint_{\mathcal C_\star}
  \frac{m_\beta(z)}{z}\,
  \bm\Sigma\bm K_0(z)
  \bm M
  \bm K_0(z)\bm\Sigma\,dz,
  \label{eq:one-view-population-AU-contour}\\
  \mathcal B(\bm M)
  &=
  \frac{\beta}{2\pi\mathrm i}
  \oint_{\mathcal C_\star}
  \frac{m_\beta(z)}{z}\,
  \bm K_1(z)\bm M\bm K_0(z)\,dz,
  \label{eq:one-view-population-B-contour}\\
  \mathcal A_V(\bm M)
  &=
  \bm\Sigma^{-1}\bm M\bm\Sigma^{-1}
  +
  \frac{\beta}{2\pi\mathrm i}
  \oint_{\mathcal C_\star}
  \frac{m_\beta(z)}{z}\,
  \bm\Sigma^{-1}
  \bigl(
    \bm K_1(z)\bm M
    +
    \bm M\bm K_1(z)
    +
    \bm K_1(z)\bm M\bm K_1(z)
  \bigr)
  \bm\Sigma^{-1}\,dz.
  \label{eq:one-view-population-AV-contour}
\end{align}
\end{subequations}
\end{proposition}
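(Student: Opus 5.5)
Because $\bm\Sigma$, $\bm D_\Sigma$, $\bm K_0(z)$ and $\bm K_1(z)$ are all diagonal, each contour operator in \cref{eq:one-view-population-contours} acts entrywise on $\bm M$. The plan is therefore to reduce the proposition to three scalar identities, one for each entry $(i,j)$, matching the multipliers in \cref{eq:one-view-population-operators}.

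\textbf{Step 1: reduction to scalar contour integrals.} Write $d_i\coloneqq\widehat s_i^2-1$. The diagonal entries are
\[
[\bm K_0(z)]_{ii}=\frac{1}{d_i-z^{-1}}=\frac{z}{zd_i-1},
\qquad
[\bm K_1(z)]_{ii}=\frac{z+1}{zd_i-1}.
\]
Hence the $(i,j)$ multiplier of, for example, the contour form of $\mathcal B$ is
\[
\frac{\beta}{2\pi\mathrm i}\oint_{\mathcal C_\star}\frac{m_\beta(z)(1+z)}{(zd_i-1)(zd_j-1)}\,dz .
\]
The multiplier of $\mathcal A_U$ has the integrand $\widehat s_i\widehat s_j\,z\,m_\beta(z)/\{(zd_i-1)(zd_j-1)\}$. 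The multiplier of $\mathcal A_V$ collects three such integrals, prefaced by the constant term $1/(\widehat s_i\widehat s_j)$. In every case the apparent pole at $z=0$ is removable.

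\textbf{Step 2: analyticity and residues.} I will show that $m_\beta$ is analytic on a disk strictly containing the closed disk bounded by $\mathcal C_\star$. Since $m_\beta$ is the moment transform of the centered Marchenko--Pastur law, it equals $\int(1-zx)^{-1}\,d\mu(x)$ with $\mu$ supported in $[\beta-2\sqrt\beta,\beta+2\sqrt\beta]$. It is therefore analytic for $|z|<(\beta+2\sqrt\beta)^{-1}$. This radius exceeds $32/s_\star^2$ under $s_\star^2\wedge s_\star^4\ge\nu^2\beta$ with $\nu$ large; this is a short case check on $\beta\le1$ versus $\beta>1$.

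On $\cE_{\rm loc}$, \cref{eq:one-view-upper-branch-relation} gives $d_i\ge\tau_i\ge s_\star^2/16$, so the poles $z_i=1/d_i$ lie inside $\mathcal C_\star$. The residue theorem then turns each multiplier into a divided difference
\[
\frac{1}{2\pi\mathrm i}\oint_{\mathcal C_\star}\frac{g(z)}{(zd_i-1)(zd_j-1)}\,dz=\frac{1}{d_id_j}\cdot\frac{g(z_i)-g(z_j)}{z_i-z_j},
\]
read as $g'(z_i)/d_i^2$ when $i=j$. Here $g$ is the analytic numerator.

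\textbf{Step 3: evaluating $m_\beta$ at the spikes.} This is the step I expect to be the main obstacle. By \cref{eq:one-view-upper-branch-relation}, $d_i=\beta+\tau_i+\beta/\tau_i$. I will substitute $z=1/d_i$ into the quadratic \cref{eq:one-view-MP-equation} and solve explicitly for $m_\beta(1/d_i)$ as a rational function of $\tau_i$. I expect this to be $(\tau_i+1)\tau_i/(\tau_i^2+\beta\tau_i+\ldots)$-type, i.e.\ a simple rational expression in $\tau_i$ and $\beta$.

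The correct root will be selected by continuity. As $\tau\to\infty$ we have $z\to0$, and $m_\beta$ must tend to $1$. The map $\tau\mapsto1/d(\tau)$ is monotone from $(\sqrt\beta,\infty)$ onto $(0,(\beta+2\sqrt\beta)^{-1})$, which is the analyticity interval, so no branch crossing occurs.

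\textbf{Step 4: algebraic simplification.} Finally, I will plug these values into the divided differences. The differences of the form $g(z_i)-g(z_j)$ should factor so that $z_i-z_j$ cancels, leaving a denominator $\tau_i\tau_j-\beta$. For this I will use the identities $\ell_i r_i=\tau_i$, $\widehat s_i^2=(\tau_i+1)(\tau_i+\beta)/\tau_i$, and $\ell_i^2=\tau_i(\tau_i+\beta)/(\tau_i+1)$. The result should be exactly $\beta r_ir_j/(\tau_i\tau_j-\beta)$ for $\mathcal A_U$, $\beta/(\tau_i\tau_j-\beta)$ for $\mathcal B$, and $\ell_i\ell_j/(\tau_i\tau_j-\beta)$ for $\mathcal A_V$, where the last one absorbs the constant $1/(\widehat s_i\widehat s_j)$.

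The diagonal case $i=j$ follows by continuity in $(\tau_i,\tau_j)$, since both sides are analytic there. As a consistency check, the resulting multipliers should reproduce the solvability of the system \cref{eq:one-view-entrywise-system}.
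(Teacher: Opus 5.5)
You follow the same route as the paper: entrywise reduction, the residue theorem on $\mathcal C_\star$, evaluation of $m_\beta$ at $z_i=1/d_i$ through the fixed-point equation, and simplification to the denominator $\tau_i\tau_j-\beta$. For $\mathcal A_U$ and $\mathcal B$ this works as you describe, because their integrands $\widehat s_i\widehat s_j\,z\,m_\beta(z)/\{(zd_i-1)(zd_j-1)\}$ and $(1+z)m_\beta(z)/\{(zd_i-1)(zd_j-1)\}$ are regular at the origin.

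\textbf{The gap is in $\mathcal A_V$.} Your assertion that the pole at $z=0$ is removable ``in every case'' is false there. Since $[\bm K_1(0)]_{ii}=(0+1)/(0-1)=-1$, each single term $m_\beta(z)z^{-1}[\bm K_1(z)]_{ii}$ has residue $-1$ at the origin. The product term $m_\beta(z)z^{-1}[\bm K_1(z)]_{ii}[\bm K_1(z)]_{jj}$ has residue $+1$. So the bracket $\bm K_1\bm M+\bm M\bm K_1+\bm K_1\bm M\bm K_1$ produces a genuine simple pole at $0$ with total residue $-m_\beta(0)=-1$. Your Step~2 divided-difference formula assumes the numerator $g$ is analytic inside $\mathcal C_\star$, which fails for this integrand.

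This error changes the answer. The poles at $z_i,z_j$ alone contribute $\frac{\tau_i+\tau_j+\beta+1}{\tau_i\tau_j-\beta}+1$. Dropping the origin therefore gives the multiplier
\[
\frac{1}{\widehat s_i\widehat s_j}\Bigl\{1+\beta+\beta\,\frac{\tau_i+\tau_j+\beta+1}{\tau_i\tau_j-\beta}\Bigr\},
\]
which exceeds $\ell_i\ell_j/(\tau_i\tau_j-\beta)$ by $\beta/(\widehat s_i\widehat s_j)$, so your Step~4 cannot close.

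The fix, which is how the paper handles it, is to include the residue $-1$ at the origin. The contour integral then equals $\frac{\tau_i+\tau_j+\beta+1}{\tau_i\tau_j-\beta}$, so
\[
1+\beta\,\frac{\tau_i+\tau_j+\beta+1}{\tau_i\tau_j-\beta}=\frac{(\tau_i+\beta)(\tau_j+\beta)}{\tau_i\tau_j-\beta},
\]
and the identity $\widehat s_i\ell_i=\tau_i+\beta$ yields $\ell_i\ell_j/(\tau_i\tau_j-\beta)$.

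For Step~3, the value the fixed-point equation gives is $m_\beta(z_i)=d_i/(\tau_i+\beta)$. The two forms you will actually use are $z_i m_\beta(z_i)=1/(\tau_i+\beta)$ and $(1+z_i)m_\beta(z_i)=(\tau_i+1)/\tau_i$.
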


Thus the difference between the exact empirical Gram operators and their
population counterparts is entirely due to the replacement
$\widehat m_{\rm bulk}\mapsto m_\beta$.  We collect this difference in the bulk
remainders
\begin{subequations}
\label{eq:one-view-bulk-residuals}
\begin{align}
  \bm R_{U,\mathrm{bulk}}
  &\coloneqq
  (\widehat{\mathcal A}_U-\mathcal A_U)
  \bigl(
    \bm V_\parallel^\trans\bm V_\parallel
  \bigr)
  +
  (\widehat{\mathcal B}-\mathcal B)
  \bigl(
    \bm U_\parallel^\trans\bm U_\parallel
  \bigr),
  \label{eq:one-view-RU-bulk}\\
  \bm R_{V,\mathrm{bulk}}
  &\coloneqq
  (\widehat{\mathcal B}-\mathcal B)
  \bigl(
    \bm V_\parallel^\trans\bm V_\parallel
  \bigr)
  +
  (\widehat{\mathcal A}_V-\mathcal A_V)
  \bigl(
    \bm U_\parallel^\trans\bm U_\parallel
  \bigr).
  \label{eq:one-view-RV-bulk}
\end{align}
\end{subequations}
Combining these bulk remainders with the exact coefficient-centered Gram
equations above, and recalling the definitions of
$\bm\varepsilon_U$ and $\bm\varepsilon_V$, we obtain the desired
decomposition
\[
{\bm\varepsilon_U = \bm R_{U,\mathrm{coef}} + \bm R_{U,\mathrm{bulk}}, \qquad \bm\varepsilon_V = \bm R_{V,\mathrm{coef}} + \bm R_{V,\mathrm{bulk}}.}
\]

\subsubsection{Bounding the two errors}

It remains to control these two sources of error
separately.  We first bound the centered Gaussian coefficient fluctuations,
and then the error from replacing the empirical bulk traces by their
high-dimensional limits.

It is convenient to control both errors first in a common balanced norm.
Introduce
\[
\bm W_{\rm bal} \coloneqq \diag\bigl( \beta^{1/4}\sqrt{\tau_i} \bigr).
\]
We will obtain bounds for
$\bm W_L\bm\varepsilon_U\bm W_L$ and
$\bm W_{\rm bal}\bm\varepsilon_V\bm W_{\rm bal}$.
At the end of the subsection, deterministic comparisons between
$\bm W_{\rm bal}$, $\bm W_L$, and $\bm W_R$ will convert this balanced estimate
into the two bounds stated in \cref{lem:one-view-gram-approximation}.

\paragraph{Step 1: Bound the centering error. }

We first control the remainders
$\bm R_{U,\mathrm{coef}}$ and $\bm R_{V,\mathrm{coef}}$, which collect the
centered Gaussian coefficient fluctuations from the exact Gram expansions.

\begin{lemma}
\label{lem:one-view-coefficient-error}
On $\cE_{\rm loc}$,
\begin{equation}
\label{eq:one-view-coefficient-error-master}
  \Lqn{
    \ind_{\cE_{\rm loc}}
    \bm W_L\bm R_{U,\mathrm{coef}}\bm W_L
  }{32}
  +
  \Lqn{
    \ind_{\cE_{\rm loc}}
    \bm W_{\rm bal}\bm R_{V,\mathrm{coef}}\bm W_{\rm bal}
  }{32}
  \le
  C\beta \sqrt{\frac{p}{a}}.
\end{equation}
\end{lemma}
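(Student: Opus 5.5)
The plan is to bound each of the (finitely many families of) summands in $\bm R_{U,\mathrm{coef}}$ and $\bm R_{V,\mathrm{coef}}$ separately and then sum the geometric series in the Neumann indices. The key structural fact is that, conditional on $\bm Z_{22}$, the blocks $\bm Z_{12}$ and $\bm Z_{21}$ are independent Gaussian matrices. Each centered coefficient in \cref{eq:one-view-U-centered-coefficients,eq:one-view-V-centered-coefficients} is therefore either a centered same-sample quadratic form $\bm G^\trans\bm A\bm G-b^{-1}\tr(\bm A)\bm I_p$ or a bilinear form $\bm G_1^\trans\bm A\bm G_2$. Here $\bm A$ is a function of $\bm Z_{22}$ only: $\bm H_{\rm bulk}^m$, $\bm Z_{22}^\trans\bm H_{\rm bulk}^m\bm Z_{22}$, $\bm G_q\bm G_u$, and so on. We then apply \cref{lem:gaussian-coefficient-moments} conditionally with $p$ columns. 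This yields, for instance,
\[
\Lqn{\bm C^{(m)}_{U,11}\mid\bm Z_{22}}{q}\lesssim b^{-1}\bigl\{(\sqrt p+\sqrt q)\frn{\bm H_{\rm bulk}^m}+(p+q)\opn{\bm H_{\rm bulk}^m}\bigr\}.
\]
On $\cE_{\rm loc}$ we have $\opn{\bm H_{\rm bulk}}\le s_\star^2/128$ and $\frn{\bm H_{\rm bulk}^m}\le\sqrt a\,\opn{\bm H_{\rm bulk}}^m$. The $m$-th term is thus of size $b^{-1}\sqrt{ap}\,(s_\star^2/128)^m$ for $m\ge1$, using $p\le a$ and $q=32$ fixed. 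The $m=0$ Wishart-type term is of size $b^{-1}\sqrt{ap}$ by \cref{eq:gaussian-quadratic-moment} with $\frn{\bm I_a}=\sqrt a$. The same applies to the cross terms $\bm C_{U,12},\bm C_{U,21}$, which contain $\bm Z_{22}$ with $\opn{\bm Z_{22}}\lesssim 1$ on $\cE_{\rm loc}$. There is one subtlety: $\bm U_\parallel,\bm V_\parallel,\bm\Sigma$ are not independent of $\bm Z_{12},\bm Z_{21}$. We handle this by bounding operator norms by submultiplicativity, so that the sandwiching matrices enter only through $\opn{\bm U_\parallel},\opn{\bm V_\parallel}\le1$ and deterministic-on-$\cE_{\rm loc}$ diagonal factors. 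The Gaussian moment is always taken of the inner coefficient alone, and the indicator $\ind_{\cE_{\rm loc}}$ is kept outside.

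The second step is the diagonal weight bookkeeping. By \cref{eq:one-view-upper-branch-relation}, on $\cE_{\rm loc}$ we have $\widehat s_i^2-1=\tau_i+\beta+\beta/\tau_i\ge\tau_i$ and $\widehat s_i^2\asymp\tau_i+1$ (using $\tau_i\ge s_\star^2/16\gtrsim\sqrt\beta$ and $\beta\le1$ in the wide case). Combined with $\ell_i^2=\tau_i(\tau_i+\beta)/(\tau_i+1)$, this gives for $\bm W_L\bm R_{U,\mathrm{coef}}\bm W_L$ the elementary bounds
\[
\opn{\bm W_L\bm D_\Sigma^{-r-1}\bm\Sigma}\lesssim \sqrt{\tfrac{\tau+\beta}{\tau}}\,\Bigl(\tfrac{16}{s_\star^2}\Bigr)^{r},\qquad \opn{\bm W_L\bm D_\Sigma^{-r-1}}\lesssim \tau^{-1/2}\Bigl(\tfrac{16}{s_\star^2}\Bigr)^r,
\]
evaluated entrywise per index $i$. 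The products with the coefficient bound $(s_\star^2/128)^{r+s}$ then decay like $8^{-(r+s)}$, and the double series converges. The surviving $r=s=0$ contributions carry a factor of order $(\tau+\beta)/\tau\lesssim1$ for the $\bm Z_{21}$-terms. For the $\bm Z_{12}$-terms the factor is $\lesssim\tau^{-1}\opn{\bm Z_{22}}^2$, which is bounded since $\tau\gtrsim\nu\sqrt\beta$. Overall this gives $b^{-1}\sqrt{ap}=\beta\sqrt{p/a}$. The $\bm R_{V,\mathrm{coef}}$ terms are treated identically with $\bm W_{\rm bal}=\diag(\beta^{1/4}\sqrt{\tau_i})$. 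The factors $\bm E_q=\bm D_\Sigma^{-q-1}\bm\Sigma^{-1}$ and $\bm\Sigma^{-1}$ supply $\tau^{-1/2}$-decay against the $\sqrt\tau$ in $\bm W_{\rm bal}$, while $\beta^{1/2}$ from $\bm W_{\rm bal}^2$ must compensate the lone term $\bm\Sigma^{-1}\bm U_\parallel^\trans(\bm Z_{12}\bm Z_{12}^\trans-\bm I_p)\bm U_\parallel\bm\Sigma^{-1}$. That term has size $\sqrt{p/b}/\widehat s^2$, and after weighting it becomes $\beta^{1/2}\tau\sqrt{p/b}/(\tau+1)\le\beta\sqrt{p/a}$ exactly.

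The main obstacle I expect is the $\bm R_{V,\mathrm{coef}}$ bookkeeping, and specifically the balancing between the Frobenius and operator parts of \cref{lem:gaussian-coefficient-moments} for terms like $\bm Z_{12}\bm G_q\bm G_u\bm Z_{12}^\trans$. There $\bm G_q=\bm Z_{22}^\trans\bm H_{\rm bulk}^q\bm Z_{22}$ is $b\times b$ of rank $\le a$. We must use $\frn{\bm G_q\bm G_u}\le\sqrt a\,\opn{\bm G_q}\opn{\bm G_u}$ (rank $a$, not $b$) to get $\sqrt{ap}/b$ rather than $\sqrt{bp}/b$. The same care is needed for the $m=0$ term $\bm Z_{12}\bm Z_{22}^\trans\bm Z_{22}\bm Z_{12}^\trans$ in $\bm C_{U,22}$. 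For the unweighted-rank part $(p+q)\opn{\bm A}/b$ we use $p\le\sqrt{ap}$. Finally, the $L^{32}$ norm of the full series is obtained by Minkowski's inequality over $(r,s)$, with $\ind_{\cE_{\rm loc}}$ making all $\bm Z_{22}$-dependent norms deterministically bounded.
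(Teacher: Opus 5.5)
Your route is the paper's. You condition on $\bm Z_{22}$ and apply \cref{lem:gaussian-coefficient-moments} to each centered coefficient; its middle matrix has rank at most $a$, so $\frn{\bm A}\le\sqrt a\,\opn{\bm A}$ and $p\le\sqrt{ap}$. You bound the diagonal sandwiches by submultiplicativity, sum the Neumann series through $\opn{\bm H_{\rm bulk}}\opn{\bm D_\Sigma^{-1}}\le 1/8$, and absorb the lone base--base term using the factor $\beta^{1/2}$ in $\bm W_{\rm bal}^2$.

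You bound $\opn{\bm D_\Sigma^{-1}}\le 16/s_\star^2$ and $\opn{\bm H_{\rm bulk}}\le s_\star^2/128$ separately, instead of the paper's replacement of $\ind_{\cE_{\rm loc}}\opn{\bm D_\Sigma^{-1}}^m$ by $8^{-m}\opn{\bm H_{\rm bulk}}^{-m}$. This is fine, provided you first dominate $\ind_{\cE_{\rm loc}}$ by the $\bm Z_{22}$-measurable indicator of $\{\opn{\bm H_{\rm bulk}}\le s_\star^2/128\}$. The reason is that $\cE_{\rm loc}$ also constrains the $\tau_i$, which depend on $\bm Z_{12}$ and $\bm Z_{21}$, so you cannot condition on $\bm Z_{22}$ with $\ind_{\cE_{\rm loc}}$ itself in place.

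Two of your bookkeeping claims are false as stated.

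First, $\cE_{\rm loc}$ does not give $\opn{\bm Z_{22}}\lesssim 1$. It gives only $\opn{\bm Z_{22}}^2\le 1+s_\star^2/128$, and no upper bound on $s_\star$ is assumed. This one is harmless once repaired. You can track $1+s_\star^2/128$ against the decay of the exterior factors in $\tau_i\ge s_\star^2/16$; for example, $\opn{\bm W_{\rm bal}\bm D_\Sigma^{-1}\bm\Sigma^{-1}}^2\le \beta^{1/2}/\{\tau(1+\tau)\}$. Alternatively, as the paper does, keep the powers of $\opn{\bm Z_{22}}$ inside the conditional bound and use $\bigl\|(1+\opn{\bm Z_{22}})^4\bigr\|_{L^{32}}\le C$ for $\beta\le 1$.

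Second, $\tau_i^{-1}$ is not bounded. On $\cE_{\rm loc}$ one only has $\tau_i\ge s_\star^2/16$, and this can be of order $\nu\sqrt\beta\ll 1$ when $\beta$ is small. Your estimate $\opn{\bm W_L\bm D_\Sigma^{-1}}\lesssim\tau^{-1/2}$ is therefore too lossy. It breaks your bound for the $\bm Z_{12}$-containing terms of $\bm R_{U,\mathrm{coef}}$ (the $\bm C_{U,22}$ family and the cross terms $\bm C_{U,12},\bm C_{U,21}$) when $s_\star\lesssim 1$: the factor $\tau^{-1}\opn{\bm Z_{22}}^2$ you invoke can be of size $1/(\nu\sqrt\beta)$.

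The repair is the sharp estimate
$\ell_i^2/(\widehat s_i^2-1)^2\le(\tau_i+\beta)/\{\tau_i(1+\tau_i)\}\le 2/(1+\tau_i)$.
It holds because $\widehat s_i^2-1\ge\tau_i$ and $\tau_i>\sqrt\beta\ge\beta$ on $\cE_{\rm loc}$ in the wide case. This factor is bounded, and $(1+s_\star^2/128)/(1+\tau_i)\le 1$ also takes care of the $\opn{\bm Z_{22}}^2$ you pick up on $\cE_{\rm loc}$. With these two corrections your argument delivers $C\beta\sqrt{p/a}$.
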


\paragraph{Step 2: Bound the bulk replacement error}

We next control
$\bm R_{U,\mathrm{bulk}}$ and $\bm R_{V,\mathrm{bulk}}$,
which arise from replacing the empirical bulk transform
$\widehat m_{\rm bulk}$ by its deterministic counterpart $m_\beta$.

\begin{lemma}
\label{lem:one-view-resolvent}
Under the condition that 
$s_\star^2\wedge s_\star^4\ge\nu^2\beta$,
\begin{equation}
\label{eq:one-view-resolvent-bound}
  \sup_{z\in\mathcal C_\star}
  \left\|
    \ind_{\cE_{\rm loc}}
    \bigl(\widehat m_{\rm bulk}(z)-m_\beta(z)\bigr)
  \right\|_{L^{32}}
  \le
  \frac{C}{\sqrt a}.
\end{equation}
\end{lemma}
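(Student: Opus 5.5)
The plan is to control the fluctuation of the normalized trace of a resolvent of the centered Wishart-type matrix $\bm H_{\rm bulk}=\bm Z_{22}\bm Z_{22}^\trans-\bm I_a$, uniformly over the contour $\mathcal C_\star$. On $\mathcal C_\star$ we have $|z|=32/s_\star^2$, and on $\cE_{\rm loc}$ we have $\opn{\bm H_{\rm bulk}}\le s_\star^2/128$. Hence $\opn{z\bm H_{\rm bulk}}\le1/4$, and $(\bm I_a-z\bm H_{\rm bulk})^{-1}$ is well defined with norm at most $4/3$. The error $\widehat m_{\rm bulk}(z)-m_\beta(z)$ splits as
\[
  \bigl\{\widehat m_{\rm bulk}(z)-\E\widetilde m(z)\bigr\}+\bigl\{\E\widetilde m(z)-m_\beta(z)\bigr\},
\]
where $\widetilde m$ is a regularized version of $\widehat m_{\rm bulk}$ that is defined everywhere. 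The first piece is a fluctuation and the second is a bias. Both are to be bounded by $C/\sqrt a$ uniformly in $z\in\mathcal C_\star$. The indicator $\ind_{\cE_{\rm loc}}$ lets us replace $\widehat m_{\rm bulk}$ by $\widetilde m$, since the two agree on $\cE_{\rm loc}$.

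\textbf{Fluctuation step.} Define $\widetilde m(z)=a^{-1}\tr f(\bm H_{\rm bulk})$. Here $f$ is the scalar function $x\mapsto(1-zx)^{-1}$, composed with a smooth cutoff that clamps the spectrum of $\bm H_{\rm bulk}$ to $[-s_\star^2/64,s_\star^2/64]$. The map $\bm Z_{22}\mapsto\widetilde m$ is then Lipschitz in Frobenius norm. To see this, the derivative of $a^{-1}\tr F(\bm Z\bm Z^\trans)$ in direction $\bm\Xi$ is $2a^{-1}\tr(F'(\bm Z\bm Z^\trans)\bm Z\bm\Xi^\trans)$. Its size is at most $2a^{-1}\opn{F'}\,\opn{\bm Z}\sqrt a\,\frn{\bm\Xi}$. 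After clamping we also have $\opn{\bm Z}\le C$ and $\opn{F'}\lesssim|z|$. Since the entries of $\bm Z_{22}$ are $N(0,b^{-1})$, write $\bm Z_{22}=b^{-1/2}\bm g$ with $\bm g$ standard Gaussian. The Lipschitz constant in $\bm g$ is then
\[
  L\lesssim\frac{|z|}{\sqrt{ab}}.
\]
Applying \cref{lem:gaussian-concentration} to the real and imaginary parts with $q=32$ gives $\|\widetilde m-\E\widetilde m\|_{L^{32}}\lesssim|z|/\sqrt{ab}$. Using $|z|\lesssim 1/s_\star^2$ and the signal condition $s_\star^2\gtrsim\nu\sqrt\beta$, this becomes
\[
  \|\widetilde m-\E\widetilde m\|_{L^{32}}\lesssim\frac{1}{\sqrt a}\cdot\frac{1}{\sqrt b\,s_\star^2}\lesssim\frac{1}{\sqrt a}.
\]
If the clamping is awkward for complex $z$, I would instead Taylor-expand in powers of $z$ and use the geometric decay $|z|^m\opn{\bm H_{\rm bulk}}^m\le4^{-m}$.

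\textbf{Bias step, expected to be the main obstacle.} We need $|\E\widetilde m(z)-m_\beta(z)|\lesssim a^{-1/2}$. My first attempt is to expand both sides as power series in $z$,
\[
  \widehat m_{\rm bulk}=\sum_m z^m\,\frac{\tr\bm H_{\rm bulk}^m}{a},\qquad m_\beta=\sum_m z^m\mu_m,
\]
where $\mu_m$ are the moments of the centered Marchenko--Pastur law with ratio $\beta$, as identified by \cref{eq:one-view-MP-equation}. The claim to aim for is then
\[
  \Bigl|\E\frac{\tr\bm H_{\rm bulk}^m}{a}-\mu_m\Bigr|\le\frac{C^m(\sqrt\beta)^m}{a}\wedge\ldots,
\]
with moments bounded by $(C\sqrt\beta)^m$. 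The series converges because $|z|\sqrt\beta\lesssim\sqrt\beta/s_\star^2\le1/\nu$.

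Genus-expansion bounds that are uniform in $m$ are delicate, however. So I would prefer the resolvent route. Use the Schur-complement / Gaussian integration-by-parts derivation of the self-consistent equation \cref{eq:one-view-MP-equation} for $\E\widehat m$. This yields an approximate equation $\E\widetilde m=1+\beta z\,\E\widetilde m\{(1+z)\E\widetilde m-1\}+\mathcal O(\mathrm{Var}+a^{-1})$, in which the variance term is already $\lesssim a^{-1}$ by the fluctuation step. Then use stability of the fixed-point equation near $z=0$: for small $|z|$ its linearization has derivative bounded away from $1$, because $\beta|z|\lesssim\beta/s_\star^2\ll1$. This transfers the $O(1/a)$ defect to $|\E\widetilde m-m_\beta|\lesssim1/a\le1/\sqrt a$.

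Finally, the contributions from $\cE_{\rm loc}^c$ introduced by the truncation are bounded via \cref{lem:one-view-localization}, whose exponentially small probability is negligible. Taking the supremum over $z$ is harmless because every bound above is uniform on $\mathcal C_\star$.
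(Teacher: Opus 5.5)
Your proposal is correct and follows the paper's own route: a regularized trace that agrees with $\widehat m_{\rm bulk}$ on $\cE_{\rm loc}$, Gaussian Lipschitz concentration for the fluctuation, Gaussian integration by parts giving the Marchenko--Pastur equation for the mean up to an $O(1/a)$ residual, and stability of that equation. The paper regularizes differently: it multiplies the exact resolvent trace by the operator-norm cutoff $\chi(\bm D)=\varphi(64\|\bm H_{\bm D}\|_{\mathrm{op}}/s_\star^2)$, which keeps the resolvent identities exact for the integration by parts, so at that step you would switch from your spectral clamp to such a cutoff, at an $O(e^{-ca})$ cost; and there is one harmless slip in your fluctuation step, since clamping the spectrum does not give $\|\bm Z\|_{\mathrm{op}}\le C$ but only $\|F'(\bm Z\bm Z^\trans)\bm Z\|_{\mathrm{op}}\lesssim|z|\sqrt{1+s_\star^2}$, so $L\lesssim\sqrt{1+s_\star^2}/(s_\star^2\sqrt{ab})$, which is still $\lesssim 1/\sqrt a$ because $s_\star^2 b\ge\nu\sqrt{ab}$ and $s_\star^4 b\ge\nu^2 a$.
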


We use this estimate to control the two bulk remainders in the same
balanced norms as in the preceding subsection.

\begin{lemma}[Bulk replacement remainder]
\label{lem:one-view-bulk-error}
Under the condition that 
$s_\star^2\wedge s_\star^4\ge\nu^2\beta$,
\begin{equation}
\label{eq:one-view-bulk-error-master}
  \Lqn{
    \ind_{\cE_{\rm loc}}
    \bm W_L\bm R_{U,\mathrm{bulk}}\bm W_L
  }{32}
  +
  \Lqn{
    \ind_{\cE_{\rm loc}}
    \bm W_{\rm bal}\bm R_{V,\mathrm{bulk}}\bm W_{\rm bal}
  }{32}
  \le
  C\frac{\sqrt a}{b}.
\end{equation}
\end{lemma}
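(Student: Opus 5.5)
We bound the two bulk remainders by subtracting the contour representations \cref{eq:one-view-empirical-contours} from those in \cref{prop:population-operator}. For instance,
\[
  (\widehat{\mathcal A}_U-\mathcal A_U)(\bm M)
  =
  \frac{\beta}{2\pi\mathrm i}
  \oint_{\mathcal C_\star}
  \frac{\widehat m_{\rm bulk}(z)-m_\beta(z)}{z}\,
  \bm\Sigma\bm K_0(z)\bm M\bm K_0(z)\bm\Sigma\,dz,
\]
and analogous formulas hold for $\widehat{\mathcal B}-\mathcal B$ and $\widehat{\mathcal A}_V-\mathcal A_V$. The Marchenko--Pastur identity term $\bm\Sigma^{-1}\bm M\bm\Sigma^{-1}$ cancels exactly in the last difference. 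The arguments $\bm M$ are $\bm V_\parallel^\trans\bm V_\parallel$ and $\bm U_\parallel^\trans\bm U_\parallel$, both of operator norm at most one. Everything therefore reduces to a deterministic bound on the conjugated kernels, uniform over $z\in\mathcal C_\star$ on $\cE_{\rm loc}$, combined with the scalar estimate of \cref{lem:one-view-resolvent}.

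\textbf{Step 1 (kernel bounds).} On $\cE_{\rm loc}$ we have $|z|=32/s_\star^2$ and $\opn{z^{-1}\bm D_\Sigma^{-1}}\le 1/2$. Since all the matrices involved are diagonal, $\bm K_0(z)$ has diagonal entries $(\widehat s_i^2-1)^{-1}\{1-z^{-1}(\widehat s_i^2-1)^{-1}\}^{-1}$, each of modulus at most $2/(\widehat s_i^2-1)$. Likewise $|1+z^{-1}|\le 1+s_\star^2/32$, so $\bm K_1(z)$ is bounded entrywise by $C(1+s_\star^2)/(\widehat s_i^2-1)$. Using \cref{eq:one-view-upper-branch-relation}, namely $\widehat s_i^2-1=\tau_i+\beta+\beta/\tau_i\asymp\tau_i$ on $\cE_{\rm loc}$ (because $\tau_i\gtrsim s_\star^2\ge\nu\sqrt\beta$), together with $\widehat s_i^2\asymp(\tau_i+1)(\tau_i+\beta)/\tau_i$, I compute the diagonal multipliers that act on $\bm M$ entrywise:
\begin{itemize}
\item for $\bm W_L\bm\Sigma\bm K_0$, the multiplier is of order $\ell_i\widehat s_i/\tau_i\asymp 1$;
\item for $\bm W_L\bm K_1$ and $\bm W_L\bm K_0$, it is of order $\ell_i(1+s_\star^2)/\tau_i$;
\item for $\bm W_{\rm bal}\bm\Sigma^{-1}\bm K_1$, it is of order $\beta^{1/4}\sqrt{\tau_i}(1+s_\star^2)/(\widehat s_i\tau_i)$.
\end{itemize}
Each conjugated product of the form $\bm D_1\bm M\bm D_2$ with diagonal $\bm D_i$ has operator norm at most $\opn{\bm D_1}\opn{\bm D_2}\opn{\bm M}$. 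The aim is then to show that, after the prefactor $\beta/|z|$ times the contour length, which is $O(\beta)$, each integrand contributes at most $O(\beta)$ in operator norm. One factor of $s_\star^{2}/32$ from $|z|^{-1}$ cancels against the contour length $2\pi\cdot 32/s_\star^2$. The dangerous growth of $\bm K_1$ through $(1+s_\star^2)$ is absorbed because $\tau_i\gtrsim s_\star^2$.

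\textbf{Step 2 (moments).} Because the kernel bounds are deterministic on $\cE_{\rm loc}$, Minkowski's integral inequality gives
\[
  \Lqn{\ind_{\cE_{\rm loc}}\bm W_L\bm R_{U,\mathrm{bulk}}\bm W_L}{32}
  \lesssim
  \beta\sup_{z\in\mathcal C_\star}
  \bigl\|\ind_{\cE_{\rm loc}}(\widehat m_{\rm bulk}(z)-m_\beta(z))\bigr\|_{L^{32}}
  \lesssim
  \frac{\beta}{\sqrt a}
  =
  \frac{\sqrt a}{b},
\]
where the middle step uses \cref{lem:one-view-resolvent} and the last uses $\beta=a/b$. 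One subtlety is that $\bm M$ and the kernels are random and correlated with $\widehat m_{\rm bulk}$. This causes no problem, since we only use the pointwise operator-norm bound times $|\widehat m_{\rm bulk}-m_\beta|$ and then take the $L^{32}$ norm of that scalar. The same argument applied with $\bm W_{\rm bal}$ handles $\bm R_{V,\mathrm{bulk}}$.

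\textbf{Main obstacle.} The hard part is the kernel bookkeeping in Step 1, especially for $\widehat{\mathcal A}_V$, $\widehat{\mathcal B}$, and the quadratic $\bm K_1\bm M\bm K_1$ term under $\bm W_{\rm bal}$. There the factor $(1+z^{-1})^2$ may be as large as $s_\star^4$ when $s_\star\ge1$, and it must be compensated by $\widehat s_i^{-2}\tau_i^{-1}\beta^{1/2}$. I would first verify these conjugations separately in the two regimes $s_\star\le1$ and $s_\star\ge1$. If a crude operator-norm factorization loses a power of $\tau$, I would fall back on the exact Hadamard-multiplier form: all kernels are diagonal, so each difference is the Schur product of $\bm M$ with an explicit matrix $h_{ij}$. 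I would then bound this Schur product by $\sup_{i,j}|h_{ij}|$ times the $\ell_\infty$ size of the factors in a rank-one factorization $h_{ij}=f_if_j$. Such a factorization is available because the integrand separates as $\phi_i(z)\phi_j(z)$ for each fixed $z$.
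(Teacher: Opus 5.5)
Your proposal is correct and follows the paper's proof essentially step for step. You subtract the contour representations, bound the diagonal kernels uniformly over $\mathcal C_\star$ on $\cE_{\rm loc}$ (the paper records these as $\opn{\bm W_L\bm\Sigma\bm K_0(z)}\le C$, $|1+z^{-1}|\opn{\bm W_L\bm K_0(z)}^2\le C$, and the analogous $\bm W_{\rm bal}$ bounds), and then apply Minkowski's integral inequality together with \cref{lem:one-view-resolvent} to obtain $C\beta/\sqrt a=C\sqrt a/b$.

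Two points to watch when you carry out Step 1. First, attach the factor $1+s_\star^2$ only to $\bm K_1$ and not to $\bm K_0$, as your bullet list suggests; otherwise the $\widehat{\mathcal B}-\mathcal B$ kernel picks up an extra factor $1+s_\star^2$. Second, the $\bm W_{\rm bal}$ kernels come out in terms of $\sqrt\beta\,(1+s_\star^{-2})$, so bounding them by a constant uses $\beta\le1$ from the wide-case setting in which this lemma is applied.
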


\paragraph{Step 3: Completing the Proof.}

We can now return to
$\bm\varepsilon_U$ and $\bm\varepsilon_V$.
Since
\(\bm\varepsilon_U = \bm R_{U,\mathrm{coef}} + \bm R_{U,\mathrm{bulk}}, \qquad \bm\varepsilon_V = \bm R_{V,\mathrm{coef}} + \bm R_{V,\mathrm{bulk}},\)
and $\sqrt a/b\le\sqrt{ap}/b$,
\cref{eq:one-view-coefficient-error-master,eq:one-view-bulk-error-master}
yield the balanced estimate
\begin{equation}
\label{eq:one-view-balanced-master-error}
  \Lqn{
    \ind_{\cE_{\rm loc}}
    \bm W_L\bm\varepsilon_U\bm W_L
  }{32}
  +
  \Lqn{
    \ind_{\cE_{\rm loc}}
    \bm W_{\rm bal}\bm\varepsilon_V\bm W_{\rm bal}
  }{32}
  \le
  C\frac{\sqrt{ap}}b.
\end{equation}

It remains only to translate this common balanced estimate into the two
forms stated in \cref{lem:one-view-gram-approximation}.

Under the left signal condition, since
$(\bm W_{\rm bal})_{ii}^2 = \sqrt\beta\,\tau_i \ge \beta,$
we have
$\beta\|\bm\varepsilon_V\|_{\mathrm{op}} \le \| \bm W_{\rm bal}\bm\varepsilon_V\bm W_{\rm bal} \|_{\mathrm{op}}.$
Together with \cref{eq:one-view-balanced-master-error}, this proves
\cref{eq:one-view-left-error-bound}.

Under the stronger right signal condition,
$\tau_i\ge s_\star^2/16\ge\nu^2/16$.  The definitions of the three weights
then give
\(\|\bm W_L^{-1}\|_{\mathrm{op}}\le C\) and
\(\|\bm W_R\bm W_{\rm bal}^{-1}\|_{\mathrm{op}}^2
\le C\beta^{-1/2}.\)
Consequently, we have
\[
\begin{aligned}
  &
  \Lqn{
    \ind_{\cE_{\rm loc}}
    \bm W_R\bm\varepsilon_V\bm W_R
  }{32}
  +
  \Lqn{
    \ind_{\cE_{\rm loc}}\bm\varepsilon_U
  }{32}\le
  C\beta^{-1/2}
  \frac{\sqrt{ap}}b
  \le
  C\sqrt{\frac pb},
\end{aligned}
\]
where the final step uses $a=\beta b$.
This proves \cref{eq:one-view-right-error-bound}.

\subsection{Proof of \cref{lem:one-view-radial}}
\label{app:proof-one-view-radial}

We first control $\|\bm U_\perp\bm W_L\|_{\mathrm{op}}$.
Partition the coordinates into
\begin{subequations}
\label{eq:one-view-radial-index-sets}
\begin{align}
  \cI_0
  &\coloneqq
  \{i:\tau_i=\sqrt\beta\},
  \\
  \cI_{\rm h}
  &\coloneqq
  \left\{
    i:\tau_i>\sqrt\beta,\;
    \widehat s_i^2-1
    \ge
    2\opn{\bm Z_{22}\bm Z_{22}^\trans-\bm I_a}
  \right\},
  \\
  \cI_{\rm \ell}
  &\coloneqq
  \left\{
    i:\tau_i>\sqrt\beta,\;
    \widehat s_i^2-1
    <
    2\opn{\bm Z_{22}\bm Z_{22}^\trans-\bm I_a}
  \right\}.
\end{align}
\end{subequations}
For $\cI\subseteq[p]$, use the corresponding subscript to denote the
associated column submatrices, and write
$\bm\Sigma_{\cI}=\diag(\widehat s_i:i\in\cI)$ and
$\bm W_{L,\cI}=\diag(\ell_i:i\in\cI)$.

\paragraph{Frozen coordinates.}
On $\cI_0$, $\ell_i=\sqrt\beta$, and hence
$\|\bm U_{\perp,\cI_0}\bm W_{L,\cI_0}\|_{\mathrm{op}}
\le\sqrt\beta$.

\paragraph{High coordinates.}
Restrict \cref{eq:one-view-eliminated} to $\cI_{\rm h}$.
By definition of this set,
$\|\bm Z_{22}\bm Z_{22}^\trans-\bm I_a\|_{\mathrm{op}}
\|(\bm\Sigma_{\cI_{\rm h}}^2-\bm I)^{-1}\|_{\mathrm{op}}
\le1/2$.
Moreover, \cref{eq:one-view-upper-branch-relation} and the definition of
$\ell_i$ give
$\widehat s_i\ell_i/(\widehat s_i^2-1)
=(\tau_i+\beta)/(\tau_i+\beta+\beta/\tau_i)\le1$, and therefore also
$\ell_i/(\widehat s_i^2-1)\le1$.
Multiplying the restricted equation by
$(\bm\Sigma_{\cI_{\rm h}}^2-\bm I)^{-1}\bm W_{L,\cI_{\rm h}}$
and taking operator norms gives
\[
  \opn{\bm U_{\perp,\cI_{\rm h}}\bm W_{L,\cI_{\rm h}}}
  \le
  2\left(
    \opn{\bm Z_{21}}
    +
    \opn{\bm Z_{22}\bm Z_{12}^\trans}
  \right).
\]

\paragraph{Low coordinates.}
Set
$\bm Z_{\rm low}\coloneqq[\bm Z_{21},\bm Z_{22}]$.
Restricting the first equation in
\cref{eq:one-view-lower-equations} to $\cI_{\rm \ell}$ and using the
orthonormality of the stacked right singular vectors gives
$\|\bm U_{\perp,\cI_{\rm \ell}}\bm W_{L,\cI_{\rm \ell}}\|_{\mathrm{op}}
\le
\|\bm Z_{\rm low}\|_{\mathrm{op}}
\max_{i\in\cI_{\rm \ell}}\ell_i/\widehat s_i$.
The same identities give
$\ell_i/\widehat s_i=\tau_i/(\tau_i+1)\le\tau_i\le \widehat s_i^2-1$; therefore, the
definition of $\cI_{\rm \ell}$ yields
\[
  \opn{\bm U_{\perp,\cI_{\rm \ell}}\bm W_{L,\cI_{\rm \ell}}}
  \le
  2
  \opn{\bm Z_{22}\bm Z_{22}^\trans-\bm I_a}
  \opn{\bm Z_{\rm low}}.
\]

\paragraph{Combining the three regimes.}
The preceding bounds give
\begin{equation}
\label{eq:one-view-left-radial-envelope}
  \opn{\bm U_\perp\bm W_L}
  \lesssim
  \sqrt\beta
  +
  \opn{\bm Z_{21}}
  +
  \opn{\bm Z_{22}\bm Z_{12}^\trans}
  +
  \opn{\bm Z_{22}\bm Z_{22}^\trans-\bm I_a}
  \opn{\bm Z_{\rm low}}.
\end{equation}
By \cref{eq:gaussian-matrix-op-moment},
$\|\bm Z_{21}\|_{L^{128}}\lesssim\sqrt\beta$ and
$\|\bm Z_{\rm low}\|_{L^{256}}\lesssim1$.
Applying \cref{eq:gaussian-wishart-moment} to the stacked Gaussian matrix
formed from $\bm Z_{12}$ and $\bm Z_{22}$ also gives
$\|\bm Z_{22}\bm Z_{12}^\trans\|_{L^{128}}\lesssim\sqrt\beta$ and
$\|\bm Z_{22}\bm Z_{22}^\trans-\bm I_a\|_{L^{256}}
\lesssim\sqrt\beta$.
Thus \cref{eq:one-view-left-radial-envelope} and H\"older's inequality
yield
$\|\bm U_\perp\bm W_L\|_{L^{128}}\lesssim\sqrt\beta$.

\paragraph{Right leakage.}
Let $\cI_{\rm up}\coloneqq\{i:\tau_i>\sqrt\beta\}$.
On $\cI_0$, $r_i=1$, so
$\|\bm V_{\perp,\cI_0}\bm W_{R,\cI_0}\|_{\mathrm{op}}\le1$.
On $\cI_{\rm up}$, restrict the second equation in
\cref{eq:one-view-lower-equations}.  By
\cref{eq:one-view-upper-branch-relation} and the definition of $r_i$,
we have $r_i/\widehat s_i=\tau_i/(\tau_i+\beta)\le1$, and hence
$\|\bm V_{\perp,\cI_{\rm up}}\bm W_{R,\cI_{\rm up}}\|_{\mathrm{op}}
\le
\|\bm Z_{12}\|_{\mathrm{op}}+\|\bm Z_{22}\|_{\mathrm{op}}$.
Consequently,
$\|\bm V_\perp\bm W_R\|_{\mathrm{op}}
\le
1+\|\bm Z_{12}\|_{\mathrm{op}}+\|\bm Z_{22}\|_{\mathrm{op}}$,
and \cref{eq:gaussian-matrix-op-moment} gives
$\|\bm V_\perp\bm W_R\|_{L^{128}}\lesssim1$.

\subsection{Proofs of the Gram-approximation inputs}
\subsubsection{Proof of~\cref{prop:population-operator}}
Since $\bm\Sigma$, $\bm{K}_0(z)$, and $\bm{K}_1(z)$ are diagonal, it
suffices to evaluate the contour formulas entrywise.

\paragraph{Pole locations and scalar identities.}
On $\cE_{\rm loc}$, we have $\tau_i>\sqrt\beta$, so the upper-branch
relation gives
$\widehat s_i^2-1=\tau_i+\beta+\beta/\tau_i$.  Set
$\zeta_i\coloneqq(\widehat s_i^2-1)^{-1}$.  Then
$\zeta_i\le16/s_\star^2$, so $\zeta_i$ lies inside
$\mathcal C_\star$.  The standing signal condition also ensures that
$m_\beta$ is analytic on and inside this contour.

Substituting $z=\zeta_i$ into \cref{eq:one-view-MP-equation} shows that
the solution analytic at zero satisfies
$m_\beta(\zeta_i)=(\widehat s_i^2-1)/(\tau_i+\beta)$.  Thus, writing
$k_{0,i}(z)$ and $k_{1,i}(z)$ for the $i$th diagonal entries of
$\bm{K}_0(z)$ and $\bm{K}_1(z)$, respectively,
\begin{equation}
\label{eq:one-view-population-contour-scalar-identities}
\begin{gathered}
  \zeta_i m_\beta(\zeta_i)
  =
  \frac{1}{\tau_i+\beta},
  \qquad
  (1+\zeta_i)m_\beta(\zeta_i)
  =
  \frac{\tau_i+1}{\tau_i},
  \\
  k_{0,i}(z)
  =
  \frac{\zeta_i z}{z-\zeta_i},
  \qquad
  k_{1,i}(z)
  =
  \frac{\zeta_i(1+z)}{z-\zeta_i}.
\end{gathered}
\end{equation}

\paragraph{The operators $\mathcal A_U$ and $\mathcal B$.}
Fix $i,j\in[p]$ and first suppose that $\zeta_i\ne\zeta_j$.  The two
integrands below have poles only at $\zeta_i$ and $\zeta_j$.  Using
\cref{eq:one-view-population-contour-scalar-identities}, the residue theorem
gives
\begin{align*}
  \frac{1}{2\pi\mathrm i}
  \oint_{\mathcal C_\star}
  \frac{m_\beta(z)}{z}\,
  k_{0,i}(z)k_{0,j}(z)\,dz
  &=
  \zeta_i\zeta_j
  \frac{
    \zeta_i m_\beta(\zeta_i)
    -
    \zeta_j m_\beta(\zeta_j)
  }{
    \zeta_i-\zeta_j
  } \\
  &=
  \frac{
    \tau_i\tau_j
  }{
    (\tau_i+\beta)(\tau_j+\beta)
    (\tau_i\tau_j-\beta)
  },
  \\
  \frac{1}{2\pi\mathrm i}
  \oint_{\mathcal C_\star}
  \frac{m_\beta(z)}{z}\,
  k_{1,i}(z)k_{0,j}(z)\,dz
  &=
  \zeta_i\zeta_j
  \frac{
    (1+\zeta_i)m_\beta(\zeta_i)
    -
    (1+\zeta_j)m_\beta(\zeta_j)
  }{
    \zeta_i-\zeta_j
  } \\
  &=
  \frac{1}{\tau_i\tau_j-\beta}.
\end{align*}
The same identities hold when $\zeta_i=\zeta_j$ by continuity, or
equivalently by evaluating the resulting double pole.

The $(i,j)$ entry of the right-hand side of
\cref{eq:one-view-population-AU-contour} is therefore
$\beta \widehat s_i\widehat s_j\tau_i\tau_j M_{ij}/
\{(\tau_i+\beta)(\tau_j+\beta)(\tau_i\tau_j-\beta)\}$.
Since the weight definitions give
$r_i/\widehat s_i=\tau_i/(\tau_i+\beta)$, this equals
$\beta r_ir_jM_{ij}/(\tau_i\tau_j-\beta)$, which is
$[\mathcal A_U(\bm{M})]_{ij}$.  The second residue identity immediately
gives
$[\mathcal B(\bm{M})]_{ij}
=\beta M_{ij}/(\tau_i\tau_j-\beta)$.
This proves
\cref{eq:one-view-population-AU-contour,eq:one-view-population-B-contour}.

\paragraph{The operator $\mathcal A_V$.}
The remaining integrand has poles at
$0,\zeta_i,\zeta_j$.  At zero,
$k_{1,i}(0)=k_{1,j}(0)=-1$, so the residue of
\[
  \frac{m_\beta(z)}{z}
  \bigl\{
    k_{1,i}(z)
    +
    k_{1,j}(z)
    +
    k_{1,i}(z)k_{1,j}(z)
  \bigr\}
\]
is $-1$.  Summing this residue with those at
$\zeta_i$ and $\zeta_j$, and again using
\cref{eq:one-view-population-contour-scalar-identities}, gives
\[
  \frac{1}{2\pi\mathrm i}
  \oint_{\mathcal C_\star}
  \frac{m_\beta(z)}{z}
  \bigl\{
    k_{1,i}(z)
    +
    k_{1,j}(z)
    +
    k_{1,i}(z)k_{1,j}(z)
  \bigr\}\,dz
  =
  \frac{
    \tau_i+\tau_j+\beta+1
  }{
    \tau_i\tau_j-\beta
  }.
\]
As above, the identity extends to $\zeta_i=\zeta_j$ by continuity.
Consequently, the coefficient multiplying $M_{ij}$ in the right-hand side
of \cref{eq:one-view-population-AV-contour} is
\[
  \frac{1}{\widehat s_i\widehat s_j}
  \left\{
    1+
    \beta
    \frac{
      \tau_i+\tau_j+\beta+1
    }{
      \tau_i\tau_j-\beta
    }
  \right\}
  =
  \frac{
    (\tau_i+\beta)(\tau_j+\beta)
  }{
    \widehat s_i\widehat s_j(\tau_i\tau_j-\beta)
  }
  =
  \frac{
    \ell_i\ell_j
  }{
    \tau_i\tau_j-\beta
  },
\]
where the last equality uses the direct identity
$\widehat s_i\ell_i=\tau_i+\beta$.  This is precisely
$[\mathcal A_V(\bm{M})]_{ij}$ and proves
\cref{eq:one-view-population-AV-contour}.

\subsubsection{Proof of~\cref{lem:one-view-coefficient-error}}

\begin{proof}

We first bound the $11$ contribution to
$\bm{R}_{U,\mathrm{coef}}$, namely the term containing
$\bm{C}_{U,11}^{(r+s)}$. 
The remaining terms can be controlled similarly. 
Minkowski's inequality and submultiplicativity
give
\begin{align*}
&\Lqn{
  \ind_{\cE_{\rm loc}}\bm{W}_L
  \sum_{r,s\ge0}
  \bm{D}_\Sigma^{-r-1}\bm\Sigma\bm{V}_\parallel^\trans
  \bm{C}_{U,11}^{(r+s)}
  \bm{V}_\parallel\bm\Sigma\bm{D}_\Sigma^{-s-1}
  \bm{W}_L
}{32}\\
&\quad\le
\sum_{r,s\ge0}
\left\|
  \ind_{\cE_{\rm loc}}
  \opn{\bm{W}_L\bm\Sigma\bm{D}_\Sigma^{-1}}^2
  \opn{\bm{D}_\Sigma^{-1}}^{r+s}
  \opn{\bm{C}_{U,11}^{(r+s)}}
\right\|_{L^{32}}\\
&\quad\le
C\sum_{r,s\ge0}
\left\|
  \ind_{\cE_{\rm loc}}
  \opn{\bm{D}_\Sigma^{-1}}^{r+s}
  \opn{\bm{C}_{U,11}^{(r+s)}}
\right\|_{L^{32}}.
\end{align*}
Here we used $\|\bm V_\parallel\|_{\mathrm{op}}\le1$ in the first
inequality.  For the second, \cref{eq:one-view-upper-branch-relation} and
the definition of $\ell_i$ give
$\widehat s_i\ell_i/(\widehat s_i^2-1)\le1$ on $\cE_{\rm loc}$.

We next bound the centered coefficient itself.  Conditional on
$\bm{Z}_{22}$, apply
\cref{eq:gaussian-coefficient-moments} to $\bm{Z}_{21}$ with
$\bm{A}=\bm{H}_{\rm bulk}^m$ and $q=32$.  Since
$\frn{\bm{H}_{\rm bulk}^m}\le\sqrt a\,\|\bm{H}_{\rm bulk}\|_{\mathrm{op}}^m$, we obtain
\begin{align*}
\left(
  \E\left[
    \opn{\bm{C}_{U,11}^{(m)}}^{32}
    \,\middle|\,\bm{Z}_{22}
  \right]
\right)^{1/32}
&\le
\frac{C}{b}
\left(
  \sqrt p\,\frn{\bm{H}_{\rm bulk}^m}
  +p\,\opn{\bm{H}_{\rm bulk}^m}
\right)\le
C\beta \sqrt{\frac{p}{a}}\opn{\bm{H}_{\rm bulk}}^m.
\end{align*}

It remains to combine the coefficient estimate with the exterior powers of
$\bm D_\Sigma^{-1}$. For $m=0$, the preceding conditional estimate directly
gives the required bound. Suppose that $m\ge1$. If
$\opn{\bm H_{\rm bulk}}=0$, then $\bm H_{\rm bulk}^m=\bze$ and hence
$\bm C_{U,11}^{(m)}=\bze$, so the contribution vanishes. On the complementary
event $\{\opn{\bm H_{\rm bulk}}>0\}$, the contraction bound
\cref{eq:one-view-contraction} gives
\[
\begin{aligned}
  \left\|
    \ind_{\cE_{\rm loc}}
    \opn{\bm D_\Sigma^{-1}}^m
    \opn{\bm C_{U,11}^{(m)}}
  \right\|_{L^{32}}
  &\le
  8^{-m}
  \left\|
    \ind\{\opn{\bm H_{\rm bulk}}>0\}
    \frac{\opn{\bm C_{U,11}^{(m)}}}
         {\opn{\bm H_{\rm bulk}}^m}
  \right\|_{L^{32}} \\
  &\le
  C\beta \sqrt{\frac{p}{a}}\,8^{-m}.
\end{aligned}
\]
The second inequality follows by dividing the preceding conditional
coefficient estimate by the positive, $\bm Z_{22}$-measurable quantity
$\opn{\bm H_{\rm bulk}}^m$ and then integrating over $\bm Z_{22}$. Thus the
displayed bound holds for every $m\ge0$ without requiring a convention for
division by zero.

All remaining contributions, except the explicitly displayed base--base
term involving $\bm{Z}_{12}\bm{Z}_{12}^\trans-\bm{I}_p$, follow from the same
argument, and we omit the repeated details.  Indeed, conditional on
$\bm{Z}_{22}$, each coefficient is a
centered Gaussian quadratic or bilinear form whose middle matrix has rank
at most $a$.  Its operator norm is bounded by
$C(1+\|\bm{Z}_{22}\|_{\mathrm{op}})^4$ times the corresponding power of $\|\bm{H}_{\rm bulk}\|_{\mathrm{op}}$.
The relevant powers are $m$ for the remaining left-Gram coefficients,
$r+s$ for $\bm{C}_{V,11}^{(r,s)}$,
$q+u$ for $\bm{C}_{V,22}^{(q,u)}$, $q$ for either centered base--series
coefficient in the $V,22$ family, $r+u$ for
$\bm{C}_{V,12}^{(r,u)}$, $q+s$ for $\bm{C}_{V,21}^{(q,s)}$, and $r$ or $s$
for the two base cross terms.
Thus \cref{eq:gaussian-coefficient-moments}, the exterior-factor bounds,
and \cref{eq:one-view-contraction} give the same geometrically summable
bound as for the $U,11$ contribution.  Since
$\|(1+\|\bm{Z}_{22}\|_{\mathrm{op}})^4\|_{L^{32}}\le C$ by
\cref{eq:gaussian-matrix-op-moment}, their total contribution is at most
$C\beta \sqrt{\frac{p}{a}}$.

It remains to treat the base--base term.  Using
$\|\bm{U}_\parallel\|_{\mathrm{op}}\le1$,
$\|\bm{W}_{\rm bal}\bm\Sigma^{-1}\|_{\mathrm{op}}\le\beta^{1/4}$, and
\cref{eq:gaussian-wishart-moment}, its weighted contribution is at most
\begin{align*}
\opn{\bm{W}_{\rm bal}\bm\Sigma^{-1}}^2
\left\|\bm{Z}_{12}\bm{Z}_{12}^\trans-\bm{I}_p\right\|_{L^{32}}
&\le
C\sqrt\beta
\sqrt{\frac pb}\le C\beta \sqrt{\frac{p}{a}},
\end{align*}
where the last inequality uses $a=\beta b$.
Combining these bounds proves
\cref{eq:one-view-coefficient-error-master}.
\end{proof}

\subsubsection{Proof of~\cref{lem:one-view-resolvent}}

Fix $z\in\mathcal C_\star$. The resolvent trace is well behaved on
$\cE_{\rm loc}$ but need not be globally Lipschitz as a function of the
Gaussian matrix. We therefore introduce a cutoff that agrees with the
resolvent trace on $\cE_{\rm loc}$. Gaussian concentration will control the
resulting statistic around its mean. We then use Gaussian integration by
parts to show that this mean satisfies the Marchenko--Pastur equation up to
an error of order $a^{-1}$, after which stability of that equation completes
the proof.

\paragraph{A globally Lipschitz statistic.}
For $\bm D\in\mathbb R^{a\times b}$, set
$\bm H_{\bm D}\coloneqq\bm D\bm D^\trans-\bm I_a$. Choose a
$4$-Lipschitz function $\varphi:[0,\infty)\to[0,1]$ that equals one on
$[0,1/2]$ and vanishes on $[3/4,\infty)$, and define
\begin{equation}
\label{eq:cutoff}
  \chi(\bm D)
  \coloneqq
  \varphi\!\left(
    \frac{64}{s_\star^2}\opn{\bm H_{\bm D}}
  \right).
\end{equation}
The cutoff restricts attention to matrices for which the resolvent is
uniformly bounded. Indeed, if $\chi(\bm D)>0$, then
$\opn{\bm H_{\bm D}}<3s_\star^2/256$. Since
$|z|=32/s_\star^2$, the matrix $\bm I_a-z\bm H_{\bm D}$ is invertible and
\begin{equation}
\label{eq:resolvent-bound}
  \opn{
    (\bm I_a-z\bm H_{\bm D})^{-1}
  }
  \le\frac85.
\end{equation}
We may therefore define a function on all of $\mathbb R^{a\times b}$ by
\begin{equation}
\label{eq:F-def}
  F(\bm D)
  \coloneqq
  \begin{cases}
    \displaystyle
    \frac{\chi(\bm D)}{a}
    \tr(\bm I_a-z\bm H_{\bm D})^{-1},
    &\chi(\bm D)>0,\\[6pt]
    0,&\chi(\bm D)=0.
  \end{cases}
\end{equation}
By \cref{eq:one-view-localization-event}, $\chi(\bm Z_{22})=1$ on
$\cE_{\rm loc}$, and hence
$F(\bm Z_{22})=\widehat m_{\rm bulk}(z)$ on this event.

We next verify the Lipschitz estimate needed for Gaussian concentration.
Take $\bm D_0,\bm D_1\in\mathbb R^{a\times b}$ and interpolate along
$\bm D_t=\bm D_0+t(\bm D_1-\bm D_0)$. Where the cutoff changes,
$\opn{\bm D_t}\lesssim\sqrt{1+s_\star^2}$; differentiating
$\opn{\bm D_t\bm D_t^\trans-\bm I_a}$ along this segment therefore gives
\[
  |\chi(\bm D_1)-\chi(\bm D_0)|
  \lesssim
  \frac{\sqrt{1+s_\star^2}}{s_\star^2}
  \frn{\bm D_1-\bm D_0}.
\]
If both cutoffs are positive, the resolvent identity, together with
\cref{eq:resolvent-bound}, gives the same bound for the normalized traces:
\[
  \left|
    \frac1a\tr(\bm I_a-z\bm H_{\bm D_1})^{-1}
    -
    \frac1a\tr(\bm I_a-z\bm H_{\bm D_0})^{-1}
  \right|
  \lesssim
  \frac{\sqrt{1+s_\star^2}}{s_\star^2}
  \frn{\bm D_1-\bm D_0}.
\]
If exactly one cutoff vanishes, the first estimate and
\cref{eq:resolvent-bound} control the difference of the two values of $F$;
if both vanish, the difference is zero. Thus
\begin{equation}
\label{eq:F-Lip-D}
  |F(\bm D_1)-F(\bm D_0)|
  \lesssim
  \frac{\sqrt{1+s_\star^2}}{s_\star^2}
  \frn{\bm D_1-\bm D_0}
\end{equation}
for all $\bm D_0,\bm D_1\in\mathbb R^{a\times b}$.

\paragraph{Fluctuation around the mean.}
Set $\bar m\coloneqq\mathbb EF(\bm Z_{22})$. Since
$F(\bm Z_{22})=\widehat m_{\rm bulk}(z)$ on $\cE_{\rm loc}$,
\begin{equation}
\label{eq:resolvent-fluctuation-bias-split}
  \left\|
    \ind_{\cE_{\rm loc}}
    \bigl(
      \widehat m_{\rm bulk}(z)-m_\beta(z)
    \bigr)
  \right\|_{L^{32}}
  \le
  \|F(\bm Z_{22})-\bar m\|_{L^{32}}
  +
  |\bar m-m_\beta(z)|.
\end{equation}
It remains to bound these two terms.

Let $\bm G=\sqrt b\,\bm Z_{22}$, so that the entries of $\bm G$ are
independent standard Gaussian variables. By \cref{eq:F-Lip-D}, the map
$\bm G\mapsto F(\bm G/\sqrt b)$ has Lipschitz constant at most
\[
  \frac{C\sqrt{1+s_\star^2}}{\sqrt b\,s_\star^2}
  =
  \frac{C}{\sqrt a}
  \left(
    \frac{\beta}{s_\star^4}
    +
    \frac{\beta}{s_\star^2}
  \right)^{1/2}
  \lesssim
  \frac1{\sqrt a},
\]
where the last inequality uses
$s_\star^2\wedge s_\star^4\ge\nu^2\beta$. Applying Gaussian concentration
to the real and imaginary parts of $F$ gives
\begin{equation}
\label{eq:fluctuation}
  \|F(\bm Z_{22})-\bar m\|_{L^{32}}
  \lesssim
  \frac1{\sqrt a}.
\end{equation}

\paragraph{An approximate Marchenko--Pastur equation.}
We next show that $\bar m$ satisfies the equation in
\cref{eq:one-view-MP-equation} up to an error of order $a^{-1}$.
In the remainder of the proof, $\bm D$ has the same distribution as
$\bm Z_{22}$, and we write
\[
  \chi\coloneqq\chi(\bm D),
  \qquad
  \bm R_z\coloneqq(\bm I_a-z\bm H_{\bm D})^{-1},
  \qquad
  m\coloneqq\frac1a\tr(\bm R_z)
\]
on the support of $\chi$. Let $\bm d_\ell$ denote the $\ell$th column of
$\bm D$, and set
$h_{\ell i}\coloneqq\bm d_\ell^\trans\bm R_z\bm e_i$. The product
$\chi h_{\ell i}$, extended by zero away from the support of $\chi$, is
globally Lipschitz by the same argument used for $F$. We may therefore apply
Gaussian integration by parts to it.

The identity
$\bm R_z^{-1}=(1+z)\bm I_a-z\bm D\bm D^\trans$ gives
\begin{equation}
\label{eq:trace-start}
  m
  =
  1-zm
  +
  \frac za
  \sum_{i=1}^a\sum_{\ell=1}^b
  D_{i\ell}h_{\ell i}.
\end{equation}
To evaluate the expectation of the last term, differentiate
$h_{\ell i}$. At every point of differentiability of $\chi$,
\[
  \partial_{i\ell}h_{\ell i}
  =
  (\bm R_z)_{ii}
  +
  zh_{\ell i}^2
  +
  z
  (\bm d_\ell^\trans\bm R_z\bm d_\ell)
  (\bm R_z)_{ii}.
\]
The three sums produced by this identity reduce to
\begin{equation}
\label{eq:resolvent-derivative-sums}
\begin{aligned}
  \sum_{i,\ell}(\bm R_z)_{ii}
  &=abm,\\
  \sum_{i,\ell}h_{\ell i}^2
  &=a\{m+(1+z)m'\},\\
  \sum_{i,\ell}
  (\bm d_\ell^\trans\bm R_z\bm d_\ell)
  (\bm R_z)_{ii}
  &=
  \frac{a^2}{z}
  m\{(1+z)m-1\},
\end{aligned}
\end{equation}
where $m'=\partial_zm$. For the second identity, use
$\sum_{i,\ell}h_{\ell i}^2
=\tr(\bm D\bm D^\trans\bm R_z^2)$,
$\bm R_z^2-\bm R_z=z\bm H_{\bm D}\bm R_z^2$, and
$m'=a^{-1}\tr(\bm H_{\bm D}\bm R_z^2)$. The third follows from
$\tr(\bm D\bm D^\trans\bm R_z)
=a\{(1+z)m-1\}/z$.

Multiply \cref{eq:trace-start} by $\chi$, take expectations, and apply
\[
  \mathbb E[D_{i\ell}g(\bm D)]
  =
  \frac1b\mathbb E[\partial_{i\ell}g(\bm D)]
\]
to $g=\chi h_{\ell i}$. The product rule and
\cref{eq:resolvent-derivative-sums} give
\begin{equation}
\label{eq:localized-loop}
\begin{aligned}
  \bar m
  ={}&
  \mathbb E\chi
  +
  \beta z
  \mathbb E\left[
    \chi m\{(1+z)m-1\}
  \right] \\
  &+
  \frac{z^2}{b}
  \mathbb E\left[
    \chi\{m+(1+z)m'\}
  \right]
  +
  \frac{z}{ab}
  \sum_{i,\ell}
  \mathbb E\left[
    h_{\ell i}\partial_{i\ell}\chi
  \right].
\end{aligned}
\end{equation}
The term generated by the first line of
\cref{eq:resolvent-derivative-sums} cancels the term $-z\bar m$ from
\cref{eq:trace-start}. This cancellation is why the remaining leading term
has exactly the form of the Marchenko--Pastur equation.

\paragraph{The error in the fixed-point equation.}
Define
\[
  r(z)
  \coloneqq
  \bar m
  -
  1
  -
  \beta z\bar m
  \bigl\{
    (1+z)\bar m-1
  \bigr\}.
\]
Comparing \cref{eq:localized-loop} with
\cref{eq:one-view-MP-equation} reduces the problem to bounding this
residual. Rearranging \cref{eq:localized-loop} gives the central
decomposition
\begin{equation}
\label{eq:residual-decomposition}
\begin{aligned}
  r(z)
  ={}&
  \mathbb E\chi-1
  +
  \beta z(1+z)
  \bigl\{
    \mathbb E[\chi m^2]-\bar m^2
  \bigr\} \\
  &+
  \frac{z^2}{b}
  \mathbb E\left[
    \chi\{m+(1+z)m'\}
  \right]
  +
  \frac{z}{ab}
  \sum_{i,\ell}
  \mathbb E\left[
    h_{\ell i}\partial_{i\ell}\chi
  \right].
\end{aligned}
\end{equation}
We bound the four terms in this order.

First, $\chi\ne1$ implies
$\opn{\bm H_{\bm Z_{22}}}>s_\star^2/128$. The Gaussian Wishart deviation
bound and $b(s_\star^2\wedge s_\star^4)\ge\nu^2a$ therefore give
\begin{equation}
\label{eq:cutoff-tail}
  \mathbb P\{\chi(\bm Z_{22})\ne1\}
  \le
  C\exp\{-cb(s_\star^2\wedge s_\star^4)\}
  \le
  Ce^{-ca}.
\end{equation}
In particular, $|\mathbb E\chi-1|\lesssim a^{-1}$.

For the second term, the exact identity
\[
  \mathbb E[\chi m^2]-\bar m^2
  =
  \mathbb E[(F-\bar m)^2]
  +
  \mathbb E[\chi(1-\chi)m^2]
\]
is valid also for complex $m$. Using
\cref{eq:resolvent-bound,eq:fluctuation,eq:cutoff-tail}, we obtain
\begin{equation}
\label{eq:quadratic-bias}
  \left|
    \mathbb E[\chi m^2]-\bar m^2
  \right|
  \le
  \mathbb E|F-\bar m|^2
  +
  \frac{64}{25}\mathbb P\{\chi\ne1\}
  \lesssim
  \frac1a.
\end{equation}
Moreover,
$\beta|z||1+z|
\lesssim\beta/s_\star^2+\beta/s_\star^4\lesssim1$, so the second term in
\cref{eq:residual-decomposition} is also of order $a^{-1}$.

It remains to bound the last two terms. On the support of $\chi$,
$m'=a^{-1}\tr(\bm H_{\bm D}\bm R_z^2)$, and hence
$|m'|\lesssim s_\star^2$. Where the cutoff changes,
$\frn{\nabla_{\bm D}\chi}
\lesssim\sqrt{1+s_\star^2}/s_\star^2$, while
$\frn{\bm D^\trans\bm R_z}
\lesssim\sqrt{a(1+s_\star^2)}$. Since
$(h_{\ell i})_{\ell,i}=\bm D^\trans\bm R_z$, these estimates give
\begin{equation}
\label{eq:remaining-residual-terms}
\begin{aligned}
  \frac{|z|^2}{b}
  \mathbb E\left[
    \chi|m+(1+z)m'|
  \right]
  &\lesssim
  \frac{1+s_\star^2}{bs_\star^4}
  =
  \frac1a
  \left(
    \frac{\beta}{s_\star^4}
    +
    \frac{\beta}{s_\star^2}
  \right)
  \lesssim
  \frac1a,\\
  \frac{|z|}{ab}
  \left|
    \sum_{i,\ell}
    \mathbb E\left[
      h_{\ell i}\partial_{i\ell}\chi
    \right]
  \right|
  &\le
  \frac{|z|}{ab}
  \mathbb E\left[
    \frn{\bm D^\trans\bm R_z}
    \frn{\nabla_{\bm D}\chi}
  \right] \\
  &\lesssim
  \frac{1}{a\sqrt a}
  \left(
    \frac{\beta}{s_\star^4}
    +
    \frac{\beta}{s_\star^2}
  \right)
  \lesssim
  \frac1a.
\end{aligned}
\end{equation}
Combining
\cref{eq:cutoff-tail,eq:quadratic-bias,eq:remaining-residual-terms} in
\cref{eq:residual-decomposition} proves
\begin{equation}
\label{eq:residual-bound}
  |r(z)|
  \lesssim
  \frac1a.
\end{equation}

\paragraph{Stability of the Marchenko--Pastur equation.}
It remains to convert the residual bound into a bound for
$\bar m-m_\beta(z)$. By \cref{eq:resolvent-bound},
$|F(\bm D)|\le8/5$ for every $\bm D$, and therefore
$|\bar m|\le8/5$. The centered Marchenko--Pastur law is supported on
$[\beta-2\sqrt\beta,\beta+2\sqrt\beta]$. Since
$s_\star^2\ge\nu\sqrt\beta$, the signal condition gives, for sufficiently
large $\nu$,
\[
  |z|(\beta+2\sqrt\beta)
  \le
  C\frac{\sqrt\beta}{s_\star^2}
  \le\frac14.
\]
The integral representation of $m_\beta$ consequently gives
$|m_\beta(z)|\le4/3$.

Subtracting \cref{eq:one-view-MP-equation} from the definition of $r(z)$
gives
\begin{equation}
\label{eq:fixed-point-stability}
  (\bar m-m_\beta(z))
  \left[
    1
    -
    \beta z
    \bigl\{
      (1+z)(\bar m+m_\beta(z))-1
    \bigr\}
  \right]
  =
  r(z).
\end{equation}
The bounds on $\bar m$ and $m_\beta(z)$ imply
\[
  \left|
    \beta z
    \bigl\{
      (1+z)(\bar m+m_\beta(z))-1
    \bigr\}
  \right|
  \lesssim
  \frac{\beta}{s_\star^2}
  +
  \frac{\beta}{s_\star^4}
  \le
  \frac{C}{\nu^2}.
\]
For sufficiently large $\nu$, the bracket in
\cref{eq:fixed-point-stability} has modulus at least $1/2$. Hence
\cref{eq:residual-bound,eq:fixed-point-stability} yield
\begin{equation}
\label{eq:bias-bound}
  |\bar m-m_\beta(z)|
  \lesssim
  \frac1a.
\end{equation}

Finally, \cref{eq:resolvent-fluctuation-bias-split,eq:fluctuation,eq:bias-bound}
give
\[
  \left\|
    \ind_{\cE_{\rm loc}}
    \bigl(
      \widehat m_{\rm bulk}(z)-m_\beta(z)
    \bigr)
  \right\|_{L^{32}}
  \lesssim
  \frac1{\sqrt a}.
\]
Every bound above is uniform over $z\in\mathcal C_\star$, because the
argument uses $z$ only through $|z|=32/s_\star^2$. Taking the supremum over
the contour proves \cref{eq:one-view-resolvent-bound}.

\subsubsection{Proof of~\cref{lem:one-view-bulk-error}}

Each term in the two bulk remainders has the same structure: the contour
integrand contains the scalar difference
$\widehat m_{\rm bulk}(z)-m_\beta(z)$ multiplied by a bounded matrix kernel.  We
give the complete argument for the $\mathcal A_U$ term and then record the
kernel bounds needed for the other three terms.

By
\cref{eq:one-view-empirical-AU-contour,eq:one-view-population-AU-contour},
\[
  (\widehat{\mathcal A}_U-\mathcal A_U)
  (\bm{V}_\parallel^\trans\bm{V}_\parallel)
  =
  \frac{\beta}{2\pi\mathrm i}
  \oint_{\mathcal C_\star}
  \frac{\widehat m_{\rm bulk}(z)-m_\beta(z)}{z}\,
  \bm\Sigma\bm{K}_0(z)
  \bm{V}_\parallel^\trans\bm{V}_\parallel
  \bm{K}_0(z)\bm\Sigma\,dz.
\]
On $\cE_{\rm loc}$, the identities
$\widehat s_i^2-1=\tau_i+\beta+\beta/\tau_i$ and
$\widehat s_i\ell_i=\tau_i+\beta$ give
$\|\bm{W}_L\bm\Sigma\bm{K}_0(z)\|_{\mathrm{op}}\le C$ uniformly over
$z\in\mathcal C_\star$.  Since $\|\bm{V}_\parallel\|_{\mathrm{op}}\le1$, Minkowski's integral
inequality and \cref{eq:one-view-resolvent-bound} yield
\begin{align*}
&\Lqn{
  \ind_{\cE_{\rm loc}}
  \bm{W}_L
  (\widehat{\mathcal A}_U-\mathcal A_U)
  (\bm{V}_\parallel^\trans\bm{V}_\parallel)
  \bm{W}_L
}{32}\\
&\quad\le
\frac{\beta}{2\pi}
\oint_{\mathcal C_\star}
\frac{1}{|z|}
\left\|
  \ind_{\cE_{\rm loc}}
  |\widehat m_{\rm bulk}(z)-m_\beta(z)|
  \opn{\bm{W}_L\bm\Sigma\bm{K}_0(z)}^{\,2}
\right\|_{L^{32}}
|dz|\\
&\quad\le
C\frac{\beta}{\sqrt a}
=
C\frac{\sqrt a}{b},
\end{align*}
where the last identity uses $a=\beta b$.

The remaining three operator differences are controlled identically, so
we omit the repeated contour calculations.  The only changes are the
matrix kernels.  On $\cE_{\rm loc}$, uniformly over $z\in\mathcal C_\star$, the
same scalar identities give
\[
\begin{gathered}
  |1+z^{-1}|\opn{\bm{W}_L\bm{K}_0(z)}^{\,2}\le C,\\
  |1+z^{-1}|\opn{\bm{W}_{\rm bal}\bm{K}_0(z)}^{\,2}\le C,\\
  |1+z^{-1}|
  \opn{\bm{W}_{\rm bal}\bm\Sigma^{-1}\bm{K}_0(z)}
  \opn{\bm{W}_{\rm bal}\bm\Sigma^{-1}}
  +
  |1+z^{-1}|^2
  \opn{\bm{W}_{\rm bal}\bm\Sigma^{-1}\bm{K}_0(z)}^{\,2}
  \le C.
\end{gathered}
\]

The first bound controls the $\widehat{\mathcal B}-\mathcal B$ term
in $\bm{R}_{U,\mathrm{bulk}}$, the second controls the corresponding term
in $\bm{R}_{V,\mathrm{bulk}}$, and the third controls
$\widehat{\mathcal A}_V-\mathcal A_V$.  For the last difference, the
base term $\bm\Sigma^{-1}\bm{M}\bm\Sigma^{-1}$ cancels between the
empirical and population operators.  Using
$\|\bm{U}_\parallel\|_{\mathrm{op}},\|\bm{V}_\parallel\|_{\mathrm{op}}\le1$, each of these three
terms is therefore bounded by the same contour argument by
$C\beta/\sqrt a=C\sqrt a/b$.

Finally, apply the triangle inequality to the decompositions in
\cref{eq:one-view-RU-bulk,eq:one-view-RV-bulk}.  Combining the four
operator bounds gives \cref{eq:one-view-bulk-error-master}.

\section{Completion of the upper-bound proof}

We now prove \cref{lem:stochastic-master}. We establish its three conclusions
in order. 

\subsection{Proof of~\cref{eq:first-order-expectation}}
\label{app:first-order-error}

Recall that $\bm R_k=\bm S_k^\dagger\bm U$. Using the two blocks of
$\bm Q_\star$, we obtain
\begin{equation}
\label{eq:first-order-split}
  (\bm L\bm J-\bm T\bm L)\bm Q_\star
  =
  \alpha_1\bm L_1\bm R_1
  +
  \alpha_2\bm L_2\bm R_2
  -
  \bm T
  \bigl(
    \bm L_1\bm R_1-\bm L_2\bm R_2
  \bigr).
\end{equation}

\paragraph{Step 1: Bound $\alpha_1\bm L_1\bm R_1
  +
  \alpha_2\bm L_2\bm R_2$. }
The first two terms can be bounded directly. On $\cE_{\rm emb}$,
\cref{eq:population-Rk-bounds} gives
$\opn{\bm R_k}\lesssim\gamma_k^{-1}$, while
\cref{eq:one-view-bounds} gives
$\|\bm L_k\|_{L^{64}}\lesssim\sigma\sqrt{a_k}
\lesssim\sigma\sqrt n$. Since
$\cE_{\rm good}\subseteq\cE_{\rm emb}$, these estimates imply
\[
  \Lqn{
    \ind_{\cE_{\rm good}}
    \left(
      \alpha_1\bm L_1\bm R_1
      +
      \alpha_2\bm L_2\bm R_2
    \right)
  }{32}
  \lesssim
  \sigma\sqrt n
  \left(
    \frac{\alpha_1}{\gamma_1}
    +
    \frac{\alpha_2}{\gamma_2}
  \right)
  \lesssim
  \frac{\sigma\sqrt n}{\gmax}.
\]
The last inequality uses
$\alpha_1/\gamma_1+\alpha_2/\gamma_2
=(\gamma_1+\gamma_2)/(\gamma_1^2+\gamma_2^2)
\lesssim\gmax^{-1}$.

\paragraph{Step 2: Bound $\bm T
  \bigl(
    \bm L_1\bm R_1-\bm L_2\bm R_2
  \bigr)$. }
The same argument is too crude for the term multiplied by $\bm T$.
Submultiplicativity would bound
$\opn{\bm T\bm L_k\bm R_k}$ by
$\opn{\bm T}\opn{\bm L_k}\opn{\bm R_k}$ and would therefore retain
the factor $\sqrt{a_k}$ from the bound for $\bm L_k$. To obtain the rank
dependence in \cref{eq:first-order-expectation}, we need to leverage the rotational invariance of $\bm L_k$.

For $k\in\{1,2\}$, define
\begin{equation}
\label{eq:Hk-app}
  \Hc_k
  \coloneqq
  \sigma\left(
    \bm S_1,\bm S_2,\bm L_{3-k},\bm L_k^\trans\bm L_k
  \right).
\end{equation}
The sigma-field $\Hc_k$ records the quantities that remain unchanged when
$\bm L_k$ is left multiplied by an orthogonal transformation on
$\cM_k^\perp$. The following lemma uses this invariance to bound a product
in terms of the operator and Frobenius norms of the matrices on either side
of $\bm L_k$.

\begin{lemma}
\label{lem:leakage-product}
Fix $k\in\{1,2\}$ and $q\ge2$. Let $\cE\in\Hc_k$, and let
$\bm\Psi\in\mathbb R^{n\times u}$ and
$\bm\Theta\in\mathbb R^{p_k\times v}$ be $\Hc_k$-measurable. Then
\begin{equation}
\label{eq:leakage-product}
  \Lqn{
    \ind_{\cE}\bm\Psi^\trans\bm L_k\bm\Theta
  }{q}
  \lesssim
  \sqrt{\frac q{a_k}}\,
  \Lqn{\bm L_k}{2q}
  \Lqn{
    \ind_{\cE}
    \left(
      \frn{\bm\Psi}\opn{\bm\Theta}
      +
      \opn{\bm\Psi}\frn{\bm\Theta}
    \right)
  }{2q}.
\end{equation}
\end{lemma}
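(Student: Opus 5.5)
The plan is to exploit a Haar symmetrization. Let $\bm O$ be a Haar-distributed orthogonal transformation of $\cM_k^\perp$, extended by the identity on $\cM_k$, and drawn independently of everything else. The first step is to check that $(\bm S_1,\bm S_2,\bm L_{3-k},\bm L_k)$ and $(\bm S_1,\bm S_2,\bm L_{3-k},\bm O\bm L_k)$ have the same joint law. For this, rotate the noise $\bm E_k$ by $\bm O$: the law of $\bm E_k$ is unchanged and $\bm X_k$ is fixed. Since $\bm O$ fixes $\cM_k$, it fixes $\bm P_{\cM_k}$. The empirical singular values of $\bm O\bm Y_k$ equal those of $\bm Y_k$, so the weights are unchanged. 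Its left singular vectors are $\bm O\widehat{\bm\Phi}_k$, so $\bm S_k$ is unchanged and $\bm L_k\mapsto\bm O\bm L_k$. View $3-k$ is untouched.

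There is a sign and ordering ambiguity of singular vectors under ties. I will handle it by noting that the Gaussian law makes ties null. Alternatively, the bound only involves $\bm L_k$ through quantities invariant under right sign flips, which are themselves $\Hc_k$-measurable after conditioning.

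Since $\bm L_k^\trans\bm L_k$ is invariant under $\bm O$, every $\Hc_k$-measurable quantity is preserved. It therefore suffices to bound
\[
\Lqn{\ind_{\cE}\bm\Psi^\trans\bm O\bm L_k\bm\Theta}{q}
\]
conditionally on $(\Hc_k,\bm L_k)$.

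Write $\bm L_k=\bm P_{\cM_k^\perp}\bm L_k=\bm W\bm\Lambda^{1/2}\bm R^\trans$ by polar/SVD, with $\bm W$ in $\cM_k^\perp$. Then $\bm O\bm W$ is a uniformly random $a_k\times p_k$ frame in $\cM_k^\perp$. Realize it as $\bm g(\bm g^\trans\bm g)^{-1/2}$ with $\bm g$ a standard Gaussian $a_k\times p_k$ matrix. Because $p_k\le a_k/2$ (from $p_k\le n/3$), the standard estimate $\Lqn{(\bm g^\trans\bm g)^{-1/2}}{4q}\lesssim a_k^{-1/2}$ holds.

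It remains to bound the Gaussian chaos $\bm\Psi^\trans\bm P_{\cM_k^\perp}\bm g\,\bm B$, where $\bm B=(\bm g^\trans\bm g)^{-1/2}\bm\Lambda^{1/2}\bm R^\trans\bm\Theta$. The factor $(\bm g^\trans\bm g)^{-1/2}$ is awkward because it depends on $\bm g$. I would avoid it by a cleaner route: bound
\[
\opn{\bm\Psi^\trans\bm O\bm W\bm M}\le\opn{\bm\Psi^\trans\bm O\bm W}\,\opn{\bm M},\qquad \bm M=\bm\Lambda^{1/2}\bm R^\trans\bm\Theta,
\]
but then one needs the better Frobenius split. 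So instead I apply Chevet-type bilinear bounds directly to the Haar frame. The moment bound
\[
\Lqn{\bm\Psi^\trans\bm O\bm W\bm M}{q}\lesssim\sqrt{\tfrac q{a_k}}\bigl(\frn{\bm\Psi}\opn{\bm M}+\opn{\bm\Psi}\frn{\bm M}\bigr)
\]
follows from the Gaussian version, $\Lqn{\bm H\bm B}{q}$ of Lemma~\ref{lem:gaussian-coefficient-moments}, applied to $\bm\Psi^\trans\bm g\bm M$. That version gives $\frn{\bm\Psi}\opn{\bm M}+\opn{\bm\Psi}\frn{\bm M}+\sqrt q\opn{\bm\Psi}\opn{\bm M}$, and the last term is absorbed since $\sqrt q\le \sqrt q\cdot$Frobenius. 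The factor $(\bm g^\trans\bm g)^{-1/2}=a_k^{-1/2}(\bm I+\bm\epsilon)$ with $\opn{\bm\epsilon}\le 1/2$ on a high-probability event. The complement of that event is handled by Hölder together with the crude bound $\opn{\bm\Psi}\opn{\bm L_k}\opn{\bm\Theta}$ and an exponentially small probability.

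Finally, $\opn{\bm M}\le\opn{\bm L_k}\opn{\bm\Theta}$ and $\frn{\bm M}\le\opn{\bm L_k}\frn{\bm\Theta}$. Taking the outer $L^q$ norm and applying Cauchy--Schwarz to separate $\opn{\bm L_k}$ from the $\Hc_k$-measurable factor gives $\Lqn{\bm L_k}{2q}$ times the $L^{2q}$ norm, which yields \cref{eq:leakage-product}.

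The main obstacle is the dependence between $\bm g$ and its normalization $(\bm g^\trans\bm g)^{-1/2}$. I expect the cleanest fix is to use the coupling $\bm O\bm W=\bm g(\bm g^\trans\bm g)^{-1/2}$ with the perturbation argument above. A second concern is that the $\sqrt q$ and the complement-event terms may need care to avoid an $a_k$-dependent loss.
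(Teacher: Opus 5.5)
Your symmetrization step and the final H\"older/Cauchy--Schwarz split match the paper. The gap is in transferring the Gaussian Chevet bound to the Haar frame, which is the obstacle you flag yourself, and the perturbation fix you propose does not close it.

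After writing the frame as $a_k^{-1/2}\bm g(\bm I+\bm\epsilon)$, the main term $a_k^{-1/2}\bm\Psi^\trans\bm g\bm M$ is fine. The correction $a_k^{-1/2}\bm\Psi^\trans\bm g\,\bm\epsilon\bm M$, however, cannot be fed into Chevet, because $\bm\epsilon$ is a function of $\bm g$. Since $\sup_{\opn{\bm E}\le 1/2}\opn{\bm\Psi^\trans\bm g\bm E\bm M}=\tfrac12\opn{\bm\Psi^\trans\bm g}\opn{\bm M}$, no argument that uses only the size of $\bm\epsilon$ can do better than $a_k^{-1/2}\opn{\bm\Psi^\trans\bm g}\opn{\bm\epsilon}\opn{\bm M}$.

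Here $\bm g$ has $\rho=\rank(\bm L_k)$ columns, generically $\rho=p_k$, so $\opn{\bm\Psi^\trans\bm g}\asymp\frn{\bm\Psi}+\sqrt\rho\,\opn{\bm\Psi}$. The piece $a_k^{-1/2}\sqrt\rho\,\opn{\bm\Psi}\opn{\bm M}$ is not controlled by the right-hand side of \cref{eq:leakage-product}. For example, take $\bm\Psi$ and $\bm\Theta$ to be single unit vectors. The target is then of order $\sqrt{q/a_k}\,\opn{\bm L_k}$, while your bound is of order $\sqrt{p_k/a_k}\,\opn{\bm L_k}$, and $p_k$ may be as large as $n/3$.

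Using the typical size $\opn{\bm\epsilon}\asymp\sqrt{\rho/a_k}$ only reduces the loss to a factor $\rho/\sqrt{a_k}$ in place of $\sqrt q$. That is still unbounded once $p_k\gg\sqrt n$, and it would destroy the rank-dependent rates in \cref{eq:first-order-expectation}.

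Two smaller problems also arise. First, $\opn{\bm\epsilon}\le 1/2$ is not a high-probability event when $p_k/a_k$ is near $1/2$, which $p_k\le n/3$ allows. Second, $\Lqn{(\bm g^\trans\bm g)^{-1/2}}{4q}$ is infinite once $4q\ge a_k-\rho+1$.

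The missing idea is to run the coupling in the opposite direction, which removes the dependence instead of perturbing around it.
\begin{itemize}
\item Let $\bm\Gamma\in\mathbb R^{a_k\times a_k}$ be standard Gaussian with polar decomposition $\bm\Gamma=\bm O'(\bm\Gamma^\trans\bm\Gamma)^{1/2}$.
\item Left rotation invariance makes $\bm O'$ Haar and independent of $(\bm\Gamma^\trans\bm\Gamma)^{1/2}$.
\item Right rotation invariance forces $\mathbb E(\bm\Gamma^\trans\bm\Gamma)^{1/2}=c\bm I$, where $c=a_k^{-1}\mathbb E\sum_i\sigma_i(\bm\Gamma)\asymp\sqrt{a_k}$. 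Hence $\mathbb E[\bm\Gamma\mid\bm O']=c\bm O'$.
\item Fix $\bm N_k\in\St(n,a_k)$ spanning $\cM_k^\perp$, and set $\bm H=\bm N_k^\trans\bm\Psi$ and $\bm M=\bm N_k^\trans\bm L_k\bm\Theta$. These are fixed after conditioning on $(\Hc_k,\bm L_k)$.
\item Conditional Jensen for the convex map $\bm X\mapsto\opn{\bm H^\trans\bm X\bm M}^q$ gives $\mathbb E\opn{\bm H^\trans\bm O'\bm M}^q\le c^{-q}\,\mathbb E\opn{\bm H^\trans\bm\Gamma\bm M}^q$.
\end{itemize}
Your Chevet-plus-concentration bound applies directly to the right-hand side. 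This is Tropp's comparison principle, which is exactly what the paper invokes, with the full Haar matrix $\bm O\in\mathbb O(a_k)$. No SVD of $\bm L_k$, no normalization $(\bm g^\trans\bm g)^{-1/2}$, no exceptional event and no inverse moments are needed.

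With this replacement, the rest of your argument proves the lemma: the symmetrization, conditioning on $(\Hc_k,\bm L_k)$, the bounds $\opn{\bm M}\le\opn{\bm L_k}\opn{\bm\Theta}$ and $\frn{\bm M}\le\opn{\bm L_k}\frn{\bm\Theta}$, and Cauchy--Schwarz.
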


We defer the proof until after the proof of
\cref{lem:stochastic-master}. The lemma does not require
$\bm\Psi$ and $\bm\Theta$ to be independent of $\bm L_k$; their
$\Hc_k$-measurability is sufficient. Taking $q=32$ in
\cref{eq:leakage-product} and using
$\|\bm L_k\|_{L^{64}}\lesssim\sigma\sqrt{a_k}$ gives the form used below:
\begin{equation}
\label{eq:leakage-product-32}
  \Lqn{
    \ind_{\cE}\bm\Psi^\trans\bm L_k\bm\Theta
  }{32}
  \lesssim
  \sigma
  \Lqn{
    \ind_{\cE}
    \left(
      \frn{\bm\Psi}\opn{\bm\Theta}
      +
      \opn{\bm\Psi}\frn{\bm\Theta}
    \right)
  }{64}.
\end{equation}

Since $\cE_{\rm emb}$ and $\bm R_k=\bm S_k^\dagger\bm U$ are
$\Hc_k$-measurable and $\bm T$ is deterministic, we may apply
\cref{eq:leakage-product-32} with
$\bm\Psi=\bm T^\trans$ and $\bm\Theta=\bm R_k$. Thus
\[
\begin{aligned}
  \Lqn{
    \ind_{\cE_{\rm good}}
    \bm T
    \bigl(
      \bm L_1\bm R_1-\bm L_2\bm R_2
    \bigr)
  }{32}
  &\le
  \sum_{k=1}^2
  \Lqn{
    \ind_{\cE_{\rm emb}}
    \bm T\bm L_k\bm R_k
  }{32} \\
  &\overset{\mathrm{(i)}}{\lesssim}
  \sigma\sum_{k=1}^2
  \left(
    \frn{\bm T}\opn{\bm R_k}
    +
    \opn{\bm T}\frn{\bm R_k}
  \right) \\
  &\overset{\mathrm{(ii)}}{\lesssim}
  \sigma\sum_{k=1}^2
  \left\{
    \frac{\sqrt{r_1+r_2}}{\gamma_k}
    +
    \frac{\sqrt{r_1\wedge r_2}+\sqrt r}
         {\gamma_k\sqrt\theta}
  \right\} \\
  &\lesssim
  \frac{\sigma\sqrt{r_1+r_2}}{\gmin}
  +
  \frac{\sigma\sqrt{p_{\min}}}{\gmin\sqrt\theta}.
\end{aligned}
\]
Here (i) is \cref{eq:leakage-product-32}. For (ii), we use
\cref{eq:transfer-op} in the forms
$\frn{\bm T}\lesssim
\sqrt{r_1+r_2}+\sqrt{r_1\wedge r_2}/\sqrt\theta$ and
$\opn{\bm T}\lesssim\theta^{-1/2}$, together with
\cref{eq:population-Rk-bounds} in the forms
$\opn{\bm R_k}\lesssim\gamma_k^{-1}$ and
$\frn{\bm R_k}\lesssim\sqrt r/\gamma_k$.
The last inequality uses $\gamma_k\ge\gmin$ and
$\sqrt{r+ (r_1\wedge r_2)}\asymp\sqrt{p_{\min}}$.

Combining this estimate with the bound for the first two terms in
\cref{eq:first-order-split} and using $L^1\le L^{32}$ proves
\cref{eq:first-order-expectation}.

\subsection{Proof of~\cref{eq:higher-order-expectation}}
\label{app:higher-order-error}

Write
$\bm\Delta_{\rm quad}\coloneqq\bm L^\trans\bm L-\bm D_\sigma$.
By \cref{eq:higher-order-deterministic}, it is enough to bound
$\bm\Delta_{\rm quad}$, $\bm P_{\col(\bm S)}\bm L$, and
$\bm\Delta_{+0}$.

\paragraph{The matrix $\bm\Delta_{\rm quad}$.}
The diagonal blocks of $\bm\Delta_{\rm quad}$ are
$\bm L_k^\trans\bm L_k-a_k\sigma^2\bm I_{p_k}$ and are controlled by
\cref{eq:one-view-bounds}. For the off-diagonal block,
$\bm L_2$ is $\Hc_1$-measurable, so
\cref{eq:leakage-product-32} applies with
$\bm\Psi=\bm L_2$ and $\bm\Theta=\bm I_{p_1}$. Hence
\[
\begin{aligned}
  \Lqn{\bm L_2^\trans\bm L_1}{32}
  &\overset{\mathrm{(i)}}{\lesssim}
  \sigma
  \left\|
    \frn{\bm L_2}+\sqrt{p_1}\opn{\bm L_2}
  \right\|_{L^{64}} \\
  &\le
  \sigma(\sqrt{p_1}+\sqrt{p_2})\Lqn{\bm L_2}{64}
  \overset{\mathrm{(ii)}}{\lesssim}
  \sigma^2\sqrt{np_{\max}}.
\end{aligned}
\]
Here (i) is \cref{eq:leakage-product-32}, and (ii) uses
\cref{eq:one-view-bounds} and $a_2\le n$. The transpose block satisfies
the same estimate, so the diagonal and off-diagonal bounds give
\begin{equation}
\label{eq:quad-moment}
  \Lqn{\bm\Delta_{\rm quad}}{32}
  \lesssim
  \sigma^2\sqrt{np_{\max}}.
\end{equation}

\paragraph{The matrix $\bm P_{\col(\bm S)}\bm L$.}
On $\cE_{\rm emb}$, we have
$\col(\bm S)=\cM_1+\cM_2$ and therefore
$\rank\{\bm P_{\col(\bm S)}\}=p_1+p_2-r$. The event and the projection
are $\Hc_k$-measurable, so
\cref{eq:leakage-product-32} applies with
$\bm\Psi=\bm P_{\col(\bm S)}$ and
$\bm\Theta=\bm I_{p_k}$. Thus
\begin{align}
  \Lqn{
    \ind_{\cE_{\rm emb}}
    \bm P_{\col(\bm S)}\bm L
  }{32}
  &\le
  \sum_{k=1}^2
  \Lqn{
    \ind_{\cE_{\rm emb}}
    \bm P_{\col(\bm S)}\bm L_k
  }{32} \nonumber \\
  &\lesssim
  \sigma\sum_{k=1}^2
  \left(
    \sqrt{p_1+p_2-r}+\sqrt{p_k}
  \right)
  \lesssim
  \sigma\sqrt{p_{\max}}.\label{eq:projected-leakage}
\end{align}

\paragraph{The block $\bm\Delta_{+0}$.}
Set $\bm P_0\coloneqq\bm P_{\ker(\bm G_0)}$. Since
$\bm S\bm P_0=\bze$, the definition of $\bm\Delta$ gives
\[
  \bm\Delta_{+0}
  =
  \bigl(\bm S(\bm G_0^\dagger)^{1/2}\bigr)^\trans
  \bm L\bm P_0
  +
  (\bm G_0^\dagger)^{1/2}
  \bm\Delta_{\rm quad}\bm P_0.
\]
For the first term, apply \cref{eq:leakage-product-32} to each view with
$\bm\Psi=\bm S(\bm G_0^\dagger)^{1/2}$ and
$\bm\Theta=(\bm P_0)_k$, where $(\bm P_0)_k$ denotes the $k$th block
rows of $\bm P_0$. Combining this estimate with
\cref{eq:population-Gamma-bounds,eq:population-SGamma-bounds,eq:quad-moment}
gives
\begin{align}
  \Lqn{
    \ind_{\cE_{\rm emb}}\bm\Delta_{+0}
  }{32}
  &\le
  \sum_{k=1}^2
  \Lqn{
    \ind_{\cE_{\rm emb}}
    \bigl(\bm S(\bm G_0^\dagger)^{1/2}\bigr)^\trans
    \bm L_k(\bm P_0)_k
  }{32}
  +
  \opn{(\bm G_0^\dagger)^{1/2}}
  \Lqn{\bm\Delta_{\rm quad}}{32} \nonumber \\
  &\overset{\mathrm{(i)}}{\lesssim}
  \sigma\sum_{k=1}^2
  \left\{
    \frn{\bm S(\bm G_0^\dagger)^{1/2}}\opn{(\bm P_0)_k}
    +
    \opn{\bm S(\bm G_0^\dagger)^{1/2}}\frn{(\bm P_0)_k}
  \right\}
  +
  \frac{\sigma^2\sqrt{np_{\max}}}{\gmin\sqrt\theta} \nonumber \\
  &\overset{\mathrm{(ii)}}{\lesssim}
  \sigma\sqrt{p_{\max}}
  +
  \frac{\sigma^2\sqrt{np_{\max}}}{\gmin\sqrt\theta}.\label{eq:relative-p0}
\end{align}
Here (i) uses
\cref{eq:leakage-product-32,eq:population-Gamma-bounds,eq:quad-moment}.
For (ii), \cref{eq:population-SGamma-bounds} gives
$\opn{\bm S(\bm G_0^\dagger)^{1/2}}\le1$ and the required Frobenius bound,
while $\opn{(\bm P_0)_k}\le1$ and
$\frn{(\bm P_0)_k}\le\sqrt r$.

We now apply \cref{eq:higher-order-deterministic}. The small-rate
assumption gives
$\sigma^2\sqrt{np_{\max}}/(\gmin^2\theta)\lesssim1$ and, since
$p_{\max}\le n$,
$\{\sigma\sqrt{p_{\max}}/(\gmin\sqrt\theta)\}^2
\le\sigma^2\sqrt{np_{\max}}/(\gmin^2\theta)\lesssim1$.
We also use the standing condition
$\sigma\sqrt n/\gmin\lesssim1$. Cauchy--Schwarz and
\cref{eq:quad-moment,eq:projected-leakage,eq:relative-p0} yield
\[
\begin{aligned}
  \mathbb E\left[
    \ind_{\cE_{\rm good}}\opn{\bm{\mathsf{Rem}}}
  \right]
  \lesssim{}&
  \frac{\sigma^2\sqrt{np_{\max}}}{\gmin^2\theta}
  +
  \frac{\sigma\sqrt n}{\gmin^2\sqrt\theta}
  \left(
    \sigma\sqrt{p_{\max}}
    +
    \frac{\sigma^2\sqrt{np_{\max}}}{\gmin\sqrt\theta}
  \right) \\
  &+
  \frac{\sigma\sqrt{p_{\max}}}{\gmin^2\theta}
  \left(
    \sigma\sqrt{p_{\max}}
    +
    \frac{\sigma^2\sqrt{np_{\max}}}{\gmin\sqrt\theta}
  \right) \\
  &+
  \frac{1}{\gmin^2\theta}
  \left(
    \sigma\sqrt{p_{\max}}
    +
    \frac{\sigma^2\sqrt{np_{\max}}}{\gmin\sqrt\theta}
  \right)^2
  \lesssim
  \frac{\sigma^2\sqrt{np_{\max}}}{\gmin^2\theta}.
\end{aligned}
\]
For the second term, the two summands are absorbed using respectively
$\theta\le1$ and $\sigma\sqrt n/\gmin\lesssim1$. For the third, use
$p_{\max}\le\sqrt{np_{\max}}$ and
$\sigma\sqrt{p_{\max}}/(\gmin\sqrt\theta)\lesssim1$. The final term follows
from the same inequalities and the small-rate assumption. Since
$p_{\max}=r+(r_1\vee r_2)$, this proves
\cref{eq:higher-order-expectation}.

\subsection{Proof of~\cref{eq:G-complement}}
\label{app:good-event-probability}

By the definition of $\cE_{\rm good}$,
\[
  \Pp(\cE_{\rm good}^c)
  \le
  \Pp(\cE_{\rm emb}^c)
  +
  \Pp\left\{
    \cE_{\rm emb},\,
    \opn{\bm\Delta_{++}}>\frac14
  \right\}
  +
  \Pp\left\{
    \cE_{\rm emb},\,
    \opn{\bm\Delta_{00}}
    +2\opn{\bm\Delta_{+0}}^2
    >
    \frac18\cemb^2\gmin^2\theta
  \right\}.
\]
Because $\cE_1\cap\cE_2\subseteq\cE_{\rm emb}$,
\cref{lem:one-view} and a union bound give
$\Pp(\cE_{\rm emb}^c)
\lesssim\sigma^2\sqrt{np_{\max}}/\gmin^2
\le\sigma^2\sqrt{np_{\max}}/(\gmin^2\theta)$.

It remains to bound $\bm\Delta_{00}$ and $\bm\Delta_{++}$; the bound for
$\bm\Delta_{+0}$ is already given by \cref{eq:relative-p0}. Since
$\bm S\bm P_0=\bze$, the two terms in $\bm\Delta$ containing one factor
$\bm L$ vanish after multiplication by $\bm P_0$ on both sides. Therefore
\begin{equation}
\label{eq:relative-00}
  \bm\Delta_{00}
  =
  \bm P_0\bm\Delta_{\rm quad}\bm P_0,
  \qquad
  \Lqn{
    \ind_{\cE_{\rm emb}}\bm\Delta_{00}
  }{32}
  \lesssim
  \sigma^2\sqrt{np_{\max}},
\end{equation}
where the bound follows from \cref{eq:quad-moment}.

For $\bm\Delta_{++}$, expanding \cref{eq:Delta} gives
\[
\begin{aligned}
  \bm\Delta_{++}
  ={}&
  \bigl(\bm S(\bm G_0^\dagger)^{1/2}\bigr)^\trans
  \bm L(\bm G_0^\dagger)^{1/2}
  +
  (\bm G_0^\dagger)^{1/2}\bm L^\trans
  \bm S(\bm G_0^\dagger)^{1/2} \\
  &+
  (\bm G_0^\dagger)^{1/2}
  \bm\Delta_{\rm quad}
  (\bm G_0^\dagger)^{1/2}.
\end{aligned}
\]
The first two terms are transposes. Apply
\cref{eq:leakage-product-32} to each view with
$\bm\Psi=\bm S(\bm G_0^\dagger)^{1/2}$ and
$\bm\Theta=((\bm G_0^\dagger)^{1/2})_k$, and use
\cref{eq:population-Gamma-bounds,eq:population-SGamma-bounds,eq:quad-moment}.
This gives
\begin{equation}
\label{eq:relative-pp}
\begin{aligned}
  \Lqn{
    \ind_{\cE_{\rm emb}}\bm\Delta_{++}
  }{32}
  &\lesssim
  \sigma
  \left\{
    \frn{\bm S(\bm G_0^\dagger)^{1/2}}
    \opn{(\bm G_0^\dagger)^{1/2}}
    +
    \opn{\bm S(\bm G_0^\dagger)^{1/2}}
    \frn{(\bm G_0^\dagger)^{1/2}}
  \right\}
  +
  \opn{(\bm G_0^\dagger)^{1/2}}^2
  \Lqn{\bm\Delta_{\rm quad}}{32} \\
  &\lesssim
  \frac{\sigma\sqrt{r_1+r_2}}{\gmin}
  +
  \frac{\sigma\sqrt{p_{\min}}}{\gmin\sqrt\theta}
  +
  \frac{\sigma\sqrt{p_{\max}}}{\gmin\sqrt\theta}
  +
  \frac{\sigma^2\sqrt{np_{\max}}}{\gmin^2\theta}.
\end{aligned}
\end{equation}

The last event in the initial union bound is contained in the union of
$\{\opn{\bm\Delta_{00}}>c_1\gmin^2\theta\}$ and
$\{\opn{\bm\Delta_{+0}}>c_2\gmin\sqrt\theta\}$ for sufficiently small
numerical constants $c_1,c_2>0$. Markov's inequality with exponent $32$,
followed by
\cref{eq:relative-00,eq:relative-p0,eq:relative-pp}, therefore gives
\begin{align}
\label{eq:relative-failure}
  &\Pp\left\{
    \cE_{\rm emb},\,
    \opn{\bm\Delta_{++}}>\frac14
  \right\}
  +
  \Pp\left\{
    \cE_{\rm emb},\,
    \opn{\bm\Delta_{00}}
    +2\opn{\bm\Delta_{+0}}^2
    >
    \frac18\cemb^2\gmin^2\theta
  \right\}
  \notag\\
  &\quad\lesssim
  \Lqn{
    \ind_{\cE_{\rm emb}}\bm\Delta_{++}
  }{32}^{32}
  +
  \left(
    \frac{
      \Lqn{\ind_{\cE_{\rm emb}}\bm\Delta_{00}}{32}
    }{\gmin^2\theta}
  \right)^{32}
  +
  \left(
    \frac{
      \Lqn{\ind_{\cE_{\rm emb}}\bm\Delta_{+0}}{32}
    }{\gmin\sqrt\theta}
  \right)^{32}
  \notag\\
  &\quad\lesssim
  \frac{\sigma\sqrt{r_1+r_2}}{\gmin}
  +
  \frac{\sigma\sqrt{p_{\min}}}{\gmin\sqrt\theta}
  +
  \frac{\sigma^2\sqrt{np_{\max}}}{\gmin^2\theta}.
\end{align}
For the last inequality, the small-rate assumption bounds the $32$nd power
of each displayed rate by the rate itself. The only additional term is
$\sigma\sqrt{p_{\max}}/(\gmin\sqrt\theta)$, whose $32$nd power is bounded by
its square and hence by
$\sigma^2\sqrt{np_{\max}}/(\gmin^2\theta)$ because $p_{\max}\le n$.

Combining this estimate with the bound for $\Pp(\cE_{\rm emb}^c)$ and using
$\sqrt{p_{\min}}\asymp\sqrt{r+ (r_1\wedge r_2)}$ proves
\cref{eq:G-complement}. Together with
\cref{eq:first-order-expectation,eq:higher-order-expectation}, this
completes the proof of \cref{lem:stochastic-master}.

\subsection{Proof of~\cref{lem:leakage-product}}
\label{app:proof-leakage-product}

\begin{proof}
The proof has two parts. We first use the rotational invariance of the
Gaussian noise to randomize the orientation of \(\bm L_k\) while preserving
every quantity in \(\Hc_k\). Conditional on the original data, the resulting
product is then controlled by a fixed-matrix Haar estimate.

Fix an orthonormal matrix
\(\bm N_k\in\St(n,a_k)\) whose columns span \(\cM_k^\perp\), and let
\(\bm O\) be an independent Haar matrix in \(\mathbb O(a_k)\). The
distributional identity used below is
\begin{equation}
\label{eq:leakage-rotation-app}
\begin{aligned}
  \bigl(
    \bm S_1,\bm S_2,\bm L_{3-k},
    \bm L_k^\trans\bm L_k,\bm L_k
  \bigr)
  \ \overset{\mathrm d}{=}\
  \bigl(
    \bm S_1,\bm S_2,\bm L_{3-k},
    \bm L_k^\trans\bm L_k,
    \bm N_k\bm O\bm N_k^\trans\bm L_k
  \bigr).
\end{aligned}
\end{equation}
We first derive the claimed moment bound from this identity and verify the
identity at the end of the proof.

Choose measurable representatives of
\(\ind_{\cE}\), \(\bm\Psi\), and \(\bm\Theta\) as functions of the tuple
defining \(\Hc_k\). These representatives are unchanged by the replacement
of \(\bm L_k\) in \cref{eq:leakage-rotation-app}. Define
\[
  \bm H\coloneqq\bm N_k^\trans\bm\Psi,
  \qquad
  \bm M\coloneqq\bm N_k^\trans\bm L_k\bm\Theta.
\]
Since \(\bm L_k=\bm N_k\bm N_k^\trans\bm L_k\), the product obtained after
the replacement in \cref{eq:leakage-rotation-app} is
\(\bm H^\trans\bm O\bm M\).

We next establish the fixed-matrix Haar estimate needed after conditioning.
Let \(\bm H_0\in\mathbb R^{a\times u}\) and
\(\bm M_0\in\mathbb R^{a\times v}\) be deterministic, and let
\(\bm\Gamma\in\mathbb R^{a\times a}\) have independent standard Gaussian
entries. Chevet's inequality
\cite[Section~8.7]{vershynin2018high}, applied with
\(T=\bm M_0\mathbb S^{v-1}\) and
\(S=\bm H_0\mathbb S^{u-1}\), gives
\[
  \mathbb E
  \|\bm H_0^\trans\bm\Gamma\bm M_0\|_{\mathrm{op}}
  \le
  \|\bm H_0\|_{\mathrm F}\|\bm M_0\|_{\mathrm{op}}
  +
  \|\bm H_0\|_{\mathrm{op}}\|\bm M_0\|_{\mathrm F}.
\]
The map
\(\bm\Gamma\mapsto
\|\bm H_0^\trans\bm\Gamma\bm M_0\|_{\mathrm{op}}\)
is
\(\|\bm H_0\|_{\mathrm{op}}\|\bm M_0\|_{\mathrm{op}}\)-Lipschitz with
respect to the Frobenius norm, because
\[
\begin{aligned}
  \left|
    \|\bm H_0^\trans\bm\Gamma_1\bm M_0\|_{\mathrm{op}}
    -
    \|\bm H_0^\trans\bm\Gamma_0\bm M_0\|_{\mathrm{op}}
  \right|
  &\le
  \|\bm H_0^\trans(\bm\Gamma_1-\bm\Gamma_0)\bm M_0\|_{\mathrm{op}} \\
  &\le
  \|\bm H_0\|_{\mathrm{op}}
  \|\bm\Gamma_1-\bm\Gamma_0\|_{\mathrm F}
  \|\bm M_0\|_{\mathrm{op}}.
\end{aligned}
\]
Combining Gaussian concentration with the expectation bound therefore yields
\[
\begin{aligned}
  \left(
    \mathbb E
    \|\bm H_0^\trans\bm\Gamma\bm M_0\|_{\mathrm{op}}^q
  \right)^{1/q}
  &\le
  C\left\{
    \|\bm H_0\|_{\mathrm F}\|\bm M_0\|_{\mathrm{op}}
    +
    \|\bm H_0\|_{\mathrm{op}}\|\bm M_0\|_{\mathrm F}
    +
    \sqrt q\,
    \|\bm H_0\|_{\mathrm{op}}\|\bm M_0\|_{\mathrm{op}}
  \right\} \\
  &\le
  C\sqrt q
  \left\{
    \|\bm H_0\|_{\mathrm F}\|\bm M_0\|_{\mathrm{op}}
    +
    \|\bm H_0\|_{\mathrm{op}}\|\bm M_0\|_{\mathrm F}
  \right\}.
\end{aligned}
\]
The second inequality uses \(q\ge2\) and bounds each operator norm by the
corresponding Frobenius norm.

Apply Tropp's comparison principle
\cite[Theorem~1]{tropp2012comparison} to the seminorm
\(\bm X\mapsto\|\bm H_0^\trans\bm X\bm M_0\|_{\mathrm{op}}\) and the
function \(t\mapsto t^q\). Comparing a Haar matrix with
\(\bm\Gamma/\sqrt a\) gives
\begin{equation}
\label{eq:fixed-haar-moment}
  \left(
    \mathbb E_{\bm O}
    \|\bm H_0^\trans\bm O\bm M_0\|_{\mathrm{op}}^q
  \right)^{1/q}
  \le
  C\sqrt{\frac qa}
  \left\{
    \|\bm H_0\|_{\mathrm F}\|\bm M_0\|_{\mathrm{op}}
    +
    \|\bm H_0\|_{\mathrm{op}}\|\bm M_0\|_{\mathrm F}
  \right\}.
\end{equation}

We apply \cref{eq:fixed-haar-moment} conditionally with
\(a=a_k\), \(\bm H_0=\bm H\), and \(\bm M_0=\bm M\).
The identity \cref{eq:leakage-rotation-app} and Tonelli's theorem give
\[
  \mathbb E\left[
    \ind_{\cE}
    \|\bm\Psi^\trans\bm L_k\bm\Theta\|_{\mathrm{op}}^q
  \right]
  =
  \mathbb E\mathbb E_{\bm O}\left[
    \ind_{\cE}
    \|\bm H^\trans\bm O\bm M\|_{\mathrm{op}}^q
  \right].
\]
Because \(\bm N_k\) has orthonormal columns,
\[
\begin{aligned}
  \|\bm H\|_{\mathrm F}\|\bm M\|_{\mathrm{op}}
  +
  \|\bm H\|_{\mathrm{op}}\|\bm M\|_{\mathrm F}
  \le
  \|\bm L_k\|_{\mathrm{op}}
  \left(
    \|\bm\Psi\|_{\mathrm F}\|\bm\Theta\|_{\mathrm{op}}
    +
    \|\bm\Psi\|_{\mathrm{op}}\|\bm\Theta\|_{\mathrm F}
  \right).
\end{aligned}
\]
Consequently, \cref{eq:fixed-haar-moment} followed by Hölder's inequality
gives
\[
\begin{aligned}
  \Lqn{
    \ind_{\cE}\bm\Psi^\trans\bm L_k\bm\Theta
  }{q}
  &\lesssim
  \sqrt{\frac q{a_k}}\,
  \left\|
    \ind_{\cE}\|\bm L_k\|_{\mathrm{op}}
    \left(
      \|\bm\Psi\|_{\mathrm F}\|\bm\Theta\|_{\mathrm{op}}
      +
      \|\bm\Psi\|_{\mathrm{op}}\|\bm\Theta\|_{\mathrm F}
    \right)
  \right\|_{L^q} \\
  &\le
  \sqrt{\frac q{a_k}}\,
  \Lqn{\bm L_k}{2q}
  \Lqn{
    \ind_{\cE}
    \left(
      \|\bm\Psi\|_{\mathrm F}\|\bm\Theta\|_{\mathrm{op}}
      +
      \|\bm\Psi\|_{\mathrm{op}}\|\bm\Theta\|_{\mathrm F}
    \right)
  }{2q}.
\end{aligned}
\]
This is the conclusion of \cref{lem:leakage-product}, subject only to the
distributional identity.

It remains to prove \cref{eq:leakage-rotation-app}. Fix
\(\bm o\in\mathbb O(a_k)\) and define
\[
  \bm R
  \coloneqq
  \bm P_{\cM_k}
  +
  \bm N_k\bm o\bm N_k^\trans.
\]
The matrix \(\bm R\) is orthogonal, acts as the identity on \(\cM_k\), and
acts as \(\bm o\) on \(\cM_k^\perp\). Since
\(\operatorname{col}(\bm X_k)\subseteq\cM_k\), we have
\(\bm R\bm X_k=\bm X_k\). Left rotational invariance of the Gaussian noise
and independence of the two views therefore imply
\[
  (\bm R\bm Y_k,\bm Y_{3-k})
  \ \overset{\mathrm d}{=}\
  (\bm Y_k,\bm Y_{3-k}).
\]

The bias-corrected empirical factor is left orthogonally equivariant:
\[
  \bm C_k(\bm R\bm Y_k)
  =
  \bm R\bm C_k(\bm Y_k)
  \qquad\text{almost surely}.
\]
Indeed, choose the right singular vectors measurably as functions of
\(\bm Y_k^\trans\bm Y_k\). Left multiplication by \(\bm R\) preserves this
Gram matrix and all singular values, while multiplying the corresponding
left singular vectors by \(\bm R\). The weights depend only on the singular
values, so the displayed equivariance follows.

Since \(\bm R\) acts as the identity on \(\cM_k\), this equivariance gives
\[
  \bm S_k(\bm R\bm Y_k)=\bm S_k(\bm Y_k),
  \qquad
  \bm L_k(\bm R\bm Y_k)
  =
  \bm N_k\bm o\bm N_k^\trans\bm L_k(\bm Y_k).
\]
The other view is unchanged, and the transformed leakage has the same Gram
matrix:
\[
  \bigl(
    \bm N_k\bm o\bm N_k^\trans\bm L_k
  \bigr)^\trans
  \bigl(
    \bm N_k\bm o\bm N_k^\trans\bm L_k
  \bigr)
  =
  \bm L_k^\trans\bm L_k.
\]
Thus \cref{eq:leakage-rotation-app} holds with \(\bm O\) replaced by any
fixed \(\bm o\). Averaging over an independent Haar matrix proves the stated
identity and completes the proof.
\end{proof}

\section{Probabilistic tools for the lower bound}

\paragraph{Gaussian conditioning.}
The following lemma is a simple consequence of the tail bound for the smallest singular value of a Gaussian random matrix. 
\begin{lemma}
\label{lem:gaussian-loading-lower-bound}
There exist numerical constants $c_B>0$ and $d_0\ge1$ such that,
whenever $d\ge d_0$ and $1\le p\le d/3$, a matrix
$\bm B\in\mathbb R^{d\times p}$ with independent
$\mathcal N(0,d^{-1})$ entries satisfies
\[
  \mathbb P\left\{
    \sigma_p^2(\bm B)<c_B
  \right\}
  \le \frac1{16}.
\]

\end{lemma}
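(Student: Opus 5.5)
The plan is to rescale to a standard Gaussian matrix and apply a smallest-singular-value tail bound. Write $\bm B=d^{-1/2}\bm A$, where $\bm A\in\mathbb R^{d\times p}$ has independent standard Gaussian entries. Then $\sigma_p^2(\bm B)=\sigma_p(\bm A)^2/d$, so it suffices to show that $\sigma_p(\bm A)\ge c\sqrt d$ with probability at least $15/16$ for some numerical $c>0$. Once this is done, we can take $c_B\coloneqq c^2$.

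For a tall Gaussian matrix the natural tool is the standard lower-tail estimate \cite[Corollary~5.35]{vershynin2010introduction}, which is already cited in \cref{lem:gaussian-matrix-deviations}. For every $t\ge0$,
\[
  \mathbb P\bigl\{\sigma_p(\bm A)\le\sqrt d-\sqrt p-t\bigr\}\le 2e^{-t^2/2}.
\]
Since $p\le d/3$, we have $\sqrt d-\sqrt p\ge(1-1/\sqrt3)\sqrt d\ge 0.42\sqrt d$. Choose $t=0.21\sqrt d$. Then
\[
  \sigma_p(\bm A)>0.21\sqrt d
\]
except on an event of probability at most $2\exp(-0.022\,d)$. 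This probability falls below $1/16$ as soon as $d\ge d_0$, where $d_0$ is a suitable numerical constant, roughly $d_0\approx160$. It follows that $\sigma_p^2(\bm B)\ge 0.21^2\ge c_B$ with probability at least $15/16$, with $c_B=0.04$.

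There is no real obstacle here. The only points needing care are the constant bookkeeping, namely that $t$ be a fixed fraction of $\sqrt d$ so that the exponent is linear in $d$, and the role of $d\ge d_0$ in absorbing the prefactor $2$. The restriction $d\ge d_0$ is needed only because we insist on explicit universal constants. Alternatively, for small $d$ one could note that $\sigma_p(\bm A)>0$ almost surely and use continuity, but in that case $c_B$ would depend on $d$, which is why the statement imposes $d\ge d_0$.
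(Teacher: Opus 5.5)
Your proof is correct and is the argument the paper has in mind: the paper gives no separate proof, only remarking that the lemma follows from the Gaussian smallest-singular-value tail bound, and your constants ($t=0.21\sqrt d$, $c_B=0.04$, $d_0\approx160$) check out. One correction to your closing aside: only finitely many pairs $(d,p)$ have $d<d_0$, and $\sigma_p(\bm B)>0$ almost surely gives each of them a positive threshold, so the minimum of these thresholds with $0.04$ is still a numerical constant and the restriction $d\ge d_0$ could be dropped rather than being forced.
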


\paragraph{Gaussian projector information.}
For an orthogonal projector $\bm P\in\R^{N\times N}$, $\sigma>0$, and
$\beta\ge0$, define
\[
  \mathsf G_{\sigma,\beta}(\bm P)
  \coloneqq
  \mathcal N\left(\bm 0,\sigma^2(\bm I_N+\beta\bm P)\right).
\]
The only covariance identity used later is the following standard
specialization: if $\bm P$ and $\bm Q$ are equal-rank orthogonal projectors,
then
\begin{equation}
\label{eq:gaussian-projector-kl}
  \operatorname{KL}\left(
    \mathsf G_{\sigma,\beta}(\bm P)
    \,\middle\|\,
    \mathsf G_{\sigma,\beta}(\bm Q)
  \right)
  =
  \frac{\beta^2}{4(1+\beta)}
  \frn{\bm P-\bm Q}^{\,2}.
\end{equation}
For independent columns, we apply additivity of relative entropy directly at
the point of use.

\paragraph{Fano.} Below is a simple consequence of classical Fano's inequality and the definition of the TV distance.
\begin{lemma}[Fano with proxy laws]
\label{lem:fano-proxy}
Let $\Pi_1,\ldots,\Pi_N$ be priors supported on a parameter class
$\mathfrak P$, with induced observation laws $P_1,\ldots,P_N$. Suppose that
the target functional is constant under each prior, with respective values
$\bm\Theta_1,\ldots,\bm\Theta_N$, and that
$d(\bm\Theta_i,\bm\Theta_j)\ge 2\delta \text{ for all }i\ne j.$
Let $Q_1,\ldots,Q_N$ be proxy laws such that
$\max_j\operatorname{TV}(P_j,Q_j)\le\varepsilon$. Then, for every reference index
$i_0\in[N]$,
\[
  \inf_{\widehat{\bm\Theta}}
  \sup_{\vartheta\in\mathfrak P}
  \E_\vartheta
  d\bigl(\widehat{\bm\Theta},\Theta(\vartheta)\bigr)
  \ge
  \delta\left\{
    1-\varepsilon
    -\frac{N^{-1}\sum_{j=1}^N
      \operatorname{KL}(Q_j\|Q_{i_0})+\log 2}{\log N}
  \right\}.
\]
\end{lemma}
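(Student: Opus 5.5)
The plan is to derive the lemma from the classical Fano inequality, applied to the proxy laws $Q_j$, and then pass from the proxies back to the true laws $P_j$ through the total-variation bound.

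The first step is the usual reduction from estimation to testing. Given an estimator $\widehat{\bm\Theta}$, define the test $\widehat\jmath\in\arg\min_j d(\widehat{\bm\Theta},\bm\Theta_j)$. By the triangle inequality and the $2\delta$-separation, $\widehat\jmath\ne j$ implies $d(\widehat{\bm\Theta},\bm\Theta_j)\ge\delta$. The target functional is constant on the support of $\Pi_j$, so $\E_\vartheta d(\widehat{\bm\Theta},\Theta(\vartheta))\ge\delta\,P_\vartheta(\widehat\jmath\ne j)$ for each $\vartheta$ in that support. Averaging over $\Pi_j$ gives $\int\E_\vartheta d\,d\Pi_j\ge\delta P_j(\widehat\jmath\ne j)$, where $P_j=\int P_\vartheta\,d\Pi_j$ is the mixture law. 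The supremum over $\mathfrak P$ dominates any such average, and also dominates the uniform average over $j$. Hence
\[
\sup_{\vartheta}\E_\vartheta d\ge\delta\cdot\frac1N\sum_{j=1}^N P_j(\widehat\jmath\ne j).
\]

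The second step is the proxy swap. Since $\{\widehat\jmath\ne j\}$ is an event, $P_j(\widehat\jmath\ne j)\ge Q_j(\widehat\jmath\ne j)-\operatorname{TV}(P_j,Q_j)\ge Q_j(\widehat\jmath\ne j)-\varepsilon$. Averaging over $j$ gives $\frac1N\sum_j P_j(\widehat\jmath\ne j)\ge\frac1N\sum_j Q_j(\widehat\jmath\ne j)-\varepsilon$.

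The third step applies Fano's inequality to the $Q_j$ with a uniform index $J$. It gives
\[
\frac1N\sum_j Q_j(\widehat\jmath\ne j)\ge1-\frac{I(J;X)+\log2}{\log N}.
\]
The mutual information equals $\frac1N\sum_j\operatorname{KL}(Q_j\|\bar Q)$ with $\bar Q=\frac1N\sum_j Q_j$. Because $\bar Q$ minimizes $R\mapsto\frac1N\sum_j\operatorname{KL}(Q_j\|R)$, this is at most $\frac1N\sum_j\operatorname{KL}(Q_j\|Q_{i_0})$ for any reference index $i_0$. Combining the three steps gives the claimed bound, since $\delta\{1-\varepsilon-\cdots\}$ is exactly the result.

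None of the steps is genuinely hard. The one point needing care is that Fano must be applied to the proxies $Q_j$ rather than the true laws $P_j$: the KL terms are only controlled for the $Q_j$, and the test $\widehat\jmath$ is a fixed measurable function of the data, so the TV comparison is valid event by event. The minimax infimum is then taken at the end.
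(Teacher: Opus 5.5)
Your argument is correct and is the route the paper has in mind, although the paper states the lemma only as a ``simple consequence of classical Fano's inequality and the definition of the TV distance'' without writing out a proof. You reduce estimation to testing via the $2\delta$-separation, replace $P_j$ by $Q_j$ on the testing events at a cost of at most $\varepsilon$, and apply Fano to the proxies using $I(J;X)=N^{-1}\sum_j\operatorname{KL}(Q_j\|\bar Q)\le N^{-1}\sum_j\operatorname{KL}(Q_j\|Q_{i_0})$; this fills in exactly the omitted steps, with the tacit requirement $N\ge2$ that the statement also needs.
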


\paragraph{Assouad.}
We also need the Assouad lemma. 
\begin{lemma}
\label{lem:assouad}
Let $\Omega_m\coloneqq\{-1,1\}^m$. For each
$\bm\omega\in\Omega_m$, let $\Pi_{\bm\omega}$ be a prior supported on
$\mathfrak P$, with induced observation law $P_{\bm\omega}$. Suppose
the target functional is constant under each prior:
\[
  \Theta(\vartheta)=\bm\Theta_{\bm\omega}
  \qquad
  \text{for $\Pi_{\bm\omega}$-almost every $\vartheta$}.
\]
Assume that, for some $a>0$,
\[
  d(\bm\Theta_{\bm\omega},\bm\Theta_{\bm\omega'})
  \ge
  a\sqrt{\operatorname{Ham}(\bm\omega,\bm\omega')}
  \qquad
  \text{for all }\bm\omega,\bm\omega'\in\Omega_m,
\]
where $\operatorname{Ham}(\bm\omega,\bm\omega')$
denotes the Hamming distance.
For $j\in[m]$, let $\bm\omega^{(j)}$ denote the vertex obtained by
flipping the $j$th coordinate of $\bm\omega$. If
\[
  \max_{\substack{\bm\omega\in\Omega_m\\j\in[m]}}
  \operatorname{TV}
  \bigl(P_{\bm\omega},P_{\bm\omega^{(j)}}\bigr)
  \le \eta<1,
\]
then
\[
  \inf_{\widehat{\bm\Theta}}
  \sup_{\vartheta\in\mathfrak P}
  \mathbb E_\vartheta
  d\bigl(\widehat{\bm\Theta},\Theta(\vartheta)\bigr)
  \ge
  \frac{a\sqrt m}{4}(1-\eta).
\]
\end{lemma}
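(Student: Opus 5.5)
The approach is the standard Assouad reduction: pass from the estimator to a hypercube vertex, then bound the resulting Hamming risk coordinate by coordinate through pairwise testing. The only departure from the usual form is that the separation grows like $\sqrt{\operatorname{Ham}}$ rather than additively, which costs a concavity step.

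\textbf{Step 1 (reduce to the priors).} Fix any estimator $\widehat{\bm\Theta}$. The supremum over $\mathfrak P$ dominates the average over each prior $\Pi_{\bm\omega}$, and that average dominates the uniform average over $\bm\omega\in\Omega_m$. Since the target functional equals $\bm\Theta_{\bm\omega}$ for $\Pi_{\bm\omega}$-almost every $\vartheta$, it therefore suffices to lower bound
\[
  2^{-m}\sum_{\bm\omega}\mathbb E_{P_{\bm\omega}}\, d(\widehat{\bm\Theta},\bm\Theta_{\bm\omega}),
\]
where $P_{\bm\omega}$ is the mixture observation law.

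\textbf{Step 2 (round to a vertex).} Define $\widehat{\bm\omega}\in\arg\min_{\bm\omega'} d(\widehat{\bm\Theta},\bm\Theta_{\bm\omega'})$, chosen measurably. The triangle inequality and the minimizing property give
\[
  d(\bm\Theta_{\widehat{\bm\omega}},\bm\Theta_{\bm\omega})\le 2\,d(\widehat{\bm\Theta},\bm\Theta_{\bm\omega}).
\]
Combining this with the separation hypothesis yields
\[
  d(\widehat{\bm\Theta},\bm\Theta_{\bm\omega})\ge \tfrac a2\sqrt{\operatorname{Ham}(\widehat{\bm\omega},\bm\omega)}.
\]

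\textbf{Step 3 (linearize the square root).} This is the only nonstandard point. Since $0\le\operatorname{Ham}\le m$, we have $\sqrt{\operatorname{Ham}}\ge \operatorname{Ham}/\sqrt m$. Hence
\[
  \mathbb E_{P_{\bm\omega}}\sqrt{\operatorname{Ham}(\widehat{\bm\omega},\bm\omega)}
  \ge m^{-1/2}\sum_{j=1}^m P_{\bm\omega}(\widehat\omega_j\ne\omega_j).
\]

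\textbf{Step 4 (pairwise testing).} For each fixed $j$, the cube splits into pairs $\{\bm\omega,\bm\omega^{(j)}\}$. Within a pair the events $\{\widehat\omega_j\ne\omega_j\}$ and $\{\widehat\omega_j\ne\omega^{(j)}_j\}$ are complementary. The Le Cam two-point bound then gives
\[
  P_{\bm\omega}(\widehat\omega_j\ne\omega_j)+P_{\bm\omega^{(j)}}(\widehat\omega_j=\omega_j)\ge 1-\operatorname{TV}(P_{\bm\omega},P_{\bm\omega^{(j)}})\ge 1-\eta.
\]
Averaging over $\bm\omega$ and summing over $j$ shows that the average of $\sum_j P_{\bm\omega}(\widehat\omega_j\ne\omega_j)$ is at least $m(1-\eta)/2$.

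\textbf{Step 5 (assemble).} Chaining Steps 2–4 gives the lower bound
\[
  \tfrac a2\cdot m^{-1/2}\cdot \tfrac{m(1-\eta)}{2}=\tfrac{a\sqrt m}{4}(1-\eta).
\]
This holds for every estimator, so it bounds the infimum, which is the claim.

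No step is a real obstacle. The one place needing care is Step 3: the metric scales like $\sqrt{\operatorname{Ham}}$ rather than $\operatorname{Ham}$, and the bound $\sqrt{\operatorname{Ham}}\ge\operatorname{Ham}/\sqrt m$ is exactly what produces the $\sqrt m$ factor with constant $1/4$.
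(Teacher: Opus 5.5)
Your argument is correct: it is the standard Assouad reduction. Rounding to a vertex costs a factor $1/2$. The bound $\sqrt{\operatorname{Ham}}\ge\operatorname{Ham}/\sqrt m$ converts the $\sqrt{\operatorname{Ham}}$ separation into a sum of coordinatewise testing errors. Le Cam's two-point bound then gives $1-\eta$ per coordinate, and these combine to exactly $\tfrac{a\sqrt m}{4}(1-\eta)$.

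The paper states this lemma as a standard tool without proof, so there is no paper argument to compare against. The only implicit hypothesis you use is the triangle inequality for $d$ in Step 2, which holds for the operator-norm distance between projectors to which the lemma is applied.
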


\paragraph{Hellinger path.}

\begin{lemma}[Hellinger path bound]
\label{lem:hellinger-path}
Let $\{P_t:t\in[0,T]\}$ be probability measures dominated by $\mu$, with
strictly positive densities $p_t$. Suppose that
$t\mapsto\sqrt{p_t}$ is absolutely continuous as an $L^2(\mu)$-valued map
and that, for almost every $t$,
\[
  \frac{d}{dt}\sqrt{p_t}
  =
  \frac12 S_t\sqrt{p_t}
  \quad\text{in }L^2(\mu),
  \qquad
  S_t(x)\coloneqq\partial_t\log p_t(x).
\]
Define
$J_t\coloneqq\mathbb E_{P_t}[S_t(X)^2]$. Then, for the unnormalized
Hellinger distance
$H(P,Q)\coloneqq\|\sqrt p-\sqrt q\|_{L^2(\mu)}$,
\begin{equation}
\label{eq:hellinger-path-bound}
  \operatorname{TV}(P_0,P_T)
  \le
  H(P_0,P_T)
  \le
  \frac12\int_0^T\sqrt{J_t}\,dt.
\end{equation}
Consequently,
\begin{equation}
\label{eq:hellinger-path-sup}
  \operatorname{TV}(P_0,P_T)
  \le
  \frac{T}{2}
  \sup_{0\le t\le T}\sqrt{J_t}.
\end{equation}
\end{lemma}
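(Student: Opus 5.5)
The plan follows the classical Le Cam route for the first inequality and the fundamental theorem of calculus for the second. First I would check that $\operatorname{TV}(P_0,P_T)\le H(P_0,P_T)$ with the unnormalized Hellinger distance. Write $\operatorname{TV}=\tfrac12\int|p_0-p_T|\,d\mu$ and factor $|p_0-p_T|=|\sqrt{p_0}-\sqrt{p_T}|\,(\sqrt{p_0}+\sqrt{p_T})$. Cauchy--Schwarz then gives
\[
\operatorname{TV}\le\tfrac12\,H(P_0,P_T)\,\bigl\|\sqrt{p_0}+\sqrt{p_T}\bigr\|_{L^2(\mu)}.
\]
Since $\|\sqrt{p_0}+\sqrt{p_T}\|_{L^2}\le\|\sqrt{p_0}\|_{L^2}+\|\sqrt{p_T}\|_{L^2}=2$, the first inequality in \cref{eq:hellinger-path-bound} follows.

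For the second inequality, I would use the assumed absolute continuity of $t\mapsto\sqrt{p_t}$ as an $L^2(\mu)$-valued map. This gives the Bochner-integral identity
\[
\sqrt{p_T}-\sqrt{p_0}=\int_0^T\frac{d}{dt}\sqrt{p_t}\,dt
\]
in $L^2(\mu)$. This is the main point needing care: one needs that an absolutely continuous map into a Hilbert space (which has the Radon--Nikodym property) is differentiable almost everywhere and equals the integral of its derivative. I would simply cite this standard fact rather than reprove it.

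Minkowski's inequality for Bochner integrals then gives $H(P_0,P_T)\le\int_0^T\|\tfrac{d}{dt}\sqrt{p_t}\|_{L^2}\,dt$. By the hypothesis, for almost every $t$,
\[
\Bigl\|\tfrac{d}{dt}\sqrt{p_t}\Bigr\|_{L^2}^2=\tfrac14\int S_t^2\,p_t\,d\mu=\tfrac14J_t,
\]
so each norm equals $\tfrac12\sqrt{J_t}$. This proves \cref{eq:hellinger-path-bound}. Measurability of $t\mapsto J_t$ is not an issue, because $\|\tfrac{d}{dt}\sqrt{p_t}\|_{L^2}$ is the norm of a Bochner-measurable derivative.

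Finally, \cref{eq:hellinger-path-sup} follows by bounding the integrand by its supremum over $[0,T]$. This gives $\int_0^T\sqrt{J_t}\,dt\le T\sup_t\sqrt{J_t}$, and the claim follows after multiplying by $\tfrac12$.
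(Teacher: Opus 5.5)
Your proposal is correct and follows the paper's proof, which likewise chains the standard inequality $\operatorname{TV}\le H$, the length bound $\|\sqrt{p_T}-\sqrt{p_0}\|_{L^2(\mu)}\le\int_0^T\|\tfrac{d}{dt}\sqrt{p_t}\|_{L^2(\mu)}\,dt$ for the absolutely continuous path, and the identity $\|\tfrac{d}{dt}\sqrt{p_t}\|_{L^2(\mu)}=\tfrac12\sqrt{J_t}$. The only difference is that you spell out the Cauchy--Schwarz proof of $\operatorname{TV}\le H$ and the Bochner-integral justification of the length bound, both of which the paper takes as standard.
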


\begin{proof}
The standard inequality $\operatorname{TV}\le H$ and the length bound for
the absolutely continuous path $t\mapsto\sqrt{p_t}$ give
\[
  \operatorname{TV}(P_0,P_T)
  \le
  \left\|\sqrt{p_T}-\sqrt{p_0}\right\|_{L^2(\mu)}
  \le
  \int_0^T
  \left\|\frac{d}{dt}\sqrt{p_t}\right\|_{L^2(\mu)}\,dt
  =
  \frac12\int_0^T\sqrt{J_t}\,dt.
\]
The final inequality in \eqref{eq:hellinger-path-bound} follows by taking
the supremum of $\sqrt{J_t}$ over $[0,T]$.
\end{proof}

\paragraph{Bound posterior mean by mutual information.}

\begin{lemma}
\label{lem:spherical-posterior}
Let $\bm z$ be uniform on $\mathbb S^{M-1}$ and let $\bm Y$ be any
observation jointly distributed with $\bm z$. Then
\[
  \E\bigl\|\E[\bm z\mid\bm Y]\bigr\|^2
  \le
  \frac{C}{M}\operatorname{I}(\bm z;\bm Y).
\]
In the Gaussian mean channel
$\bm Y=\lambda\bm z+\sigma\bm g$, where
$\bm g\sim\mathcal N(\bm 0,\bm I_M)$ is independent of $\bm z$, the sharper
bound
$\E\bigl\|\E[\bm z\mid\bm Y]\bigr\|^2 \le 9\frac{\lambda^2}{M\sigma^2}$
holds.
\end{lemma}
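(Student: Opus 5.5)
The approach is a Donsker--Varadhan (Gibbs variational) argument applied conditionally on $\bm Y$, combined with the sub-Gaussianity of linear functionals of the uniform law on the sphere. Write $\bm m(\bm Y)\coloneqq\E[\bm z\mid\bm Y]$, let $\mu$ be the uniform law on $\mathbb S^{M-1}$, and let $\pi_{\bm Y}$ be (a regular version of) the posterior of $\bm z$ given $\bm Y$. The identity $\operatorname{I}(\bm z;\bm Y)=\E\operatorname{KL}(\pi_{\bm Y}\|\mu)$ reduces the first claim to a pointwise-in-$\bm Y$ inequality, which is then integrated.

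\textbf{Step 1 (sub-Gaussian mgf on the sphere).} For every $\bm u\in\R^M$ and $s\in\R$ we use
\[
  \log\E_{\mu}\exp\{s\langle\bm z,\bm u\rangle\}\le \frac{s^2\|\bm u\|_2^2}{2M}.
\]
To justify this, expand the exponential in a power series; odd moments vanish. It then suffices to check $\E\langle\bm z,\bm e_1\rangle^{2k}\le\E g^{2k}$ with $g\sim\mathcal N(0,1/M)$. This follows from the explicit Beta moments $\E z_1^{2k}=\frac{(2k-1)!!}{M(M+2)\cdots(M+2k-2)}\le \frac{(2k-1)!!}{M^k}$. If one prefers not to track the sharp constant, any bound of the form $Cs^2\|\bm u\|^2/M$ suffices for the first claim.

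\textbf{Step 2 (variational inequality given $\bm Y$).} Fix $\bm Y$ and apply Donsker--Varadhan to the test function $f(\bm z)=s\langle\bm z,\bm m(\bm Y)\rangle$. This test function is legitimate because $\bm m(\bm Y)$ is fixed once we condition on $\bm Y$. We obtain
\[
  s\|\bm m(\bm Y)\|_2^2=\E_{\pi_{\bm Y}}f
  \le\operatorname{KL}(\pi_{\bm Y}\|\mu)+\log\E_\mu e^{f}
  \le\operatorname{KL}(\pi_{\bm Y}\|\mu)+\frac{s^2\|\bm m(\bm Y)\|_2^2}{2M}.
\]
Choosing $s=M$ gives $\tfrac M2\|\bm m(\bm Y)\|_2^2\le\operatorname{KL}(\pi_{\bm Y}\|\mu)$. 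Taking expectations over $\bm Y$ yields $\E\|\bm m\|^2\le 2\operatorname{I}(\bm z;\bm Y)/M$, which is the first claim with $C=2$.

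\textbf{Step 3 (Gaussian channel).} Use the golden formula $\operatorname{I}(\bm z;\bm Y)\le\E_{\bm z}\operatorname{KL}(P_{\bm Y\mid\bm z}\|Q)$, valid for any reference law $Q$. Take $Q=\mathcal N(\bm 0,\sigma^2\bm I_M)$. Since $\|\bm z\|=1$, this gives $\operatorname{I}\le\lambda^2/(2\sigma^2)$. Combined with Step 2, we get $\E\|\bm m\|^2\le\lambda^2/(M\sigma^2)\le 9\lambda^2/(M\sigma^2)$.

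The main obstacle is the mgf comparison in Step 1. The constant $9$ leaves room, so a cruder sub-Gaussian constant, for instance one obtained from concentration on the sphere, would also suffice. The remaining points are routine: the existence of a regular conditional law, and finiteness of the KL divergence, which can be assumed since otherwise the bound is trivial.
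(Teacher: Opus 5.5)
Your proposal is correct and follows essentially the same route as the paper: Donsker--Varadhan applied to the posterior with test function $M\langle\bm m(\bm Y),\bm z\rangle$, combined with the strict $1/M$-sub-Gaussianity of spherical marginals, gives $C=2$. Bounding $\operatorname{I}(\bm z;\bm Y)\le\lambda^2/(2\sigma^2)$ against the reference law $\mathcal N(\bm 0,\sigma^2\bm I_M)$ (the inequality form of the paper's information-radius identity) then yields the Gaussian-channel bound; the only difference is that you verify the moment generating function bound yourself, by comparing $\E z_1^{2k}=(2k-1)!!/\{M(M+2)\cdots(M+2k-2)\}$ with Gaussian moments, which is valid, whereas the paper cites Marchal--Arbel.
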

\begin{proof}
Let $\nu_M$ denote the uniform probability measure on
$\mathbb S^{M-1}$. Its one-dimensional marginals are strictly
$1/M$-sub-Gaussian \cite[Corollary~1]{marchal2017subgaussian}. Thus, by
rotational invariance,
\[
  \log\int \exp(\langle\bm t,\bm u\rangle)\,\nu_M(d\bm u)
  \le \frac{\|\bm t\|^2}{2M},
  \qquad \bm t\in\mathbb R^M.
\]
Indeed, for $M\ge2$ a rescaled spherical coordinate has the symmetric
$\operatorname{Beta}((M-1)/2,(M-1)/2)$ distribution; for $M=1$ this is
the Rademacher bound.

For an observation value $y$, let $\Pi_y$ be the conditional law of
$\bm z$ given $\bm Y=y$, set
$\bm m_y\coloneqq\int\bm u\,\Pi_y(d\bm u)$, and write
$D_y\coloneqq\operatorname{KL}(\Pi_y\,\|\,\nu_M)$. The
Donsker--Varadhan variational inequality states that
\[
  \int f\,d\Pi_y
  \le D_y+\log\int e^f\,d\nu_M.
\]
Applying it with $f(\bm u)=M\langle\bm m_y,\bm u\rangle$ gives
$M\|\bm m_y\|^2\le D_y+M\|\bm m_y\|^2/2$, and therefore
$\|\bm m_y\|^2\le2D_y/M$; the inequality is automatic if $D_y=\infty$.
Averaging over $\bm Y$ and using
$\E D_{\bm Y}=\operatorname{I}(\bm z;\bm Y)$ proves the first assertion
with $C=2$.

For the Gaussian channel, let
$Q_{\bm u}=\mathcal N(\lambda\bm u,\sigma^2\bm I_M)$, let $Q$ be the
marginal law of $\bm Y$, and set
$Q_0=\mathcal N(\bm 0,\sigma^2\bm I_M)$. The information-radius identity
and the Gaussian shift formula give
\[
  \operatorname{I}(\bm z;\bm Y)+\operatorname{KL}(Q\,\|\,Q_0)
  =\int\operatorname{KL}(Q_{\bm u}\,\|\,Q_0)\,\nu_M(d\bm u)
  =\frac{\lambda^2}{2\sigma^2}.
\]
Hence $\operatorname{I}(\bm z;\bm Y)\le\lambda^2/(2\sigma^2)$, so the
first part yields the stronger bound
\[
  \E\bigl\|\E[\bm z\mid\bm Y]\bigr\|^2
  \le \frac{\lambda^2}{M\sigma^2}.
\]
The stated bound follows.
\end{proof}

\newpage 

\section{Proof of \cref{thm:lower}}
To simplify notation, we use $\mathcal{R}_{\rm minimax}$ to denote the minimax risk. 
It suffices to prove the following three separate lower bounds:
\begin{subequations}
\label{eq:minimax-lower-appendix}
\begin{align}
  \mathcal{R}_{\rm minimax}
  &\ge
  c\min\left\{1,\,
  \frac{\sigma\sqrt n}{\gamma_{\max}}
  \right\},
  \label{eq:minimax-lower-theta-independent}
  \\
  \mathcal{R}_{\rm minimax}
  &\ge
  c\min\left\{1,\,
  \frac{\sigma(\sqrt{r+ (r_1\wedge r_2)})}
       {\gamma_{\min}\sqrt\theta}
  \right\},
  \label{eq:minimax-lower-sqrt-theta}
  \\
  \mathcal{R}_{\rm minimax}
  &\ge
  c\min\left\{1,\,
  \frac{\sigma^2\sqrt n(\sqrt{r+ (r_1\wedge r_2)})}
       {\gamma_{\min}^2\theta}
  \right\}.
  \label{eq:minimax-lower-theta}
\end{align}
\end{subequations}
The first term is independent of $\theta$ and captures the difficulty of estimating the shared subspace even when the view-specific subspaces are fixed and known. The remaining two terms capture the additional difficulty of distinguishing the shared directions from the view-specific directions when the two view-specific subspaces are nearly aligned. We prove the three bounds in order.

\subsection{Proof of~\cref{eq:minimax-lower-theta-independent}}

We restrict the parameter class to a family in which both view-specific
subspaces and $r-1$ shared directions are fixed.  Only the remaining shared
direction varies.  We construct exponentially many such directions whose
projectors are separated by order $\delta$, and then choose Gaussian loadings
for which the corresponding observation laws have relative entropy of order
$\delta^2\gamma_{\max}^2/\sigma^2$.  Fano's lemma will then give the claimed
rate.

\paragraph{A packing of shared subspaces.}
Choose a fixed matrix with orthonormal columns:
\[
  \bigl[
    \underbrace{\bm U_0}_{n\times(r-1)}
    \ \underbrace{\bm V_1}_{n\times r_1}
    \ \underbrace{\bm V_2}_{n\times r_2}
    \ \underbrace{\bm u_0}_{n\times 1}
  \bigr]
  \in\St(n,r+r_1+r_2).
\]
We use $\bm U_0$ to form the fixed part of the shared subspace,
$\bm V_1$ and $\bm V_2$ to form the fixed view-specific subspaces,
and $\bm u_0$ as the baseline for the remaining shared direction.

The orthogonal complement of the displayed columns has dimension
$n-r-r_1-r_2\asymp n$. A standard packing argument therefore gives unit
vectors $\bm z_1,\ldots,\bm z_N$ in this orthogonal complement such that
\begin{equation}
\label{eq:ambient-packing}
  \log N\ge cn,
  \qquad
  \|\bm z_i-\bm z_j\|_2\ge c
  \quad\text{for all }i\ne j.
\end{equation}
For $0<\delta\le1/2$, set
$\bm u_j\coloneqq\sqrt{1-\delta^2}\,\bm u_0+\delta\bm z_j$,
$\bm U_j\coloneqq[\bm U_0,\bm u_j]$, and
$\bm P_j\coloneqq\bm U_j\bm U_j^\trans$.  The choice of the
$\bm z_j$ ensures that every $\bm U_j$ has orthonormal columns.  Moreover,
$\bm P_i-\bm P_j
=\bm u_i\bm u_i^\trans-\bm u_j\bm u_j^\trans$ and
$1-\bm z_i^\trans\bm z_j
=\frac12\|\bm z_i-\bm z_j\|_2^2\asymp1$.  The formula for the distance
between two rank-one projectors therefore gives
\begin{equation}
\label{eq:fano-target-separation}
\begin{aligned}
  \opn{\bm P_i-\bm P_j}^2
  &=
  1-(\bm u_i^\trans\bm u_j)^2 \\
  &=
  \delta^2(1-\bm z_i^\trans\bm z_j)
  \left\{
    2-\delta^2(1-\bm z_i^\trans\bm z_j)
  \right\}
  \asymp\delta^2,
  \qquad
  \frn{\bm P_i-\bm P_j}^2
  =
  2\opn{\bm P_i-\bm P_j}^2.
\end{aligned}
\end{equation}
Thus the targets are separated by order $\delta$ in operator norm, while
their squared Frobenius distance is of order $\delta^2$, as required for
the information calculation below.

\paragraph{Admissible priors and unconditioned comparison laws.}
The loadings are randomized so that the unconditioned observation laws are
centered Gaussian, while conditioning on their smallest singular values
ensures that every realization belongs to the parameter class.  Independently
for $k=1,2$, let
$\bm G_k\in\mathbb R^{d_k\times p_k}$ have independent
$\mathcal N(0,d_k^{-1})$ entries, and define
\[
  \mathcal E_k^{\rm load}
  \coloneqq
  \left\{
    \bm G_k^\trans\bm G_k\succeq c_B\bm I_{p_k}
  \right\},
  \qquad
  \bm B_k
  \coloneqq
  \frac{s_k}{\sqrt{c_B}}\bm G_k,
  \qquad
  \bm X_k^{(j)}
  \coloneqq
  [\bm U_j,\bm V_k]\bm B_k^\trans.
\]
For the prior indexed by $j$, condition jointly on
$\mathcal E_1^{\rm load}\cap\mathcal E_2^{\rm load}$.  On this event,
$\sigma_{p_k}(\bm B_k)\ge s_k$.  In addition,
$[\bm U_j,\bm V_k]\in\operatorname{St}(n,p_k)$ and
$\bm V_1^\trans\bm V_2=\bm 0$, so the resulting prior is supported on
$\mathfrak P$ and its target is the fixed projector $\bm P_j$.  Let
$\mathsf P_j$ denote the induced observation law.

Let $\mathsf Q_j$ denote the observation law obtained from the same
construction without conditioning on $\mathcal E_1^{\rm load}\cap\mathcal E_2^{\rm load}$.
The Gaussian smallest-singular-value bound and contraction of total
variation under the observation channel give
\begin{equation}
\label{eq:fano-comparison-tv}
  \max_{j\in[N]}
  \operatorname{TV}(\mathsf P_j,\mathsf Q_j)
  \le
  \mathbb P\left(
  \left(\mathcal E_1^{\mathrm{load}}\right)^c
  \cup
  \left(\mathcal E_2^{\mathrm{load}}\right)^c
\right)
  \le\frac18.
\end{equation}
It remains to control the relative entropy between the unconditioned laws.

\paragraph{Applying Fano's bound.}
Under $\mathsf Q_j$, the columns in view $k$ are independent centered
Gaussian vectors with covariance
$\bm\Sigma_{k,j}
=\sigma^2\{\bm I_n+\beta_k
(\bm U_j\bm U_j^\trans+\bm V_k\bm V_k^\trans)\}$, where
$\beta_k\coloneqq s_k^2/(c_B d_k\sigma^2)$. Since all fixed directions
cancel when comparing $j$ with $1$, \cref{eq:gaussian-projector-kl} and
\cref{eq:fano-target-separation} give
\begin{align}
  \operatorname{KL}(\mathsf Q_j\,\|\,\mathsf Q_1)
  &=
  \sum_{k=1}^2
  \frac{d_k\beta_k^2}{4(1+\beta_k)}
  \frn{
    \bm u_j\bm u_j^\trans-\bm u_1\bm u_1^\trans
  }^2
  \notag\\
  &\lesssim
  \delta^2
  \sum_{k=1}^2
  \frac{d_k\beta_k^2}{1+\beta_k}
  \lesssim
  \delta^2\frac{\gamma_{\max}^2}{\sigma^2}.
\label{eq:ambient-gaussian-kl}
\end{align}

We now choose
\[
  \delta
  \coloneqq
  c_0\left(
    1\wedge\frac{\sigma\sqrt n}{\gamma_{\max}}
  \right),
\]
where $c_0>0$ is sufficiently small. Then
\cref{eq:ambient-gaussian-kl} is at most a sufficiently small multiple
of $n$, and hence of $\log N$ by
\cref{eq:ambient-packing}.  Apply
\cref{lem:fano-proxy} with the operator-norm distance, reference law
$\mathsf Q_1$, target separation $c\delta$ from
\cref{eq:fano-target-separation}, and comparison error at most $1/8$ from
\cref{eq:fano-comparison-tv}.  Fano's lemma gives a minimax risk lower
bound in~\cref{eq:minimax-lower-theta-independent}.

\subsection{A common construction for the two $\theta$-dependent terms}

We now construct a single family of target subspaces for the two lower bounds
that depend on $\theta$.  By exchanging the labels of the views, we may assume
that $\gamma_2=\gamma_{\min}$. 
Put $q\coloneqq r_1\wedge r_2$.  The
construction varies the decomposition of a fixed $(r+q)$-dimensional
subspace into $r$ shared directions and $q$ view-specific directions.  View~1
will have the same distribution under every hypothesis, so only View~2 can
distinguish the resulting target projectors.  The two subsequent proofs differ
only in whether certain view-specific directions in View~2 are fixed or
randomized.

\paragraph{The target subspaces.}
Set $m\coloneqq n-r-r_1-r_2+q$.  The rank assumptions imply
$r+r_1+r_2\le2n/3$, and hence $m\ge q$.  We may therefore choose
\[
  \bigl[
    \underbrace{\bm F}_{n\times(r+q)}
    \ \underbrace{\bm D_1}_{n\times(r_1-q)}
    \ \underbrace{\bm D_2}_{n\times(r_2-q)}
    \ \underbrace{\bm H}_{n\times m}
  \bigr]
  \in\operatorname{St}(n,n),
\]
where a block with zero columns is omitted.  The shared subspace will vary
inside $\operatorname{col}(\bm F)$.  The matrices $\bm D_1$ and $\bm D_2$
supply the remaining $r_1-q$ and $r_2-q$ view-specific directions, while
$\operatorname{col}(\bm H)$ contains the additional View-2 directions used
to satisfy the separation condition.

For $\bm T\in\mathbb R^{q\times r}$, the columns of
$[\bm I_r,\bm T^\trans]^\trans$ span an $r$-dimensional subspace of
$\mathbb R^{r+q}$.  We orthonormalize these columns and map them into
$\mathbb R^n$ by defining
\begin{equation}
\label{eq:lower-shared-basis}
  \bm U_{\bm T}^{\circ}
  \coloneqq
  \begin{bmatrix}
    \bm I_r\\
    \bm T
  \end{bmatrix}
  (\bm I_r+\bm T^\trans\bm T)^{-1/2},
  \qquad
  \bm U_{\bm T}
  \coloneqq
  \bm F\bm U_{\bm T}^{\circ}.
\end{equation}
The corresponding target is the orthogonal projector
\begin{equation}
\label{eq:lower-target-projector}
  \bm P_{\bm T}
  \coloneqq
  \bm U_{\bm T}\bm U_{\bm T}^\trans
  =
  \bm F
  \begin{bmatrix}
    \bm I_r\\
    \bm T
  \end{bmatrix}
  (\bm I_r+\bm T^\trans\bm T)^{-1}
  \begin{bmatrix}
    \bm I_r&\bm T^\trans
  \end{bmatrix}
  \bm F^\trans.
\end{equation}
Thus $\bm T$ is only a coordinate used to specify the shared subspace;
$\bm P_{\bm T}$ is the quantity being estimated.

To construct the associated view-specific directions, define
\begin{equation}
\label{eq:lower-graph-R}
  \bm R_{\bm T}^{\circ}
  \coloneqq
  \begin{bmatrix}
    -\bm T^\trans\\
    \bm I_q
  \end{bmatrix}
  (\bm I_q+\bm T\bm T^\trans)^{-1/2}.
\end{equation}
The columns of $\bm R_{\bm T}^{\circ}$ form an orthonormal basis for the
orthogonal complement of
$\operatorname{col}(\bm U_{\bm T}^{\circ})$ in $\mathbb R^{r+q}$.
Indeed, the inverse-square-root factors orthonormalize the two matrices,
and direct multiplication shows that their column spaces are orthogonal.
Consequently,
$[\bm U_{\bm T}^{\circ},\bm R_{\bm T}^{\circ}]
\in\mathbb O(r+q)$.

\paragraph{The Assouad family.}
We choose the matrices $\bm T$ so that the family has $r\vee q$ binary
coordinates and every nonzero difference has rank one.  For
$\bm\omega\in\{-1,1\}^{r\vee q}$, define
\[
  \bm T_{\bm\omega}
  \coloneqq
  \begin{cases}
    \delta\bm e_1\bm\omega^\trans,
      & r\ge q,\\[2mm]
    \delta\bm\omega\bm e_1^\trans,
      & q>r,
  \end{cases}
\]
where $\bm e_1\in\mathbb R^q$ in the first case and
$\bm e_1\in\mathbb R^r$ in the second.  Since every nonzero difference
has rank one, its operator and Frobenius norms agree.  Therefore
\begin{equation}
\label{eq:lower-cube-distance}
  \opn{\bm T_{\bm\omega}}
  =
  \delta\sqrt{r\vee q},
  \qquad
  \opn{
    \bm T_{\bm\omega}-\bm T_{\bm\omega'}
  }
  =
  \frn{
    \bm T_{\bm\omega}-\bm T_{\bm\omega'}
  }
  =
  2\delta
  \sqrt{\operatorname{Ham}(\bm\omega,\bm\omega')}.
\end{equation}
The rank-one choice serves two purposes.  It makes the operator-norm
separation proportional to the square root of the Hamming distance, and it
ensures that neighboring hypotheses differ in only one shared direction and
one direction in its orthogonal complement.

The following deterministic estimate transfers separation between the
coordinate matrices to separation between the target projectors.  We state it
here because it is used by both $\theta$-dependent lower bounds and defer its
proof to the subsection containing the proofs of the auxiliary lemmas.

\begin{lemma}
\label{lem:lower-projector-separation}
For $\bm T,\bm S\in\mathbb R^{q\times r}$,
\[
  \opn{\bm P_{\bm T}-\bm P_{\bm S}}
  \ge
  \frac{
    \opn{\bm T-\bm S}
  }{
    \sqrt{1+\opn{\bm T}^2}
    \sqrt{1+\opn{\bm S}^2}
  }.
\]
In particular, if
$\|\bm T\|_{\mathrm{op}}\vee\|\bm S\|_{\mathrm{op}}\le1/2$, then
\[
  \opn{\bm P_{\bm T}-\bm P_{\bm S}}
  \ge
  \frac45\opn{\bm T-\bm S}.
\]
\end{lemma}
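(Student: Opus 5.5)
Both projectors live in $\operatorname{col}(\bm F)$ and $\bm F$ has orthonormal columns, so the operator norm may be computed in the coordinates $\bm U^\circ_{\bm T},\bm U^\circ_{\bm S}\in\mathbb R^{(r+q)\times r}$. The plan is to use the standard identity for equal-rank projectors, $\opn{\bm P_{\bm T}-\bm P_{\bm S}}=\opn{(\bm R^\circ_{\bm S})^\trans\bm U^\circ_{\bm T}}$. Here $\bm R^\circ_{\bm S}$ is the orthonormal basis of the complement of $\operatorname{col}(\bm U^\circ_{\bm S})$ constructed in \cref{eq:lower-graph-R}. The identity holds because, for equal-rank projectors $\bm P,\bm Q$, we have $\opn{\bm P-\bm Q}=\opn{(\bm I-\bm Q)\bm P}$, the sine of the largest principal angle.

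Next we compute the cross product explicitly from the block forms \cref{eq:lower-shared-basis,eq:lower-graph-R}:
\[
(\bm R^\circ_{\bm S})^\trans\bm U^\circ_{\bm T}
=(\bm I_q+\bm S\bm S^\trans)^{-1/2}\,[-\bm S,\ \bm I_q]\begin{bmatrix}\bm I_r\\ \bm T\end{bmatrix}(\bm I_r+\bm T^\trans\bm T)^{-1/2}
=(\bm I_q+\bm S\bm S^\trans)^{-1/2}(\bm T-\bm S)(\bm I_r+\bm T^\trans\bm T)^{-1/2}.
\]
Writing $\bm T-\bm S=(\bm I_q+\bm S\bm S^\trans)^{1/2}\,\bm X\,(\bm I_r+\bm T^\trans\bm T)^{1/2}$ with $\bm X$ the displayed matrix and taking norms gives
\[
\opn{\bm T-\bm S}\le\sqrt{1+\opn{\bm S}^2}\,\sqrt{1+\opn{\bm T}^2}\,\opn{\bm X}.
\]
Rearranging yields the first claim.

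For the particular case, if $\opn{\bm T}\vee\opn{\bm S}\le 1/2$, the denominator is at most $1+1/4=5/4$, which gives the factor $4/5$.

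The only step requiring care is the projector-distance identity. It follows either from the CS decomposition, or directly: for equal rank, $\bm P-\bm Q$ has spectrum symmetric about zero, with nonzero eigenvalues $\pm\sin\phi_j$ given by the principal angles, and these sines are exactly the singular values of $(\bm I-\bm Q)\bm P$. For the lower bound actually needed here, it even suffices to note $\opn{\bm P-\bm Q}\ge\opn{(\bm I-\bm Q)(\bm P-\bm Q)}=\opn{(\bm I-\bm Q)\bm P}$, which avoids equal-rank arguments altogether.
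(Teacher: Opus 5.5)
Your proof is correct and is essentially the paper's argument: both lower-bound $\opn{\bm P_{\bm T}-\bm P_{\bm S}}$ by the part of $\col(\bm U^\circ_{\bm T})$ orthogonal to $\col(\bm U^\circ_{\bm S})$, measured through $(\bm R^\circ_{\bm S})^\trans$. That compression produces the factor $(\bm I_q+\bm S\bm S^\trans)^{-1/2}(\bm T-\bm S)$. The only difference is that the paper evaluates it on a single test vector built from a unit $\bm x$ maximizing $\|(\bm T-\bm S)\bm x\|_2$, whereas you work at the matrix level and invert the two square-root factors, which yields the same constant.
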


We will choose $\delta$ so that
$\delta\sqrt{r\vee q}\le1/2$.  Applying
\cref{lem:lower-projector-separation} and
\cref{eq:lower-cube-distance} then gives
\begin{equation}
\label{eq:lower-target-separation}
  \opn{
    \bm P_{\bm T_{\bm\omega}}
    -
    \bm P_{\bm T_{\bm\omega'}}
  }
  \ge
  \delta
  \sqrt{\operatorname{Ham}(\bm\omega,\bm\omega')}.
\end{equation}

The rank-one property will also be essential when comparing neighboring
observation laws.  For neighboring vertices,
$\bm T_{\bm\omega'}-\bm T_{\bm\omega}
=\kappa\bm a\bm b^\trans$ for some unit vectors $\bm a$ and $\bm b$.
The difference vanishes on $\bm b^\perp$, so the corresponding shared
subspaces have $r-1$ directions in common.  Its transpose vanishes on
$\bm a^\perp$, so their orthogonal complements in
$\operatorname{col}(\bm F)$ have $q-1$ directions in common.
Consequently, only one shared direction and one orthogonal direction change,
and the two pairs are related by a rotation in a two-dimensional subspace.
The neighboring-law lemmas use this reduction to remove all directions whose
distributions are identical under the two hypotheses.

\paragraph{View~1 is identical under all hypotheses.}
For each $\bm T$, define
\begin{equation}
\label{eq:lower-view-one-frames}
  \bm V_{1,\bm T}
  \coloneqq
  [\bm F\bm R_{\bm T}^{\circ},\bm D_1]
  \in\operatorname{St}(n,r_1).
\end{equation}
The matrices $[\bm U_{\bm T},\bm V_{1,\bm T}]$ and
$[\bm F,\bm D_1]$ are orthonormal bases of the same subspace.  We can
therefore choose a first-view signal matrix that is independent of $\bm T$.
Fix $\bm W_1\in\operatorname{St}(d_1,p_1)$ and set
\begin{equation}
\label{eq:lower-fixed-first-view}
  \bm X_1
  \coloneqq
  s_1[\bm F,\bm D_1]\bm W_1^\trans.
\end{equation}
More precisely, there is an orthogonal matrix
$\bm O_{\bm T}\in\mathbb O(p_1)$ such that
$[\bm U_{\bm T},\bm V_{1,\bm T}]
=[\bm F,\bm D_1]\bm O_{\bm T}$.  Taking
$\bm B_{1,\bm T}\coloneqq s_1\bm W_1\bm O_{\bm T}$ represents
\cref{eq:lower-fixed-first-view} in the model, and every singular value of
$\bm B_{1,\bm T}$ equals $s_1$.  Hence the distribution of $\bm Y_1$ is
identical for every $\bm T$ and contains no information for distinguishing
the target projectors.

\paragraph{View~2 satisfies the separation condition.}
Let
$\rho\coloneqq1-2\theta$ and
$\rho_\perp\coloneqq\sqrt{1-\rho^2}$.  For
$\bm Z\in\operatorname{St}(m,q)$, define
\begin{equation}
\label{eq:lower-view-two-frame}
  \bm V_{2,\bm T,\bm Z}
  \coloneqq
  \left[
    \rho\bm F\bm R_{\bm T}^{\circ}
    +
    \rho_\perp\bm H\bm Z,\,
    \bm D_2
  \right]
  \in\operatorname{St}(n,r_2).
\end{equation}
The matrices $\bm F\bm R_{\bm T}^{\circ}$ and $\bm H\bm Z$ have
orthonormal columns and are mutually orthogonal.  Since
$\rho^2+\rho_\perp^2=1$, the first $q$ columns of
$\bm V_{2,\bm T,\bm Z}$ are orthonormal.  They are also orthogonal to
$\bm U_{\bm T}$ and $\bm D_2$, so
$[\bm U_{\bm T},\bm V_{2,\bm T,\bm Z}]$ has orthonormal columns.

The alignment with the first view is
\[
  \bm V_{1,\bm T}^\trans
  \bm V_{2,\bm T,\bm Z}
  =
  \begin{bmatrix}
    \rho\bm I_q&\bm 0\\
    \bm 0&\bm 0
  \end{bmatrix}.
\]
It follows that
\[
  \opn{
    \bm V_{1,\bm T}^\trans
    \bm V_{2,\bm T,\bm Z}
  }
  =
  \rho
  =
  1-2\theta,
\]
so the separation condition holds for every $\bm T$ and every $\bm Z$.
The target $\bm P_{\bm T}$ does not depend on $\bm Z$.  Thus $\bm Z$
may be fixed in one lower-bound construction and randomized in the other
without changing the target.

\paragraph{Reduction to neighboring laws.}
For either of the second-view loading constructions below, let
$\mathsf P_{\bm\omega}$ denote the observation law associated with
$\bm T_{\bm\omega}$.  Suppose that the priors are supported on
$\mathfrak P$ and that
\[
  \max_{\substack{
    \bm\omega\in\{-1,1\}^{r\vee q}\\
    j\in[r\vee q]
  }}
  \operatorname{TV}\bigl(
    \mathsf P_{\bm\omega},
    \mathsf P_{\bm\omega^{(j)}}
  \bigr)
  \le\frac14,
\]
where $\bm\omega^{(j)}$ is obtained by changing the $j$th coordinate of
$\bm\omega$.  Then \cref{eq:lower-target-separation,lem:assouad} imply
\begin{equation}
\label{eq:lower-assouad-reduction}
  \mathcal{R}_{\rm minimax}
  \gtrsim
  \delta\sqrt{r\vee q}.
\end{equation}
It therefore remains only to choose the second-view loading and $\delta$ so
that neighboring observation laws have total variation at most $1/4$.
Fixing $\bm Z$ gives the $\theta^{-1/2}$ term, whereas mixing over a
Haar-distributed $\bm Z$ gives the $\theta^{-1}$ term.

\begin{proof}[Proof of~\cref{lem:lower-projector-separation}]
   To verify this claim, consider two distinct vertices and write
$\bm T\coloneqq\bm T_{\bm\omega}$ and
$\bm S\coloneqq\bm T_{\bm\omega'}$.  Choose a unit vector
$\bm x\in\mathbb R^r$ satisfying
$\|(\bm T-\bm S)\bm x\|_2
=\|\bm T-\bm S\|_{\mathrm{op}}$, and consider the unit vector
\[
  \bm v
  \coloneqq
  \frac{1}{\sqrt{1+\|\bm T\bm x\|_2^2}}
  \begin{bmatrix}
    \bm x\\
    \bm T\bm x
  \end{bmatrix}
  \in\col(\bm U_{\bm T}^{\circ}).
\]
Because $\bm F$ has orthonormal columns, the distance between the target
projectors equals the distance between the corresponding projectors in
$\mathbb R^{r+q}$.  Moreover,
$\bm R_{\bm S}^{\circ}$ spans the orthogonal complement of
$\col(\bm U_{\bm S}^{\circ})$.  Applying the difference of the two
projectors to $\bm v$ therefore gives
\begin{align*}
  \opn{\bm P_{\bm T}-\bm P_{\bm S}}
  &\ge
  \|(\bm R_{\bm S}^{\circ})^\trans\bm v\|_2\\
  &=
  \frac{
    \|(\bm I_q+\bm S\bm S^\trans)^{-1/2}
      (\bm T-\bm S)\bm x\|_2
  }{
    \sqrt{1+\|\bm T\bm x\|_2^2}
  }\\
  &\ge
  \frac{
    \opn{\bm T-\bm S}
  }{
    \sqrt{1+\opn{\bm S}^2}
    \sqrt{1+\opn{\bm T}^2}
  }
  \ge
  \frac45\opn{\bm T-\bm S}.
\end{align*}
The penultimate inequality uses the smallest eigenvalue of
$(\bm I_q+\bm S\bm S^\trans)^{-1/2}$ and the choice of $\bm x$.
The final inequality follows from
$\|\bm T\|_{\mathrm{op}}\vee\|\bm S\|_{\mathrm{op}}\le1/2$. 
\end{proof}

\subsection{Proof of~\cref{eq:minimax-lower-sqrt-theta}}

For this bound, we fix the matrix $\bm Z$ in
\cref{eq:lower-view-two-frame}.  The additional View-2 directions therefore
have a known orientation, and the leading change in the second-view covariance
is proportional to $\rho_\perp$.  We construct an admissible Gaussian-loading
prior and state the neighboring-law estimate at the scale
$\delta\asymp\sigma/(\gamma_2\rho_\perp)$.  The common Assouad reduction in
\cref{eq:lower-assouad-reduction} will then give the result.

\paragraph{An admissible prior for View~2.}
Fix $\bm Z_0\in\operatorname{St}(m,q)$.  Let
$\bm G_2\in\mathbb R^{d_2\times p_2}$ have independent
$\mathcal N(0,d_2^{-1})$ entries, and define
\[
  \mathcal E_2^{\rm load}
  \coloneqq
  \left\{
    \bm G_2^\trans\bm G_2
    \succeq
    c_B\bm I_{p_2}
  \right\},
  \qquad
  \bm B_2
  \coloneqq
  \frac{s_2}{\sqrt{c_B}}\bm G_2.
\]
For each matrix $\bm T$ in the Assouad family, condition on
$\mathcal E_2^{\rm load}$ and set
\begin{equation}
\label{eq:lower-regular-signal}
  \bm X_{2,\bm T}
  \coloneqq
  [\bm U_{\bm T},\bm V_{2,\bm T,\bm Z_0}]
  \bm B_2^\trans.
\end{equation}
On $\mathcal E_2^{\rm load}$, the loading satisfies
$\sigma_{p_2}(\bm B_2)\ge s_2$.  Moreover,
$[\bm U_{\bm T},\bm V_{2,\bm T,\bm Z_0}]$ has orthonormal columns and
the separation condition was verified in the preceding subsection.  Together
with the fixed first-view signal in
\cref{eq:lower-fixed-first-view}, this defines a prior supported on
$\mathfrak P$ whose target is $\bm P_{\bm T}$.  Let
$\mathsf P_{\bm T}$ denote the induced observation law.

It remains to show that the laws corresponding to neighboring matrices in
the Assouad family are difficult to distinguish.  View~1 has the same
distribution under every hypothesis, so only View~2 contributes to their
total variation distance.  After removing the conditioning on
$\mathcal E_2^{\rm load}$, the columns of View~2 are independent centered Gaussian
vectors, and changing $\bm T$ changes only the projector onto the complete
second-view signal subspace.  The following lemma gives the required
comparison.  Its proof is deferred to the subsection containing the proofs
of the neighboring-law lemmas.

\begin{lemma}
\label{lem:lower-regular-edge}
There is a numerical constant $c>0$ such that the following holds.  Let
$\bm\omega$ and $\bm\omega'$ be neighboring vertices of the Assouad family,
and suppose that
\[
  \delta\sqrt{r\vee q}\le\frac12,
  \qquad
  \delta
  \le
  c\frac{\sigma}{\gamma_2\rho_\perp}.
\]
Then
\[
  \operatorname{TV}\bigl(
    \mathsf P_{\bm T_{\bm\omega}},
    \mathsf P_{\bm T_{\bm\omega'}}
  \bigr)
  \le\frac14.
\]
\end{lemma}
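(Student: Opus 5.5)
We first replace the conditioned laws by Gaussian proxies, then compare the proxies through a KL bound for Gaussian covariances. Let $\mathsf Q_{\bm T}$ be the observation law built from the same construction without conditioning on $\mathcal E_2^{\rm load}$. View~1 is the same fixed-signal Gaussian model under every hypothesis, so it factors out of every comparison. By \cref{lem:gaussian-loading-lower-bound}, $\Pp\{(\mathcal E_2^{\rm load})^c\}\le 1/16$. Since total variation contracts under the observation channel, this gives $\operatorname{TV}(\mathsf P_{\bm T},\mathsf Q_{\bm T})\le 1/16$ for every $\bm T$. The triangle inequality then reduces the claim to
\[
\operatorname{TV}(\mathsf Q_{\bm T_{\bm\omega}},\mathsf Q_{\bm T_{\bm\omega'}})\le 1/8 .
\]
Under $\mathsf Q_{\bm T}$ the $d_2$ columns of $\bm Y_2$ are i.i.d.\ $\mathsf G_{\sigma,\beta_2}(\bm\Pi_{\bm T})$. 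Here $\beta_2= s_2^2/(c_Bd_2\sigma^2)$ and $\bm\Pi_{\bm T}$ is the projector onto $\col[\bm U_{\bm T},\bm V_{2,\bm T,\bm Z_0}]$. Both projectors have rank $p_2$. Additivity of relative entropy and \cref{eq:gaussian-projector-kl} give
\[
\operatorname{KL}(\mathsf Q_{\bm T}\|\mathsf Q_{\bm S}) = \frac{d_2\beta_2^2}{4(1+\beta_2)}\frn{\bm\Pi_{\bm T}-\bm\Pi_{\bm S}}^2 .
\]
The prefactor satisfies $d_2\beta_2^2/(1+\beta_2)\asymp s_2^4/\{\sigma^2(d_2\sigma^2+s_2^2)\}=\gamma_2^2/\sigma^2$, up to constants depending on $c_B$.

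The key step is showing that $\frn{\bm\Pi_{\bm T}-\bm\Pi_{\bm S}}\lesssim \rho_\perp\delta$ for neighbors. The $\bm D_2$ block is common to both frames. Inside $\col(\bm F)\oplus\col(\bm H)$, the signal space equals the sum of $\col(\bm F\bm U_{\bm T}^\circ)$ and $\col(\rho\bm F\bm R^\circ_{\bm T}+\rho_\perp\bm H\bm Z_0)$. Consider the coefficient representation in the orthonormal basis $[\bm F\bm U^\circ_{\bm T},\bm F\bm R^\circ_{\bm T}]=\bm F\bm O$. By the rank-one reduction described before the statement, for neighbors $\bm S=\bm T+\kappa\bm a\bm b^\trans$ only one shared direction $\bm u$ and one complementary direction $\bm w$ in $\col(\bm F)$ rotate. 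They rotate by an angle $\phi\asymp\|\bm T-\bm S\|_{\rm op}\le 2\delta$ in the plane they span. The rest of $\col(\bm F)$ is unchanged. The remaining directions are common to both signal spaces: the other $r-1$ shared directions, the other $q-1$ combinations $\rho\bm F\bm r_i+\rho_\perp\bm H\bm z_i$, and $\bm D_2$. The two signal spaces therefore share a common $(p_2-2)$-dimensional part.

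On the remaining part, the hypotheses differ by the two-dimensional frames $\{\bm u,\rho\bm w+\rho_\perp\bm h\}$ and $\{\bm u',\rho\bm w'+\rho_\perp\bm h\}$, where $\bm h=\bm H\bm Z_0\bm e$ is fixed and $(\bm u',\bm w')$ is $(\bm u,\bm w)$ rotated by $\phi$. Here $\bm e$ is the corresponding unit coefficient vector, chosen to match the rotated column of $\bm R^\circ$; verifying this index bookkeeping requires care. The span of each frame is then written in $\operatorname{span}\{\bm u,\bm w,\bm h\}\cong\R^3$.

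The projector onto the span of $\bm u$ and $\rho\bm w+\rho_\perp\bm h$ is $\bm I_3-\bm n\bm n^\trans$ with normal $\bm n=\rho_\perp\bm w-\rho\bm h$. The normal for the other hypothesis is $\bm n'=\rho_\perp\bm w'-\rho\bm h$, so $\|\bm n-\bm n'\|=\rho_\perp\|\bm w-\bm w'\|\le\rho_\perp\phi$. Hence $\frn{\bm\Pi_{\bm T}-\bm\Pi_{\bm S}}\le 2\|\bm n-\bm n'\|\lesssim\rho_\perp\delta$. This three-dimensional normal-vector computation is the crux, and it is where I expect the main care to be needed: one must check that the common directions truly split off orthogonally, which they do since they are orthogonal to $\bm u,\bm w,\bm u',\bm w',\bm h$.

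Combining, $\operatorname{KL}\lesssim \gamma_2^2\rho_\perp^2\delta^2/\sigma^2\le Cc^2$. By Pinsker's inequality, $\operatorname{TV}(\mathsf Q,\mathsf Q')\le\sqrt{\operatorname{KL}/2}\le 1/8$ once $c$ is small enough. The condition $\delta\sqrt{r\vee q}\le1/2$ is used only to keep $\phi$ small and the frames well defined through \cref{lem:lower-projector-separation}.
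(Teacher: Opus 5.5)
Your outline matches the paper's: you remove the conditioning at cost $1/16+1/16$, apply Pinsker with \cref{eq:gaussian-projector-kl}, and bound the prefactor by $\gamma_2^2/\sigma^2$. The projector step, however, has a genuine gap.

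Let $\bm R_*$ be an orthonormal basis of $\col(\bm R^{\circ}_{\bm T_0})\cap\col(\bm R^{\circ}_{\bm T_1})$, which has dimension $q-1$. Because $\bm Z_0$ is fixed, a vector of $\col(\rho\bm F\bm R^{\circ}_{\bm T_i}+\rho_\perp\bm H\bm Z_0)$ with $\bm F$-component $\rho\bm F\bm R_*\bm y$ has $\bm H$-component $\rho_\perp\bm H\bm Z_0(\bm R^{\circ}_{\bm T_i})^\trans\bm R_*\bm y$. So your ``other $q-1$ combinations'' are common to both hypotheses, and the last column pairs with one fixed $\bm h$, only if $(\bm R^{\circ}_{\bm T_0})^\trans\bm R_*=(\bm R^{\circ}_{\bm T_1})^\trans\bm R_*$. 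Equivalently, $(\bm I_q+\bm T_0\bm T_0^\trans)^{1/2}$ and $(\bm I_q+\bm T_1\bm T_1^\trans)^{1/2}$ must agree on $\bm a^\perp$.

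For $r\ge q$ this holds, since $\bm T_{\bm\omega}\bm T_{\bm\omega}^\trans=\delta^2r\,\bm e_1\bm e_1^\trans$ does not depend on $\bm\omega$. Then $\bm R^{\circ}_{\bm T_0}$ and $\bm R^{\circ}_{\bm T_1}$ differ only in their first column, $\bm h=\bm H\bm Z_0\bm e_1$, and your normal-vector computation is correct.

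For $q>r$, however, $\bm T_{\bm\omega}\bm T_{\bm\omega}^\trans=\delta^2\bm\omega\bm\omega^\trans$ varies and the pairing breaks. Take $r=1$, $q=2$, $\bm T_0=\delta(1,1)^\trans$, $\bm T_1=\delta(-1,1)^\trans$. Then $\bm R_*\propto(-\delta,0,1)^\trans$. The first entries of $\bm R_*^\trans\bm R^{\circ}_{\bm T_0}$ and $\bm R_*^\trans\bm R^{\circ}_{\bm T_1}$ are $(\sqrt{1+2\delta^2}-1)/(2\sqrt{1+\delta^2})$ and its negative. Consequently, the two three-dimensional signal subspaces inside $\col(\bm F)\oplus\col(\bm H\bm Z_0)$ together span all five dimensions. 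They meet in a line, but their non-common parts span four dimensions. So no choice of $\bm h$ makes $\bm\Pi_{\bm T}-\bm\Pi_{\bm S}=\bm n'\bm n'^\trans-\bm n\bm n^\trans$ hold.

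The bound $\frn{\bm\Pi_{\bm T_0}-\bm\Pi_{\bm T_1}}\lesssim\rho_\perp\delta$ is still true, and the paper gets it without any common-subspace bookkeeping. With $\bm K_i\coloneqq\bm R^{\circ}_{\bm T_i}(\bm R^{\circ}_{\bm T_i})^\trans$, expanding both projectors gives
\begin{align*}
  \bm\Pi_{\bm T_0}-\bm\Pi_{\bm T_1}
  ={}&-\rho_\perp^2\bm F(\bm K_0-\bm K_1)\bm F^\trans
  +\rho\rho_\perp\bm F(\bm R^{\circ}_{\bm T_0}-\bm R^{\circ}_{\bm T_1})\bm Z_0^\trans\bm H^\trans\\
  &+\rho\rho_\perp\bm H\bm Z_0(\bm R^{\circ}_{\bm T_0}-\bm R^{\circ}_{\bm T_1})^\trans\bm F^\trans .
\end{align*}
The three pieces are Frobenius-orthogonal. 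The paper then shows $\frn{\bm R^{\circ}_{\bm T_0}-\bm R^{\circ}_{\bm T_1}}\le2\frn{\bm T_0-\bm T_1}$, using $\frn{\bm A^{-1/2}-\bm B^{-1/2}}\le\frac12\frn{\bm A-\bm B}$ for $\bm A,\bm B\succeq\bm I_q$ together with $\opn{\bm T_i}\le1/2$. That Lipschitz estimate, not \cref{lem:lower-projector-separation}, is where $\delta\sqrt{r\vee q}\le1/2$ is used. Alternatively, you could keep your geometric argument for $r\ge q$ and handle $q>r$ by perturbing the mismatched pairing, but that needs the same kind of estimate for $(\bm I_q+\bm T\bm T^\trans)^{\pm1/2}$.
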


Choose
\[
  \delta
  \coloneqq
  c_0
  \left\{
    \frac1{\sqrt{r\vee q}}
    \wedge
    \frac{\sigma}{\gamma_2\rho_\perp}
  \right\},
\]
where $c_0>0$ is sufficiently small.  The first term ensures that
$\delta\sqrt{r\vee q}\le1/2$, while the second permits the application of
\cref{lem:lower-regular-edge} to every neighboring pair.  Consequently,
\cref{eq:lower-assouad-reduction} gives a minimax risk lower bound of order
\[
  \delta\sqrt{r\vee q}
  \asymp
  1\wedge
  \frac{
    \sigma\sqrt{r\vee q}
  }{
    \gamma_2\rho_\perp
  }
  \asymp
  1\wedge
  \frac{
    \sigma(\sqrt r+\sqrt q)
  }{
    \gamma_{\min}\sqrt\theta
  }.
\]
The final comparison uses
$\gamma_2=\gamma_{\min}$,
$\rho_\perp^2=4\theta(1-\theta)\asymp\theta$, and
$\sqrt{r\vee q}\asymp\sqrt r+\sqrt q$.  Since
$q=r_1\wedge r_2$, this proves
\cref{eq:minimax-lower-sqrt-theta}.

\subsection{Proof of~\cref{eq:minimax-lower-theta}}

For the third term, we randomize the matrix $\bm Z$ in
\cref{eq:lower-view-two-frame}.  Specifically, let
\[
  \bm Z
  \sim
  \operatorname{Haar}\{\operatorname{St}(m,q)\},
\]
independently of the loading and observation noise.  This is a prior on the
view-specific subspace only: for each $\bm T$, the target remains the fixed
projector $\bm P_{\bm T}$.  The purpose of the randomization is to make the
orientation of the additional View-2 directions unobserved.  A comparison
conditional on the entire matrix $\bm Z$ would give only the
$\theta^{-1/2}$ rate obtained in the preceding subsection.

Set
\[
  M
  \coloneqq
  m-q+1
  =
  n-r-r_1-r_2+1.
\]
The rank assumptions imply $M\asymp n$.  In the proof of the neighboring-law
bounds, we condition on $q-1$ columns of $\bm Z$.  The remaining column is
then uniform on a sphere in an $M$-dimensional subspace.  The resulting
factor $M^{-1}$ in the information bound is the source of the factor
$\sqrt n$ in the final rate.

We use different loading distributions according to whether
$s_2^2\le c_Bd_2\sigma^2$.  Below this threshold, a conditioned Gaussian
loading leads to a covariance model with a bounded covariance ratio.  Above
the threshold, we instead use a deterministic loading with orthonormal
columns.  This distinction is technical: both constructions give the same
neighboring-law bound at the scale
\[
  \delta
  \lesssim
  \frac{\sigma^2\sqrt M}
       {\gamma_2^2\rho_\perp^2}.
\]

\paragraph{The low-loading regime.}
Suppose first that $s_2^2\le c_Bd_2\sigma^2$.  Independently of $\bm Z$,
let $\bm G_2\in\mathbb R^{d_2\times p_2}$ have independent
$\mathcal N(0,d_2^{-1})$ entries, and define
\[
  \mathcal E_2^{\rm load}
  \coloneqq
  \left\{
    \bm G_2^\trans\bm G_2
    \succeq
    c_B\bm I_{p_2}
  \right\},
  \qquad
  \bm B_2
  \coloneqq
  \frac{s_2}{\sqrt{c_B}}\bm G_2.
\]
For each $\bm T$ in the Assouad family, condition on $\mathcal E_2^{\rm load}$ and set
\begin{equation}
\label{eq:lower-hidden-low-signal}
  \bm X_{2,\bm T}
  \coloneqq
  [\bm U_{\bm T},\bm V_{2,\bm T,\bm Z}]
  \bm B_2^\trans.
\end{equation}
Every realization satisfies
$\sigma_{p_2}(\bm B_2)\ge s_2$, and the separation condition holds for every
realization of $\bm Z$.  Together with the fixed first-view signal in
\cref{eq:lower-fixed-first-view}, this defines a prior supported on
$\mathfrak P$.  Let $\mathsf P_{\bm T}$ denote the observation law after
integrating over both the conditioned loading and $\bm Z$.

The next lemma controls the total variation distance between neighboring
laws.  Its proof first removes the conditioning on $\mathcal E_2^{\rm load}$ and then
compares the resulting Gaussian covariance mixtures after integrating over
the unobserved direction.  We defer the proof to the subsection containing
the neighboring-law calculations.

\begin{lemma}
\label{lem:lower-hidden-low-edge}
There is a numerical constant $c>0$ such that the following holds.  Let
$\bm\omega$ and $\bm\omega'$ be neighboring vertices of the Assouad family.
Suppose that
\[
  s_2^2\le c_Bd_2\sigma^2,
  \qquad
  \delta\sqrt{r\vee q}\le\frac12,
  \qquad
  \delta
  \le
  c\frac{\sigma^2\sqrt M}
          {\gamma_2^2\rho_\perp^2}.
\]
Then
\[
  \operatorname{TV}\bigl(
    \mathsf P_{\bm T_{\bm\omega}},
    \mathsf P_{\bm T_{\bm\omega'}}
  \bigr)
  \le\frac14.
\]
\end{lemma}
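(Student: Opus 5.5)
The proof follows the template of \cref{lem:lower-regular-edge}, but the Haar direction is integrated out before any comparison is made. First the conditioning is removed: by \cref{lem:gaussian-loading-lower-bound}, $\mathbb P\{(\mathcal E_2^{\rm load})^c\}\le 1/16$. Contraction of total variation under the observation channel then reduces the claim to comparing the unconditioned laws $\mathsf Q_{\bm T}$, at the cost of $1/8$. Under $\mathsf Q_{\bm T}$ and conditional on $\bm Z$, the $d_2$ columns of $\bm Y_2$ are i.i.d.\ $\mathsf G_{\sigma,\beta}(\bm\Pi_{\bm T,\bm Z})$. Here $\bm\Pi_{\bm T,\bm Z}$ is the projector onto $\col[\bm U_{\bm T},\bm V_{2,\bm T,\bm Z}]$, and
\[
\beta=\frac{s_2^2}{c_Bd_2\sigma^2}\le 1
\]
in the low-loading regime. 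In this regime $\gamma_2^2\asymp s_2^4/(d_2\sigma^2)$, so $d_2\beta^2\asymp\gamma_2^2/\sigma^2$.

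Next comes the rank-one reduction recorded after \cref{eq:lower-cube-distance}. Neighboring $\bm T,\bm T'$ share $r-1$ shared directions and $q-1$ directions of $\col(\bm F\bm R^{\circ})$, so one can choose bases in which they differ only by a rotation by an angle $\phi\asymp\delta$ in a plane spanned by $\bm f_u,\bm f_v\subset\col(\bm F)$. Writing $\bm Z=[\bm Z_{\rm rest},\bm z]$, I condition on $\bm Z_{\rm rest}$; it enters both hypotheses identically. The remaining column $\bm z$ is then uniform on the sphere of an $M$-dimensional subspace $\mathcal H_M\subset\col(\bm H)$. After projecting out all common directions, which is a sufficient reduction, the two hypotheses become mixtures over $\bm z$ of Gaussian laws with covariance $\sigma^2(\bm I+\beta\bm\Pi)$. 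The signal part of the projector is spanned by the two vectors
\[
\bm u_t=\cos t\,\bm f_u+\sin t\,\bm f_v,\qquad \rho\bm v_t+\rho_\perp\bm z,
\]
where $\bm v_t=-\sin t\,\bm f_u+\cos t\,\bm f_v$, with $t=0$ and $t=\phi$ respectively.

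The rotation is then controlled by the Hellinger path bound \cref{lem:hellinger-path} applied along $t\in[0,\phi]$ to the mixture law $P_t$ of all $d_2$ columns, integrated over $\bm z$. The span of $\{\bm u_t,\bm v_t\}$ does not move with $t$. Hence the derivative of the conditional projector $\bm\Pi_t(\bm z)$ comes only from the cross terms between the rotating plane and $\rho_\perp\bm z$, and it is linear in $\bm z$. The mixture score is therefore the posterior mean of the conditional score,
\[
S_t=\mathbb E_{\bm z\mid\bm Y}\bigl[\partial_t\log p_t(\bm Y\mid\bm z)\bigr].
\]
The conditional score has the form $\rho\rho_\perp\beta\,\langle \bm z,\bm A_t(\bm Y)\rangle$ plus terms even in $\bm z$. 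The even terms cancel, because the within-plane covariance is $t$-invariant up to sign symmetry. Thus
\[
J_t\lesssim\rho_\perp^2\beta^2\,\mathbb E\bigl\|\bm A_t^\top\mathbb E[\bm z\mid\bm Y]\bigr\|^2 ,
\]
where $\bm A_t$ collects the sample cross moments $\sum_\ell(\bm y_\ell^\top\bm e)(\bm y_\ell)_{\mathcal H_M}$ between the plane and $\mathcal H_M$, normalized by $\sigma^2(1+\beta)$. Its magnitude is of order $\sqrt{d_2}\,\sigma^{-2}$ times a noise-level vector in $\mathcal H_M$.

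To bound $\mathbb E\|\mathbb E[\bm z\mid\bm Y]\|^2$ I use \cref{lem:spherical-posterior}, which gives at most $C\,\mathrm I(\bm z;\bm Y)/M$. The information satisfies $\mathrm I(\bm z;\bm Y)\le\mathbb E\,\mathrm{KL}(P_{\bm Y\mid\bm z}\|P_{\bm Y\mid\bm z_0})\lesssim d_2\beta^2\rho_\perp^2$ by \cref{eq:gaussian-projector-kl}. To avoid carrying $\bm A_t$ and $\mathbb E[\bm z\mid\bm Y]$ as separate factors, the better route is to estimate the product jointly: conditionally on $\bm Y$, decouple through Cauchy--Schwarz and use the Gaussian-channel form of \cref{lem:spherical-posterior} for the $\mathcal H_M$-component of the data, where $\bm z$ appears as a mean shift of size about $\sqrt{d_2}\beta\rho_\perp$ relative to unit noise. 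Either route yields
\[
J_t\lesssim \frac{(d_2\beta^2)^2\rho_\perp^4}{M}\asymp\frac{\gamma_2^4\rho_\perp^4}{\sigma^4 M}.
\]
By \cref{eq:hellinger-path-sup} this gives $\mathrm{TV}\lesssim\phi\,\gamma_2^2\rho_\perp^2/(\sigma^2\sqrt M)$. The hypothesis $\delta\le c\sigma^2\sqrt M/(\gamma_2^2\rho_\perp^2)$, together with $\phi\asymp\delta$, which follows from $\delta\sqrt{r\vee q}\le1/2$, makes this at most $1/8$. Combined with the removal step, the total is at most $1/4$.

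The main obstacle is the score bound. Showing rigorously that the first-order, $\bm z$-odd part of the score survives only through the posterior mean, and then controlling $\mathbb E\|\bm A_t^\top\mathbb E[\bm z\mid\bm Y]\|^2$ without losing the factor $M^{-1}$, is where the work lies. $\bm A_t$ is correlated with the posterior, so a crude operator-norm bound would lose a factor of $n$. My first attempt would be to rotate coordinates so that $\bm A_t$ is aligned with the sufficient statistic for $\bm z$, which reduces the problem to the spherical Gaussian channel of \cref{lem:spherical-posterior}. I would also check that the second-order, $\bm z$-even contributions are bounded by $\phi\,d_2\beta^2\rho_\perp^2/M$ as well, or cancel by the symmetry $\bm z\mapsto-\bm z$.
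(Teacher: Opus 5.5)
\textbf{Gap 1: the claimed cancellation of the $\bm z$-even part of the score is false.} Several steps match the paper: removing the conditioning, the rank-one and Haar-invariance reduction to one spherical direction, the Hellinger path in the rotation angle, writing the mixture score as a posterior mean, and \cref{lem:spherical-posterior} with $\operatorname{I}(\widetilde{\bm z};\bm W)\lesssim I\rho_\perp^2$, where $I=d_2\beta^2/(1+\beta)\asymp\gamma_2^2/\sigma^2$. The problem is in the score. On the fixed plane, $\bm\Gamma_t$ restricts to $\bm a_t\bm a_t^\top+\rho^2\bm b_t\bm b_t^\top$, which is not rotation-invariant. In fact $\dot{\bm\Gamma}_t=\rho_\perp(\bm a_t\bm w_t^\top+\bm w_t\bm a_t^\top)$ with $\bm w_t=\rho_\perp\bm b_t-\rho\widetilde{\bm z}$. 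So the conditional score is affine in $\widetilde{\bm z}$, with the $\widetilde{\bm z}$-independent part $\frac{\beta\rho_\perp^2}{\sigma^2(1+\beta)}\sum_j(\bm a_t^\top\bm W_j)(\bm b_t^\top\bm W_j)$. This term does not cancel, and the symmetry $\bm z\mapsto-\bm z$ does nothing to it. It contributes $J_t\asymp I\rho_\perp^4$, hence $\varphi^2I\rho_\perp^4\lesssim c^2M/I$. That is small only because of the standing assumption $\gamma_2\ge\nu\sigma\sqrt n$, i.e.\ $I\gtrsim\nu^2M$, which your argument never uses. The assumption cannot be dropped. The statistic $d_2^{-1}\sum_j(\bm a_0^\top\bm W_j)(\bm b_0^\top\bm W_j)$ has mean $\sigma^2\beta\rho_\perp^2\sin t\cos t$ (for every $\widetilde{\bm z}$) and standard deviation $O(\sigma^2/\sqrt{d_2})$. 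So the two laws separate once $\sqrt I\rho_\perp^2\varphi\gg1$, which the displayed hypotheses allow when $I\ll M$. Your final bound $J_t\lesssim I^2\rho_\perp^4/M$ is true, but only because $I\rho_\perp^4\le I^2\rho_\perp^4/(c_1\nu^2M)$.

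\textbf{Gap 2: the main step is left open.} You do not say how to control the $\widetilde{\bm z}$-linear part without losing the factor $M$. The Cauchy--Schwarz decoupling you suggest, conditional on the data, is exactly the crude bound that loses $M$. Also, here $\widetilde{\bm z}$ enters the covariance rather than the mean, so the Gaussian-channel form of \cref{lem:spherical-posterior} does not apply directly; the paper uses its mutual-information form. The missing idea is an exact independence. For every $\widetilde{\bm z}$, $\bm a_t$ is an eigenvector of $\sigma^2\{\bm I+\beta\bm\Gamma_t(\widetilde{\bm z})\}$, and all dependence on $\widetilde{\bm z}$ lives in $\bm a_t^\perp$. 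Hence the $X_j=\bm a_t^\top\bm W_j$ are i.i.d.\ $\mathcal N(0,\sigma^2(1+\beta))$ and independent of both $\widetilde{\bm z}$ and the residuals $\bm R_j=\bm W_j-X_j\bm a_t$. Then:
\begin{itemize}
\item $\widehat{\bm z}_t=\mathbb E[\widetilde{\bm z}\mid\bm R_1,\ldots,\bm R_{d_2}]$.
\item Conditional on the residuals, the mixture score is a centered Gaussian linear form in the $X_j$, so $\mathbb E[S_t^2\mid\bm R]=\frac{\beta^2\rho_\perp^2}{\sigma^2(1+\beta)}\sum_jc_j^2$ with $c_j=\langle\rho_\perp\bm b_t-\rho\widehat{\bm z}_t,\bm R_j\rangle$.
\item One then bounds $\sum_j\langle\widehat{\bm z}_t,\bm W_j\rangle^2\le\|\sum_j\bm W_j\bm W_j^\top\|_{\mathrm{op}}\|\widehat{\bm z}_t\|_2^2$. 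The scatter norm is $\lesssim d_2\sigma^2$ off an event of probability $e^{-cd_2}$, which again needs $d_2\ge I\gtrsim\nu^2M$.
\item Combined with $\mathbb E\|\widehat{\bm z}_t\|_2^2\lesssim I\rho_\perp^2/M$, this gives the $I^2\rho_\perp^4/M$ term.
\end{itemize}
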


\paragraph{The high-loading regime.}
Suppose now that $s_2^2>c_Bd_2\sigma^2$.  In this case, choose
$\bm W_{2,0}\in\operatorname{St}(d_2,r+q)$ and
$\bm W_{2,1}\in\operatorname{St}(d_2,r_2-q)$ such that
$\bm W_{2,0}^\trans\bm W_{2,1}=\bm0$, with the second matrix omitted when
$r_2=q$.  For each $\bm T$, define the deterministic loading
\begin{equation}
\label{eq:lower-hidden-high-loading}
  \bm B_{2,\bm T}
  \coloneqq
  s_2
  \left[
    \bm W_{2,0}
    [\bm U_{\bm T}^{\circ},\bm R_{\bm T}^{\circ}],
    \,
    \bm W_{2,1}
  \right]
  \in\mathbb R^{d_2\times p_2},
\end{equation}
and set
\begin{equation}
\label{eq:lower-hidden-high-signal}
  \bm X_{2,\bm T}
  \coloneqq
  [\bm U_{\bm T},\bm V_{2,\bm T,\bm Z}]
  \bm B_{2,\bm T}^\trans.
\end{equation}
The columns of
$\bm W_{2,0}[\bm U_{\bm T}^{\circ},\bm R_{\bm T}^{\circ}]$ and
$\bm W_{2,1}$ are orthonormal and mutually orthogonal.  Hence every
singular value of $\bm B_{2,\bm T}$ equals $s_2$, so every realization
again belongs to the parameter class.  Let $\mathsf P_{\bm T}$ now denote
the observation law obtained after integrating over $\bm Z$.

The factor
$[\bm U_{\bm T}^{\circ},\bm R_{\bm T}^{\circ}]$ in
\cref{eq:lower-hidden-high-loading} is essential.  It rotates the right
coordinates together with the shared and view-specific directions on the
left.  To see the resulting cancellation, consider the limiting case
$\rho=1$ and $\rho_\perp=0$.  Then
\[
  [\bm U_{\bm T},
    \bm F\bm R_{\bm T}^{\circ}]
  [\bm U_{\bm T}^{\circ},
    \bm R_{\bm T}^{\circ}]^\trans
  =
  \bm F,
\]
so the part of $\bm X_{2,\bm T}$ involving $\bm W_{2,0}$ is independent
of $\bm T$.  For general $\theta$, only terms proportional to
$1-\rho$ or $\rho_\perp$ remain when the signal is differentiated with
respect to the rotation between neighboring hypotheses.

The corresponding neighboring-law estimate has the same conclusion as in
the low-loading case.

\begin{lemma}
\label{lem:lower-hidden-high-edge}
There is a numerical constant $c>0$ such that the following holds.  Let
$\bm\omega$ and $\bm\omega'$ be neighboring vertices of the Assouad family.
Suppose that
\[
  s_2^2>c_Bd_2\sigma^2,
  \qquad
  \delta\sqrt{r\vee q}\le\frac12,
  \qquad
  \delta
  \le
  c\frac{\sigma^2\sqrt M}
          {\gamma_2^2\rho_\perp^2}.
\]
Then
\[
  \operatorname{TV}\bigl(
    \mathsf P_{\bm T_{\bm\omega}},
    \mathsf P_{\bm T_{\bm\omega'}}
  \bigr)
  \le\frac14.
\]
\end{lemma}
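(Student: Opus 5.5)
We reduce the neighboring comparison to a one-parameter rotation path and bound the total variation with \cref{lem:hellinger-path}. Fix neighboring vertices $\bm\omega,\bm\omega'$ and write $\bm T_{\bm\omega'}-\bm T_{\bm\omega}=\kappa\bm a\bm b^\trans$ with $|\kappa|=2\delta$.

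\textbf{Reduction to a single rotation.} As noted after \cref{lem:lower-projector-separation}, $\bm U_{\bm T}^\circ$ and $\bm R_{\bm T}^\circ$ share $r-1$ and $q-1$ columns across the two hypotheses. The remaining pair $(\bm f_1,\bm f_2)\subset\col(\bm F)$, one shared and one view-specific direction, is rotated by an angle $\varphi\lesssim\delta$. We interpolate by $\bm T_t$, $t\in[0,\varphi]$, rotating only this pair, and put the same rotation into the right coordinates through \cref{eq:lower-hidden-high-loading}. View~1 is identical under all hypotheses, so only $\bm Y_2$ matters. We condition on the $q-1$ columns of $\bm Z$ paired with the unchanged directions; by Haar invariance the remaining column is $\bm H\bm z$ with $\bm z$ uniform on $\mathbb S^{M-1}$ in an $M$-dimensional subspace. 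Since the TV bound is proved for each value of the conditioned columns, convexity transfers it to the mixture over them. The directions $\bm D_2$, $\bm W_{2,1}$ and the unchanged pairs contribute identical factors and can be projected out. What remains is a Gaussian observation $\bm Y=\bm X_{t,\bm z}+\sigma\bm E$, supported on finitely many left directions ($\bm f_1,\bm f_2,\bm H\bm z$) and two right directions $\bm w_1,\bm w_2$ taken from $\bm W_{2,0}$.

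\textbf{Derivative of the signal.} Because the loading co-rotates, the computation displayed before the lemma shows that for $\rho=1$ the $\bm W_{2,0}$-part is $t$-free. For general $\rho$,
\[
\partial_t\bm X_{t,\bm z}=s_2\Bigl\{(1-\rho)\bm A_t+\rho_\perp\bm H\bm z\,\bm c_t^\trans\Bigr\},
\]
where $\bm A_t$ is a deterministic unit-size rank-$\le2$ matrix in $\operatorname{span}\{\bm f_1,\bm f_2\}\otimes\operatorname{span}\{\bm w_1,\bm w_2\}$, and $\bm c_t$ is a unit vector in that right span.

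\textbf{Bounding the score.} The mixture density $p_t(\bm Y)=\mathbb E_{\bm z}\phi_\sigma(\bm Y-\bm X_{t,\bm z})$ has score
\[
S_t=\sigma^{-2}\,\mathbb E\bigl[\langle \bm Y-\bm X_{t,\bm z},\partial_t\bm X_{t,\bm z}\rangle\,\big|\,\bm Y\bigr],
\]
and the absolute-continuity hypothesis of \cref{lem:hellinger-path} is routine for Gaussian mixtures. We split $S_t$ into two pieces.

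First, the deterministic $(1-\rho)$ piece is a Gaussian linear functional with second moment $\lesssim s_2^2\theta^2/\sigma^2$, using $1-\rho=2\theta$. Its contribution $\delta s_2\theta/\sigma$ is at most $c\,\sigma\sqrt M/s_2$ at the prescribed $\delta$. Since $\gamma_2\asymp s_2$ in this regime and $\gamma_{\min}\ge\nu\sigma\sqrt n$, this is small.

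Second, the $\rho_\perp$ piece is where the hiding works. Without averaging over $\bm z$ it would give the first-order rate $\sigma/(s_2\rho_\perp)$. After averaging, the $\bm z$-dependent part of the score becomes
\[
(s_2\rho_\perp/\sigma^2)\,\bigl\langle \bm Y\bm c_t-\bm X_{t,\bm z}\bm c_t,\ \bm H\,\mathbb E[\bm z\mid\bm Y]\bigr\rangle .
\]
Projecting $\bm Y$ onto $\bm H$ and onto right directions orthogonal to $\bm c_t$ gives a Gaussian mean channel $\lambda\bm z+\sigma\bm g$ with $\lambda\lesssim s_2\rho_\perp$. The sharp case of \cref{lem:spherical-posterior} then gives $\mathbb E\|\mathbb E[\bm z\mid\bm Y]\|^2\lesssim s_2^2\rho_\perp^2/(M\sigma^2)$. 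The other factor in the inner product is a scalar of order $\sigma$: after the co-rotation, the component of $\bm Y\bm c_t$ along $\bm H\bm z$ has its signal part removed, leaving only noise.

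Combining the two pieces gives
\[
J_t\lesssim \frac{s_2^2\theta^2}{\sigma^2}+\frac{s_2^4\rho_\perp^4}{\sigma^4M}.
\]
Then \cref{eq:hellinger-path-sup} yields $\operatorname{TV}\lesssim\delta\sqrt{J}\le 1/4$ once $\delta\le c\,\sigma^2\sqrt M/(s_2^2\rho_\perp^2)$ and $c$ is small, using $s_2\asymp\gamma_2$.

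\textbf{Main obstacle.} The hard step is the second piece. The inner-product factor multiplying $\mathbb E[\bm z\mid\bm Y]$ contains a residual $\bm Y$-component that is correlated with the posterior mean. Controlling it requires the right choice of the channel entering \cref{lem:spherical-posterior}, so that this factor is asymptotically independent of the posterior mean (or is absorbed into the channel noise). Otherwise a naive Cauchy--Schwarz bound loses the crucial factor $M^{-1}$. To get that independence, I would try first to pick the coordinates so that the co-rotated derivative $\bm c_t$ is orthogonal to the right directions carrying the $\bm z$-signal.
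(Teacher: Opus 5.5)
Your proposal is correct and follows the paper's proof essentially step for step: Haar-invariance alignment with conditioning on the $q-1$ unchanged columns of $\bm Z$, reduction to a small Gaussian mean model, a co-rotated Hellinger path of length $\varphi\le 4\delta$, and a Fisher-identity score split into a $(1-\rho)$ piece and a $\rho_\perp$ piece controlled by the sharp Gaussian-channel case of \cref{lem:spherical-posterior}, which yields the same bound on $J_t$. The step you flag as the main obstacle holds automatically, with no special choice of coordinates: $\bm z$ enters the mean only through $s_2\rho_\perp\bm H\bm z\,\overline{\bm b}_t^\trans$ with $\overline{\bm b}_t$ a unit vector in $\operatorname{span}\{\bm w_1,\bm w_2\}$, so $\bm c_t=\dot{\overline{\bm b}}_t$ is orthogonal to $\overline{\bm b}_t$ (in the paper's notation $\bm c_t=-\overline{\bm a}_t$), hence the $\bm H$-block of $\bm Y\bm c_t$ is pure noise in a column independent of the $\bm H$-block of $\bm Y\overline{\bm b}_t$ (the only part of the data whose law depends on $\bm z$), and the second moment factorizes exactly as $\sigma^2\,\mathbb E\|\mathbb E[\bm z\mid\bm Y]\|_2^2$ with no Cauchy--Schwarz loss, which is precisely the paper's $\bm\varepsilon_t=\sigma\bm g_a$ step.
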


\paragraph{Choosing the separation.}
In either loading regime, choose
\[
  \delta
  \coloneqq
  c_0
  \left\{
    \frac1{\sqrt{r\vee q}}
    \wedge
    \frac{\sigma^2\sqrt M}
         {\gamma_2^2\rho_\perp^2}
  \right\},
\]
where $c_0>0$ is sufficiently small.  The first term ensures that
$\delta\sqrt{r\vee q}\le1/2$, while the second permits the application of
the appropriate neighboring-law lemma.  Therefore
\cref{eq:lower-assouad-reduction} gives a minimax risk lower bound of order
\[
  \delta\sqrt{r\vee q}
  \asymp
  1\wedge
  \frac{
    \sigma^2\sqrt{M(r\vee q)}
  }{
    \gamma_2^2\rho_\perp^2
  }
  \asymp
  1\wedge
  \frac{
    \sigma^2\sqrt n\,(\sqrt r+\sqrt q)
  }{
    \gamma_{\min}^2\theta
  }.
\]
The final comparison uses
$M\asymp n$, $\gamma_2=\gamma_{\min}$,
$\rho_\perp^2=4\theta(1-\theta)\asymp\theta$, and
$\sqrt{r\vee q}\asymp\sqrt r+\sqrt q$.  Since
$q=r_1\wedge r_2$, this proves \cref{eq:minimax-lower-theta}.

\subsection{Proofs of the neighboring-law lemmas}

Fix neighboring vertices $\bm\omega,\bm\omega'$ and write
$\bm T_0\coloneqq\bm T_{\bm\omega}$ and
$\bm T_1\coloneqq\bm T_{\bm\omega'}$. 
The construction of the Assouad family
gives
\begin{align*}
\|\bm T_0\|_{\mathrm{op}}=\|\bm T_1\|_{\mathrm{op}}
&=\delta\sqrt{r\vee q}\le1/2,\\
\|\bm T_0-\bm T_1\|_{\mathrm F}&=2\delta.
\end{align*}

In all three constructions, View~1 has the same distribution under $\bm T_0$ and $\bm T_1$ and is independent of View~2.  The total variation distance between the joint observation laws therefore equals the distance between their View-2 marginals, which we consider from now on. 

For the regular and hidden low-loading constructions, we next replace the loading conditioned on the loading-band event by its unconditioned Gaussian law and account for the resulting total variation error; the high-loading construction uses a deterministic loading and requires no such reduction.  It remains to compare the two View-2 laws.  The regular case involves product Gaussian laws with different covariance projectors, the hidden low-loading case involves Haar mixtures of Gaussian covariance laws, and the hidden high-loading case involves Haar mixtures of Gaussian mean laws.

For the two Gaussian-loading constructions, write
\begin{equation}
  \beta
  \coloneqq
  \frac{s_2^2}{c_Bd_2\sigma^2}.
\label{eq:lower-gaussian-loading-beta}
\end{equation}

\subsubsection{Proof of \cref{lem:lower-regular-edge}}

\paragraph{Remove the conditioning.}
Let $\mathsf Q_0$ and $\mathsf Q_1$ denote the unconditioned View-2 laws
corresponding to $\bm T_0$ and $\bm T_1$.  By
\cref{lem:gaussian-loading-lower-bound}, we have 
\[
  \operatorname{TV}\bigl(
    \mathsf P_{\bm T_0},
    \mathsf P_{\bm T_1}
  \bigr)
  \le
  \frac1{16}
  +
  \operatorname{TV}(\mathsf Q_0,\mathsf Q_1)
  +
  \frac1{16}.
\]
It is therefore enough to prove
$\operatorname{TV}(\mathsf Q_0,\mathsf Q_1)\le1/8$.

\paragraph{The second-view Gaussian laws.}

After integrating out the unconditioned Gaussian loading, the $d_2$ columns
of View~2 are independent centered Gaussian vectors.  Under $\mathsf Q_i$,
their common covariance is
$\sigma^2(\bm I_n+\beta\bm\Pi_i)$, where
\[
  \bm\Pi_i
  \coloneqq
  \bm P_{\operatorname{col}
  [\bm U_{\bm T_i},\bm V_{2,\bm T_i,\bm Z_0}]}.
\]
Pinsker's inequality, additivity of relative entropy over the independent
columns, and \cref{eq:gaussian-projector-kl} give
\begin{align}
  \operatorname{TV}^2(\mathsf Q_0,\mathsf Q_1)
  &\le
  \frac12
  \operatorname{KL}(\mathsf Q_0\,\|\,\mathsf Q_1)
  \notag\\
  &=
  \frac{d_2\beta^2}{8(1+\beta)}
  \|\bm\Pi_0-\bm\Pi_1\|_{\mathrm F}^2
  \lesssim
  \frac{\gamma_2^2}{\sigma^2}
  \|\bm\Pi_0-\bm\Pi_1\|_{\mathrm F}^2.
  \label{eq:lower-regular-TV-projector-reduction}
\end{align}
For the last inequality, the definitions of $\beta$ and $\gamma_2$ imply
\[
  \frac{d_2\beta^2}{1+\beta}
  =
  \frac{s_2^4}
       {c_B\sigma^2(s_2^2+c_Bd_2\sigma^2)}
  \lesssim
  \frac{s_2^4}
       {\sigma^2(s_2^2+d_2\sigma^2)}
  =
  \frac{\gamma_2^2}{\sigma^2}.
\]
Consequently, it remains only to bound the Frobenius distance between the
two covariance projectors.

\paragraph{Distance between the covariance projectors.}
Set
$\bm K_i\coloneqq
\bm R_{\bm T_i}^{\circ}
(\bm R_{\bm T_i}^{\circ})^\trans$.
Because
$[\bm U_{\bm T_i}^{\circ},\bm R_{\bm T_i}^{\circ}]$
is an orthogonal matrix, the projector onto
$\operatorname{col}(\bm U_{\bm T_i}^{\circ})$ is
$\bm I_{r+q}-\bm K_i$.  Expanding $\bm\Pi_i$ using
\cref{eq:lower-view-two-frame} and cancelling the terms that do not depend
on $i$ therefore gives
\begin{align}
  \bm\Pi_0-\bm\Pi_1
  ={}&
  -\rho_\perp^2
  \bm F(\bm K_0-\bm K_1)\bm F^\trans
  \notag\\
  &+
  \rho\rho_\perp
  \left\{
    \bm F
    (\bm R_{\bm T_0}^{\circ}-\bm R_{\bm T_1}^{\circ})
    \bm Z_0^\trans\bm H^\trans
    +
    \bm H\bm Z_0
    (\bm R_{\bm T_0}^{\circ}-\bm R_{\bm T_1}^{\circ})^\trans
    \bm F^\trans
  \right\}.
\label{eq:lower-regular-projector-difference}
\end{align}

We next compare the two matrices
$\bm R_{\bm T_i}^{\circ}$.  Define
$\bm L_i\coloneqq
(\bm I_q+\bm T_i\bm T_i^\trans)^{-1/2}$, so that
$\bm R_{\bm T_i}^{\circ}
=[-\bm T_i^\trans,\bm I_q]^\trans\bm L_i$.
For positive definite matrices $\bm A,\bm B\succeq\bm I_q$, the integral
representation of the inverse square root and the resolvent identity give
\(\|\bm A^{-1/2}-\bm B^{-1/2}\|_{\mathrm F}
\le \frac12\|\bm A-\bm B\|_{\mathrm F}\).  Moreover,
\[
  \|\bm T_0\bm T_0^\trans-\bm T_1\bm T_1^\trans\|_{\mathrm F}
  \le
  \bigl(
    \|\bm T_0\|_{\mathrm{op}}
    +
    \|\bm T_1\|_{\mathrm{op}}
  \bigr)
  \|\bm T_0-\bm T_1\|_{\mathrm F}
  \le
  \|\bm T_0-\bm T_1\|_{\mathrm F}.
\]
Using these bounds and
$\|\bm T_0\|_{\mathrm{op}}\vee\|\bm T_1\|_{\mathrm{op}}\le1/2$, we obtain
\begin{equation}
\begin{aligned}
  \|\bm L_0-\bm L_1\|_{\mathrm F}
  &\le
  \frac12\|\bm T_0-\bm T_1\|_{\mathrm F},\\
  \|\bm R_{\bm T_0}^{\circ}
      -\bm R_{\bm T_1}^{\circ}\|_{\mathrm F}
  &\le
  2\|\bm T_0-\bm T_1\|_{\mathrm F},\\
  \|\bm K_0-\bm K_1\|_{\mathrm F}
  &\le
  4\|\bm T_0-\bm T_1\|_{\mathrm F}.
\end{aligned}
\label{eq:lower-regular-frame-bounds}
\end{equation}
The last inequality follows by expanding
$\bm K_0-\bm K_1$ and using that both
$\bm R_{\bm T_0}^{\circ}$ and
$\bm R_{\bm T_1}^{\circ}$ have orthonormal columns.

The three terms in
\cref{eq:lower-regular-projector-difference} belong respectively to the
$\operatorname{col}(\bm F)$--$\operatorname{col}(\bm F)$,
$\operatorname{col}(\bm F)$--$\operatorname{col}(\bm H)$, and
$\operatorname{col}(\bm H)$--$\operatorname{col}(\bm F)$ matrix blocks.
They are therefore pairwise orthogonal in the Frobenius inner product.
Since $\bm F$ and $\bm H\bm Z_0$ have orthonormal columns,
\begin{align}
  \|\bm\Pi_0-\bm\Pi_1\|_{\mathrm F}^2
  &=
  \rho_\perp^4\|\bm K_0-\bm K_1\|_{\mathrm F}^2
  +
  2\rho^2\rho_\perp^2
  \|\bm R_{\bm T_0}^{\circ}
      -\bm R_{\bm T_1}^{\circ}\|_{\mathrm F}^2
  \notag\\
  &\le
  24\rho_\perp^2
  \|\bm T_0-\bm T_1\|_{\mathrm F}^2.
\label{eq:lower-regular-projector-bound}
\end{align}
\paragraph{Completing the proof.}
Combining this projector bound with
\cref{eq:lower-regular-TV-projector-reduction} and using
$\|\bm T_0-\bm T_1\|_{\mathrm F}=2\delta$ gives
\[
  \operatorname{TV}^2(\mathsf Q_0,\mathsf Q_1)
  \lesssim
  \frac{\gamma_2^2\rho_\perp^2\delta^2}{\sigma^2}.
\]
The assumption
$\delta\le c\sigma/(\gamma_2\rho_\perp)$ therefore implies
$\operatorname{TV}(\mathsf Q_0,\mathsf Q_1)\le1/8$ when the numerical
constant $c$ is sufficiently small.  Returning to the conditioning bound,
\[
  \operatorname{TV}(\mathsf P_{\bm T_0},\mathsf P_{\bm T_1})
  \le
  \frac1{16}+\frac18+\frac1{16}
  =\frac14.
\]
This proves \cref{lem:lower-regular-edge}.

\subsubsection{Proof of \cref{lem:lower-hidden-low-edge}}

\paragraph{Remove the conditioning.}
Let $\mathsf Q_i$ denote the unconditioned View-2 law corresponding to
$\bm T_i$, for $i\in\{0,1\}$.  By
\cref{lem:gaussian-loading-lower-bound},
\[
  \operatorname{TV}(\mathsf P_{\bm T_0},\mathsf P_{\bm T_1})
  \le
  \frac1{16}+\operatorname{TV}(\mathsf Q_0,\mathsf Q_1)+\frac1{16}.
\]
It is therefore enough to prove
$\operatorname{TV}(\mathsf Q_0,\mathsf Q_1)\le1/8$.  We reduce these
unconditioned laws to an $(M+2)$-dimensional covariance mixture and compare
the resulting laws along a Hellinger path.

\paragraph{The second-view covariance mixtures.}
For $\bm T\in\{\bm T_0,\bm T_1\}$ and
$\bm Z\in\operatorname{St}(m,q)$, define
\[
  \bm\Pi_{\bm T,\bm Z}
  \coloneqq
  \bm P_{\operatorname{col}
  [\bm U_{\bm T},\bm V_{2,\bm T,\bm Z}]}.
\]
After integrating out the Gaussian loading, the $d_2$ columns of View~2
are conditionally independent centered Gaussian vectors.  Thus
\begin{equation}
  \mathsf Q_i
  =
  \int
  \left[
    \mathcal N\left(
      \bm0,\,
      \sigma^2\{\bm I_n+\beta\bm\Pi_{\bm T_i,\bm Z}\}
    \right)
  \right]^{\otimes d_2}
  d\mu_{\mathrm{Haar}}(\bm Z),
\label{eq:lower-hidden-mixture-law}
\end{equation}
where $\beta$ is defined in
\cref{eq:lower-gaussian-loading-beta} and satisfies $\beta\le1$ in the
low-loading regime.  We next use Haar right invariance to give the directions
that agree under $\bm T_0$ and $\bm T_1$ the same coordinates.

\paragraph{Reduction to an \texorpdfstring{$(M+2)$}{(M+2)}-dimensional covariance mixture.}
Because $\bm\omega$ and $\bm\omega'$ are neighboring vertices,
\begin{equation}
  \bm T_1-\bm T_0
  =
  \kappa\bm a\bm b^\trans,
  \qquad
  \|\bm a\|_2=\|\bm b\|_2=1.
\label{eq:lower-hidden-rank-one-difference}
\end{equation}
Choose $\bm A_\perp\in\St(q,q-1)$ and
$\bm B_\perp\in\St(r,r-1)$ whose columns span $\bm a^\perp$ and
$\bm b^\perp$, respectively.  Then
\begin{equation}
  \begin{bmatrix}\bm I_r\\ \bm T_0\end{bmatrix}\bm B_\perp
  =
  \begin{bmatrix}\bm I_r\\ \bm T_1\end{bmatrix}\bm B_\perp,
  \qquad
  \begin{bmatrix}-\bm T_0^\trans\\ \bm I_q\end{bmatrix}\bm A_\perp
  =
  \begin{bmatrix}-\bm T_1^\trans\\ \bm I_q\end{bmatrix}\bm A_\perp.
\label{eq:lower-hidden-common-directions}
\end{equation}
Choose orthonormal matrices $\bm U_*\in\St(r+q,r-1)$ and
$\bm R_*\in\St(r+q,q-1)$ with column spaces equal to the two column spaces
in \cref{eq:lower-hidden-common-directions}.  For each $i\in\{0,1\}$,
choose unit vectors $\bm u_i,\bm r_i\in\mathbb R^{r+q}$ so that
\begin{equation}
  \operatorname{col}(\bm U_{\bm T_i}^{\circ})
  =
  \operatorname{col}[\bm U_*,\bm u_i],
  \qquad
  \operatorname{col}(\bm R_{\bm T_i}^{\circ})
  =
  \operatorname{col}[\bm R_*,\bm r_i].
\label{eq:lower-hidden-common-bases}
\end{equation}
The matrix $[\bm U_*,\bm u_i,\bm R_*,\bm r_i]$ is orthogonal.  Its first
$r-1$ and next $q-1$ columns do not depend on $i$, and hence
$\operatorname{col}[\bm u_0,\bm r_0]
=\operatorname{col}[\bm u_1,\bm r_1]$.

For each $i$, define the change-of-basis matrix
\begin{equation}
  \bm Q_i
  \coloneqq
  [\bm R_*,\bm r_i]^\trans\bm R_{\bm T_i}^{\circ}
  \in\Ogp(q),
  \qquad
  \bm R_{\bm T_i}^{\circ}\bm Q_i^\trans
  =
  [\bm R_*,\bm r_i].
\label{eq:lower-hidden-change-of-basis}
\end{equation}
Right multiplication by $\bm Q_i^\trans$ does not change a column
projector, and Haar right invariance gives
$\bm Z\bm Q_i^\trans\overset{\mathrm d}{=}\bm Z$.  Therefore the mixture
in \cref{eq:lower-hidden-mixture-law} is unchanged if
$\bm\Pi_{\bm T_i,\bm Z}$ is replaced by
\begin{align}
  \widetilde{\bm\Pi}_{i,\bm Z}
  \coloneqq{}&
  \bm F\bm U_*\bm U_*^\trans\bm F^\trans
  +
  (\bm F\bm u_i)(\bm F\bm u_i)^\trans
  \notag\\
  &+
  \bigl(
    \rho\bm F[\bm R_*,\bm r_i]
    +
    \rho_\perp\bm H\bm Z
  \bigr)
  \bigl(
    \rho\bm F[\bm R_*,\bm r_i]
    +
    \rho_\perp\bm H\bm Z
  \bigr)^\trans
  +
  \bm D_2\bm D_2^\trans.
\label{eq:lower-hidden-aligned-projector}
\end{align}

Write $\bm Z=[\bm Z_-,\bm z]$, where
$\bm Z_-\in\St(m,q-1)$.  Conditional on $\bm Z_-$, choose
$\bm O=\bm O(\bm Z_-)\in\St(m,M)$ by a fixed measurable rule so that
$\operatorname{col}(\bm O)=\operatorname{col}(\bm Z_-)^\perp$.  Then
\begin{equation}
  \bm z=\bm O\bm\zeta,
  \qquad
  \bm\zeta\sim\operatorname{Unif}(\mathbb S^{M-1}).
\label{eq:lower-hidden-zeta}
\end{equation}
Define
\begin{equation}
  \bm\Pi_{\rm fix}(\bm Z_-)
  \coloneqq
  \bm F\bm U_*\bm U_*^\trans\bm F^\trans
  +
  (\rho\bm F\bm R_*+\rho_\perp\bm H\bm Z_-)
  (\rho\bm F\bm R_*+\rho_\perp\bm H\bm Z_-)^\trans
  +
  \bm D_2\bm D_2^\trans.
\label{eq:lower-hidden-Pi-fix}
\end{equation}
Substitution in \cref{eq:lower-hidden-aligned-projector} gives
\begin{align}
  \widetilde{\bm\Pi}_{i,\bm Z}
  ={}&
  \bm\Pi_{\rm fix}(\bm Z_-)
  +
  (\bm F\bm u_i)(\bm F\bm u_i)^\trans
  \notag\\
  &+
  (\rho\bm F\bm r_i+\rho_\perp\bm H\bm O\bm\zeta)
  (\rho\bm F\bm r_i+\rho_\perp\bm H\bm O\bm\zeta)^\trans.
\label{eq:lower-hidden-projector-decomposition}
\end{align}

The last two terms in
\cref{eq:lower-hidden-projector-decomposition} lie in the column space of
the following isometry.  Define their coordinates by
\begin{equation}
\begin{aligned}
  \bm J
  &\coloneqq
  [\bm F\bm u_0,\bm F\bm r_0,\bm H\bm O]
  \in\St(n,M+2),\\
  \bm a_i&\coloneqq\bm J^\trans\bm F\bm u_i,
  &\bm b_i&\coloneqq\bm J^\trans\bm F\bm r_i,
  &\widetilde{\bm z}&\coloneqq
  \begin{bmatrix}0&0&\bm\zeta^\trans\end{bmatrix}^\trans.
\end{aligned}
\label{eq:lower-hidden-reduced-coordinates}
\end{equation}
By construction, $\bm\Pi_{\rm fix}(\bm Z_-)\bm J=\bm0$.  Moreover,
$\bm J\bm a_i=\bm F\bm u_i$, $\bm J\bm b_i=\bm F\bm r_i$, and
$\bm J\widetilde{\bm z}=\bm H\bm O\bm\zeta$.  Hence, with
\begin{equation}
  \bm\Gamma_i(\widetilde{\bm z})
  \coloneqq
  \bm a_i\bm a_i^\trans
  +
  (\rho\bm b_i+\rho_\perp\widetilde{\bm z})
  (\rho\bm b_i+\rho_\perp\widetilde{\bm z})^\trans,
\label{eq:lower-hidden-Gamma-i}
\end{equation}
the projector in \cref{eq:lower-hidden-projector-decomposition} satisfies
\begin{equation}
  \widetilde{\bm\Pi}_{i,\bm Z}
  =
  \bm\Pi_{\rm fix}(\bm Z_-)
  +
  \bm J\bm\Gamma_i(\widetilde{\bm z})\bm J^\trans.
\label{eq:lower-hidden-final-projector-decomposition}
\end{equation}

The relation $\bm\Pi_{\rm fix}(\bm Z_-)\bm J=\bm0$ makes the Gaussian
coordinates in $\operatorname{col}(\bm J)$ independent of those in its
orthogonal complement.  Define
$\bm W_j\coloneqq\bm J^\trans\bm y_{2,j}\in\mathbb R^{M+2}$.  Conditional
on $\bm\zeta$,
\begin{equation}
  \bm W_j
  \overset{\mathrm{iid}}{\sim}
  \mathcal N\left(
    \bm0,\,
    \sigma^2\{\bm I_{M+2}+\beta\bm\Gamma_i(\widetilde{\bm z})\}
  \right),
  \qquad
  j=1,\ldots,d_2.
\label{eq:lower-hidden-reduced-law}
\end{equation}
Let $\mathcal Q_i$ denote the law obtained by first drawing
$\bm\zeta\sim\operatorname{Unif}(\mathbb S^{M-1})$ and then drawing the
vectors in \cref{eq:lower-hidden-reduced-law}.  After revealing
$(\bm Z_-,\bm O)$, the remaining Gaussian coordinates are independent of
$\bm W_1,\ldots,\bm W_{d_2}$ and have the same law for $i=0,1$.  Therefore
\begin{equation}
  \operatorname{TV}(\mathsf Q_0,\mathsf Q_1)
  \le
  \operatorname{TV}(\mathcal Q_0,\mathcal Q_1).
\label{eq:lower-hidden-TV-reduction}
\end{equation}

\paragraph{Hellinger comparison and conclusion.}
The two orthonormal bases $(\bm a_0,\bm b_0)$ and
$(\bm a_1,\bm b_1)$ span the same two-dimensional subspace.  After choosing
their signs consistently, there is an angle $\varphi\in[0,\pi/2]$ such that
\begin{equation}
\begin{aligned}
  \bm a_1&=\cos(\varphi)\bm a_0+\sin(\varphi)\bm b_0,\\
  \bm b_1&=-\sin(\varphi)\bm a_0+\cos(\varphi)\bm b_0.
\end{aligned}
\label{eq:lower-hidden-endpoint-rotation}
\end{equation}
Because $\bm u_0\in\operatorname{col}(\bm U_{\bm T_0}^{\circ})$, there is
$\bm x\in\mathbb R^r$ with $\|\bm x\|_2\le1$ such that
$\bm u_0=[\bm I_r;\bm T_0]\bm x$.  Since
$[\bm I_r;\bm T_1]\bm x$ belongs to
$\operatorname{col}(\bm U_{\bm T_1}^{\circ})$,
\begin{equation}
  \sin\varphi
  =
  \operatorname{dist}\bigl(
    \bm u_0,\operatorname{col}(\bm U_{\bm T_1}^{\circ})
  \bigr)
  \le
  \|(\bm T_0-\bm T_1)\bm x\|_2
  \le
  2\delta.
\label{eq:lower-hidden-angle-sine}
\end{equation}
It follows that
\begin{equation}
  \varphi\le2\sin\varphi\le4\delta.
\label{eq:lower-hidden-angle-bound}
\end{equation}

For $t\in[0,\varphi]$, set
\begin{equation}
\begin{aligned}
  \bm a_t&\coloneqq\cos(t)\bm a_0+\sin(t)\bm b_0,\\
  \bm b_t&\coloneqq-\sin(t)\bm a_0+\cos(t)\bm b_0,\\
  \bm\Gamma_t(\widetilde{\bm z})
  &\coloneqq
  \bm a_t\bm a_t^\trans
  +
  (\rho\bm b_t+\rho_\perp\widetilde{\bm z})
  (\rho\bm b_t+\rho_\perp\widetilde{\bm z})^\trans.
\end{aligned}
\label{eq:lower-hidden-path-projector}
\end{equation}
Let $\mathcal Q_t$ be the law obtained by drawing
$\bm\zeta\sim\operatorname{Unif}(\mathbb S^{M-1})$, forming
$\widetilde{\bm z}$ as in \cref{eq:lower-hidden-reduced-coordinates}, and
then drawing
\begin{equation}
  \bm W_j
  \overset{\mathrm{iid}}{\sim}
  \mathcal N\left(
    \bm0,\,
    \sigma^2\{\bm I_{M+2}+\beta\bm\Gamma_t(\widetilde{\bm z})\}
  \right),
  \qquad
  j=1,\ldots,d_2.
\label{eq:lower-hidden-path-law}
\end{equation}
Then $\mathcal Q_{t=0}=\mathcal Q_0$ and
$\mathcal Q_{t=\varphi}=\mathcal Q_1$.

Let $q_t$ be the density of $\mathcal Q_t$, and define
\begin{equation}
  S_t(\bm W)\coloneqq\partial_t\log q_t(\bm W),
  \qquad
  J_t\coloneqq\mathbb E_{\mathcal Q_t}S_t(\bm W)^2.
\label{eq:lower-hidden-J-definition}
\end{equation}
By \cref{lem:hellinger-path},
\begin{equation}
  \operatorname{TV}(\mathcal Q_0,\mathcal Q_1)
  \le
  \frac12\int_0^\varphi\sqrt{J_t}\,dt
  \le
  \frac{\varphi}{2}
  \sup_{0\le t\le\varphi}\sqrt{J_t}.
\label{eq:lower-hidden-TV-by-J}
\end{equation}
It remains to bound $J_t$ uniformly over the path.  The coefficient that
appears in this bound satisfies
\begin{equation}
  I
  \coloneqq
  \frac{d_2\beta^2}{1+\beta}
  =
  \frac{s_2^4}{c_B\sigma^2(s_2^2+c_Bd_2\sigma^2)}
  \asymp
  \frac{\gamma_2^2}{\sigma^2}.
\label{eq:lower-hidden-I}
\end{equation}
Since $\beta\le1$, we have $I\le d_2$.  The signal-strength assumption and
$M\asymp n$ also give
\begin{equation}
  d_2\ge I\ge c_1\nu^2M.
\label{eq:lower-hidden-information-comparison}
\end{equation}

The law $\mathcal Q_t$ is a mixture over $\widetilde{\bm z}$, so the score
cannot be computed as if $\widetilde{\bm z}$ were observed.  The following
lemma controls the mixture score.  Its proof follows the three neighboring-law
proofs.

\begin{lemma}
\label{lem:lower-hidden-path-information}
Under \cref{eq:lower-hidden-information-comparison}, uniformly for
$t\in[0,\varphi]$,
\begin{equation}
  J_t
  \le
  C\left(
    I\rho_\perp^4
    +
    \frac{I^2\rho_\perp^4}{M}
    +
    I\rho_\perp^2e^{-c_2d_2}
  \right).
\label{eq:lower-hidden-J-bound}
\end{equation}
\end{lemma}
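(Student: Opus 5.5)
Let $\mathcal N_t(\widetilde{\bm z})$ denote the product Gaussian density of $(\bm W_1,\ldots,\bm W_{d_2})$ with covariance $\sigma^2\{\bm I_{M+2}+\beta\bm\Gamma_t(\widetilde{\bm z})\}$. The plan is to write the mixture score as a posterior average of conditional scores and then split that average into a mean term and a fluctuation term. Since $q_t(\bm W)=\mathbb E_{\bm\zeta}\,\mathcal N_t(\widetilde{\bm z})$, differentiating under the integral gives
\[
S_t(\bm W)=\mathbb E\bigl[s_t(\bm W,\widetilde{\bm z})\mid\bm W\bigr],
\]
where $s_t=\partial_t\log\mathcal N_t$ is the complete-data score. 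We compute $\partial_t\bm\Gamma_t$ explicitly. Using $\dot{\bm a}_t=\bm b_t$ and $\dot{\bm b}_t=-\bm a_t$,
\[
\partial_t\bm\Gamma_t=\bm a_t\bm b_t^\trans+\bm b_t\bm a_t^\trans-\rho\bigl\{\bm a_t(\rho\bm b_t+\rho_\perp\widetilde{\bm z})^\trans+(\rho\bm b_t+\rho_\perp\widetilde{\bm z})\bm a_t^\trans\bigr\}.
\]
This equals $(1-\rho^2)(\bm a_t\bm b_t^\trans+\bm b_t\bm a_t^\trans)-\rho\rho_\perp(\bm a_t\widetilde{\bm z}^\trans+\widetilde{\bm z}\bm a_t^\trans)$. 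The first piece is deterministic of size $\rho_\perp^2$. The second is linear in the hidden direction $\widetilde{\bm z}$ with coefficient $\rho_\perp$. Note that $\bm a_t,\bm b_t\perp\widetilde{\bm z}$.

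Because $\bm\Gamma_t$ is a rank-two projector, the conditional score is an explicit centered quadratic form in the data:
\[
s_t=\frac{\beta}{2(1+\beta)\sigma^2}\sum_j\bm W_j^\trans\,\bm\Pi\,\partial_t\bm\Gamma_t\,\bm\Pi\,\bm W_j-\text{(its mean)},
\]
with the appropriate resolvent $\bm\Pi$ of $\bm I+\beta\bm\Gamma_t$. The deterministic $\rho_\perp^2$ piece contributes a score whose second moment under $\mathcal Q_t$ is at most its complete-data Fisher information. That information is $\lesssim d_2\beta^2\rho_\perp^4/(1+\beta)=I\rho_\perp^4$, which is the first term of \cref{eq:lower-hidden-J-bound}.

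The $\widetilde{\bm z}$-linear piece is the crux. Its conditional score has the form $\rho_\perp\langle\bm g_t(\bm W),\bm\zeta\rangle$ plus terms quadratic in $\bm\zeta$ of lower order. Here $\bm g_t(\bm W)$ is, to leading order, $\frac{\beta}{(1+\beta)\sigma^2}\sum_j(\bm a_t^\trans\bm W_j)\bm P\bm W_j$, where $\bm P$ projects onto the last $M$ coordinates. Taking the posterior average therefore yields $\rho_\perp\langle\bm g_t,\mathbb E[\bm\zeta\mid\bm W]\rangle$. By Cauchy--Schwarz, its second moment is controlled by $\rho_\perp^2\,\mathbb E\|\bm g_t\|^2\,\|\mathbb E[\bm\zeta\mid\bm W]\|^2$. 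We make this rigorous by decoupling $\bm g_t$ from the posterior mean, for example through Hölder's inequality with higher moments, or by noting that $\bm g_t$ concentrates: $\|\bm g_t\|^2\lesssim I+I^2\rho^2$ after centering along $\widetilde{\bm z}$.

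\cref{lem:spherical-posterior} then bounds $\mathbb E\|\mathbb E[\bm\zeta\mid\bm W]\|^2\lesssim \mathrm I(\bm\zeta;\bm W)/M$. We bound the mutual information by the average KL divergence to the $\bm\zeta$-free reference law, namely the law with $\widetilde{\bm z}$ replaced by $\bm 0$ and the covariance corrected accordingly. By \cref{eq:gaussian-projector-kl} this divergence is $\lesssim I\rho_\perp^2$. Combining these gives a contribution $\lesssim \rho_\perp^2\cdot I\cdot I\rho_\perp^2/M=I^2\rho_\perp^4/M$, which is the second term.

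The main obstacle is to make the ``leading order'' approximations uniform. The quadratic-in-$\bm\zeta$ remainder and the coupling between $\bm g_t$ and the posterior must both be handled. For this we intend to work on a high-probability event on which the sample covariance $d_2^{-1}\sum_j\bm W_j\bm W_j^\trans$ is close to its mean. That event has probability at least $1-e^{-c_2d_2}$ because $d_2\ge c_1\nu^2M$ by \cref{eq:lower-hidden-information-comparison}. On the complement we use the crude bound $J_t\le\mathbb E\,s_t^2\lesssim I\rho_\perp^2$ together with Cauchy--Schwarz against the small probability, which produces the last term $I\rho_\perp^2e^{-c_2d_2}$. Uniformity in $t$ is automatic, since every bound depends on $t$ only through orthonormal frames.
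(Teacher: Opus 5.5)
There is a genuine gap in the crux step. Your formula for $\partial_t\bm\Gamma_t$, the Fisher-identity step, the Jensen bound $I\rho_\perp^4$ for the $\rho_\perp^2$ piece, and the input $\mathbb E\|\mathbb E[\bm\zeta\mid\bm W]\|_2^2\lesssim I\rho_\perp^2/M$ from \cref{lem:spherical-posterior} are all sound. The failure is where you combine the last of these with $\bm g_t$. The claim $\|\bm g_t\|^2\lesssim I+I^2\rho^2$ is false. Since $\bm a_t^\trans\bm W_j$ is independent of $\bm P\bm W_j$,
\[
\mathbb E\|\bm g_t\|_2^2=\frac{\beta^2}{(1+\beta)^2\sigma^4}\sum_{j=1}^{d_2}\mathbb E(\bm a_t^\trans\bm W_j)^2\,\mathbb E\|\bm P\bm W_j\|_2^2=I\,(M+\beta\rho_\perp^2),
\]
so every one of the $M$ coordinates of $\bm g_t$ has size $\sqrt I$. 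Any route that passes through $\|\bm g_t\|$ therefore gives only $\rho_\perp^2\cdot IM\cdot I\rho_\perp^2/M=I^2\rho_\perp^4$. This includes Cauchy--Schwarz, H\"older with higher moments, and concentration of $\|\bm g_t\|^2$. That loses exactly the factor $1/M$ that produces the $\sqrt n$ in \cref{eq:minimax-lower-theta}. What you need is that $\bm g_t$ is randomly oriented relative to the posterior mean, not that it is small. Your high-probability event for the sample covariance does not supply this by itself.

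The missing idea is a conditional-independence structure. Set $X_j\coloneqq\bm a_t^\trans\bm W_j$ and $\bm R_j\coloneqq\bm W_j-X_j\bm a_t$. The vector $\bm a_t$ is an eigenvector of every component covariance $\sigma^2\{\bm I+\beta\bm\Gamma_t(\widetilde{\bm z})\}$, always with eigenvalue $\sigma^2(1+\beta)$. Hence the $X_j$ are i.i.d.\ $\mathcal N(0,\sigma^2(1+\beta))$ and independent of $(\bm R_1,\ldots,\bm R_{d_2},\widetilde{\bm z})$, also after mixing, so $\widehat{\bm z}_t=\mathbb E[\widetilde{\bm z}\mid\bm R_1,\ldots,\bm R_{d_2}]$. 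The resolvent acts by scalars on $\bm a_t$ and on $\rho_\perp\bm b_t-\rho\widetilde{\bm z}\in\ker\bm\Gamma_t$. Thus the conditional score is exactly $\frac{\beta\rho_\perp}{\sigma^2(1+\beta)}\sum_jX_j\langle\rho_\perp\bm b_t-\rho\widetilde{\bm z},\bm W_j\rangle$, with no quadratic-in-$\bm\zeta$ remainder. Given the $\bm R_j$, $S_t$ is then a centered Gaussian linear form in the $X_j$:
\[
\mathbb E[S_t^2\mid\bm R]=\frac{\beta^2\rho_\perp^2}{\sigma^2(1+\beta)}\sum_jc_j^2,\qquad c_j=\langle\rho_\perp\bm b_t-\rho\widehat{\bm z}_t,\bm R_j\rangle.
\]
The $\widehat{\bm z}_t$ part is $\widehat{\bm z}_t^\trans(\sum_j\bm W_j\bm W_j^\trans)\widehat{\bm z}_t\le\|\sum_j\bm W_j\bm W_j^\trans\|_{\mathrm{op}}\|\widehat{\bm z}_t\|_2^2$. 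This involves the operator norm of the scatter matrix, which is $\lesssim d_2\sigma^2$ off an event of probability $Ce^{-c_2d_2}$ because $d_2\ge c_1\nu^2M$. It does not involve its trace, $\asymp d_2M\sigma^2$, which is what $\|\bm g_t\|^2$ sees. This gives $I^2\rho_\perp^4/M$, and using $\|\widehat{\bm z}_t\|_2\le1$ on the exceptional event gives the $I\rho_\perp^2e^{-c_2d_2}$ term. A smaller point: the law with $\widetilde{\bm z}$ replaced by $\bm 0$ is not of equal-rank-projector form, so \cref{eq:gaussian-projector-kl} does not apply to it directly. Compare instead with the fixed projector $\bm a_t\bm a_t^\trans+\bm b_t\bm b_t^\trans$, which gives divergence $I\rho_\perp^2/2$, or average over an independent copy of $\widetilde{\bm z}$ as the paper does.
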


Combining
\cref{eq:lower-hidden-TV-by-J,eq:lower-hidden-J-bound} yields
\begin{equation}
  \operatorname{TV}^2(\mathcal Q_0,\mathcal Q_1)
  \le
  C\varphi^2
  \left(
    I\rho_\perp^4
    +
    \frac{I^2\rho_\perp^4}{M}
    +
    I\rho_\perp^2e^{-c_2d_2}
  \right).
\label{eq:lower-hidden-reduced-TV-bound}
\end{equation}
By the assumed bound on $\delta$,
\cref{eq:lower-hidden-angle-bound,eq:lower-hidden-I} imply
\begin{equation}
  \varphi
  \le
  Cc\frac{\sqrt M}{I\rho_\perp^2}.
\label{eq:lower-hidden-angle-information}
\end{equation}
Consequently,
\begin{align*}
  \varphi^2\frac{I^2\rho_\perp^4}{M}
  &\le Cc^2,
  &
  \varphi^2I\rho_\perp^4
  &\le Cc^2\frac{M}{I}
  \le Cc^2\nu^{-2}.
\end{align*}
Also, $\varphi\le2$, $I\le d_2$, and $\rho_\perp\le1$ give
$\varphi^2I\rho_\perp^2e^{-c_2d_2}\le4d_2e^{-c_2d_2}$.  Since
$d_2\ge c_1\nu^2M$, the last quantity is sufficiently small for sufficiently
large $\nu$.  Choosing the numerical constant $c$ in the edge assumption
sufficiently small now gives
$\operatorname{TV}(\mathcal Q_0,\mathcal Q_1)\le1/8$.  Hence
\cref{eq:lower-hidden-TV-reduction} and the conditioning bound at the start
of the proof yield
\[
  \operatorname{TV}(\mathsf P_{\bm T_0},\mathsf P_{\bm T_1})
  \le
  \frac1{16}+\frac18+\frac1{16}
  =
  \frac14.
\]
This proves \cref{lem:lower-hidden-low-edge}.

\subsubsection{Proof of \cref{lem:lower-hidden-high-edge}}

Let $\mathsf P_i^{(2)}$ denote the View-2 law under $\bm T_i$, for
$i\in\{0,1\}$.  The loading is deterministic in this construction, so there
is no conditioning error.  By the common reduction above,
\[
  \operatorname{TV}\bigl(
    \mathsf P_{\bm T_0},
    \mathsf P_{\bm T_1}
  \bigr)
  =
  \operatorname{TV}\bigl(
    \mathsf P_0^{(2)},
    \mathsf P_1^{(2)}
  \bigr).
\]
We reduce the View-2 laws to an $(M+2)\times2$ Gaussian mean model and then
compare the reduced laws along a Hellinger path.

\paragraph{The second-view mean mixtures.}
Write
\[
  \bm M_{\bm T,\bm Z}
  \coloneqq
  \left[
    \bm U_{\bm T},
    \rho\bm F\bm R_{\bm T}^{\circ}
      +\rho_\perp\bm H\bm Z
  \right]
  [\bm U_{\bm T}^{\circ},\bm R_{\bm T}^{\circ}]^\trans.
\]
Conditional on $\bm Z\in\operatorname{St}(m,q)$, View~2 has entrywise noise
variance $\sigma^2$ and the mean appearing below.  Consequently,
\begin{equation}
  \mathsf P_i^{(2)}
  =
  \int
  \mathcal N\left(
    \operatorname{vec}\left\{
      s_2\bm M_{\bm T_i,\bm Z}\bm W_{2,0}^\trans
      +
      s_2\bm D_2\bm W_{2,1}^\trans
    \right\},
    \sigma^2\bm I_{nd_2}
  \right)
  d\mu_{\mathrm{Haar}}(\bm Z).
\label{eq:lower-high-mixture-law}
\end{equation}
It remains to express the two mixtures in coordinates in which all terms
that depend on $i$ appear in one Gaussian block.

\paragraph{Reduction to the Gaussian mean model.}
Use the matrices and vectors in
\cref{eq:lower-hidden-common-bases,eq:lower-hidden-change-of-basis}.  The
identities in those displays give
\begin{equation}
\begin{aligned}
  \bm F\bm U_{\bm T_i}^{\circ}
  (\bm U_{\bm T_i}^{\circ})^\trans
  &=
  \bm F[\bm U_*,\bm u_i][\bm U_*,\bm u_i]^\trans,\\
  &(\rho\bm F\bm R_{\bm T_i}^{\circ}+\rho_\perp\bm H\bm Z)
  (\bm R_{\bm T_i}^{\circ})^\trans\\
  &\hspace{1cm}=
  (\rho\bm F[\bm R_*,\bm r_i]
    +\rho_\perp\bm H\bm Z\bm Q_i^\trans)
  [\bm R_*,\bm r_i]^\trans.
\end{aligned}
\label{eq:lower-high-basis-change}
\end{equation}
Since $\bm Z\bm Q_i^\trans\overset{\mathrm d}{=}\bm Z$, the mixture in
\cref{eq:lower-high-mixture-law} is unchanged if
$\bm M_{\bm T_i,\bm Z}$ is replaced by
\begin{equation}
  \widetilde{\bm M}_{i,\bm Z}
  \coloneqq
  \bm F[\bm U_*,\bm u_i][\bm U_*,\bm u_i]^\trans
  +
  (\rho\bm F[\bm R_*,\bm r_i]+\rho_\perp\bm H\bm Z)
  [\bm R_*,\bm r_i]^\trans.
\label{eq:lower-high-aligned-mean}
\end{equation}

Write $\bm Z=[\bm Z_-,\bm z]$ and condition on $\bm Z_-$.  Choose
$\bm O\in\St(m,M)$ with
$\operatorname{col}(\bm O)=\operatorname{col}(\bm Z_-)^\perp$ as in
\cref{eq:lower-hidden-zeta}, so that
$\bm z=\bm O\bm\zeta$ for
$\bm\zeta\sim\operatorname{Unif}(\mathbb S^{M-1})$.  Expanding
\cref{eq:lower-high-aligned-mean} gives
\begin{align}
  \widetilde{\bm M}_{i,\bm Z}
  ={}&
  \bm F\bm U_*\bm U_*^\trans
  +
  (\rho\bm F\bm R_*+\rho_\perp\bm H\bm Z_-)\bm R_*^\trans
  \notag\\
  &+
  (\bm F\bm u_i)\bm u_i^\trans
  +
  (\rho\bm F\bm r_i+\rho_\perp\bm H\bm O\bm\zeta)\bm r_i^\trans.
\label{eq:lower-high-mean-decomposition}
\end{align}
The first line is independent of $i$.  We now isolate the last two terms by
defining
\begin{equation}
\begin{aligned}
  \bm J_{\rm L}
  &\coloneqq
  [\bm F\bm u_0,\bm F\bm r_0,\bm H\bm O]
  \in\St(n,M+2),
  &
  \bm J_{\rm R}
  &\coloneqq
  [\bm u_0,\bm r_0]
  \in\St(r+q,2),\\
  \bm a_i&\coloneqq\bm J_{\rm L}^\trans\bm F\bm u_i,
  &
  \bm b_i&\coloneqq\bm J_{\rm L}^\trans\bm F\bm r_i,\\
  \overline{\bm a}_i&\coloneqq\bm J_{\rm R}^\trans\bm u_i,
  &
  \overline{\bm b}_i&\coloneqq\bm J_{\rm R}^\trans\bm r_i,
  &
  \widetilde{\bm z}&\coloneqq
  \begin{bmatrix}0&0&\bm\zeta^\trans\end{bmatrix}^\trans.
\end{aligned}
\label{eq:lower-high-reduced-coordinates}
\end{equation}
With these coordinates, the second line of
\cref{eq:lower-high-mean-decomposition} equals
$\bm J_{\rm L}\bm{\mathcal M}_i(\widetilde{\bm z})\bm J_{\rm R}^\trans$,
where
\begin{equation}
  \bm{\mathcal M}_i(\widetilde{\bm z})
  \coloneqq
  \bm a_i\overline{\bm a}_i^\trans
  +
  (\rho\bm b_i+\rho_\perp\widetilde{\bm z})
  \overline{\bm b}_i^\trans.
\label{eq:lower-high-reduced-mean-i}
\end{equation}

Conditional on $(\bm Z_-,\bm O)$, apply orthogonal changes of coordinates
whose first left and right blocks are $\bm J_{\rm L}$ and
$\bm W_{2,0}\bm J_{\rm R}$, respectively.  The corresponding block of the
mean is exactly $s_2\bm{\mathcal M}_i(\widetilde{\bm z})$; all other mean
blocks are independent of $i$, and the Gaussian noise blocks are independent.
Let $\mathcal Q_i$ denote the law of
\begin{equation}
  \bm W
  =
  s_2\bm{\mathcal M}_i(\widetilde{\bm z})+\sigma\bm G,
  \qquad
  \bm G\in\mathbb R^{(M+2)\times2},
\label{eq:lower-high-reduced-observation-i}
\end{equation}
where $\bm G$ has independent standard Gaussian entries and
$\bm\zeta$ is uniform on $\mathbb S^{M-1}$.  Revealing
$(\bm Z_-,\bm O)$ can only increase total variation, and the resulting
conditional distance does not depend on the revealed variables.  Therefore
\begin{equation}
  \operatorname{TV}(\mathsf P_0^{(2)},\mathsf P_1^{(2)})
  \le
  \operatorname{TV}(\mathcal Q_0,\mathcal Q_1).
\label{eq:lower-high-tv-reduction}
\end{equation}

\paragraph{Hellinger comparison and conclusion.}
The rotation in \cref{eq:lower-hidden-endpoint-rotation} acts on both the
left and right coordinates in \cref{eq:lower-high-reduced-mean-i}.  For
$t\in[0,\varphi]$, define
\begin{equation}
\begin{aligned}
  \bm a_t&\coloneqq\cos(t)\bm a_0+\sin(t)\bm b_0,
  &
  \bm b_t&\coloneqq-\sin(t)\bm a_0+\cos(t)\bm b_0,\\
  \overline{\bm a}_t
  &\coloneqq
  \cos(t)\overline{\bm a}_0+\sin(t)\overline{\bm b}_0,
  &
  \overline{\bm b}_t
  &\coloneqq
  -\sin(t)\overline{\bm a}_0+\cos(t)\overline{\bm b}_0.
\end{aligned}
\label{eq:lower-high-path-coordinates}
\end{equation}
Set
\begin{equation}
  \bm{\mathcal M}_t(\widetilde{\bm z})
  \coloneqq
  \bm a_t\overline{\bm a}_t^\trans
  +
  (\rho\bm b_t+\rho_\perp\widetilde{\bm z})
  \overline{\bm b}_t^\trans,
\label{eq:lower-high-reduced-mean}
\end{equation}
and let $\mathcal Q_t$ be the law obtained by drawing $\bm\zeta$ uniformly
on $\mathbb S^{M-1}$ and observing
\begin{equation}
  \bm W
  =
  s_2\bm{\mathcal M}_t(\widetilde{\bm z})+\sigma\bm G.
\label{eq:lower-high-path-law}
\end{equation}
Then $\mathcal Q_{t=0}=\mathcal Q_0$,
$\mathcal Q_{t=\varphi}=\mathcal Q_1$, and
\cref{eq:lower-hidden-angle-bound} gives $\varphi\le4\delta$.

Let $q_t$ be the density of $\mathcal Q_t$, and define
\begin{equation}
  S_t(\bm W)\coloneqq\partial_t\log q_t(\bm W),
  \qquad
  J_t\coloneqq\mathbb E_{\mathcal Q_t}S_t(\bm W)^2.
\label{eq:lower-high-J-definition}
\end{equation}
By \cref{lem:hellinger-path},
\begin{equation}
  \operatorname{TV}(\mathcal Q_0,\mathcal Q_1)
  \le
  \frac12\int_0^\varphi\sqrt{J_t}\,dt.
\label{eq:lower-high-TV-by-J}
\end{equation}
The following lemma supplies the uniform bound needed in this inequality;
its proof is given below with the proof of the covariance-mixture information
bound.

\begin{lemma}
\label{lem:high-hidden-J-bound}
Uniformly for $t\in[0,\varphi]$,
\begin{equation}
  J_t
  \le
  8\frac{s_2^2\theta^2}{\sigma^2}
  +
  144\frac{s_2^4\theta^2}{M\sigma^4}.
\label{eq:high-hidden-J-bound}
\end{equation}
\end{lemma}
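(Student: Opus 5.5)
The approach is to write the mixture score as a posterior expectation of the conditional (complete-data) score and split it into two pieces. The first piece comes from the deterministic part of $\partial_t\bm{\mathcal M}_t$ and is bounded crudely. The second comes from the hidden direction $\widetilde{\bm z}$ and is bounded through the posterior mean of $\bm\zeta$, using the Gaussian-channel case of \cref{lem:spherical-posterior}.

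\emph{Step 1 (derivative of the mean).} From \cref{eq:lower-high-path-coordinates} we have $\partial_t\bm a_t=\bm b_t$, $\partial_t\bm b_t=-\bm a_t$, and the same relations for the barred vectors. Differentiating \cref{eq:lower-high-reduced-mean} gives
\[
  \partial_t\bm{\mathcal M}_t(\widetilde{\bm z})
  =
  \underbrace{(1-\rho)\bigl(\bm b_t\overline{\bm a}_t^\trans+\bm a_t\overline{\bm b}_t^\trans\bigr)}_{\eqqcolon\bm D_1}
  \;\underbrace{-\,\rho_\perp\widetilde{\bm z}\,\overline{\bm a}_t^\trans}_{\eqqcolon\bm D_2(\widetilde{\bm z})}.
\]
Here $1-\rho=2\theta$ and $\rho_\perp^2=4\theta(1-\theta)\le4\theta$. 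The first term does not depend on $\widetilde{\bm z}$ and satisfies $\frn{\bm D_1}^2=2(1-\rho)^2=8\theta^2$. This is the cancellation created by the coupled right rotation in \cref{eq:lower-hidden-high-loading}: only factors of $1-\rho$ or $\rho_\perp$ survive.

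\emph{Step 2 (mixture score).} Differentiating the mixture density under the integral, which is justified by Gaussian domination, gives
\[
S_t(\bm W)=\mathbb E\bigl[\,s_2\sigma^{-2}\langle \bm W-s_2\bm{\mathcal M}_t(\widetilde{\bm z}),\partial_t\bm{\mathcal M}_t(\widetilde{\bm z})\rangle\,\big|\,\bm W\bigr].
\]
Split $S_t=S^{(1)}+S^{(2)}$ according to $\bm D_1$ and $\bm D_2$, and use $J_t\le 2\mathbb E(S^{(1)})^2+2\mathbb E(S^{(2)})^2$.

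For $S^{(1)}$, conditional Jensen and $\bm W-s_2\bm{\mathcal M}_t=\sigma\bm G$ give
\[
\mathbb E(S^{(1)})^2\le s_2^2\sigma^{-2}\frn{\bm D_1}^2=8s_2^2\theta^2/\sigma^2 .
\]
A slightly tighter bookkeeping of the cross term, or not splitting $S^{(1)}$ off with the factor $2$, should recover the stated constant $8$. If it does not, the constant simply changes, which is harmless downstream.

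\emph{Step 3 (hidden-direction term, main obstacle).} Jensen is too crude here, since it would give $s_2^2\rho_\perp^2/\sigma^2\asymp s_2^2\theta/\sigma^2$ and lose the $M^{-1}$ gain. The plan is to use orthogonality instead. Because $\widetilde{\bm z}\perp\bm a_t,\bm b_t$ and $\overline{\bm a}_t\perp\overline{\bm b}_t$, we have $\langle\bm{\mathcal M}_t(\widetilde{\bm z}),\widetilde{\bm z}\overline{\bm a}_t^\trans\rangle=0$. Hence the integrand equals $-\rho_\perp s_2\sigma^{-2}\langle\bm W\overline{\bm a}_t,\widetilde{\bm z}\rangle$, which is linear in $\widetilde{\bm z}$, and
\[
S^{(2)}=-\rho_\perp s_2\sigma^{-2}\langle\bm g_1,\bm m(\bm W)\rangle\,\sigma,
\qquad \bm m(\bm W)\coloneqq\mathbb E[\bm\zeta\mid\bm W].
\]
Here $\bm g_1$ is the noise in the last $M$ coordinates of $\bm W\overline{\bm a}_t$. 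Those coordinates carry no signal, because $\bm{\mathcal M}_t\overline{\bm a}_t=\bm a_t$.

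Next I will check that, in the rotated frame $(\overline{\bm a}_t,\overline{\bm b}_t)$, the posterior of $\bm\zeta$ depends on $\bm W$ only through $\bm W\overline{\bm b}_t$. That column is the Gaussian mean channel $s_2\rho_\perp\bm\zeta+\sigma\bm g$ in the $M$ hidden coordinates, plus independent known-mean coordinates. It follows that $\bm g_1$ is independent of $\bm m$, so $\mathbb E(S^{(2)})^2=\rho_\perp^2s_2^2\sigma^{-2}\mathbb E\|\bm m\|^2$.

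Finally, \cref{lem:spherical-posterior} with $\lambda=s_2\rho_\perp$ gives $\mathbb E\|\bm m\|^2\le 9s_2^2\rho_\perp^2/(M\sigma^2)$. Therefore $2\,\mathbb E(S^{(2)})^2\le 18\rho_\perp^4s_2^4/(M\sigma^4)\le 288\,s_2^4\theta^2/(M\sigma^4)$. Matching the stated $144$ again only requires the same constant bookkeeping as in Step 2. The delicate point is this independence and sufficiency claim: it needs the rotation to act on both the left and right frames, so that $\overline{\bm a}_t$ and $\overline{\bm b}_t$ stay orthonormal and the signal-free column can be isolated. Every bound above uses $t$ only through these orthonormality relations, so all estimates are uniform in $t\in[0,\varphi]$.
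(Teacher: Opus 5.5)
Your proposal is essentially the paper's argument. It uses the same derivative of $\bm{\mathcal M}_t$ and Fisher's identity for the mixture score. It makes the same observation that the $\widetilde{\bm z}$-dependent part of the score involves only the signal-free last $M$ coordinates of $\bm W\overline{\bm a}_t$, while the posterior mean depends only on the last $M$ coordinates of $\bm W\overline{\bm b}_t$. It then applies the Gaussian-channel case of \cref{lem:spherical-posterior} with $\lambda=s_2\rho_\perp$.

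The factor $2$ from your split is not needed, because $\mathbb E[S^{(1)}S^{(2)}]=0$ exactly. Indeed, $S^{(1)}=s_2\sigma^{-2}\langle\bm W,\bm D_1\rangle$ involves only the first two coordinates of the columns $\bm W\overline{\bm a}_t$ and $\bm W\overline{\bm b}_t$. Meanwhile $S^{(2)}$ is linear in your centered $\bm g_1$, which is independent of both $S^{(1)}$ and $\bm m$. Hence $J_t=\mathbb E(S^{(1)})^2+\mathbb E(S^{(2)})^2$, which gives exactly the stated constants $8$ and $144$. The paper obtains the same constants by computing the conditional second moment given the last $M$ coordinates of $\bm W\overline{\bm b}_t$.
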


Combining \cref{eq:lower-high-TV-by-J,eq:high-hidden-J-bound} gives
\begin{equation}
  \operatorname{TV}^2(\mathcal Q_0,\mathcal Q_1)
  \le
  C\varphi^2
  \left(
    \frac{s_2^2\theta^2}{\sigma^2}
    +
    \frac{s_2^4\theta^2}{M\sigma^4}
  \right).
\label{eq:lower-high-Hellinger}
\end{equation}
The definition of $\gamma_2$ and the high-loading condition imply
\begin{equation}
  \frac{c_B}{1+c_B}s_2^2
  \le
  \gamma_2^2
  \le
  s_2^2.
\label{eq:lower-high-gamma-s-comparison}
\end{equation}
The signal-strength assumption and $M\asymp n$ then give
$M\sigma^2/s_2^2\le C\nu^{-2}$.  The assumed bound on $\delta$,
\cref{eq:lower-hidden-angle-bound,eq:lower-high-gamma-s-comparison}, and
$\rho_\perp^2\ge2\theta$ yield
\begin{equation}
  \varphi
  \le
  Cc\frac{\sigma^2\sqrt M}{s_2^2\theta}.
\label{eq:lower-high-angle-information}
\end{equation}
Substitution in \cref{eq:lower-high-Hellinger} gives
\begin{equation}
  \operatorname{TV}^2(\mathcal Q_0,\mathcal Q_1)
  \le
  Cc^2\left(1+\frac{M\sigma^2}{s_2^2}\right)
  \le
  Cc^2(1+\nu^{-2}).
\label{eq:lower-high-final-TV}
\end{equation}
Choosing the numerical constant $c$ sufficiently small gives
$\operatorname{TV}(\mathcal Q_0,\mathcal Q_1)\le1/4$.  Combining
\cref{eq:lower-high-tv-reduction} with the equality between the joint and
View-2 distances at the start of the proof proves
\[
  \operatorname{TV}(\mathsf P_{\bm T_0},\mathsf P_{\bm T_1})
  \le
  \frac14.
\]
This proves \cref{lem:lower-hidden-high-edge}.

\subsection{Information bounds for the hidden constructions}

\subsubsection{Proof of~\cref{lem:lower-hidden-path-information}}

Let $q_{t,\widetilde{\bm z}}$ denote the conditional density of the
observations in \cref{eq:lower-hidden-path-law} for fixed
$\widetilde{\bm z}$, and define its score by
\begin{equation}
  S_{t,\widetilde{\bm z}}(\bm W)
  \coloneqq
  \partial_t\log q_{t,\widetilde{\bm z}}(\bm W).
\label{eq:lower-hidden-conditional-score-definition}
\end{equation}
Because the conditional law is a product of centered Gaussian laws,
differentiating its covariance gives
\begin{equation}
  S_{t,\widetilde{\bm z}}(\bm W)
  =
  \frac{\beta\rho_\perp}{\sigma^2(1+\beta)}
  \sum_{j=1}^{d_2}
  (\bm a_t^\trans\bm W_j)
  \left\langle
    \rho_\perp\bm b_t-\rho\widetilde{\bm z},\bm W_j
  \right\rangle.
\label{eq:lower-hidden-component-score}
\end{equation}
Define
$\widehat{\bm z}_t\coloneqq
\mathbb E[\widetilde{\bm z}\mid\bm W]$.  Fisher's identity and
\cref{eq:lower-hidden-component-score} yield
\begin{equation}
  S_t(\bm W)
  =
  \frac{\beta\rho_\perp}{\sigma^2(1+\beta)}
  \sum_{j=1}^{d_2}
  (\bm a_t^\trans\bm W_j)
  \left\langle
    \rho_\perp\bm b_t-\rho\widehat{\bm z}_t,\bm W_j
  \right\rangle.
\label{eq:lower-hidden-mixture-score}
\end{equation}

To compute its second moment, set
\[
  X_j\coloneqq\bm a_t^\trans\bm W_j,
  \qquad
  \bm R_j\coloneqq\bm W_j-X_j\bm a_t.
\]
Conditional on $\widetilde{\bm z}$, the variables $X_j$ are independent
$\mathcal N(0,\sigma^2(1+\beta))$ variables whose joint law does not depend
on $\widetilde{\bm z}$, and they are independent of
$(\bm R_1,\ldots,\bm R_{d_2})$.  These independence statements remain true
after integrating over $\widetilde{\bm z}$, and
\begin{equation}
  \widehat{\bm z}_t
  =
  \mathbb E\!\left[
    \widetilde{\bm z}
    \,\middle|\,
    \bm R_1,\ldots,\bm R_{d_2}
  \right].
\label{eq:lower-hidden-posterior-residuals}
\end{equation}
Moreover, $\rho_\perp\bm b_t-\rho\widehat{\bm z}_t$ is orthogonal to
$\bm a_t$.  Thus, if
\begin{equation}
  c_j
  \coloneqq
  \left\langle
    \rho_\perp\bm b_t-\rho\widehat{\bm z}_t,\bm W_j
  \right\rangle
  =
  \left\langle
    \rho_\perp\bm b_t-\rho\widehat{\bm z}_t,\bm R_j
  \right\rangle,
\label{eq:lower-hidden-cj}
\end{equation}
then $c_1,\ldots,c_{d_2}$ are fixed conditional on
$(\bm R_1,\ldots,\bm R_{d_2})$.  The independence and centering of the
$X_j$ therefore give
\begin{align}
  &\mathbb E\!\left[
    S_t(\bm W)^2
    \,\middle|\,
    \bm R_1,\ldots,\bm R_{d_2}
  \right]
  \notag\\
  &\quad=
  \frac{\beta^2\rho_\perp^2}{\sigma^4(1+\beta)^2}
  \mathbb E\!\left[
    \left(\sum_{j=1}^{d_2}X_jc_j\right)^2
    \,\middle|\,
    \bm R_1,\ldots,\bm R_{d_2}
  \right]
  =
  \frac{\beta^2\rho_\perp^2}{\sigma^2(1+\beta)}
  \sum_{j=1}^{d_2}c_j^2.
\label{eq:lower-hidden-conditional-score-second-moment}
\end{align}
After taking expectation and using $(x+y)^2\le2x^2+2y^2$, we obtain
\begin{align}
  J_t
  \le{}&
  \frac{2\beta^2\rho_\perp^4}{\sigma^2(1+\beta)}
  \mathbb E\sum_{j=1}^{d_2}
  \langle\bm b_t,\bm W_j\rangle^2
  +
  \frac{2\beta^2\rho_\perp^2}{\sigma^2(1+\beta)}
  \mathbb E\sum_{j=1}^{d_2}
  \langle\widehat{\bm z}_t,\bm W_j\rangle^2.
\label{eq:lower-hidden-J-split}
\end{align}
Since $\bm b_t^\trans\bm W_j$ has variance
$\sigma^2(1+\beta\rho^2)$, the first term in
\cref{eq:lower-hidden-J-split} is at most $CI\rho_\perp^4$.

We next bound the second term.  The identity
\begin{equation}
  \sum_{j=1}^{d_2}
  \langle\widehat{\bm z}_t,\bm W_j\rangle^2
  =
  \widehat{\bm z}_t^\trans
  \left(\sum_{j=1}^{d_2}\bm W_j\bm W_j^\trans\right)
  \widehat{\bm z}_t
\label{eq:lower-hidden-posterior-quadratic-form}
\end{equation}
requires an operator-norm bound for the sample covariance.  Define
\begin{equation}
  \mathcal E_t
  \coloneqq
  \left\{
    \left\|\sum_{j=1}^{d_2}\bm W_j\bm W_j^\trans\right\|_{\op}
    \le
    Cd_2\sigma^2
  \right\}.
\label{eq:lower-hidden-scatter-event}
\end{equation}
We prove below that, uniformly in $t$,
\begin{equation}
  \mathbb P(\mathcal E_t^c)\le Ce^{-c_2d_2},
  \qquad
  \mathbb E\left[
    \left\|\sum_{j=1}^{d_2}\bm W_j\bm W_j^\trans\right\|_{\op}
    \bm1_{\mathcal E_t^c}
  \right]
  \le
  Cd_2\sigma^2e^{-c_2d_2}.
\label{eq:W-event}
\end{equation}

It remains to control $\mathbb E\|\widehat{\bm z}_t\|_2^2$.  For fixed
$\widetilde{\bm z},\widetilde{\bm z}'$, the corresponding conditional
Gaussian laws satisfy
\begin{equation}
  \operatorname{KL}\bigl(
    \mathcal Q_{t,\widetilde{\bm z}}
    \,\|\,
    \mathcal Q_{t,\widetilde{\bm z}'}
  \bigr)
  \le
  \frac{I\rho_\perp^2}{2}
  \|\widetilde{\bm z}-\widetilde{\bm z}'\|_2^2.
\label{eq:lower-hidden-conditional-KL}
\end{equation}
Indeed, the covariance projectors share $\bm a_t\bm a_t^\trans$ and differ
only in the projectors onto
$\rho\bm b_t+\rho_\perp\widetilde{\bm z}$ and
$\rho\bm b_t+\rho_\perp\widetilde{\bm z}'$, so
\cref{eq:gaussian-projector-kl} gives this bound.  Averaging over an
independent copy $\widetilde{\bm z}'$ and using convexity of relative entropy
in its second argument yield
$\operatorname{I}(\widetilde{\bm z};\bm W)\le CI\rho_\perp^2$.
Therefore \cref{lem:spherical-posterior} gives
\begin{equation}
  \mathbb E\|\widehat{\bm z}_t\|_2^2
  \le
  C\frac{I\rho_\perp^2}{M}.
\label{eq:lower-hidden-posterior-bound}
\end{equation}

On $\mathcal E_t$, combine
\cref{eq:lower-hidden-posterior-quadratic-form,eq:lower-hidden-posterior-bound};
on $\mathcal E_t^c$, use $\|\widehat{\bm z}_t\|_2\le1$ and
\cref{eq:W-event}.  This gives
\begin{equation}
  \mathbb E\sum_{j=1}^{d_2}
  \langle\widehat{\bm z}_t,\bm W_j\rangle^2
  \le
  Cd_2\sigma^2\frac{I\rho_\perp^2}{M}
  +
  Cd_2\sigma^2e^{-c_2d_2}.
\label{eq:lower-hidden-posterior-energy}
\end{equation}
Substituting this estimate into \cref{eq:lower-hidden-J-split} and using
$I=d_2\beta^2/(1+\beta)$ gives the bound in
\cref{eq:lower-hidden-J-bound}.  This proves
\cref{lem:lower-hidden-path-information}.

\paragraph{Proof of \cref{eq:W-event}.}
Conditional on $\widetilde{\bm z}$, the vectors
$\bm W_1,\ldots,\bm W_{d_2}$ are independent centered Gaussian vectors with
covariance
\begin{equation}
  \bm\Sigma_{t,\widetilde{\bm z}}
  \coloneqq
  \sigma^2\{\bm I_{M+2}+\beta\bm\Gamma_t(\widetilde{\bm z})\},
  \qquad
  \|\bm\Sigma_{t,\widetilde{\bm z}}\|_{\mathrm{op}}
  \le2\sigma^2.
\label{eq:lower-hidden-conditional-covariance}
\end{equation}
The norm bound follows because $\bm\Gamma_t(\widetilde{\bm z})$ is an
orthogonal projector and $\beta\le1$.  If
$\bm G\in\mathbb R^{(M+2)\times d_2}$ has independent
$\mathcal N(0,d_2^{-1})$ entries, then, conditionally on
$\widetilde{\bm z}$,
\begin{equation}
  [\bm W_1,\ldots,\bm W_{d_2}]
  \overset{\mathrm d}{=}
  \sqrt{d_2}\,\bm\Sigma_{t,\widetilde{\bm z}}^{1/2}\bm G,
  \qquad
  \left\|\sum_{j=1}^{d_2}\bm W_j\bm W_j^\trans\right\|_{\op}
  \le
  2d_2\sigma^2\opn{\bm G\bm G^\trans}.
\label{eq:lower-hidden-scatter-conditional}
\end{equation}

By \cref{eq:lower-hidden-information-comparison},
$d_2\ge I\ge c_1\nu^2M$.  Applying
\cref{eq:gaussian-wishart-tail} with $u=M+2$ and $b=d_2$ therefore yields,
uniformly in $\widetilde{\bm z}$,
\begin{equation}
  \mathbb P\left(
    \left.
    \left\|\sum_{j=1}^{d_2}\bm W_j\bm W_j^\trans\right\|_{\op}
    >Cd_2\sigma^2
    \,\right|\,
    \widetilde{\bm z}
  \right)
  \le
  Ce^{-c_2d_2}.
\label{eq:lower-hidden-scatter-tail}
\end{equation}
Averaging over $\widetilde{\bm z}$ proves the first inequality in
\cref{eq:W-event}.  For the second, combine
\cref{eq:lower-hidden-scatter-conditional}, Cauchy--Schwarz, and
\cref{eq:gaussian-wishart-moment} to obtain, uniformly in
$\widetilde{\bm z}$,
\begin{align*}
  &\mathbb E\left[
    \left.
    \left\|\sum_{j=1}^{d_2}\bm W_j\bm W_j^\trans\right\|_{\op}
    \bm1_{\mathcal E_t^c}
    \,\right|\,
    \widetilde{\bm z}
  \right]\le
  Cd_2\sigma^2
  \|\bm G\bm G^\trans\|_{L^2}
  \mathbb P(\mathcal E_t^c\mid\widetilde{\bm z})^{1/2}
  \le
  Cd_2\sigma^2e^{-c_2d_2}.
\end{align*}
Averaging over $\widetilde{\bm z}$ proves the second inequality in
\cref{eq:W-event}.

\subsubsection{Proof of~\cref{lem:high-hidden-J-bound}}

For fixed $\widetilde{\bm z}$, the law in
\cref{eq:lower-high-path-law} is a Gaussian location model with covariance
$\sigma^2\bm I$.  Since $\dot{\bm a}_t=\bm b_t$ and
$\dot{\bm b}_t=-\bm a_t$, with the same identities for the barred vectors,
differentiating \cref{eq:lower-high-reduced-mean} gives
\begin{equation}
  \dot{\bm{\mathcal M}}_t(\widetilde{\bm z})
  =
  \{(1-\rho)\bm b_t-\rho_\perp\widetilde{\bm z}\}
  \overline{\bm a}_t^\trans
  +
  (1-\rho)\bm a_t\overline{\bm b}_t^\trans.
\label{eq:lower-high-mean-derivative}
\end{equation}
The terms independent of $\rho$ cancel because the left and right
coordinates undergo the same rotation.  The remaining terms are
proportional to $1-\rho=2\theta$ or $\rho_\perp$.

Let $S_{t,\widetilde{\bm z}}$ denote the score conditional on
$\widetilde{\bm z}$.  The Gaussian location score is
\[
  S_{t,\widetilde{\bm z}}(\bm W)
  =
  \frac{s_2}{\sigma^2}
  \left\langle
    \bm W-s_2\bm{\mathcal M}_t(\widetilde{\bm z}),
    \dot{\bm{\mathcal M}}_t(\widetilde{\bm z})
  \right\rangle_{\mathrm F}.
\]
Define
\begin{equation}
  \bm\varepsilon_t
  \coloneqq
  \bm W\overline{\bm a}_t-s_2\bm a_t,
  \qquad
  \xi_t
  \coloneqq
  \langle\bm W\overline{\bm b}_t,\bm a_t\rangle.
\label{eq:lower-high-epsilon-xi}
\end{equation}
Using \cref{eq:lower-high-mean-derivative} and the orthogonality of
$\bm a_t$ to $\bm b_t$ and $\widetilde{\bm z}$ gives
\begin{equation}
  S_{t,\widetilde{\bm z}}(\bm W)
  =
  \frac{s_2}{\sigma^2}
  \left[
    \left\langle
      \bm\varepsilon_t,
      (1-\rho)\bm b_t-\rho_\perp\widetilde{\bm z}
    \right\rangle
    +(1-\rho)\xi_t
  \right].
\label{eq:lower-high-component-score}
\end{equation}

Set
$\widehat{\bm z}_t\coloneqq
\mathbb E[\widetilde{\bm z}\mid\bm W]$.  Fisher's identity averages
\cref{eq:lower-high-component-score} over the conditional distribution of
$\widetilde{\bm z}$, so
\begin{equation}
  S_t(\bm W)
  =
  \frac{s_2}{\sigma^2}
  \left[
    \left\langle
      \bm\varepsilon_t,
      (1-\rho)\bm b_t-\rho_\perp\widehat{\bm z}_t
    \right\rangle
    +(1-\rho)\xi_t
  \right].
\label{eq:lower-high-mixture-score}
\end{equation}

Rotating the observed columns by
$(\overline{\bm a}_t,\overline{\bm b}_t)$ gives, conditionally on
$\widetilde{\bm z}$,
\begin{equation}
  \bm W\overline{\bm a}_t
  =
  s_2\bm a_t+\sigma\bm g_a,
  \qquad
  \bm W\overline{\bm b}_t
  =
  s_2(\rho\bm b_t+\rho_\perp\widetilde{\bm z})+\sigma\bm g_b,
\label{eq:lower-high-rotated-observations}
\end{equation}
where $\bm g_a$ and $\bm g_b$ are independent standard Gaussian vectors.
Only the last $M$ coordinates of the second vector in
\cref{eq:lower-high-rotated-observations} depend on
$\widetilde{\bm z}$; they have the Gaussian mean-channel form with signal
strength $s_2\rho_\perp$ and noise level $\sigma$.  Thus
\cref{lem:spherical-posterior} gives
\begin{equation}
  \mathbb E\|\widehat{\bm z}_t\|_2^2
  \le
  9\frac{s_2^2\rho_\perp^2}{M\sigma^2}.
\label{eq:lower-high-posterior}
\end{equation}

Conditional on these last $M$ coordinates,
$\widehat{\bm z}_t$ is fixed,
$\bm\varepsilon_t=\sigma\bm g_a$, and
$\xi_t=\sigma\langle\bm g_b,\bm a_t\rangle$.  The latter two Gaussian
quantities are centered and independent of one another and of the
conditioning variables.  Since $\widehat{\bm z}_t$ is orthogonal to
$\bm b_t$, taking the conditional second moment in
\cref{eq:lower-high-mixture-score}, then averaging and applying
\cref{eq:lower-high-posterior}, gives
\begin{align*}
  J_t
  &=
  \frac{s_2^2}{\sigma^2}
  \left\{
    2(1-\rho)^2
    +
    \rho_\perp^2\mathbb E\|\widehat{\bm z}_t\|_2^2
  \right\} \le
  8\frac{s_2^2\theta^2}{\sigma^2}
  +
  144\frac{s_2^4\theta^2}{M\sigma^4},
\end{align*}
where we used $1-\rho=2\theta$ and $\rho_\perp^2\le4\theta$.
This is \cref{eq:high-hidden-J-bound} and completes the proof.

\end{document}